\documentclass[aps,prx,superscriptaddress,twocolumn,nofootinbib,10pt,longbibliography,floatfix]{revtex4-2}
\makeatletter
\@ifundefined{pdfmapfile}{}{
    \pdfmapfile{+euler.map}%
}
\makeatother
\usepackage[utf8]{inputenc}
\usepackage[T1]{fontenc}

\makeatletter
\renewcommand{\bibinfo}[2]{%
  \ifstrequal{#1}{title}{\textit{#2}}{#2}}
\makeatother

\usepackage{mathpazo}

\usepackage[toc,title,page]{appendix}

\usepackage{csquotes}
\usepackage[italian,english]{babel}

\usepackage{bm}

\usepackage{dsfont} 
\usepackage{amsmath,amssymb,amsthm,thmtools}
\usepackage{mathtools}
\usepackage{cases}
\usepackage{calc}
\usepackage{mathrsfs} 
\usepackage[normalem]{ulem} 

\usepackage{parskip}
\usepackage[colorlinks=true]{hyperref}
\usepackage{bookmark}
\usepackage[nameinlink]{cleveref}

\usepackage{smtoc}
\usepackage{orcidlink}

\crefalias{appsec}{section}

\crefname{appsec}{}{}
\Crefname{appsec}{}{}
\crefname{conjecture}{conjecture}{conjectures}
\crefname{box}{Box}{Box}
\hypersetup{
  colorlinks   = true, 
  urlcolor     = green!80!black, 
  linkcolor    = blue, 
  citecolor    = red!80!black 
}

\usepackage{physics} 

\usepackage{graphicx}

\usepackage[dvipsnames]{xcolor}

\usepackage{placeins}
\usepackage{multirow,booktabs}
\newcommand{\proofsymcell}[1]{%
    \parbox[t]{0.22\linewidth}{\raggedright #1\strut}}
\newcommand{\proofdefcell}[1]{%
    \parbox[t]{0.72\linewidth}{\raggedright #1\strut}}
\newcommand{\proofgroup}[1]{%
    \addlinespace[0.8ex]
    \multicolumn{2}{@{}l@{}}{\bfseries #1}\\[-0.2ex]
}

\usepackage{tcolorbox}
\tcbuselibrary{theorems,breakable}

\graphicspath{{./figures/}}

\newcommand{\RR}{\mathbb{R}}
\newcommand{\CC}{\mathbb{C}}

\newcommand{\EE}{\mathbb{E}}
\newcommand{\PP}{\mathbb{P}}

\newcommand{\rmd}{\mathrm{d}}

\newcommand{\calD}{\mathcal{D}}
\newcommand{\calE}{\mathcal{E}}

\newcommand{\calH}{\mathcal{H}}
\newcommand{\calL}{\mathcal{L}}
\newcommand{\calM}{\mathcal{M}}
\newcommand{\calN}{\mathcal{N}}
\newcommand{\calO}{\mathcal{O}}
\newcommand{\calP}{\mathcal{P}}

\newcommand{\bs}[1]{\boldsymbol{#1}}
\newcommand{\on}[1]{\operatorname{#1}}

\newcommand{\parTitle}[1]{\textit{#1 ---}}

\NewCommandCopy{\rawcite}{\cite}
\NewCommandCopy{\rawcref}{\cref}
\NewDocumentCommand{\safecite}{om}{%
  \mbox{\IfValueTF{#1}{\rawcite[#1]{#2}}{\rawcite{#2}}}%
}
\NewDocumentCommand{\safecref}{m}{%
  \mbox{\rawcref{#1}}%
}
\NewDocumentCommand{\addLI}{+m +g}{%
  {\color{RoyalBlue}#1}%
  \IfNoValueF{#2}{%
    \begingroup
      \let\cite\safecite
      \let\cref\safecref
      \color{red}\small\sout{#2}%
    \endgroup
  }%
}

\DeclareMathOperator{\Herm}{Herm}

\newcommand{\MSE}{\text{MSE}}

\DeclareMathOperator*{\E}{\mathbb{E}}
\newcommand{\im}{{\rm im}}
\newcommand{\supp}{{\rm supp}}
\newcommand{\coker}{{\rm coker}}
\newcommand{\diag}{\on{diag}}
\newcommand{\bias}{\on{bias}}
\newcommand{\Var}{\on{Var}}
\newcommand{\Cov}{\on{Cov}}

\newcommand{\poly}{\on{p}}

\newcommand{\nout}{n_{\rm out}}

\newcommand{\ntr}{n_{\rm tr}}

\newcommand{\bsp}{\bs{p}}
\newcommand{\bsx}{\bs{x}}
\newcommand{\bsv}{\bs{v}}
\newcommand{\bsw}{\bs{w}}

\newcommand{\bsA}{\bs{A}}
\newcommand{\bsB}{\bs{B}}
\newcommand{\bsC}{\bs{C}}
\newcommand{\bsD}{\bs{D}}

\newcommand{\bsI}{\bs{I}}
\newcommand{\bsM}{\bs{M}}
\newcommand{\bsP}{\bs{P}}
\newcommand{\bsQ}{\bs{Q}}
\newcommand{\bsS}{\bs{S}}
\newcommand{\bsW}{\bs{w}}
\newcommand{\bsX}{\bs{X}}

\newcommand{\bsMrho}{\bsM_{\bs\rho}}
\newcommand{\bsMmu}{\bsM_{\bs\mu}}
\newcommand{\bsMtildemu}{\bsM_{\tilde{\bs\mu}}}

\newcommand{\Id}{\on{Id}}
\newcommand{\bstildemu}{\bs{\tilde{\mu}}}

\newcommand{\hatbs}[1]{\hat{\bs{#1}}}

\newcommand{\hatbsC}{\hat{\bs C}}

\newcommand{\hatbsM}{\hat{\bs M}}
\newcommand{\hatbsP}{\hat{\bs P}}
\newcommand{\hatbsQ}{\hat{\bs Q}}
\newcommand{\hatbsS}{\hat{\bs S}}

\newcommand{\hatbsZ}{\hat{\bs Z}}
\newcommand{\hatbsW}{{\hat{\bs w}}}
\newcommand{\hatbsXi}{\hatbs{\Xi}}
\newcommand{\hsXi}{\hatbsXi}
\newcommand{\hatbsp}{\hat{\bs p}}

\newcommand{\blueMSchurComplementTwo}{(\hatbsM/\hatbsM_{22})}
\newcommand{\blueMSchurComplementTwoInv}{(\hatbsM/\hatbsM_{22})^{-1}}

\newcommand{\bsmu}{\boldsymbol{\mu}}
\newcommand{\vect}[1]{\mathrm{vec}(#1)}

\newcommand{\frameOp}{\mathscr{F}}

\newcommand{\barSigma}{\overline{\bs{\Sigma}}}

\newtheorem{theorem}{Theorem}

\newtheorem{lemma}{Lemma}
\newtheorem{corollary}{Corollary}

\theoremstyle{definition}
\newtcbtheorem[crefname={example}{examples}]{example}{Example}%
{
  parbox=false,
  breakable,
  colback=red!0,          
  colframe=gray!10,         
  fonttitle=\bfseries,      
  arc=2mm,                  
  boxrule=0.2pt,           
  left=3mm,                
  right=3mm,               
  top=2mm,                 
  bottom=2mm,
  fonttitle=\bfseries\color{black}
}{}

\begin{document}

\title{
Who can sample forever? Shot noise effects in quantum linear regression
}

\author{Gabriele Lo Monaco\,\orcidlink{0000-0002-3594-3477}}
\email{gabriele.lomonaco@unipa.it}
\let\comma,
\affiliation{Universit\`a degli Studi di Palermo\comma{} Dipartimento di Fisica e Chimica - Emilio Segr\`e\comma{} via Archirafi 36\comma{} I-90123 Palermo\comma{} Italy}

\author{Salvatore Lorenzo\,\orcidlink{0000-0002-0827-5549}}
\let\comma,
\affiliation{Universit\`a degli Studi di Palermo\comma{} Dipartimento di Fisica e Chimica - Emilio Segr\`e\comma{} via Archirafi 36\comma{} I-90123 Palermo\comma{} Italy}

\author{Alessandro Ferraro\,\orcidlink{0000-0002-7579-6336}}
\affiliation{Quantum Technology Lab\comma{} Dipartimento di Fisica Aldo Pontremoli\comma{} Universit\`a degli Studi di Milano\comma{} I-20133 Milano\comma{} Italy}

\author{Mauro Paternostro\,\orcidlink{0000-0001-8870-9134}}
\affiliation{Universit\`a degli Studi di Palermo\comma{} Dipartimento di Fisica e Chimica - Emilio Segr\`e\comma{} via Archirafi 36\comma{} I-90123 Palermo\comma{} Italy}
\affiliation{Centre for Quantum Materials and Technologies\comma{} School of Mathematics and Physics\comma{} Queen's University Belfast\comma{} BT7 1NN\comma{} United Kingdom}

\author{G. Massimo Palma\,\orcidlink{0000-0001-7009-4573}}
\let\comma,
\affiliation{Universit\`a degli Studi di Palermo\comma{} Dipartimento di Fisica e Chimica - Emilio Segr\`e\comma{} via Archirafi 36\comma{} I-90123 Palermo\comma{} Italy}

\author{Luca Innocenti\,\orcidlink{0000-0002-7678-1128}}
\let\comma,
\affiliation{Universit\`a degli Studi di Palermo\comma{} Dipartimento di Fisica e Chimica - Emilio Segr\`e\comma{} via Archirafi 36\comma{} I-90123 Palermo\comma{} Italy}

\begin{abstract}
Quantum fidelity kernels and quantum extreme learning machines share the same basic architecture: a fixed quantum device generates features that are combined through a classically trained linear model. We refer to this broad non-variational setting as quantum linear regression. A defining aspect of these models is that the quantum features are not directly available, but must be estimated from a finite number of shots. We show that finite statistics makes quantum linear regression unavoidably ill-conditioned. Ideal feature vectors are confined to a subspace determined by the dimension of the Hilbert space, whereas finite-shot estimates generically acquire components outside this subspace. Training then attempts to fit directions that disappear in the infinite-statistics limit, making the learned model increasingly sensitive to sampling noise.
We analyze this mechanism for quantum extreme learning machines trained on \(n_{\rm tr}\) states with \(N\) shots per state. Despite the ill-conditioning, predictions remain controlled when the training targets are exact. Finite sampling produces a systematic prediction error that decreases quadratically with \(N\), while fluctuations induced by the random training measurements decrease inversely with the total training budget. Thus, unlike in standard linear regression with exactly known features, increasing the dataset size alone cannot remove the error caused by sampling noise.
If the training targets are also noisy, due for instance to state-preparation errors, the otherwise hidden directions become active. The test error can then increase with $N$, a behavior that we ascribe to a form of \emph{quantum overfitting}. Our results expose a fundamental tradeoff between dataset size and per-state statistics in non-variational quantum learning and provide a baseline for resource-aware training and regularization strategies.
\end{abstract}

\maketitle
\bookmarksetup{startatroot}

\section{Introduction}


\begin{figure*}[t]
    \centering
    \includegraphics[width=1.\linewidth]{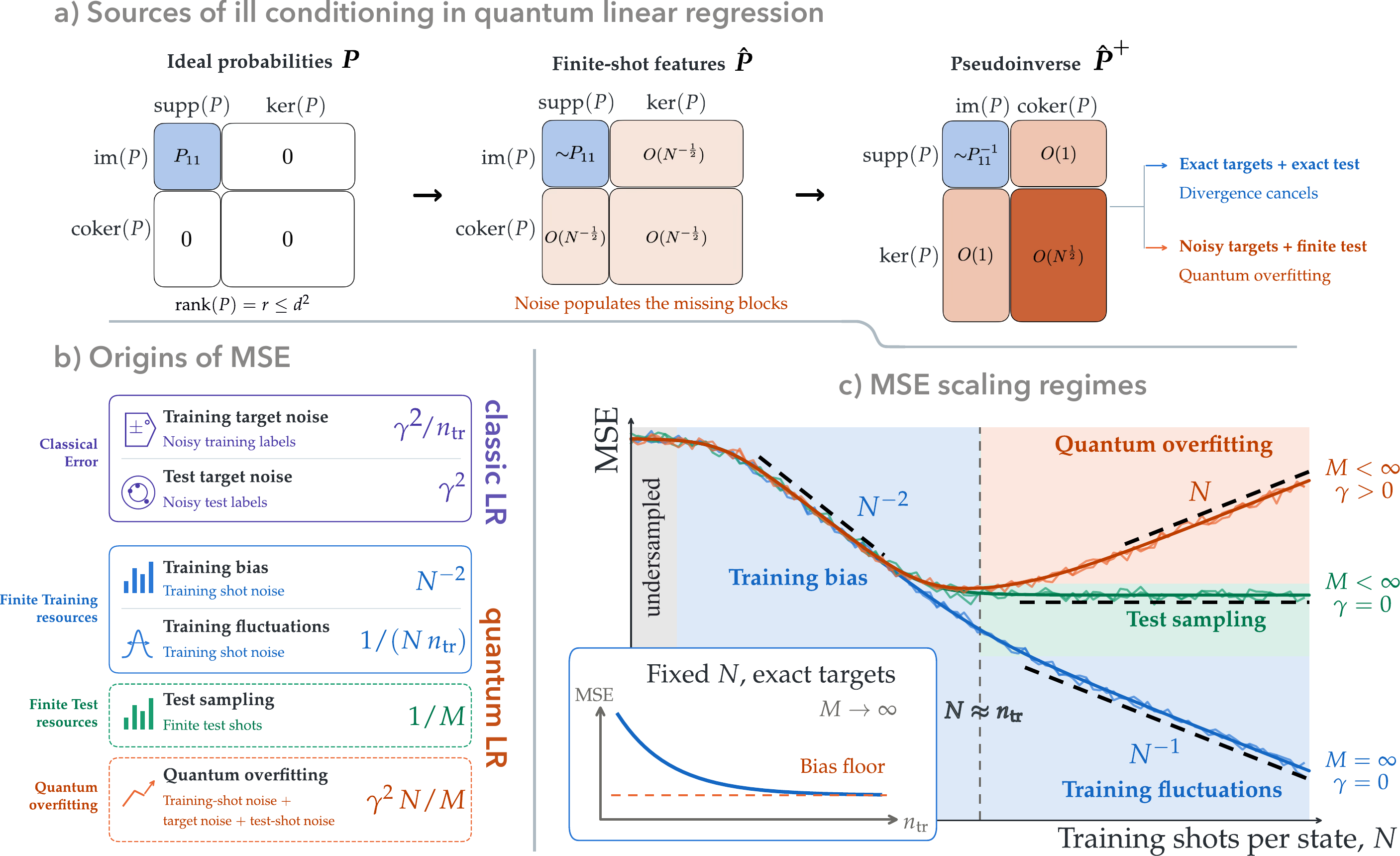}
\caption{
Panel \textbf{a)}: The ideal probability matrix has $\rank(\bsP)=d^2$, while the
empirical frequency matrix $\hatbsP_N=\bsP+N^{-1/2}\hatbsXi$, constructed by
measuring each state $N$ times, is full rank with high probability. The
spurious directions carry singular values at the noise scale $N^{-1/2}$, so
$\hatbsP_N^{+}$ develops a part that grows as $\sqrt N$ and maps $\coker(\bsP)$
into $\ker(\bsP)$ (\cref{eq:Delt_PN_small}). Exact targets
($\bs y\in\supp(\bsP)$) and exact test probabilities ($\bsp_\sigma\in\im(\bsP)$)
are orthogonal to these subspaces, so the divergence cancels from the
prediction $\bs y^T\hatbsP_N^{+}\bsp_\sigma$. Isotropic target noise and the
test covariance $\bs\Sigma_\sigma$ instead both have components in $\ker(\bsP)$:
only when the two occur together does the divergence survive and the error grow
with $N$ --- \emph{quantum overfitting}.
Panel \textbf{b)}: The error splits according to its origin. Training shot noise alone
produces a systematic bias $b^2/N^2$ and a fluctuation $V/(N\ntr)$
(\cref{eq:multinomial_b_V_main}), neither of which has a counterpart in
classical regression with exact features. Target noise of variance $\gamma^2$
acts classically: an irreducible $\gamma^2$ from the test labels, and an
$O(\gamma^2/\ntr)$ floor on the training variance. Finite test statistics add
$\bsW_\calO\bs\Sigma_\sigma\bsW_\calO^T/M$, a floor that no amount of training
data can remove (\cref{eq:fixed_W_noise_floor}). Only the product of all three
--- training shots, target noise and test shots --- reaches the divergence of
\textbf{a)}, giving $\gamma^2NC/M$ with $C=(\nout-d^2)/(\ntr-\nout-1)$, whose
numerator counts the directions populated by noise alone: the amplification is
a direct measure of the rank mismatch.
Panel \textbf{c)}: Schematic MSE versus $N$ at fixed $\ntr$. With exact
targets and exact test features (teal, $M\to\infty$) the bias $N^{-2}$ dominates
at low statistics and the fluctuation $(N\ntr)^{-1}$ at high statistics; the two
balance at $N\sim\ntr$, so it is the training-set size that fixes where the
decay slows down.
Finite test statistics arrest the decay at the $1/M$ floor (grey). With noisy
targets (orange) the term $\gamma^2NC/M$ eventually dominates and the curve
turns upward: the MSE is minimized at a finite $N^\star$, beyond which
additional shots per training state actively hurt, and $N^\star$ moves to larger
values only if $M$ is increased with it. Inset: at fixed $N$, more training
states suppress the fluctuation but leave the $N^{-2}$ bias floor --- unlike
classical regression, data alone cannot push the error below it.}
    \label{fig:scheme}
\end{figure*}


Linear models provide a widely used alternative to gradient-based learning methods in classical machine learning, owing in part to their favorable training properties and the availability of efficient optimization procedures \cite{scholkopf2001learning,hastie2009Elements,huang2006extreme,jaeger2004harnessing}. A similar distinction arises in quantum machine learning (QML)~\cite{cerezo2022ChallengesOpportunitiesQuantum,wang2024ComprehensiveReviewQuantum}. While much attention has focused on variational quantum algorithms~\cite{cerezo2021VariationalQuantumAlgorithms}, their trainability can be severely hindered by phenomena such as barren plateaus~\cite{mcclean2018barren,arrasmith2022equivalence,cerezo2025does,larocca2025barren} and, more generally, exponential concentration~\cite{holmes2022connecting,kairon2025equivalence,tangpanitanon2020expressibility,stilck2021limitations,li2022concentration}. Non-variational quantum learning architectures, in which the quantum processing is fixed and only a classical model is optimized during training, therefore provide a promising alternative. 
Many such architectures share a common structure that we refer to as \textit{quantum linear regression}: a fixed quantum device maps input data into a quantum feature space, while inference is performed through a linear combination of expectation values of observables measured at its output, with the corresponding weights learned classically from training data. Equivalently, these architectures implement linear models in quantum feature space~\cite{schuld2019QuantumMachineLearning,schuld2021SupervisedQuantumMachine,jerbi2023QuantumMachineLearning,innocenti2023potential}. Prominent examples include quantum reservoir computing~\cite{fujii2017HarnessingDisorderedQuantum,mujal2021OpportunitiesQuantumReservoir}, quantum extreme learning machines (QELMs)~\cite{innocenti2023potential,xiong2025fundamental,vetrano2025state,lomonaco2024quantum,suprano2024experimental,zia2025quantum}, and quantum kernel methods~\cite{schuld2021SupervisedQuantumMachine,havlicek2019SupervisedLearningQuantum,kubler2021inductive,jager2023universal,gentinetta2024complexity,thanasilp2024exponential,kempkes2026double}; the same structure also underlies the training of recently proposed quantum Gaussian processes~\cite{jager2026provable,jager2026QuantumGaussianProcesses}.

Despite inheriting the favorable training properties of classical linear models, quantum linear regression introduces a fundamental complication with no direct counterpart when classical features are exactly accessible: quantum features cannot be read out directly, but must be estimated from a finite number of measurement shots. Their intrinsic sampling noise, therefore, makes the number of shots $N$ per training example an additional computational resource, alongside the number of training examples $\ntr$ themselves. Understanding how these two resources jointly determine learning performance is thus key to assess the practical requirements of non-variational QML~\cite{sweke2020StochasticGradientDescent,moussa2023ResourceFrugalOptimizer,hu2024OvercomingCoherenceTime,ahmed2025OptimalTrainingFinitely,hu2023tackling,gentinetta2024complexity}. In this work, we characterize this interplay and show that finite-shot estimation fundamentally alters the behavior of quantum linear regression, giving rise to intrinsic ill-conditioning and to a nontrivial trade-off between dataset size and per-state measurement statistics.


More specifically, the features estimated using $N$ shots per state are collected into a finite resource feature matrix $\hatbsP_N$, which converges in the infinite-shot limit to the ideal feature matrix $\bsP$.
Collecting training labels in a matrix $\bs Y$, linear regression seeks $\bs W$ such that $\bs Y\simeq\bs W\hatbsP_N$, with minimum-norm least-squares solution $\bs W=\bs Y\hatbsP_N^+$.
However, ideal and empirical feature matrices have different rank structures: $\bsP$ is rank deficient, while stochastic fluctuations make $\hatbsP_N$ full rank with high probability.
This rank discontinuity manifests itself as divergences in $\hatbsP_N^+$~\cite{wedin1973perturbation,stewart1977perturbation,stewart1990Matrix,gonzalez2015laurent}, posing a central technical obstacle to training these models.
As we show in \cref{sec:qlr_acuteness}, the argument only relies on the dimension of the space in which the features live, and is therefore independent of how they are generated: whether they are measurement probabilities, as in QELMs, or overlaps between states, as in quantum fidelity kernels, the ill-conditioning is a generic feature of quantum linear regression.

While the statistical properties of classical linear regression have been studied extensively \cite{bishop1995training,dobriban2018high,belkin2018understand,bibas2019new,liang2020just,bartlett2020benign,derezinski2020exact,muthukumar2020harmless,hastie2022surprises,mei2022generalization}, particularly in high-dimensional regimes where both the feature dimension and the dataset size grow large, these analyses typically assume exactly known features.
Noise in the features is instead the subject of errors-in-variables models~\cite{carroll2006MeasurementErrorNonlinear,golub1980AnalysisTotalLeast,vanhuffel1991TotalLeastSquares}, where it is known to bias least-squares estimators by an amount that does not vanish as the dataset grows.
Much less is known rigorously when the feature noise has the structure typical of quantum models~\cite{hu2023tackling,hu2024GeneralizationErrorQuantum,khan2024practical,ahmed2025OptimalTrainingFinitely,wang2021UnderstandingPowerQuantum,hubregtsen2022TrainingQuantumEmbedding}.
We address this gap by initiating a systematic study of the effects of shot noise on quantum linear regression, focusing in particular on the QELM paradigm.
We provide two separate approaches to analytically control the rank discontinuity, isolate the singular part of the empirical pseudoinverse, and identify how it affects predictions.
Although the pseudoinverse contains terms that diverge as $N\to\infty$, we show that these singular contributions do not affect physical predictions when the training targets are known exactly.

We derive the scaling of the average mean-squared prediction error (MSE) on $N$ and $\ntr$, identifying two main regimes:
in the high statistics regime $N\gtrsim\ntr$, the MSE is dominated by stochastic fluctuations due to finite training statistics and scales as $(N\ntr)^{-1}$;
in the low statistics regime $N\lesssim\ntr$, these fluctuations become subleading and the MSE is dominated by a quantum-induced systematic bias, arising from the rank increase and associated ill-conditioning produced by finite-shot noise, and scales as $N^{-2}$.
The low-statistics regime is where the departure from regression with exactly known features is the sharpest: as the sampling noise sits in the features, it produces a bias that is insensitive to the size of the training set.
Consequently, at fixed $N$, increasing the number of training states eventually ceases to improve the prediction error: beyond $\ntr\sim N$, the MSE saturates at a floor set by the finite measurement statistics available for each training state.
Performance is thus governed by two resources rather than one, and increasing the size of the training dataset does not help unless the per-state measurement budget is increased as well.
This bias is the quantum counterpart of the attenuation bias of errors-in-variables regression~\cite{fuller1987MeasurementErrorModels,carroll2006MeasurementErrorNonlinear}, with two novel features: the noise strength is set by the per-state measurement budget, which is under the experimenter's control, and the noise also populates directions orthogonal to the span of the ideal features, which reshapes the bias through a Schur complement of the noise covariance.

We then proceed to also study the effect of noise on the training targets themselves, modelling scenarios in which preparation errors cause the training states to not be precisely characterized.
The singular directions, previously harmless, now couple to this noise and resurface in the prediction error.
While label noise amplification caused by minimum-norm least squares along poorly conditioned directions is familiar from the variance peak of double descent~\cite{belkin2019ReconcilingModernMachine,hastie2022surprises},
what is specific to the quantum setting is that these directions exist only because shot noise lifts the rank of the ideal feature matrix: the effect is a direct signature of the fact that quantum features are confined to a subspace fixed by the dimension of the Hilbert space, while their estimates are not. The phenomenon causes an unexpected interplay between training statistics $N$ and test statistics $M$ that contributes to the MSE a term scaling as $N/M$.
This means that increasing the training statistics without correspondingly increasing the statistics used at inference time can counterintuitively worsen performances, a behavior we identify as a form of \emph{quantum overfitting.} This phenomenon contrasts with the benign overfitting studied in quantum models in Ref.~\cite{peters2023GeneralizationOverfittingQuantum}, as in our case increasing the training statistics degrades test performances.

\Cref{fig:scheme} summarizes the resulting picture.
All the contributions to the prediction error are represented schematically, grouped according to the bias--variance splittings at training and at test time, and each is tagged with its physical origin: shot noise in the estimated features, or noise in the training targets.
Crucially, our analysis is not limited to the qualitative structure of this decomposition.
We determine explicitly the coefficients multiplying each scaling, in terms of the POVM, the training ensemble, and the target observable, and we find them to be in excellent agreement with numerical simulations across a broad set of regimes and model parameters.

Overall, by combining ideas from quantum machine learning and random matrix theory, our results indicate that familiar intuitions from classical learning must be revised to account for constraints imposed by the geometry of quantum state space.
More broadly, our work highlights that finite statistics can qualitatively alter conclusions drawn from idealized, infinite-sample analyses, motivating a more statistics-aware approach to quantum machine learning.

The remainder of the manuscript is organized as follows.
\Cref{sec:classical} reviews classical linear regression, and
\cref{sec:quantum_setup} introduces its quantum counterpart.
\Cref{sec:qlr_acuteness} identifies the rank-changing instability, and
\cref{sec:perturbation} gives a complementary structural analysis of its Schur-complement mechanism.
\Cref{sec:gaussian_approximation} recovers the resource laws in a homogeneous Gaussian reference model.
The main bias-variance result and its resource scaling are presented in \cref{sec:asymptotic_bias}, and finally \cref{sec:asymptotic_var} discusses the amplification mechanism that appears when the training targets are noisy, before we conclude with the resulting resource-allocation perspective in \cref{sec:conclusions}.

\section{Classical linear regression theory}%
\label{sec:classical}

To understand what changes when we account for the shot-noise in measurement probabilities, we first review the \textit{classical} linear regression setting where the input features are assumed to be known exactly, and the noise may only be in the targets. In this setting, noise affects performances in a relatively straightforward way, creating an irreducible error due to the fundamental unsolvability of the problem, and increasing the expected variance of the corresponding estimator. Notably, it does not enter its bias. We will see in \cref{sec:quantum_setup} how this picture changes significantly when we account for the structure of the experimental noise in both training and inference phases.

In the standard treatment of linear regression~\cite{hastie2009Elements,nakkiran2019MoreDataCan,hastie2022surprises}, the relation between an input $\bsx\in \mathbb{R}^{d}$ and a response $y\in \mathbb{R}$ is assumed to be linear: $y=\langle\bs w,\bsx\rangle+\eta$, where $\bs w\in\mathbb{R}^d$ is the regression vector we want to estimate, and $\eta$ is stochastic noise, typically assumed to be conditionally independent and
homoscedastic, i.e. $\EE[\eta|\bsx]=0$ and
$\EE[\eta^2|\bsx]=\gamma^2$.
Notably, this setting accounts for noise in $y$ but not in $\bsx$.
In the training phase, the input--output pairs $(\bsx,y)$ are drawn from a distribution $\calD$, and the training objective is to minimize the \textit{risk}, that is, some suitable function quantifying how well $\bs w$ reproduces the functional relation in the training data.
A standard choice is to quantify the quality of a candidate regressor $\hatbs w$ is via the population mean-squared error (MSE):
\begin{equation}%
\label{eq:risk_MSE0}
    \MSE(\hatbs w)=\E_{(\bsx,y)\sim \calD}\left[
    \left(
    y-
    \hatbs  w^T\bsx
    \right)^2
    \right].
\end{equation}
This decomposes as
\begin{equation}%
\label{eq:risk_MSE}
    \MSE({\hatbs{ w}})
    =
    \gamma^2+\|\bs{ w}-\hatbs{ w}\|^2_{\bs\Sigma},
\end{equation}
where $\bs\Sigma=\EE_{\calD}[\bs{x}\bs{x}^T]$ is the second moment matrix, and
$\|\bs{v}\|_{\bs\Sigma}^2\equiv \bs{v}^T\bs\Sigma\bs{v}$.
The $\gamma^2$ term in~\cref{eq:risk_MSE} is referred to as the \textit{irreducible noise}, and corresponds to white noise intrinsic to the data that cannot be reduced by any choice of $\hatbs w$.
On the other hand, $\|\bs w-\hatbs w\|_{\bs\Sigma}^2$ is the \textit{excess risk}, and represents error terms due to the training not producing the ideal regressor $\bs w$.

In practice, in the training phase one has only access to a finite training set $\calE_{\rm tr}\equiv\{(\bs x_i,y_i)\}_{i=1}^{\ntr}$, and the objective is to minimize the \textit{empirical} risk --- in this case the empirical MSE --- estimated on these examples, that is,
\begin{equation}
    R_{\rm tr}(\bs w)
    =
    \frac{1}{\ntr}
    \sum_{i=1}^{\ntr}
    \left(
    y_i-\bs w^T\bsx_i
    \right)^2 .
\end{equation}
This least-squares problem can be solved explicitly writing it as $R_{\rm tr}(\bs w)=\ntr^{-1}\|\bs Y-\bs w\bs X\|^2$, denoting with $\bs X\in\mathbb{R}^{d\times \ntr}$ the training matrix whose $i$-th column is $\bsx_i$, and with $\bs Y$ the row vector of training labels.
The $\bs w$ minimizing $R_{\rm tr}(\bs w)$ is then given by
\begin{equation}
\label{eq:pseudo_classical_estimator}
    \hatbs{ w}_{\rm tr}^T
    =
    \bs y^T \bs X^+
    =
    \bs w^T \bs{X}\bs{X}^+
    +
    \bs\eta^T \bs{X}^+,
\end{equation}
where $\bs{X}^+$ denotes the Moore-Penrose pseudoinverse of $\bs{X}$.

The MSE in \cref{eq:risk_MSE} depends on the training set
$\calE_{\rm tr}$ and is therefore random with respect to both $\bs X$ and $\bs\eta$ through its excess risk term.
Averaging this excess risk with respect to $\bs\eta$ we get a further decomposition:
\begin{equation}
\begin{gathered}
    \EE_{\bs\eta}
    \|\hatbs{ w}_{\rm tr}-\bs  w\|_{\bs \Sigma}^2
    =
    \bias^2(\bs X) + \Var(\bs X),
\end{gathered}
\end{equation}
where
$\Var(\bs X) \equiv
    \EE_{\bs\eta}
    \|\hatbs{ w}_{\rm tr}-\EE_{\bs \eta}[\hatbs{ w}_{\rm tr}]
    \|_{\bs\Sigma}^2$
measures the sensitivity of the estimator with respect to noise realizations, and
$\bias^2(\bs X)\equiv
    \|\bs{ w}-\EE_{\bs \eta}[\hatbs{ w}_{\rm tr}]
    \|_{\bs\Sigma}^2$
measures the average error for a given $\bs X$.
For the empirical least-squares estimator in~\cref{eq:pseudo_classical_estimator}, these equal
\begin{equation}
\begin{aligned}
    \bias^2(\bs{X})
    &=
    \|(\bs X\bs X^+-\bsI)\bs w\|_{\bs\Sigma}^2 ,
    \\
    \Var(\bs{X})
    &=
    \EE_{\bs\eta}\|(\bs{X}^+)^T \bs\eta\|_{\bs\Sigma}^2
    =
    \gamma^2
    \Trace[
    \bs{X}^+
    \bs\Sigma
    (\bs{X}^+)^T
    ].
\end{aligned}
\end{equation}
The bias is exactly zero in the \textit{underparametrized regime}, $\ntr\ge d$, provided $\rank(\bs X)=d$, since then
$\bs X\bs X^+=\bsI_d$.

This conclusion hinges on the exact knowledge of the features. When the inputs are themselves observed with noise, as in errors-in-variables models, the least-squares estimator acquires the so-called attenuation, or regression-dilution, bias, which does not vanish as $\ntr\to\infty$~\cite{fuller1987MeasurementErrorModels,carroll2006MeasurementErrorNonlinear}. In \cref{sec:asymptotic_bias} we show that the finite-statistics bias of quantum linear regression has the same structure, reshaped by the rank deficiency of the ideal feature matrix.

\section{Quantum linear regression
}%
\label{sec:quantum_setup}

We use the term \textit{quantum linear regression} to refer to linear regression tasks in which the input features $\bsx$ are estimates of probabilities obtained by measuring quantum states with finite resources.
The training data thus consist of pairs $(\rho,y)\sim\calD$, where each state $\rho$ is evolved under fixed quantum dynamics and subsequently measured.
We describe the overall measurement procedure by an effective POVM $\bs\mu=\{\mu_a\}_{a=1}^{\nout}$ over $\nout$ measurement outcomes; this can be understood as the Heisenberg-evolved measurement: if the input states evolve through a quantum map $\Phi$ and are then measured with a physical POVM $\bs\mu^{\rm phys}$, then $\mu_a\equiv \Phi^\dagger(\mu_a^{\rm phys})$~\cite{innocenti2023potential}, where $\Phi^\dagger$ denotes the adjoint of $\Phi$.
The probability vector corresponding to measuring an input state $\rho$ is then $\bsp(\rho)\equiv \langle\bs\mu^{\rm phys},\Phi(\rho)\rangle=\langle \bs\mu,\rho\rangle\in\RR^{\nout}$, where we use the notation $\langle A,B\rangle\equiv\trace(A^\dagger B )$.
The learning problem is to find a matrix $\bs W$ such that $\bs y\simeq \bs W\langle\bs\mu,\rho\rangle$.

We will focus on the cases where the target is a scalar and is achievable, meaning that it can be expressed as the expectation value of some observable in the real linear span of the effective POVM elements~\cite{innocenti2023potential}.
In these cases, following the conventions introduced in \cref{app:notation}, $\bs W$ becomes a row vector $\bsW_\calO$, and we write $y=\langle\calO,\rho\rangle$ for some observable $\calO$.
The condition $\langle\calO,\rho\rangle=\bsW_\calO\bsp(\rho)$ for all $\rho$ is then 
equivalent to $\calO=\bsW_\calO\bs\mu\equiv\sum_a w_a\mu_a$, with $w_a$ the elements of $\bsW_\calO$.

We first assume that the training targets $\bs y=\{y_i\}_{i=1}^{\ntr}$ are known exactly, so that any training error comes from the measured input features.
The consequences of also having noise in the targets will be discussed in~\cref{sec:asymptotic_var}.
By contrast, the input features are affected by statistical noise because the probabilities $\langle\bs\mu,\rho\rangle$ must be estimated on a quantum device using finitely many shots.
More precisely, we assume that each training state $\rho$ is measured $N$ times, and let $\hatbs n(\rho)=\{\hat n_a(\rho)\}_{a=1}^{\nout}$ be the count vector, whose $a$-th component records how many times outcome $a$ is observed.
The empirical probability vector is then
\begin{equation}
    \hatbs p_N(\rho)=\hatbs n(\rho)/N,
    \qquad
    \hatbs n(\rho)\sim \operatorname{Mult}(N,\bsp(\rho)),
\end{equation}
where $\operatorname{Mult}(N,\bsp(\rho))$ denotes the multinomial distribution
with $N$ trials and probability vector $\bsp(\rho)$. By construction,
\begin{equation}
\label{eq:distribution_p}
\begin{aligned}
    \EE[\hatbs p_N(\rho)]&=\bsp(\rho),\\
    \Cov[\hatbs p_N(\rho)]&=N^{-1}\bs\Sigma(\bsp(\rho)),
\end{aligned}
\end{equation}
where $\bs\Sigma(\bsp)\equiv\diag(\bsp)-\bsp\bsp^T$ is the categorical
covariance matrix. In what follows, we use $\hatbs p_\sigma$ as shorthand for the empirical frequency $\hatbs p_N(\sigma)$, leaving its dependence on the statistics implicit.
Similarly, we use $\bs p_\sigma\equiv \bs p(\sigma)$ for the exact probability vector and $\bs\Sigma_\sigma$ for its categorical covariance matrix.

To study how finite measurement resources affect training performances, we consider the \textit{finite-statistics MSE}, defined as the squared error for a fixed choice of measurement, test state, and target observable, averaged over the possible noise realizations at the inference stage with statistics $M$:
\begin{equation}
\label{eq:def_MSE_finite}
    \MSE(\hatbsW,M)
    =
    \EE_{\hatbs p_\sigma}
    \left[(\hatbsW\hatbs p_\sigma-\bsW_\calO\bsp_\sigma)^2\right],
\end{equation}
where the dependence on $M$ is hidden in the frequency vector $\hatbsp_\sigma\equiv \hatbsp_M(\sigma)$.
This MSE is defined with respect to an arbitrary estimator $\hatbsW$ and does not depend on the actual training method used to derive it; we will discuss later on how $\hatbsW$ is derived from the training dataset and how this choice affects the scaling of the MSE.
By contrast, $\bsW_\calO$ is the \textit{ideal} estimator, the one that would be obtained if measuring each training states with infinite precision; by construction, this satisfies $\langle\calO,\rho\rangle=\bsW_\calO\bsp(\rho)$ for every state $\rho$, hence $\hatbsW\hatbs p_\sigma-\bsW_\calO\bs p_\sigma$ is the estimation error for a single realization involving $M$ measures of the state $\sigma$ to estimate $\bs p_\sigma$, and using the estimator $\hatbsW$.
The finite-statistics MSE admits a bias-variance decomposition separating the errors due to an imprecise $\hatbsW$ from the errors due to fluctuations caused by the finite test budget:
\begin{equation}
    \MSE(\hatbsW,M)
    =
    \bias^2_{\rm test}(\hatbsW)
    +
    \Var_{\rm test}(\hatbsW,M),
\end{equation}
where the \emph{test bias} and \emph{test variance} are defined by
\begin{equation}
\label{eq:def_biasvartest}
\begin{aligned}
        \bias^2_{\rm test}(\hatbsW)
    &\equiv
    \left(
    \EE_{\hatbs p_\sigma}
    [\hatbsW\hatbs p_\sigma]
    -\bsW_\calO\bsp_\sigma
    \right)^2,
    \\
    \Var_{\rm test}(\hatbsW,M)
    &\equiv
    \EE_{\hatbs p_\sigma}
    \left[(\hatbsW\hatbs p_\sigma)^2\right]
    -
    (\hatbsW\bsp_\sigma)^2.
\end{aligned}
\end{equation}
Using \cref{eq:distribution_p}, these two quantities can be evaluated
explicitly:
\begin{equation}
\label{eq:test_bias_var}
\begin{aligned}
    \bias_{\rm test}^2(\hatbsW)
    &=
    [ (\hatbsW-\bsW_\calO)\bsp_\sigma ]^2,
    \\
    \Var_{\rm test}(\hatbsW,M)
    &=
    M^{-1}\hatbsW\bs\Sigma_\sigma\hatbsW^T.
\end{aligned}
\end{equation}
The interpretation of the two terms in \cref{eq:test_bias_var} is straightforward.
The test bias is the squared systematic prediction error due to the discrepancy between the ideal readout $\bsW_\calO$ and its estimate $\hatbsW$. The test variance quantifies fluctuations in the prediction caused by shot noise at test time.
Indeed, the test variance vanishes as $M\to\infty$, whereas the test bias is independent of $M$ and survives in this limit.
Thus, increasing the number of test shots cannot reduce the bias caused by the choice of $\hatbsW$.
We therefore refer to the test bias as the \emph{infinite-statistics MSE} and denote it by $\MSE(\hatbsW,\infty)$.
We obtain the following useful decomposition:
\begin{equation}
\label{eq:MSE_splitting_better}
    \MSE(\hatbsW,M)
    =
    \MSE(\hatbsW,\infty)
    +
    \frac{\hatbsW\bs\Sigma_\sigma\hatbsW^T}{M}.
\end{equation}
\Cref{eq:MSE_splitting_better} shows that we can take $M\to\infty$ to remove test sampling contributions and expose the error term $\MSE(\hatbsW,\infty)$, which is entirely caused by the training noise. Note that, when the readout is taken to be ideal, the infinite statistics MSE identically vanishes, and the coefficient
\begin{equation}
    \Var(\bsW_\calO)\equiv \bsW_\calO{\bs \Sigma}_\sigma\bsW_\calO^T
\end{equation}
acquires a special meaning: it is the single-shot variance of the estimator that would be obtained with exact knowledge of the dynamics, which typically coincides with the variance obtained via a shadow-tomography approach~\cite{huang2020PredictingManyProperties,innocenti2023potential,innocenti2023shadow}.
Going beyond the ideal case, we now specialize to the case where $\hatbsW$ is derived via pseudoinverse on the noisy probability matrix holding the measurement data, and then average that remaining error over the training noise realizations.

\begin{figure}[t]
    \centering
    \includegraphics[width=1.\linewidth]{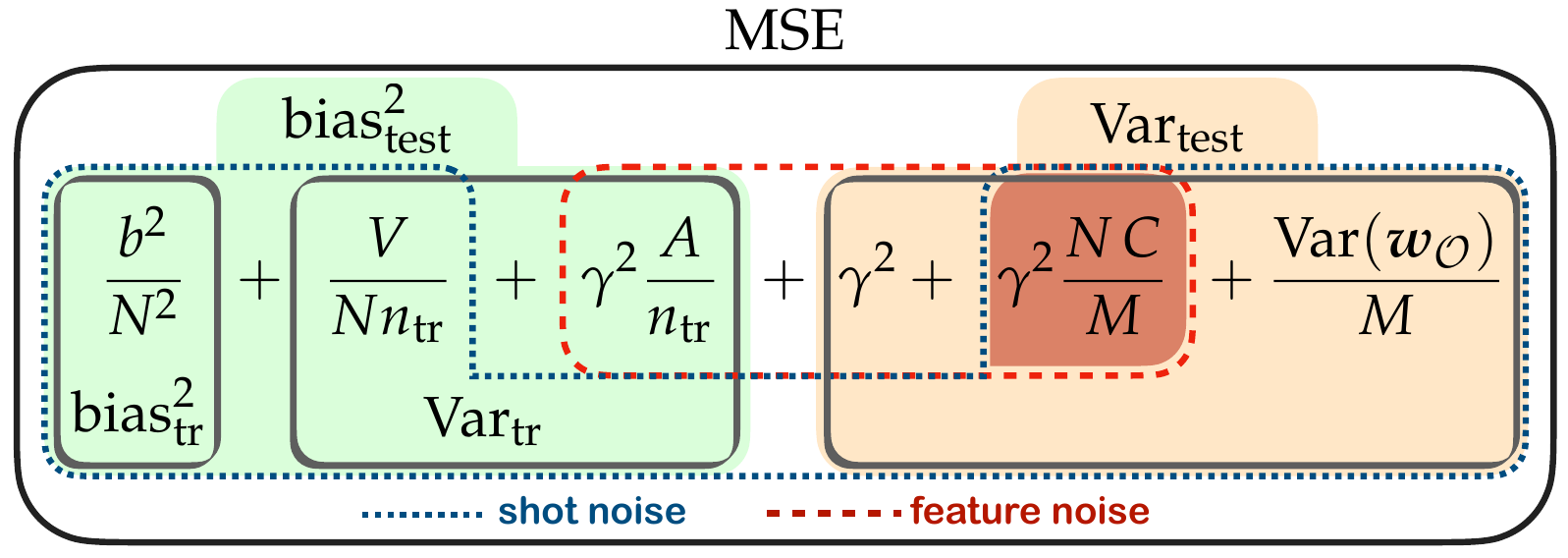}
    \caption{Decomposition of the MSE in terms of the different sources of noise. The effect of finite resource during the training is manifestly represented by the  training bias, while test sampling contributes to the test variance. Target noise induces corrections that also appear in the more familiar setting of classical linear regression. The only exception is the term $\gamma^2 NC/M$, which combines training and test shot noise with target noise and gives rise to a quantum overfitting phenomenon that amplifies the uncertainty introduced during the training stage. The coefficients $b$, $V$, $A$, $C$, and $\Var(\bsW_\calO)$ depend only on the dimension $d$ of the system.}
    \label{fig:scheme}
\end{figure}

Let $\bsP\in\RR^{\nout\times\ntr}$ denote the matrix of exact measurement
probabilities for the $\ntr$ training states, with $i$-th column
$\bsp(\rho_i)$. Its rank satisfies
\begin{equation}
    \operatorname{rank}(\bsP)\leq\min(d^2,\nout,\ntr),
\end{equation}
because the real vector space of Hermitian operators on a $d$-dimensional
Hilbert space has dimension $d^2$. If the training states span this operator
space and the effective POVM is informationally complete, then
$\operatorname{rank}(\bsP)=d^2$. We focus on the overcomplete regime
$\ntr,\nout>d^2$, under these spanning and informational-completeness
assumptions. In this regime, $\bsP$ is necessarily rank deficient. We show in
\cref{sec:qlr_acuteness,sec:perturbation} how this rank constraint controls the
stability of quantum linear regression.

In realistic settings, only finite measurement resources are available and one
must work with a noisy empirical version of $\bsP$. If $N$ measurements are
performed independently on each training state, we define the empirical
frequency matrix $\hatbs P_N$ by taking its $i$-th column as
\begin{equation}
    \hatbs p_i
    =
    \hatbs n^{(i)}/N,
    \qquad
    \hatbs n^{(i)}
    \sim
    \operatorname{Mult}(N,\bsp_i).
\end{equation}

Under the assumptions above, when $\ntr,\nout>d^2$, the regression problem is exactly solvable for any target observable $\calO$~\cite{innocenti2023potential}.
Let $\bs y=(y_1,\ldots,y_{\ntr})^T$ be the vector of training targets.
Among the generally nonunique exact solutions, which correspond to different choices of dual frame for the effective POVM~\cite{zhu2014QuantumStateEstimation,garciaperez2021LearningMeasureAdaptive,innocenti2023shadow}, the canonical minimum-norm solution is $\bsW_\calO=\bs y^T\bs P^+$.
Shot-noise fluctuations in $\hatbs P_N$ generally prevent the exact recovery of these ideal weights from finite data.
The standard unregularized pseudoinverse training rule instead yields the estimator
\begin{equation}%
\label{eq:W}
    \hatbsW_N
    =
    \bs y^T\hatbs P_N^+.
\end{equation}
The estimator $\hatbsW_N$ is a random variable that depends nonlinearly on the particular realization of the training shot noise through the Moore--Penrose pseudoinverse $\hatbsP_N^+$.

To characterize the effect of shot noise during the training stage, we average the infinite-test-statistics MSE over realizations of the training noise, and define
\begin{equation}
    \MSE(N,\infty)
    \equiv
    \EE_{\hatbs P_N}
    \left[\MSE(\hatbsW_N,\infty)\right].
\end{equation}
Here and below, we suppress the dependence on the exact design matrix $\bs P$
and the fixed POVM $\bs\mu$. As before, the averaged MSE can be decomposed into
a \emph{training bias} and a \emph{training variance}:
\begin{equation}
\label{eq:bias_var_decomposition_train}
\begin{aligned}
    &\MSE(N,\infty)
    =
    \bias^2(N)
    +
    \Var_{\rm tr}(N),
    \\
    &\bias^2(N)
    \equiv
    \left[
    \left(\bsW_\calO-\EE_{\hatbs P_N}[\hatbsW_N]\right)
    \bsp_\sigma
    \right]^2,
    \\
    &\Var_{\rm tr}(N)
    \equiv
    \EE_{\hatbs P_N}
    \left[
    \left(
    (\hatbsW_N-\EE_{\hatbs P_N}[\hatbsW_N])
    \bsp_\sigma
    \right)^2
    \right].
\end{aligned}
\end{equation}
The training variance $\Var_{\rm tr}(N)$ quantifies the fluctuations of the prediction around its mean over realizations of the training noise.
The training bias $\bias^2(N)$ is the squared difference between the target prediction and this mean prediction. Its source is explicit in
\cref{eq:bias_var_decomposition_train}: in general,
$\EE_{\hatbs P_N}[\hatbsW_N]\neq\bsW_\calO$ because the pseudoinverse depends nonlinearly on the empirical frequency matrix.
The same nonlinearity makes the dependence of the bias and variance on both $N$ and the number of training states $\ntr$ highly nontrivial.

Our main objective is to show that, despite the nonlinear dependence of the MSE on the training noise, its dependence on the resources used in the protocol can be derived explicitly. These resources are the number of training shots per state $N$, the number of training states $\ntr$, and the number of test shots $M$. Anticipating the results of the following sections, we find that the scaling of the MSE with $N$, $\ntr$, and $M$ can be characterized compactly as the sum of six contributions. This decomposition is shown schematically in \cref{fig:scheme}. Each term originates from one of the three sources of noise discussed in this paper: training shot noise, test shot noise, and label noise. Label noise is the type usually considered in classical linear regression. It can be thought of as the error arising from incomplete knowledge of the states, for instance due to noisy state preparation on a quantum device. All the constants appearing in \cref{fig:scheme} depend only on the dimension of the input system, and are studied explicitly in \cref{sec:asymptotic_bias,sec:asymptotic_var}. Despite its apparent simplicity, this compact expansion captures all the interesting regimes of QELM, as supported by the numerical results shown in \cref{fig:N_scaling,fig:ntrain_saturation,fig:bias_var_scaling,fig:label_noise} and in \cref{app:numerics}.

\section{Quantum linear regression is ill-conditioned}%
\label{sec:qlr_acuteness}

The MSE decomposition in \cref{sec:quantum_setup}, and in particular~\cref{eq:W,eq:bias_var_decomposition_train}, reduced the problem to characterizing the behaviour of the learned readout $\bs y^T\hatbsP_N^+$.
The main obstruction to do this is determining how errors in $\hatbsP_N$ propagate to its pseudoinverse.
Indeed, for an informationally complete POVM and spanning training states with $\nout,\ntr>d^2$, shot noise can increase the rank of $\bsP$,
and the Moore-Penrose pseudoinverse is discontinuous under perturbations that change the rank.
More precisely, if $\rank(\bs P+\delta\bs P)\neq\rank(\bs P)$, then $\lim_{\delta\bs P\to 0}(\bs P+\delta\bs P)^+$ does not generally converge to $\bs P^+$~\cite{wedin1973perturbation,stewart1977perturbation,stewart1990Matrix,gonzalez2015laurent}. 

The notion of \textit{acute perturbations} in matrix perturbation theory is key to understanding the behaviour of the Moore--Penrose pseudoinverse under noise.
When the rank increases, new singular values emerge at the noise scale, and their inversion produces contributions that diverge as the noise vanishes.
\\
More formally, let $\bs A\in\CC^{n\times m}$ and
$\hatbs A_{\tau}=\bs A+\tau\hatbs E$, where $\tau>0$ is a generic
perturbation parameter.
We say that $\tau\hatbs E$ is an \textit{acute perturbation} for $\bs A$ if
\begin{equation*}
    \|\hatbs A_{\tau}\hatbs A_{\tau}^+-\bs A\bs A^+\|_{\rm op} < 1
    \quad\text{and}\quad
    \|\hatbs A_{\tau}^+\hatbs A_{\tau}-\bs A^+\bs A\|_{\rm op} < 1.
\end{equation*}
Here $\bs A\bs A^+$  is the projector onto the image of $\bs A$, while $\bs A^+\bs A$ is the projector onto the row space of $\bs A$.

The acuteness of a perturbation tells us whether the inverse of a matrix is stable under perturbation.
In particular, the instability for non-acute perturbations can be quantified as follows:
\begin{theorem}[\cite{stewart1977perturbation}]
\label{thm:moore_perturbation}
    If $\tau\hatbs E$ is \textit{not} an acute perturbation for $\bs A$,
    \begin{equation}
        \|\hatbs A^+_\tau-\bs A^+\|_{\rm op}\geq \tau^{-1}\|\hatbs E\|^{-1}_{\rm op}.
    \end{equation}
    If, further, ${\rm rank}(\hatbs A_\tau)\geq {\rm rank}(\bs A)$, then
    \begin{equation}
        \|\hatbs A^+_\tau\|_{\rm op}\geq \tau^{-1}\|\hatbs E\|^{-1}_{\rm op}.
    \end{equation}
\end{theorem}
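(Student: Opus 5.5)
Write $\hatbs A\equiv\hatbs A_\tau$ and $\hatbs E' \equiv \tau\hatbs E=\hatbs A-\bs A$. The plan is to obtain both bounds from a single unit vector on which the two projectors, onto $\im$ or onto the row space, disagree maximally.

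Non-acuteness means one of the two projector differences has operator norm at least $1$. A difference of two orthogonal projectors always has norm at most $1$, so in fact equality holds. By the standard two-subspace geometry, norm exactly $1$ implies there is a unit vector lying in the range of one projector and in the kernel of the other. I treat the column-space case; the row-space case follows by applying the same argument to $\bs A^\dagger$ and $\hatbs A^\dagger$, since $(\bs A^\dagger)^+=(\bs A^+)^\dagger$ and operator norms are invariant under adjoints. This leaves two subcases.

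\emph{(a)} There is a unit $u\in\im\hatbs A$ with $u\perp\im\bs A$, i.e.\ $\bs A^\dagger u=0$.
\begin{itemize}
\item Set $v=\hatbs A^+u$. Then $\hatbs A v=\hatbs A\hatbs A^+u=u$, and $v$ lies in the row space of $\hatbs A$.
\item Since $\bs A^\dagger u=0$, we have $\bs A^+u=0$, because $\ker\bs A^+=\ker\bs A^\dagger$.
\item Apply $\bs A$ to $v$: $\bs A v=u-\hatbs E' v$. Taking the inner product with $u$ gives $\langle u,\bs A v\rangle=\langle \bs A^\dagger u,v\rangle=0$, hence $1=\langle u,\hatbs E' v\rangle\le \|\hatbs E'\|_{\rm op}\|v\|$.
\item Therefore $\|(\hatbs A^+-\bs A^+)u\|=\|v\|\ge \|\hatbs E'\|^{-1}_{\rm op}=\tau^{-1}\|\hatbs E\|^{-1}_{\rm op}$.
\item The same vector also gives $\|\hatbs A^+\|_{\rm op}\ge\|v\|$, which is the second bound.
\end{itemize}

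\emph{(b)} There is a unit $u\in\im\bs A$ with $\hatbs A^\dagger u=0$. This is the symmetric case with the roles of $\bs A$ and $\hatbs A$ swapped. The same computation gives $\|\bs A^+u\|\ge\|\hatbs E'\|^{-1}_{\rm op}$ and $\hatbs A^+u=0$, so the first bound holds again.

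For the second bound in case (b), I will use the rank hypothesis. If $\rank\hatbs A\ge\rank\bs A$, then $\dim\im\hatbs A\ge\dim\im\bs A$. A subspace of dimension at least that of $\im\bs A$, but not contained in $(\im\bs A)^{\perp\perp}$ in the acute sense, must leave some $u'\in\im\hatbs A$ orthogonal to $\im\bs A$. More precisely, the two-subspace decomposition shows that when the dimensions satisfy $\dim\im\hatbs A\ge\dim\im\bs A$, the nonzero principal-angle structure is symmetric. So a vector of $\im\bs A$ orthogonal to $\im\hatbs A$ forces a vector of $\im\hatbs A$ orthogonal to $\im\bs A$. This reduces case (b) to case (a).

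The main obstacle is this counting step. I would prove it by comparing $\dim(\im\bs A\cap(\im\hatbs A)^\perp)$ with $\dim(\im\hatbs A\cap(\im\bs A)^\perp)$ through the CS decomposition. The eigenvalue $\pm1$ multiplicities of $P-\hat P$ differ exactly by $\dim\im\bs A-\dim\im\hatbs A$, which is at most $0$, so whenever there is a $+1$ direction there is also a $-1$ direction.
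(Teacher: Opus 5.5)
Your proof is correct. The paper gives no proof of this statement: it is quoted from \cite{stewart1977perturbation} and used only to justify \cref{eq:bound_N}. So your argument is a self-contained replacement rather than a variant of something in the text.

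Your witness-vector route covers every non-acute perturbation, including equal-rank ones in which $\operatorname{im}\hatbs{A}_\tau$ is merely rotated away from $\operatorname{im}\bs{A}$. It also extracts both inequalities from the single vector $v=\hatbs{A}_\tau^+u$. The paper only ever uses the case $\operatorname{rank}\hatbs{A}_\tau>\operatorname{rank}\bs{A}=r$. There the bound on $\|\hatbs{A}_\tau^+\|_{\rm op}$ follows more quickly from Weyl's inequality, $\sigma_{r+1}(\hatbs{A}_\tau)\le\sigma_{r+1}(\bs{A})+\tau\|\hatbs{E}\|_{\rm op}=\tau\|\hatbs{E}\|_{\rm op}$. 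Since $\hatbs{A}_\tau$ has at least $r+1$ nonzero singular values, this gives $\|\hatbs{A}_\tau^+\|_{\rm op}\ge\sigma_{r+1}(\hatbs{A}_\tau)^{-1}\ge\tau^{-1}\|\hatbs{E}\|^{-1}_{\rm op}$. That route needs no notion of acuteness, but it says nothing about the equal-rank non-acute case, which yours handles.

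Two remarks on the write-up.
\begin{itemize}
\item The sentence about a subspace ``not contained in $(\operatorname{im}\bs{A})^{\perp\perp}$ in the acute sense'' is vacuous, since $(\operatorname{im}\bs{A})^{\perp\perp}=\operatorname{im}\bs{A}$. Drop it in favour of the multiplicity statement that follows, which is correct.
\item The CS decomposition is heavier than that step needs. Let $u\in\operatorname{im}\bs{A}$ be a unit vector orthogonal to $\operatorname{im}\hatbs{A}_\tau$. For every $y\in\operatorname{im}\hatbs{A}_\tau$ one has $\langle u,\bs{A}\bs{A}^+y\rangle=\langle u,y\rangle=0$, so $\bs{A}\bs{A}^+$ maps $\operatorname{im}\hatbs{A}_\tau$ into $\operatorname{im}\bs{A}\cap u^\perp$. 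That subspace has dimension $\operatorname{rank}\bs{A}-1<\operatorname{rank}\hatbs{A}_\tau$. Hence $\bs{A}\bs{A}^+$ annihilates some nonzero $y\in\operatorname{im}\hatbs{A}_\tau$, which is exactly the witness your case (a) requires.
\end{itemize}
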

For a fixed rank-increasing direction $\hatbs E$ for which nonacuteness
persists as $\tau\to0$, these bounds exhibit the resulting
$\tau^{-1}$ divergence.  The following rank criterion is also useful:
\begin{theorem}[\cite{wedin1973perturbation}]
    If two matrices $\bs A,\bs B\in\CC^{n\times m}$ are acute with respect
    to one another, then ${\rm rank}(\bs A)={\rm rank}(\bs B)$. If
    ${\rm rank}(\bs A)={\rm rank}(\bs B)$ and
    $\|\bs B-\bs A\|_{\rm op}\leq \|\bs A^+\|_{\rm op}^{-1}$, then they are
    acute with respect to one another.
\end{theorem}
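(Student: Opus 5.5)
The statement has two parts, and I would handle them separately, working only with the orthogonal projectors $\bs P_{\bs A}\equiv\bs A\bs A^+$ and $\bs P_{\bs B}\equiv\bs B\bs B^+$ onto the column spaces (and their analogues $\bs A^+\bs A$, $\bs B^+\bs B$ onto the row spaces).

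\emph{First part (acute $\Rightarrow$ equal rank).} Suppose $\rank(\bs B)>\rank(\bs A)$. A dimension count gives $\dim\im(\bs B)+\dim\im(\bs A)^\perp>n$, so there is a unit vector $\bs x\in\im(\bs B)\cap\im(\bs A)^\perp$. For this vector $(\bs P_{\bs B}-\bs P_{\bs A})\bs x=\bs x$, hence $\|\bs P_{\bs B}-\bs P_{\bs A}\|_{\rm op}\ge1$, and $\bs A,\bs B$ are not acute. The case $\rank(\bs A)>\rank(\bs B)$ is symmetric. Only the column-space condition is needed here.

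\emph{Second part (equal rank and small perturbation $\Rightarrow$ acute).} Set $r=\rank(\bs A)=\rank(\bs B)$, $\bs E=\bs B-\bs A$, and use $\|\bs A^+\|_{\rm op}^{-1}=\sigma_r(\bs A)$. For two orthogonal projectors of equal rank in finite dimension, $\|\bs P_{\bs B}-\bs P_{\bs A}\|_{\rm op}\le1$, with equality if and only if $\im(\bs B)\cap\im(\bs A)^\perp\neq\{0\}$. I would prove this with the standard CS/principal-angle decomposition, or cite it.

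So suppose there is a unit $\bs y\in\im(\bs B)$ with $\bs y^\dagger\bs A=0$. Define $\bs B'=(\bsI-\bs y\bs y^\dagger)\bs B$.
\begin{itemize}
\item Since $\bs y\in\im(\bs B)$, projecting it out gives $\rank(\bs B')=r-1$.
\item Since $\bs y^\dagger\bs A=0$, we have $\bs A=(\bsI-\bs y\bs y^\dagger)\bs A$, and therefore $\bs A-\bs B'=-(\bsI-\bs y\bs y^\dagger)\bs E$, which has operator norm at most $\|\bs E\|_{\rm op}$.
\end{itemize}
By Eckart--Young (Weyl's inequality for singular values), any rank-$(r-1)$ matrix lies at distance at least $\sigma_r(\bs A)$ from $\bs A$. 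Hence $\|\bs E\|_{\rm op}\ge\sigma_r(\bs A)$. The row-space condition follows by applying the same argument to $\bs A^\dagger,\bs B^\dagger$.

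\emph{Expected obstacle: the boundary case $\|\bs E\|_{\rm op}=\sigma_r(\bs A)$.} When the hypothesis is strict, $\|\bs E\|_{\rm op}<\|\bs A^+\|_{\rm op}^{-1}$, the argument above yields a contradiction and we are done. At equality the inequality chain must be tight, which forces:
\begin{itemize}
\item $\bs y^\dagger\bs E$ to vanish along the maximizing direction of $\bs E$, and
\item $\sigma_r(\bs A)$ to be attained by a rank-$(r-1)$ approximant of a specific form.
\end{itemize}
I would first try to exploit $\bs y^\dagger\bs B=\bs y^\dagger\bs E\neq0$, which holds since $\bs y\in\im(\bs B)$ and $\bs y\neq0$, to extract a strict gain in the norm estimate. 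If that fails, I would check Wedin's original formulation to see whether the statement should carry a strict inequality. For the use made of the statement in this paper the strict version suffices, since shot noise is small compared with $\sigma_{d^2}(\bsP)$.
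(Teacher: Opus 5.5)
Your first part is correct. Your second part is a complete proof of the result under a \emph{strict} hypothesis: if the ranks agree and $\|\bs B-\bs A\|_{\rm op}<\|\bs A^+\|_{\rm op}^{-1}$, then the rank-$(r-1)$ matrix $\bs B'=(\bsI-\bs y\bs y^\dagger)\bs B$ together with Eckart--Young rules out any unit $\bs y\in\im(\bs B)\cap\im(\bs A)^\perp$. Applying the same argument to $\bs A^\dagger,\bs B^\dagger$ handles the row spaces.

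The projector fact you invoke is exactly where equal rank enters, and it has a two-line proof. The maps $\bs x\mapsto\bs A\bs A^+\bs x$ from $\im(\bs B)$ to $\im(\bs A)$ and $\bs z\mapsto\bs B\bs B^+\bs z$ from $\im(\bs A)$ to $\im(\bs B)$ are mutually adjoint between spaces of the same dimension $r$. Hence their kernels $\im(\bs B)\cap\im(\bs A)^\perp$ and $\im(\bs A)\cap\im(\bs B)^\perp$ have equal dimension, which lets you always work with $\sigma_r(\bs A)$ rather than $\sigma_r(\bs B)$.

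The paper gives no proof of its own, only the citation. A one-line alternative also gives a quantitative bound. Since $(\bsI-\bs B\bs B^+)\bs B=0$, one has $(\bsI-\bs B\bs B^+)\bs A\bs A^+=-(\bsI-\bs B\bs B^+)\bs E\bs A^+$. Therefore $\|(\bsI-\bs B\bs B^+)\bs A\bs A^+\|_{\rm op}\le\|\bs E\|_{\rm op}\|\bs A^+\|_{\rm op}$, which with equal ranks bounds $\|\bs B\bs B^+-\bs A\bs A^+\|_{\rm op}$ by $\|\bs A^+\|_{\rm op}\|\bs E\|_{\rm op}$ rather than merely showing it is below $1$.

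Do not spend effort on the boundary case: it is false, so no strict gain can be extracted, and the statement as printed (with $\leq$) is incorrect. Take $\bs A=\mathrm{diag}(1,0)$ and $\bs B=\mathrm{diag}(0,1)$. Both have rank one and $\|\bs B-\bs A\|_{\rm op}=1=\|\bs A^+\|_{\rm op}^{-1}$. Yet $\bs B\bs B^+-\bs A\bs A^+=\bs B^+\bs B-\bs A^+\bs A=\mathrm{diag}(-1,1)$ has norm exactly $1$, so the pair is not acute.

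In your notation, $\bs y=\bs e_2$ and $\bs B'=0$. Every step of $\|\bs E\|_{\rm op}\ge\|\bs A-\bs B'\|_{\rm op}\ge\sigma_1(\bs A)$ is then an equality, even though $\bs y^\dagger\bs E=\bs e_2^\dagger\neq0$. So the route you planned through $\bs y^\dagger\bs E\neq0$ cannot work.

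The correct hypothesis is the strict one, $\|\bs A^+\|_{\rm op}\|\bs B-\bs A\|_{\rm op}<1$. That is exactly what your argument proves and, as you note, all the paper needs.
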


For instance, when the ideal matrix is square and invertible, as are the calibration matrices inverted in readout-error mitigation~\cite{bravyi2021MitigatingMeasurementErrors,maciejewski2020MitigationReadoutNoise}, the second statement guarantees that sufficiently small perturbations are acute, so that finite calibration statistics only induce a regular error in the inverse.
By contrast, in the overcomplete regime considered here, shot noise generically increases the rank:
\begin{equation}
    \rank(\bs P)=d^2< {\rm rank}(\hatbsP_N)=\min(\nout,\ntr).
\end{equation}
On this full-rank event the perturbation is nonacute, and writing the perturbation as $\hatbsP_N=\bsP + N^{-1/2}\hatbsXi$, we have
\begin{equation}%
\label{eq:bound_N}
    \|\hatbs P_N^+\|_{\rm op}\geq
    \sqrt{N}
    \|\hatbsXi\|^{-1}_{\rm op}.
\end{equation}
This growth is intrinsic to the unregularized pseudoinverse estimator in the overcomplete setting.
The same mechanism affects quantum fidelity kernels: the ideal Gram matrix $K_{ij}=\trace(\rho_i\rho_j)$ of $\ntr$ training states has rank at most $d^2$, whereas its finite-shot estimate is generically full rank. Finite-shot effects on estimated kernel matrices have been discussed in~\cite{wang2021UnderstandingPowerQuantum,hubregtsen2022TrainingQuantumEmbedding}.

\Cref{eq:bound_N} concerns the norm of the full pseudoinverse and does not by
itself imply that physical predictions diverge.
We will show in \cref{sec:perturbation} that its singular block is annihilated by left and right multiplication by $\bs y^T$ and $\bs p_\sigma$.
Nevertheless, this does not mean that the singularity has no effect at all: it becomes relevant when the target $\bs y$ is noisy, as we will discuss in~\cref{sec:asymptotic_var}.

\section{Perturbative analysis of noise perturbations}%
\label{sec:perturbation}

As discussed in \cref{sec:qlr_acuteness}, in the regime $\ntr,\nout>d^2$, the pseudoinverse $\hatbsP_N^+$ diverges in the same $N\to\infty$ limit where the measured probabilities become exact, due to the rank increase caused by shot noise.
However, as per \cref{eq:W}, the learned readout is the product $\hatbsW_N=\bs y^T\hatbsP_N^+$, which leaves some hope that the diverges of $\hatbsP_N^+$ do not translate into divergences of $\hatbsW_N$. Furthermore,  the infinite-test-statistics MSE is $\hatbsW_N\bs p_\sigma=\bs y^T\hatbsP_N^+ \bs p_\sigma$, which might further regularize the singular behaviour of $\hatbsP_N^+$.
We will show here that this is precisely what happens by developing, in the $\ntr>\nout>d^2$ regime, for any fixed noise realization, a tailored perturbative approach for the pseudoinverse that isolates its divergences in specific matrix blocks.
This will be used to derive the MSE scaling results in \cref{sec:gaussian_approximation}.

We decompose the noisy probability matrix as
$\hatbs P_N=\bsP+\hatbsXi/\sqrt{N}$, where the columns $\hatbs\xi_i$ represent the centered and rescaled multinomial fluctuations. Each column is independent but not identically distributed, as $\Cov(\hatbs \xi_i)=\bs\Sigma_i$.
The $N^{-1/2}$ rescaling makes $\hatbsXi$ only depend on $N$ via its third and higher moments, and we thus suppress its explicit $N$ dependence for sake of notation here.
We also fix the regime $\ntr>\nout$.
With high probability, $\hatbsP_N$ is surjective, and its pseudoinverse can be written as $\hatbsP_N^+=\hatbsP_N^T(\hatbsP_N\hatbsP_N^T)^{-1}$.
The discontinuity of the pseudoinverse therefore arises from the inverse of the $\nout\times\nout$ matrix
\begin{equation}
    \hatbsP_N\hatbsP_N^T=\bs{P}\bs{P}^T+
    \frac{1}{\sqrt N}\left(
    \bs{P}
    \hsXi^T+
    \hsXi
    \bs{P}^T
    \right)
    +\frac{1}{N}\hsXi\hsXi^T.
\end{equation}
To distinguish directions carrying ideal probabilities data from directions populated only by noise, we use the row and column spaces of $\bsP$. Rank deficiency then causes nontrivial complementary kernel and cokernel subspaces.
Choose isometries $V_1,V_2$ in the training space and $U_1,U_2$ in the outcome space such that the corresponding orthogonal projectors are
\begin{equation}
    \begin{aligned}
        \Pi_1
        &\equiv \Pi_{\supp(\bs{P})}
        =\bs{P}^+\bs{P}=V_1V_1^T,
        \\
        \Pi_{\bar 1}
        &\equiv\Pi_{\im(\bs{P})}
        =\bs{P}\bs{P}^+=U_1U_1^T.
    \end{aligned}
\end{equation}
Write $\Pi_2\equiv\bsI-\Pi_1=V_2V_2^T$ for the projector onto
$\ker(\bsP)$ and
$\Pi_{\bar 2}\equiv\bsI-\Pi_{\bar 1}=U_2U_2^T$ for the projector onto
$\coker(\bsP)$.
By suitable changes of basis in the input and output spaces, we write $\hatbsP_N$ in block form as
\begin{equation}%
\label{eq:PN_expansion}
    \hatbsP_N=
    \begin{pmatrix}
        \bs{P}_{ 11}+N^{-1/2}\hsXi_{11} & N^{-1/2}\hsXi_{12}\\
        N^{-1/2}\hsXi_{ 2 1 } & N^{-1/2}\hsXi_{ 2 2}
    \end{pmatrix}\,,
\end{equation}
where
$\hsXi_{mn}\equiv U_m^T\hsXi V_n$ and
$\bsP_{11}\equiv U_1^T\bsP V_1$.
By definition, $\bsP=\Pi_{\bar1}\bsP\Pi_1$ and
$\bsP_{11}\in\RR^{d^2\times d^2}$ is invertible.

Using~\cref{eq:PN_expansion}, one can find a fixed-realization asymptotic
expansion for $\hatbsP_N^+$. We show in \cref{app:perturbative_P} that
$\Delta\hatbsP_N^+\equiv\hatbsP_N^+-\bsP^+$ has leading terms
\begin{equation}%
\label{eq:Delt_PN_small}
\begin{aligned}
    (\Delta\hatbs{P}_N^+)_{11}
    &=-N^{-1/2}\bs{P}_{11}^+(\hsXi/\hsXi_{22})\bs{P}_{11}^+
      +O(N^{-1}),\\
    (\Delta\hatbs{P}_N^+)_{12}
    &=-\bs{P}_{11}^+\hsXi_{12}\hsXi_{22}^+
      +O(N^{-1/2}),\\
    (\Delta\hatbs{P}_N^+)_{21}
    &=-\hsXi^+_{22}\hsXi_{21}\bs{P}_{11}^+
      +O(N^{-1/2}),\\
    (\Delta\hatbs{P}_N^+)_{22}
    &=\sqrt N\,\hsXi_{22}^++O(1).
\end{aligned}
\end{equation}
where we introduced the \emph{generalized Schur complement}
\begin{equation}
    \hsXi/\hsXi_{22}=\hsXi_{11}-\hsXi_{12}\hsXi_{22}^+\hsXi_{21}\,.
\end{equation}
It is evident from~\cref{eq:Delt_PN_small} that
$\Delta\hatbs{P}_N^+$ does not converge to zero as $N\rightarrow\infty$:
the off-diagonal blocks approach finite stochastic limits, whereas the $(2,2)$
block diverges as $\sqrt N$.
Notably, the discontinuity is primarily controlled by $\hsXi_{22}^+$.

Using the pseudoinverse estimator $\hatbsW_N$ as given in~\cref{eq:W}, the MSE for a given training noise realization and test state $\sigma$, at infinite test statistics, is
\begin{equation}%
\label{eq:MSEWinfty_2}
    (\Delta\hat o)^2\equiv\|(\hatbsW_N-\bsW_\calO)\bsp_\sigma\|^2
    =
    [\bs y^T \Delta\hatbs{P}_N^+ \bsp_\sigma]^2.
\end{equation}
By construction, 
$\bsp_\sigma\equiv\langle\bs\mu,\sigma\rangle$ satisfies $\Pi_{\bar2}\bsp_\sigma=0$, and
$\bs y^T\equiv \langle\calO,\bs\rho\rangle$ satisfies $\bs y^T\Pi_2=0$.
In words, these amount to saying that $\bsp_\sigma$ is in the column space of $\bs P$ while $\bs y^T$ is in its row space.
Thus the divergent $(2,2)$ block in~\cref{eq:Delt_PN_small} does not enter $\MSE(\hatbsW_N, \infty)$.
Expanding~\cref{eq:MSEWinfty_2} via~\cref{eq:Delt_PN_small} gives
\begin{equation}%
\label{eq:delta_P_11}
    \Delta\hat o
    =
    \bsw^T_{\calO} U_1
    \left(
    \frac{\bs{A}_1}{\sqrt N}
    +
    \frac{\bs{A}_2}{N}
    \right)\!
    V_1^T\langle\tilde{\bs\rho},\sigma\rangle
    +
    O\Big( \frac{1}{N^{3/2}} \Big),
\end{equation}
where we kept the first two nontrivial orders in $N$ of $(\Delta\hatbs{P}_N^+)_{11}$, and used $\bs w_{\calO}\equiv\langle\tilde{\bs\mu},\calO\rangle$.
Here
$\tilde{\bs\mu}$ and $\tilde{\bs\rho}$ are, respectively, the canonical dual measurement and training frames defined in \cref{app:frame}.
The operators $\bsA_1,\bsA_2$ are given by
\begin{equation}
\label{eq:A1_A2_definitions}
\begin{gathered}
    \bs{A}_1
    =
    -\hsXi/\hsXi_{22}\,,\\
    \bsA_2
    =
    (\hatbsXi/\hatbsXi_{22})\bsP_{11}^{+}(\hatbsXi/\hatbsXi_{22})
    -
    \hatbsXi_{12} \hatbsQ \hatbsXi_{12}^T
    \bsP_{11}^{+T}\hatbsZ,
\end{gathered}
\end{equation}
where $\hatbsQ\equiv\bsI-\hsXi_{22}^+\hsXi_{22}$ is the projector onto $\ker(\hsXi_{22})$, and $\hatbsZ\equiv\bsI+\hatbsXi_{21}^T(\hatbsXi_{22}\hatbsXi_{22}^T)^{-1}\hatbsXi_{21}$.
When $\ntr=\nout$, the block $\hatbsXi_{22}$ is square; if it is
nonsingular, then $\hatbsQ=0$.  Thus $\hatbs Q$ captures the additional
kernel present in the rectangular case.
Here $O(N^{-3/2})$ denotes the next term in the formal expansion at a
fixed noise realization for which the required block inverses exist: it is
not a uniform probabilistic remainder bound.  In particular, the nonlinear
random coefficients $\hatbsXi_{22}^+$, $\hatbs Q$, and $\hatbs Z$ cannot be
averaged term by term without additional conditioning estimates.

\Cref{eq:delta_P_11} exposes the nontrivial consequences of computing the pseudoinverse in a non-acute perturbation regime: exact targets and test probabilities protect the prediction from the divergent contribution, without eliminating the finite-shot noise.
However, the resulting random coefficients $\bs A_1, \bs A_2$ are nonlinear functions of $\hatbsXi$ and depend in particular on the diverging term $\hatbsXi_{22}^+$, and therefore cannot be averaged term by term without additional control.
We will show how to handle this issue first in \cref{sec:gaussian_approximation}, in a homogeneous centered Gaussian noise model, and then present in \cref{sec:asymptotic_bias} an alternative proof strategy that directly operates on the multinomial noise.

\section{Homogeneous Gaussian approximation}
\label{sec:gaussian_approximation}

The perturbative expression in \cref{eq:delta_P_11} contains, hidden in the $\bs A_1$ and $\bs A_2$ operators, a generalized Schur complement and the projector $\hatbs Q$, both of which are nonlinear
functions of the shot-noise matrix.
We now show that these quantities can be averaged explicitly in a homogeneous reference model, meaning that we assume the shot noise distribution to be Gaussian and state-independent. The purpose of this construction is to expose the origin of the $N^{-2}$ squared bias and the $(N\ntr)^{-1}$ training variance in a setting where the relevant averages can be computed analytically.
Interestingly, although there is, a priori, little reason to expect a noise model that neglects the state dependence of the covariance matrices to reproduce the multinomial case accurately, the resulting MSE agrees remarkably well with both the multinomial prediction and our numerical observations.
In \cref{app:numerics} we further present several simulations clearly showing that this assumption reproduces extremely well the MSE results, even in the regime with low $N$.
The derivation operating directly on the multinomial noise requires different techniques and will be discussed separately in \cref{sec:asymptotic_bias}.

In the homogeneous Gaussian model, we assume that $\hatbsXi$ follows a matrix-normal distribution~\cite{gupta2018matrix}:
\begin{equation}
    \hatbsXi
    \sim\calM\calN(0,\barSigma,\bs I_{\ntr}),
    \qquad
    \barSigma
    \equiv\frac{1}{\ntr}\sum_{i=1}^{\ntr}\bs\Sigma^{(i)}.
    \label{eq:gaussian_reference_model_main}
\end{equation}
Thus the columns of $\hatbsXi$ are independent centered Gaussian vectors, all with the same covariance matrix $\overline{\bs\Sigma}$.
Using the same block coordinates of \cref{eq:PN_expansion}, we write $\barSigma_{ab}\equiv U_a^T\barSigma U_b$, and assume
$\barSigma_{22}$ is positive definite.  Its Schur complement is
\begin{equation}
    \barSigma_{\rm eff}
    \equiv
    \barSigma_{11}
    -\barSigma_{12}\barSigma_{22}^{-1}\barSigma_{21}.
    \label{eq:gaussian_effective_covariance_main}
\end{equation}

Fix $d$ and $\nout$ and let $\ntr\to\infty$. Since $\ntr\ge\nout$, $\hatbsXi_{22}$ has full row rank almost surely, so
$\hatbs Q$ is the orthogonal projector onto $\ker(\hatbsXi_{22})$ and has rank $\ntr-\nout$. 
Conditionally on a realization of $\hatbsXi_{21}$ and $\hatbsXi_{22}$, the generalized Schur complement $\hatbsXi/\hatbsXi_{22}$ becomes linear in the remaining noise matrix terms, and thus also follows a matrix-normal distribution; this conditional distribution will then depend on $\hatbs Z$:
\begin{equation}
    \hatbsXi/\hatbsXi_{22}
    \mathrel{\big|}(\hatbsXi_{21},\hatbsXi_{22})
    \sim
    \calM\calN(0,\barSigma_{\rm eff},\hatbsZ).
    \label{eq:gaussian_schur_conditional_main}
\end{equation}
Under the same conditioning,
$\hatbsXi_{12}\hatbs Q\hatbsXi_{12}^T$ is a random Wishart matrix with scale
$\barSigma_{\rm eff}$ and $\ntr-\nout$ degrees of freedom, and it is
independent of the generalized Schur complement. This means that we can perform all the averages in the following way: first we average any polynomial in the Schur complement and $\hatbsXi_{12}\hatbs Q\hatbsXi_{12}^T$, using standard formulas for Gaussian and Wishart distributions \cite{muirhead2009aspects}; then, we average the resulting moments with respect to $\hatbs Z$.
In fact, let us remember that $\hatbs Z = \bs I+\hatbsXi^T_{21}(\hatbsXi_{22}\hatbsXi_{22}^T)^{-1}\hatbsXi_{21}$ and that $\hatbsXi_{21},\,\hatbsXi_{22}$ are independent. Moreover, $(\hatbsXi_{22}\hatbsXi_{22}^T)^{-1}$ is known to be distributed as an inverse Wishart matrix and moments are explicitly known. For instance, it can be seen that, for $\ntr>\nout+1$,
\begin{equation}
    \EE[\hatbsZ]
    =\frac{\ntr-d^2-1}{\ntr-\nout-1}\,\bs I.
    \label{eq:gaussian_Y_mean_main}
\end{equation}

The conditional Gaussian calculation in \cref{lem:gaussian_A1_A2_moments} allows us to compute the first and second moments of the terms depending on $\bs A_1$ and $\bs A_2$ in \cref{eq:delta_P_11}, at least when $\ntr > \nout + 3$. The terms most relevant to our discussion are:
\begin{align}
    \EE\!\left[
        \bs w_{\calO}^TU_1\bs A_2
        V_1^T\langle\tilde{\bs\rho},\sigma\rangle
    \right]
    &=
    -\left(1-\frac{d^2+1}{\ntr}\right)b_{\rm hom},\nonumber\\
    \Var\!\left[
        \bs w_{\calO}^TU_1\bs A_1
        V_1^T\langle\tilde{\bs\rho},\sigma\rangle
    \right]
    &=
    \frac{V_{\rm hom}}{\ntr}
    \frac{\ntr-d^2-1}{\ntr-\nout-1}.
    \label{eq:gaussian_A1_A2_main}
\end{align}
where we defined the coefficients:
\begin{equation}
\begin{aligned}
    b_{\rm hom}
    &\equiv
    \ntr\,\bs w_{\calO}^TU_1\barSigma_{\rm eff}
    \bsP_{11}^{-T}V_1^T
    \langle\tilde{\bs\rho},\sigma\rangle,\\
    V_{\rm hom}
    &\equiv
    \ntr\,
    \|\langle\tilde{\bs\rho},\sigma\rangle\|_2^2
    \bs w_{\calO}^TU_1\barSigma_{\rm eff}U_1^T
    \bs w_{\calO}.
\end{aligned}
\label{eq:gaussian_b_V_main}
\end{equation}
We recall that $\tilde{\bs\rho}$ is the dual training frame defined in \cref{app:frame}. The subscript \emph{hom} in \cref{eq:gaussian_b_V_main} distinguishes these two coefficients from their counterparts in the full multinomial model discussed in \cref{sec:asymptotic_bias}. In \cref{subsec:gaussian_multinomial_comparison}, we show that the coefficient $b_{\rm hom}$ obtained from the homogeneous Gaussian model in fact coincides with its multinomial counterpart, defined in \cref{eq:multinomial_b_V_main}. The variance contribution, by contrast, differs from the multinomial one in \cref{eq:multinomial_b_V_main}; the two coincide only when the homogeneity condition is imposed on the training covariance matrices $\bs \Sigma_i$.

The moments help distinguish the different roles of the two perturbative orders in \cref{eq:delta_P_11}: $\bs A_1/\sqrt N$ is centered and supplies the leading training variance, while $\bs A_2/N$ contributes the bias and at subleading order the variance.
Defining the second-order truncation
\begin{equation}
    \Delta\hat o^{(2)}
    \equiv
    \bsw^T_{\calO}U_1
    \left(
        \frac{\bs A_1}{\sqrt N}
        +\frac{\bs A_2}{N}
    \right)
    V_1^T\langle\tilde{\bs\rho},\sigma\rangle,
    \label{eq:gaussian_second_order_truncation}
\end{equation}
its expectation value is exactly $-(1-(d^2+1)/\ntr)b/N$, and thus for large $N,\ntr$ at fixed $d$, we have
\begin{align}
    \EE[\Delta\hat o^{(2)}]
    &
    =-\frac{b_{\rm hom}}{N}+O\!\left(\frac{1}{N\ntr}\right),\nonumber\\
    \Var(\Delta\hat o^{(2)})
    &=
    \frac{V_{\rm hom}}{N\ntr}
    +O\!\left(
        \frac{1}{N\ntr^2}
        +\frac{1}{N^2\ntr}
    \right).
    \label{eq:gaussian_mean_variance_main}
\end{align}
Consequently,
\begin{equation}
    \EE[(\Delta\hat o^{(2)})^2]
    =\frac{b_{\rm hom}^2}{N^2}
    +\frac{V_{\rm hom}}{N\ntr}
    +O\!\left(
        \frac{1}{N^2\ntr}
        +\frac{1}{N\ntr^2}
    \right).
\label{eq:gaussian_MSE_main}
\end{equation}
Thus the centered first-order fluctuation produces the $(N\ntr)^{-1}$ training variance, whereas the second-order mean produces
the $N^{-2}$ squared bias. 

The above analysis relies on two main assumptions: first, that the Gaussian homogeneous model provides a good approximation, and second, that the remainder of the Moore–Penrose expansion is well behaved. Although the numerical evidence presented in \cref{app:numerics} suggests that both assumptions hold, proving them rigorously is not straightforward. For instance, attempting to average higher-order terms in the Moore–Penrose expansion reveals that stronger conditions on $\ntr$ are required to ensure the existence of the relevant moments of the Wishart distributions. Moreover, finiteness of every coefficient up to a fixed order does not by
itself justify averaging the entire perturbative series or its truncated
remainder.  The fixed-realization expansion has a random radius of
convergence controlled by the smallest relevant singular values.  Rare
realizations in which $\hatbsXi_{22}$ or the effective Schur complement is
nearly singular can therefore prevent the interchange of expectation with
an infinite series and can invalidate a remainder estimate that is not
uniform in the noise realization. This is why the Gaussian calculation
provides exact moments for fixed coefficients but is not, by itself, a
proof of an all-orders MSE expansion.

For these reasons, we rely on the more rigorous analysis of \cref{sec:asymptotic_bias} to circumvent these issues. Nevertheless, the Gaussian model and the Moore–Penrose expansion remain useful tools for obtaining accurate estimates efficiently.

\begin{figure*}[t]
    \centering
    \begin{minipage}[t]{0.485\textwidth}
        \centering
        \includegraphics[width=\linewidth]{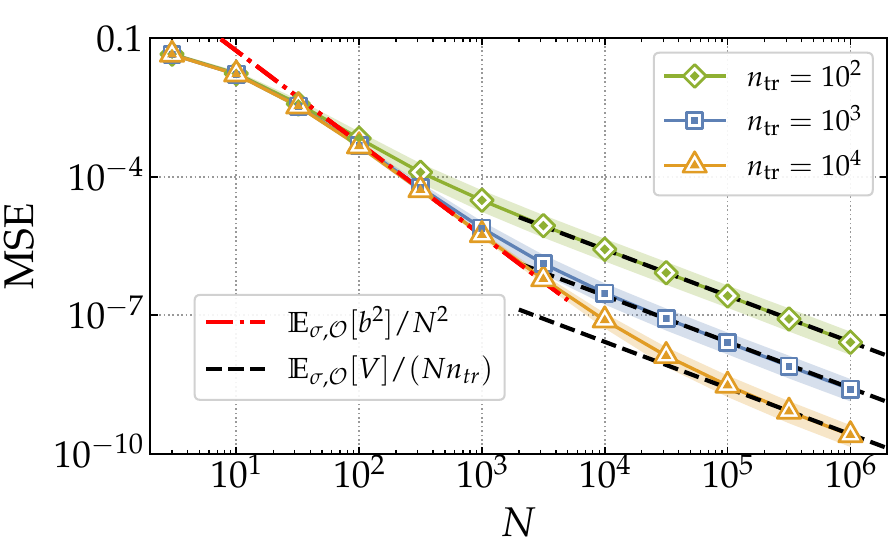}
        \caption{MSE as a function of $N$ in the limit $M\to\infty$ (noiseless test features), for a single-qubit MUB measurement (thus $d=2$, $n_{\rm out}=d(d+1)=6$), for $n_{\rm tr}=10^2$ (green), $10^3$ (blue), and $10^4$ (orange). Markers and solid lines show the median MSE; shaded bands indicate the 10th--90th percentile range. The red dash-dotted line indicates the small-$N$ bias contribution $\mathbb{E}_{\sigma,\mathcal O}[b^2]/N^2$, independent of $n_{\rm tr}$, and black dashed lines indicate the large-$N$ variance contribution $\mathbb{E}_{\sigma,\mathcal O}[V]/(N n_{\rm tr})$, where $\mathbb{E}_{\sigma,\mathcal O}[b^2]=(d-1)(d+2)^2/(d+1)$ and $\mathbb{E}_{\sigma,\mathcal O}[V]=d(d-1)(d+2)/(d+1)$. Other parameters are $n_{\rm test}=100$ and $n_{\rm obs}=50$ target observables, with $n_{\rm reps}=50$ independent draws of the training, test, and target ensembles and $n_{\rm real}=100$ independent shot-noise realizations per draw.}
        \label{fig:N_scaling}
    \end{minipage}
    \hfill
    \begin{minipage}[t]{0.485\textwidth}
        \centering
        \includegraphics[width=\linewidth]{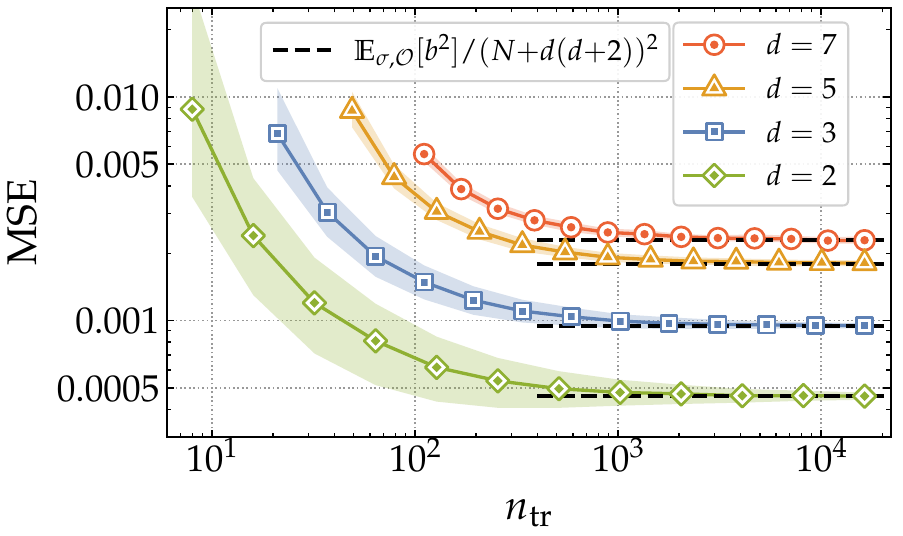}
        \caption{MSE as a function of the training-set size $n_{\rm tr}$ at fixed $N=10^2$ and in the limit $M\to\infty$ (noiseless test features), for MUB measurements in dimension $d=2$ (green), $3$ (blue), $5$ (orange), and $7$ (red), with $n_{\rm out}=d(d+1)$. Markers and solid lines show the median MSE; shaded bands indicate the 10th--90th percentile range. Black dashed lines indicate the large-$n_{\rm tr}$ floor $\mathbb{E}_{\sigma,\mathcal O}[b^2]/(N+d(d+2))^2$, where $\mathbb{E}_{\sigma,\mathcal O}[b^2]=(d-1)(d+2)^2/(d+1)$, set by the finite-$N$ bias, which additional training states cannot remove. Other parameters are $n_{\rm test}=100$ and $n_{\rm obs}=50$ target observables, with $n_{\rm reps}=50$ independent draws of the training, test, and target ensembles and $n_{\rm real}=100$ independent shot-noise realizations per draw.}
        \label{fig:ntrain_saturation}
    \end{minipage}
\end{figure*}

\section{Bias--variance decomposition and resource scaling}
\label{sec:asymptotic_bias}

The Gaussian calculation in \cref{sec:gaussian_approximation} establishes a nontrivial competition between bias and variance contribution to the infinite-test-statistics MSE, under the homogeneous Gaussian noise model, for large $N$.
We now complement this analysis with an approach that works directly with the multinomial noise and that, although asymptotic in $\ntr$, remains valid at small $N$.
This method relies on an exact decomposition of the infinite-test-statistics estimation error $\Delta\hat o=\bs y^T(\hatbsP_N^+-\bsP^+)\bsp_\sigma$ into a deterministic component, a centered fluctuation, and a remainder that collects all nonlinear contributions.
We then show that, at fixed $N$, $\nout$, and $d$, the remainder is subleading with high probability for large $\ntr$.
The analysis holds for any fixed choice of training states, target observable, and test state and, as in the previous sections, assumes exact targets and infinite test statistics.
We further assume that the nonzero eigenvalues of $\ntr^{-1}\bsP\bsP^T$ and of the noise-covariance block $\overline{\bs\Sigma}_{22}$ are bounded below by positive constants $\kappa_{11}$ and $\kappa_{22}$, respectively; see~\cref{eq:direct_uniform_conditioning}.
The full proof is given in~\cref{app:direct_remainder,app:asymptotic}.

\parTitle{Exact decomposition of infinite-test-statistics error}
Specifically, we prove in~\cref{app:direct_remainder} that
\begin{equation}
\begin{gathered}
    \Delta\hat o
    =
    -b_N/N
    +\hat L+\hat R,
    \\
    \EE_{\hatbs\Xi}[\hat L]=0,
    \qquad
    \EE_{\hatbs\Xi}[\hat L^2]
    =
    \frac{V_N}{N\ntr}
    +O\!\left(\frac{1}{N^2\ntr}\right),
\end{gathered}
\label{eq:main_direct_decomposition}
\end{equation}
where
\begin{equation}
\begin{aligned}
    b_N &=
    \bs w_\calO^T\barSigma_{\rm eff}\bs S_{\rm eff}^{-1}\bs p_{\sigma},
    \\
    V_N &= \frac{1}{\ntr}\sum_{i=1}^{\ntr}
    \left(\bs p_{i}^T \bs S_{\rm eff}^{-1} \bs p_{\sigma}\right)^2
    (\bsw^{\rm eff}_\calO)^T \bs \Sigma_i \bsw^{\rm eff}_\calO,
\end{aligned}
\label{eq:multinomial_b_V_N_main}
\end{equation}
with $\bs S_{\rm eff}=\ntr^{-1}\bs P\bsP^T+N^{-1}\overline{\bs\Sigma}_{\rm eff}$.
The effective weight vector $\bs w^{\rm eff}_{\calO}
\equiv (\bs I_{\nout} - U_2 \overline{\bs\Sigma}_{22}^{-1}\overline{\bs\Sigma}_{21} U_1^T)\bs w_\calO$ is a deformation of $\bs w_{\calO}$ induced by the Schur complement of the noise covariance.
Here, $b_N$ is the leading deterministic bias coefficient, $\hat L$ is the centered first-order fluctuation, and $\hat R$ contains all higher-order and nonlinear Schur-complement contributions.
Note that~\cref{eq:main_direct_decomposition} organizes the terms differently from the large-$N$ expansion in~\cref{eq:delta_P_11}: $\hat R$ is not simply an $O(N^{-3/2})$ remainder, as it contains contributions from both the $N^{-1/2}$ and the $N^{-1}$ terms of that expansion.
The main technical difficulty is now controlling $\hat R$ with high probability.

\parTitle{Remainder control}
In~\cref{thm:remainder_high_probability} we show that $\hat R=O_{\PP}(N^{-1/2}\ntr^{-1})$; more explicitly, for sufficiently large $\ntr$, with probability at least $1-\delta$,
\begin{equation}
\label{eq:bound_R_main}
    \lvert\hat R\rvert \lesssim \frac{\log(1/\delta)}{\sqrt{N}\,\ntr}\,,
\end{equation}
where the implicit constant may depend on the training states and the POVM, but not on $N$ or $\ntr$.
As the confidence level $1-\delta$ approaches one, the bound loosens only logarithmically: increasingly large values of $\hat R$ can occur, but only on increasingly improbable events.
This is the key feature of~\cref{thm:remainder_high_probability}, since it ensures that such events make a finite and, at large $\ntr$, subleading contribution to the relevant averages.

\parTitle{MSE control}
Accounting also for the low-probability, ill-conditioned events, we prove in~\cref{thm:direct_joint_mse} that the expected infinite-test-statistics MSE is
\begin{equation}\label{eq:main_MSE_expansion}
    \EE_{\hatbs\Xi}[\Delta\hat o^2]
    =
    \frac{b_N^2}{N^2}
    +
    \frac{V_N}{N\ntr}
    +
    O(\mathrm{Err}_{N,\ntr}).
\end{equation}
The error term $\mathrm{Err}_{N,\ntr}$, defined in~\cref{eq:def_MSE_error}, vanishes in the large-$\ntr$ limit at fixed $d$, $\nout$, and $N$.
The proof, given in~\cref{subsec:direct_mse_bound}, must additionally control rare realizations in which the relevant matrices are nearly singular.
For fixed dimensions, a sufficient condition for these events to have a negligible effect on the second moment, compared with the polynomial error terms in~\cref{eq:main_MSE_expansion}, is $\ntr/\log(2+N)\to\infty$, which includes the limit $\ntr\to\infty$ at fixed $N$.
We do not expect this condition to be necessary: it originates from a rather crude bound on these ill-conditioned events, and a more refined analysis would likely remove it.

This asymptotic MSE has two features that are absent from the Gaussian model.
First, both the training bias and the training variance depend explicitly on $N$ through the deformed population matrix $\bs S_{\rm eff}$, which contains an $N^{-1}$ correction.
Because this dependence is retained exactly rather than expanded in $1/N$, the expressions for the bias and variance remain valid at small $N$, provided $\ntr$ is large.
Second, the remainder $\mathrm{Err}_{N,\ntr}$ is now explicitly controlled and vanishes as $\ntr\to\infty$ for any fixed $N$.

\parTitle{Large-$N$ asymptotics}
For large $N$, the MSE takes a simpler form in terms of $N$-independent coefficients,
\begin{equation}
    \EE_{\hatbs\Xi}[\Delta\hat o^2]
    =
    \frac{b^2}{N^2}
    +
    \frac{V}{N\ntr}
    +
    O\!\left(\mathrm{Err}_{N,\ntr}+N^{-3}\right),
\end{equation}
where
\begin{equation}
\begin{aligned}
    b &=
    \frac{1}{\ntr}
    \sum_{i=1}^{\ntr}
    \Cov_i(\bs w^{\rm eff}_{\calO}, \ntr\langle\bstildemu,\tilde\sigma\rangle),
    \\
    V &= \frac{1}{\ntr}
    \sum_{i=1}^{\ntr}
    (\ntr\langle\tilde\rho_i,\sigma\rangle)^2
    \Var_i(\bs w^{\rm eff}_{\calO})\,,
\end{aligned}
\label{eq:multinomial_b_V_main}
\end{equation}
with \(\operatorname{Cov}_i(\bs a,\bs b)\equiv\bs a^T\bs\Sigma_i\bs b,\) and $\operatorname{Var}_i(\bs a)\equiv \Cov_i(\bs a,\bs a)$. This form, derived in~\cref{sec:costscaling_infteststat}, trades the exact $N$ dependence of $b_N$ and $V_N$ for simpler, $N$-independent coefficients, at the cost of an additional $O(N^{-3})$ error.

Assuming homogeneous covariances, $\bs\Sigma_i=\barSigma$ for all $i$, the coefficients in~\cref{eq:multinomial_b_V_main} coincide exactly with those obtained within the Gaussian approximation, \cref{eq:gaussian_b_V_main}, as shown in~\cref{subsec:gaussian_multinomial_comparison}.
Even in this homogeneous case, however, the Gaussian approximation does not reproduce the exact $N$-dependent coefficients of~\cref{eq:multinomial_b_V_N_main}; the discrepancy is of order $N^{-3}$, independently of $\ntr$.
As expected, the Gaussian approximation thus loses accuracy at small $N$.
This fixed-$N$ control is a further advantage of the exact decomposition developed here over the large-$N$ expansions of~\cref{sec:perturbation,sec:gaussian_approximation}.

The decomposition in~\cref{eq:main_direct_decomposition} thus identifies the two leading contributions to the infinite-test-statistics squared error.
The training bias originates from the deterministic $N^{-1}$ covariance shift in the effective Schur complement, whereas the training variance is generated by the centered linear fluctuation $\hat L$.
In particular, at fixed $N$ and large $\ntr$, the variance contribution becomes negligible and the MSE approaches a bias floor $b_N^2/N^2=O(N^{-2})$.

\parTitle{Quantum vs classical linear regression}
It is instructive to contrast this behavior with classical linear regression. There, if the feature vectors are known exactly and the label noise has zero mean, the least-squares estimator is unbiased in the underparametrized regime: for a design matrix $\bs X\in\mathbb R^{p\times\ntr}$ with $\ntr\ge p$ and $\rank(\bs X)=p$, one has $\bs X\bs X^+=\bs I_p$.
This conclusion does not carry over to the quantum setting considered here. The crucial difference is that the input vectors themselves are subject to stochastic noise: the ideal feature matrix $\bs P$ is replaced by the finite-shot frequency matrix $\hatbs P_N$.
Although the rank of $\bs P$ is at most the dimension $d^2$ of the operator space, $\hatbs P_N$ has rank $\min(\nout,\ntr)$ with high probability.
Finite-shot noise can therefore change the rank of the design matrix and, consequently, induce discontinuous changes in its Moore--Penrose pseudoinverse.

As a result, even in the quantum analogue of the underparametrized regime, $\ntr>\nout$, the pseudoinverse estimator acquires a systematic bias at finite $N$.
Provided the leading coefficients are nonzero and the stated conditioning and growth assumptions hold, our analysis shows that the bias scales as $N^{-1}$, and hence the squared bias as $N^{-2}$, while the training variance scales as $(N\ntr)^{-1}$.
Enlarging the training set therefore suppresses fluctuations across shot-noise realizations, but it neither changes the $N^{-2}$ scaling of the squared bias nor, in general, suppresses the bias floor by an additional factor of $\ntr^{-1}$.
Once the variance becomes subleading, adding training states yields no generic leading-order improvement unless the number of shots per training state is increased as well.
Input noise is, of course, not specific to the quantum setting: it defines errors-in-variables regression, where it is well known to bias least-squares estimators by an amount that does not vanish as the dataset grows~\cite{fuller1987MeasurementErrorModels,carroll2006MeasurementErrorNonlinear}.
The bias in \cref{eq:multinomial_b_V_N_main} has precisely this structure.
Using $N^{-1}\barSigma_{\rm eff}\bs S_{\rm eff}^{-1}=\bsI-\ntr^{-1}\bsP\bsP^T\bs S_{\rm eff}^{-1}$, the leading-order mean prediction can be written as $\bs w_\calO^T\bsp_\sigma-b_N/N=\bs w_\calO^T(\ntr^{-1}\bsP\bsP^T)\bs S_{\rm eff}^{-1}\bsp_\sigma$.
This is the multivariate attenuation formula with measurement-error covariance $\barSigma_{\rm eff}/N$, and the $N^{-1}\barSigma_{\rm eff}$ term in $\bs S_{\rm eff}$ acts as an implicit Tikhonov regularizer, in line with the classic equivalence between training with input noise and regularization~\cite{bishop1995training}.
Two features are nonetheless specific to the quantum setting.
First, the noise covariance is fixed by the Born rule and by the number of shots per state, so that the reliability of the features is a resource under the experimenter's control rather than a property of an external instrument.
Second, the noise also populates $\coker(\bsP)$, which carries no signal: noise in these directions generates the singular block of $\hatbsP_N^+$ discussed in \cref{sec:perturbation} and, through its correlations with the signal-carrying directions, replaces the bare covariance $\barSigma_{11}$ with its Schur complement $\barSigma_{\rm eff}$.

\parTitle{Explicit coefficients in symmetric models}
The leading coefficients can be evaluated exactly in a symmetric reference model, in which the training states form a complex projective two-design, the POVM is given by a fixed balanced rank-one tight frame~\cite{scott2006TightInformationallyComplete}, and the test state $\sigma$ and target projector $\calO$ are drawn independently from the Haar measure.
As shown in \cref{sec:costscaling_averages}, the corresponding averages are then
\begin{equation}
    \begin{aligned}
        \EE_{\sigma,\calO}[b^2]
        &=
        \frac{(d-1)(d+2)^2}{d+1},
        \\
        \EE_{\sigma,\calO}[V]
        &=
        \frac{d(d-1)(d+2)}{d+1}.
    \end{aligned}
    \label{eq:b1b2}
\end{equation}
Both averages scale as $d^2$ for large $d$.
Although a finite i.i.d.\ Haar-random training set is not an exact two-design, it approximately satisfies the same frame identities when its empirical frame is sufficiently well conditioned.

\parTitle{Numerical results}
These resource scalings are clearly visible in our numerical simulations.
\Cref{fig:N_scaling} shows the MSE as a function of the training statistics $N$ for several values of $\ntr$.
At small $N$, the error is dominated by the squared bias and follows the predicted $N^{-2}$ behavior, independently of the training-set size.
At larger $N$, the training variance dominates and the MSE crosses over to the $(N\ntr)^{-1}$ scaling.
The numerical results agree quantitatively with both theoretical contributions throughout the crossover.

A complementary view is obtained by fixing $N$ and increasing $\ntr$.
As shown in~\cref{fig:ntrain_saturation}, the MSE initially decreases with $\ntr$ as the training variance is suppressed, but eventually saturates at the finite-statistics bias floor, beyond which additional training states no longer improve the prediction.
For all dimensions considered, the asymptotic values observed numerically closely follow the theoretical finite-$N$ prediction.

Finally, \cref{fig:bias_var_scaling} resolves the two contributions separately.
The rescaled quantities $N^2\operatorname{bias}_{\rm tr}^2$ and $N\ntr\operatorname{Var}_{\rm tr}$ approach the coefficients $b^2$ and $V$ of~\cref{eq:multinomial_b_V_main}, respectively, and their dependence on the Hilbert-space dimension is accurately captured by the two-design averages in~\cref{eq:b1b2}.
The theory thus reproduces not only the asymptotic resource scalings of the total MSE but also the individual bias and variance coefficients.


\begin{figure}[t]
    \centering
    \includegraphics[width=1.\linewidth]{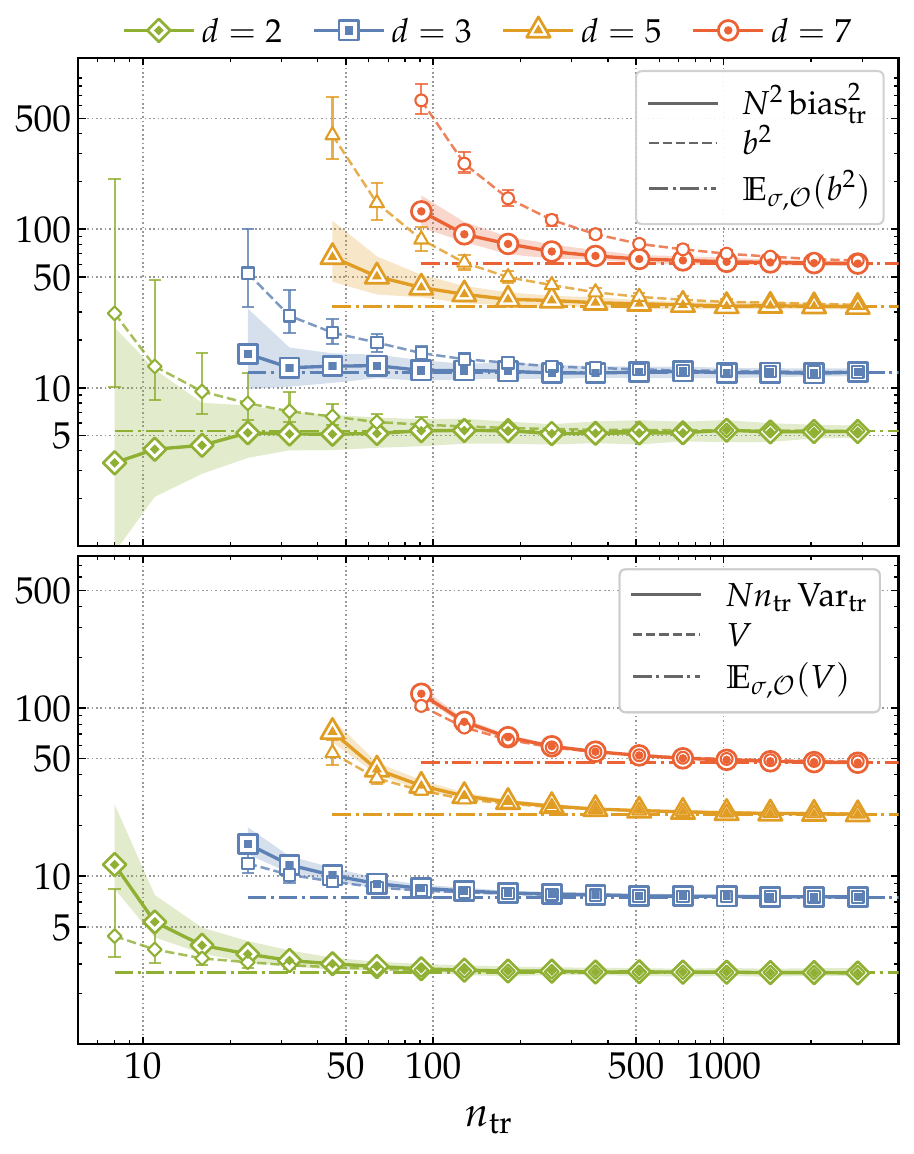}
    \caption{Training bias (top) and training variance (bottom) as functions of the training-set size $n_{\rm tr}$, at fixed shots per training state $N=10^4$, for $d=2,3,5,7$ and the MUB measurement ($n_{\rm out}=d(d+1)$). 
    Each point refers to 100 independent draws of the $n_{\rm tr}$ Haar-random pure training states, of 100 Haar-random test states $\sigma$ and of 50 Haar-random rank-one targets $\mathcal{O}$; test probabilities are exact. 
    For each draw, the mean and variance of the prediction $\hat{\bm W}_N\bm p_\sigma$ are estimated over independent shot-noise realizations of $\hat{\bm P}_N$ and averaged over test states and targets. 
    Top: $N^2\,\mathrm{bias}^2_{\rm tr}$ (solid lines and markers), the coefficient $b^2$ of \cref{eq:multinomial_b_V_main} evaluated on the same ensembles (dashed lines), and the tight-design average \cref{eq:b1b2} (dash-dotted lines). 
    Bottom: the same for $N n_{\rm tr}\,\mathrm{Var}_{\rm tr}$, $V$, and $\mathbb{E}_{\sigma,\mathcal{O}}(V)$. Solid and dashed lines are medians over the draws; shaded bands and error bars span the 10th to 90th percentiles. Both square bias and variance use $100$ shot-noise realizations per draw.}

    \label{fig:bias_var_scaling}
\end{figure}

\section{Target noise amplification}
\label{sec:asymptotic_var}

As discussed in \cref{sec:perturbation}, the divergent terms of $\hatbsP_N^+$ do not affect the MSE when considering infinite-test-statistics and exact features.
This cancellation relies on the identity $\bs y^T V_2=0$, where the columns of $V_2$ span $\ker(\bsP)$.
However, while this holds when $\bs y=\langle\bs\rho,\calO\rangle$, it fails when $\bs y$ comes with its own estimation error. Such an error can be due to imperfect preparation of the states on noisy quantum hardware.

Let the target observed
for a test state $\sigma$ be
$\hat y_\sigma=\bsW\bsp_\sigma+\hat\eta_\sigma$, where
$\hat\eta_\sigma\sim\mathcal N(0,\gamma^2)$ is independent of the empirical
test frequency $\hatbs p_\sigma$.  For any fixed weight vector $\hatbsW$, the
finite-statistics MSE then becomes
\begin{equation}
\label{eq:noisy_target_fixed_W_MSE}
\begin{gathered}
    \MSE(\hatbsW,M,\gamma)
    \equiv
    \EE_{\hatbs p_\sigma,\hat\eta_\sigma}
    \left[
        (\hatbsW\hatbs p_\sigma-\hat y_\sigma)^2
    \right]
    \\
    =
    \gamma^2
    +[(\hatbsW-\bsW)\bsp_\sigma]^2
    +\frac{1}{M}\hatbsW\bs\Sigma_\sigma\hatbsW^T.
\end{gathered}
\end{equation}
The first term is precisely the irreducible noise of classical regression in
\cref{eq:risk_MSE}.  It survives even when the ideal readout and the exact test probabilities are known.  The last term has a related, but operationally
distinct, status.  At fixed test budget $M$ it is a noise floor that no amount
of training data can remove, although it can be suppressed by increasing
$M$.  In particular, even the ideal readout $\bsW_\calO$ has the residual test error
\begin{equation}
    \MSE(\bsW_\calO,M,\gamma)
    =\gamma^2+M^{-1}\bsW_\calO\bs\Sigma_\sigma\bsW_\calO^T.
    \label{eq:fixed_W_noise_floor}
\end{equation}
As a consequence, we may interpret the test variance as an additional irreducible source of noise specific to the quantum setting. Although it can be reduced by increasing the number of measurements during testing, it can never be completely eliminated in realistic scenarios.

Target noise has an additional effect when it is present during training.
We model the training targets by
\begin{equation}
\label{eq:target_noise_model}
    \hatbs y=\bs y+\hatbs\eta,
    \qquad
    \hatbs\eta\mid(\hatbs\Xi,\bsP)
    \sim\mathcal N(0,\gamma^2\bsI_{\ntr}),
\end{equation}
and take $\hatbs\eta$, $\hat\eta_\sigma$, and the test sampling to be mutually
independent.  This homoscedastic Gaussian model isolates the rank effect; it
also describes unbiased target-estimation errors when correlations with the
feature noise can be neglected.  The corresponding estimator is
\begin{equation}
\label{eq:noisy_target_estimator}
    \hatbsW_{N,\gamma}
    =\hatbs y^T\hatbsP_N^+
    =\hatbsW_N+\hatbs\eta^T\hatbsP_N^+.
\end{equation}
Because the second term is conditionally centered, target noise does not
change the training bias.  It does, however, change both variances in the
finite-statistics error decomposition.

Consider first the test variance.  Averaging the fixed-readout expression in
\cref{eq:test_bias_var} over the training randomness gives the exact identity
\begin{equation}
\begin{aligned}
    &\EE_{\hatbsXi,\hatbs\eta}
    [\Var_{\rm test}(\hatbsW_{N,\gamma},M)]=
    \\
    &\quad=
    \EE_{\hatbsXi}
    [\Var_{\rm test}(\hatbsW_N,M)]
    +\Delta\Var_{\rm test}^{(\eta)}(N,M),
\end{aligned}
\end{equation}
where we defined extra noise-dependent reads:
\begin{equation}
\label{eq:target_noise_test_variance_exact}
    \Delta\Var_{\rm test}^{(\eta)}(N,M)
    \equiv
    \frac{\gamma^2}{M}\,
    \EE_{\hatbsXi}
    \Trace\!\left[
        \hatbsP_N^+\bs\Sigma_\sigma(\hatbsP_N^+)^T
    \right].
\end{equation}
Unlike the exact targets $\bs y$, the isotropic noise $\hatbs\eta$ and the categorical covariance matrix $\Sigma_\sigma$ have a
component in $\ker(\bsP)$.  They therefore probes the divergent $(2,2)$ block
of $\hatbsP_N^+$ in \cref{eq:Delt_PN_small}.

The same target fluctuations also contribute to the training variance.  If
$\Var_{\rm tr}(N,\gamma)$ denotes the variance of
$\hatbsW_{N,\gamma}\bsp_\sigma$ over both $\hatbsXi$ and $\hatbs\eta$, the
law of total variance gives
\begin{equation}
\label{eq:target_noise_training_variance_exact}
\begin{aligned}
    \Var_{\rm tr}(N,\gamma)
    &=\Var_{\rm tr}(N)
    +\Delta\Var_{\rm tr}^{(\eta)}(N),
    \\
    \Delta\Var_{\rm tr}^{(\eta)}(N)
    &\equiv
    \gamma^2\,
    \EE_{\hatbsXi}
    \|\hatbsP_N^+\bsp_\sigma\|_2^2.
\end{aligned}
\end{equation}
This expression contains the right projection by the physical probability
vector $\bsp_\sigma\in\im(\bsP)$ and therefore does not probe the divergent
$(2,2)$ block directly. However, it probes the $(2,1)$ block that approaches a constant as $N\to \infty$. This difference between \cref{eq:target_noise_test_variance_exact,eq:target_noise_training_variance_exact}
is responsible for their markedly different scaling with $N$.
Combining the two stages, the MSE averaged over training-target noise,
training-feature noise, test sampling, and test-target noise is therefore
\begin{equation}
\label{eq:full_noisy_target_MSE_decomposition}
\begin{aligned}
    \MSE(N,M,\gamma)
    &=\gamma^2+\bias^2(N)
      +\Var_{\rm tr}(N,\gamma)
    \\
    &\quad+
      \EE_{\hatbsXi,\hatbs\eta}
      [\Var_{\rm test}(\hatbsW_{N,\gamma},M)].
\end{aligned}
\end{equation}

The leading coefficients can be evaluated explicitly in the homogeneous
Gaussian reference model of \cref{sec:gaussian_approximation}.  The derivation
is given in \cref{subsec:gaussian_target_noise_variances}.  Assuming that the
training and test ensembles have the same average categorical covariance,
$\EE_\sigma[\bs\Sigma_\sigma]=\barSigma$, one obtains for large $N$
\begin{equation}
\label{eq:test_var_ampl}
    \EE_\sigma[\Delta\Var_{\rm test}^{(\eta)}(N,M)]
    =
    \frac{N\gamma^2}{M}C
    +O\!\left(\frac{\gamma^2\sqrt N}{M}\right)\,,
\end{equation}
where we introduced the coefficient $C$ that depends on $\ntr, \nout$ and $d$ only as:
\begin{equation}
    C=\frac{\nout-d^2}{\ntr-\nout-1}
\end{equation}
The inverse-Wishart mean in this expression exists for $\ntr>\nout+1$.
This coefficient has the same form as the classical finite-sample variance factor $p/(n-p-1)$ of ridgeless least squares with $p$ Gaussian features and $n$ samples~\cite{hastie2022surprises}, with $p=\nout-d^2$ noise-only directions spanning $\coker(\bsP)$ and $n=\ntr-d^2$ effective samples.
Thus, for fixed nonzero $\gamma$, target noise produces a test-variance contribution proportional to $N/M$.  Increasing the number of training shots drives $\hatbsP_N$ closer to the rank-deficient matrix $\bsP$ and makes the unregularized inverse more sensitive to the component of $\hatbs\eta$ in $\ker(\bsP)$.
This is the same mechanism behind the variance peak of double descent~\cite{belkin2019ReconcilingModernMachine,hastie2022surprises}, except that here the poorly conditioned directions are created by shot noise and become more poorly conditioned as $N$ grows.

By contrast, the training-variance correction is
\begin{equation}
\label{eq:training_var_target_noise_gaussian}
    \Delta\Var_{\rm tr}^{(\eta)}(N)
    =
    \gamma^2
    (C+1)
    \|\bsP^{-1}\bsp_{\sigma}\|_2^2
    +O\!\left(\frac{\gamma^2}{\sqrt N}\right).
\end{equation}
This norm is explicit in the symmetric reference model already used in
\cref{sec:costscaling_averages}.  Indeed,
$\bsP_{11}^{-1}\bsp_{\sigma,1}
=\langle\tilde{\bs\rho},\sigma\rangle$, and the tight dual frame in
\cref{eq:tight_dual_frames} gives, for any density operator $\sigma$,
\begin{equation}
\label{eq:tight_target_noise_dual_norm}
\begin{aligned}
    \|\bsP^{+}\bsp_{\sigma}\|_2^2
    =\sum_{i=1}^{\ntr}
      \langle\tilde\rho_i,\sigma\rangle^2
      =
      \frac{d}{\ntr}
      \left[(d+1)\trace(\sigma^2)-1\right].
\end{aligned}
\end{equation}
In particular, for the pure test states considered there this reduces
exactly to $d^2/\ntr$.  Thus the leading target-noise contribution in
\cref{eq:training_var_target_noise_gaussian} is
$\gamma^2d^2(C+1)/\ntr$ in that model.
Hence, at leading order,
finite-shot training features leave the multiplicative excess
\begin{equation}
    C+1=\frac{\ntr-d^2-1}{\ntr-\nout-1}
    =1+\frac{\nout-d^2}{\ntr-\nout-1}.
    \label{eq:training_variance_inflation_factor}
\end{equation}
Consequently, target noise adds an $O(\gamma^2/\ntr)$ training-variance floor,
but no factor growing with $N$.  For $\ntr\gg\nout$ at fixed $d$, the relative
inflation in \cref{eq:training_variance_inflation_factor} is only
$O(\nout/\ntr)$, while the test-variance correction remains
$O(N\gamma^2\nout/(M\ntr))$.  Increasing $M$ or $\ntr$ therefore suppresses
the amplified test contribution; increasing $N$ alone does not.

The resulting amplification is illustrated in~\cref{fig:label_noise}. For nonzero $\gamma$, the MSE first decreases as the finite-training-statistics error is suppressed, reaches a minimum, and eventually grows linearly with $N$. The simulations are in excellent agreement with the $N/M$ contribution predicted in \cref{eq:test_var_ampl}, while in the absence of target noise the error approaches the usual test-shot-noise floor.

\begin{figure}[t]
    \centering
    \includegraphics[width=1.\linewidth]{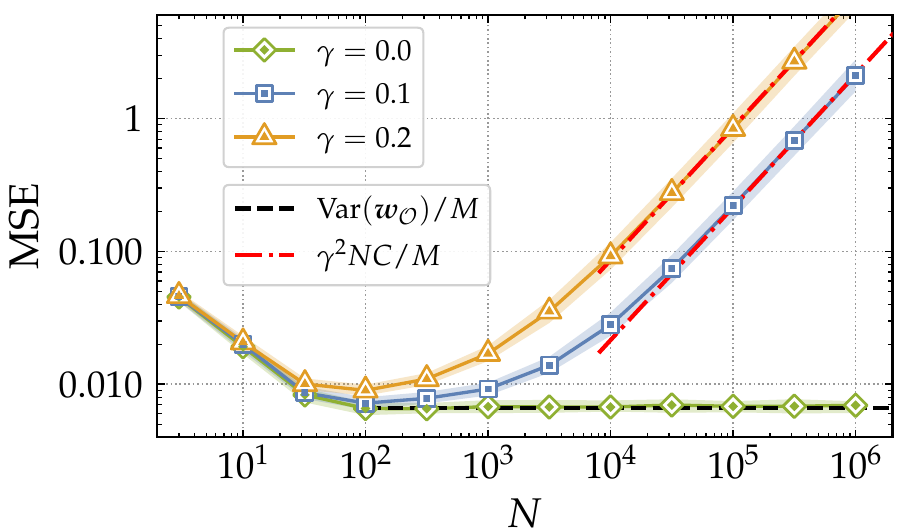}
    \caption{
    MSE as a function of $N$ in the presence of training label noise, for a single-qubit MUB measurement (thus $d=2$, $n_{\rm out}=d(d+1)=6$), with Gaussian noise of standard deviation $\gamma=0$ (green), $0.1$ (blue), and $0.2$ (orange) added to the training targets; test targets are noiseless. Markers and solid lines show the median MSE; shaded bands indicate the 10th--90th percentile range. The black dashed line indicates the large-$N$ floor $\mathrm{Var}(w_{\mathcal O})/M$, where $\mathrm{Var}(w_{\mathcal O})=(d-1)(d+2)/[d(d+1)]$, and red dash-dotted lines indicate the label-noise amplification $\gamma^2 N C/M$, where $C=(n_{\rm out}-d^2)/(n_{\rm tr}-n_{\rm out}-1)$. For $\gamma>0$ the MSE reaches a minimum around $N\sim10^2$ and then grows linearly in $N$, since more precise training features increasingly amplify the label noise. Other parameters are $M=10^2$, $n_{\rm tr}=100$, $n_{\rm test}=100$, and $n_{\rm obs}=50$ target observables, with $n_{\rm reps}=50$ independent draws of the training, test, and target ensembles and $n_{\rm real}=50$ independent shot-noise realizations per draw.
    }
    \label{fig:label_noise}
\end{figure}

\section{Conclusions and perspectives}
\label{sec:conclusions}
We analyzed how finite measurement statistics affect quantum linear
regression with probability features, focusing on the unregularized
pseudoinverse training method.
Unlike many classical settings, the feature (probability) matrix generated by an informationally complete measurement and a sufficiently rich training set is intrinsically constrained by the geometry of quantum state space: in the ideal, infinite-shot limit, it is rank-deficient, while in realistic experiments its empirical estimate is perturbed by shot noise and becomes effectively full rank.
This seemingly innocuous mismatch turns finite sampling into a qualitatively
nontrivial source of error, since learning depends on solving a linear inversion problem whose sensitivity is strongly amplified by rank-increasing fluctuations.

To address this, we developed a perturbative approach tailored to the rank-changing regime induced by shot noise.
This framework revealed an explicit bias–variance decomposition of the mean-squared prediction error, first at fixed noise realization and then after averaging over noise using standard concentration arguments.
A key outcome is that the dominant contributions exhibit distinct and operationally meaningful scalings: the squared bias term can persist even in the quantum analogue of the underparametrized regime ($\ntr > \nout$) and scales as $N^{-2}$, whereas the variance scales as $(N\ntr)^{-1}$.
These behaviors highlight a tradeoff that departs from the intuition built on regression with exactly known features: increasing the number of training states is not necessarily beneficial if the per-state measurement budget is too small, because the irreducible bias induced by rank-increasing noise cannot be mitigated by more data alone.

Beyond the specific QELM setting, our results suggest a broader lesson for quantum machine learning: resource estimates and performance guarantees must explicitly account for finite-shot effects.
In particular, the analysis motivates training protocols and benchmarking criteria that treat the total experimental budget as a joint allocation problem between the number of training states and the number of shots per state, rather than prioritizing dataset size in isolation.

Several directions follow naturally from this work. First, in a companion paper \cite{tripledescend} we will explore additional operating regimes in which finite sampling alone can induce double-descent–like behavior, and we will characterize how the location and sharpness of these transitions depend on the measurement budget and on the structure of the feature matrix.
Second, it will be important to extend our analysis to other quantum linear regression paradigms—most notably fidelity-kernel methods and quantum reservoir computing—where the features have a different statistical origin but are still estimated from finite-shot data.

In the same spirit, a systematic comparison of learning strategies based on regularized estimators (e.g., ridge/Tikhonov-type schemes, truncated inverses, early stopping) and alternative linear solvers is left for future work, with the goal of identifying when regularization mitigates shot noise–induced bias without overly inflating variance.
Finally, a pressing open question is how these effects propagate beyond linear models: it is natural to ask to what extent rank-mismatch and finite-statistics corrections persist, or are amplified, in more general QML architectures—such as variational circuits, hybrid models, and deep quantum kernels—and whether they can be leveraged to derive resource-aware training principles across quantum machine learning.

\section*{Acknowledgements}
We thank Alessandro Romancino for helpful discussions. GLM acknowledges funding from the European Union--NextGenerationEU through the Italian Ministry of University and Research under PNRR-M4C2-I1.3 Project PE-00000019 ``HEAL ITALIA'' (CUP B73C22001250006).
SL, AF, and GMP acknowledge support by MUR under PRIN Project No.~2022FEXLYB, Quantum Reservoir Computing (QuReCo).
LI, SL, MP, and GMP acknowledge funding from the ``National Centre for HPC, Big Data and Quantum Computing (HPC)'' Project CN00000013, HyQELM--Spoke 10.
MP is grateful for support from the Royal Society Wolfson Fellowship (RSWF/R3/183013), the Department for the Economy of Northern Ireland under the US-Ireland R\&D Partnership Programme, the PNRR PE Italian National Quantum Science and Technology Institute (PE0000023), and the EU Horizon Europe EIC Pathfinder project QuCoM (GA no.~10032223).

\section*{Data availability}
The data and code supporting the findings of this study are publicly available in the GitHub repository at \url{https://github.com/salvatore-lorenzo/MTR}.

\clearpage
\onecolumngrid

\setcounter{page}{1}
\renewcommand{\thepage}{S\arabic{page}}

\section*{Supplemental Materials}

\setcounter{section}{0}
\setcounter{subsection}{0}
\setcounter{equation}{0}
\setcounter{figure}{0}
\setcounter{table}{0}

\renewcommand{\thesection}{SM\arabic{section}}
\renewcommand{\thesubsection}{\thesection.\arabic{subsection}}
\renewcommand{\thetable}{SM\arabic{table}}
\renewcommand{\thefigure}{SM\arabic{figure}}
\renewcommand{\theequation}{SM\arabic{equation}}

\makeatletter
\renewcommand{\p@subsection}{}
\renewcommand{\p@subsubsection}{}
\makeatother

\renewcommand{\theHsection}{SM.\arabic{section}}
\renewcommand{\theHsubsection}{SM.\arabic{section}.\arabic{subsection}}
\renewcommand{\theHtable}{SM.\arabic{table}}
\renewcommand{\theHfigure}{SM.\arabic{figure}}
\renewcommand{\theHequation}{SM.\arabic{equation}}

\setcounter{secnumdepth}{3}
\setcounter{tocdepth}{2}

\makeatletter
\let\oldl@subsubsection\l@subsubsection
\renewcommand*\l@subsubsection[2]{}
\printsmtoc
\let\l@subsubsection\oldl@subsubsection
\makeatother

\enablesmtoc

\begin{widetext}
\section{Preliminaries}%
\label{app:notation}

We review in this section the notation used throughout the main text and
the Supplemental Material.
In particular, \cref{subsec:vectorized_notation}
translates operators and operator collections into real vectors and
matrices, \cref{subsec:landau_notation} reviews the Landau notation for deterministic sequences and sequences of random variables used in the main text and the appendix, \cref{subsec:block_decompositions} introduces the signal, kernel, and cokernel blocks of the ideal probability matrix, and \cref{app:prop_psuedo} records the pseudoinverse identities used throughout
\cref{app:perturbative_P,app:direct_remainder}.
Lowercase bold letters, such as $\bs v$, denote vectors, with components $v_k$, whereas uppercase bold letters, such as $\bs M$, denote matrices.

\subsection{Vectorized notation}
\label{subsec:vectorized_notation}

This subsection connects the operator language of quantum states, POVMs, and frames with the real matrix coordinates used in all subsequent calculations.
In particular, it makes the frame operators of \cref{app:frame} and the probability-matrix blocks of \cref{subsec:block_decompositions} part of one consistent notation.
Given a Hilbert space $\calH$ of dimension $d$, let
$\mathscr{L}(\calH)$ denote its space of linear operators, equipped with the Hilbert--Schmidt inner product
$\langle A,B\rangle=\Tr(A^\dagger B)$.  Our coordinate matrices represent the real Hilbert space $\Herm(\calH)\subset\mathscr{L}(\calH)$ using a Hermitian orthonormal basis; hence its dimension is $d^2$, its coordinate vectors are real, and matrix adjoints reduce to transposes.  Ordinary
uppercase Roman and Greek letters denote operators. 
For a vector $\psi\in\calH$, we write
$\PP(\psi)=\PP_\psi\equiv\ket{\psi}\bra{\psi}$.  We also use the same
symbol for the rank-one superoperator generated by
$A\in\mathscr{L}(\calH)$:
\begin{equation}
    \PP(A):\mathscr{L}(\calH)\to\mathscr{L}(\calH),
    \qquad
    \PP(A)(B)\equiv\langle A,B\rangle A
    =\Tr(A^\dagger B)A.
\end{equation}
This superoperator is an orthogonal projector when
$\langle A,A\rangle=1$; otherwise it is the corresponding unnormalized
rank-one outer product.

Given a Hermitian orthonormal operator basis
$(\calP_i)_{i=1}^{d^2}$ (for example, normalized Pauli operators), any
$\calO\in\Herm(\calH)$ has the real vectorized representation
$\bsv_{\calO}=\vect{\calO}$
and
$(\bsv_\calO)_i=\Tr(\calP_i^\dagger\calO)$.
Collections of Hermitian operators are denoted by bold Greek letters, as in the POVM ${\bs\mu}=(\mu_a)_{a=1}^m$.  Such a
collection is represented by the matrix
\begin{equation}
    \bsM_{\bs\mu}
    \equiv\sum_{a=1}^m\bsv_{\mu_a}{\bs e}_a^T
    =\sum_{a=1}^m\vect{\mu_a}{\bs e}_a^T,
\end{equation}
where ${\bs e}_a$ is the $a$th canonical basis vector of $\RR^m$.

\subsection{Block decompositions}
\label{subsec:block_decompositions}

We decompose noisy matrices with respect to the support and image of the
exact probability matrix.  These coordinates isolate the rank-preserving and
rank-increasing directions and are used both in the formal expansion of
\cref{app:perturbative_P} and in the direct remainder analysis of
\cref{app:direct_remainder}.  In particular, in these coordinates,
\begin{equation}
    \hatbsXi=
    \begin{pmatrix}
        \hatbsXi_{11} & \hatbsXi_{12}\\
        \hatbsXi_{21} & \hatbsXi_{22}
    \end{pmatrix},
\end{equation}
where the block decomposition is induced by the support and image of the
noiseless probability matrix \(\bsP\).  More explicitly, let
\begin{equation}
    \Pi_1\equiv \Pi_{\supp(\bsP)} =V_1V_1^T,
    \qquad
    \Pi_2\equiv \bsI-\Pi_1 =V_2V_2^T,
    \qquad
    \Pi_{\bar 1}\equiv \Pi_{\im(\bsP)} =U_1U_1^T,
    \qquad
    \Pi_{\bar 2}\equiv \bsI-\Pi_{\bar 1} =U_2U_2^T.
\end{equation}
We write
$r\equiv \on{rank}(\bsP)=d^2$.
Thus \(V_1\in\RR^{\ntr\times r}\),
\(V_2\in\RR^{\ntr\times(\ntr-r)}\),
\(U_1\in\RR^{\nout\times r}\), and
\(U_2\in\RR^{\nout\times(\nout-r)}\).
The blocks of \(\hatbsXi\) are then
$\hatbsXi_{ab}\equiv U_a^T\hatbsXi V_b$,
where $a,b\in\{1,2\}$.
Here and below the index \(a\) labels the image/cokernel block, while \(b\) labels the support/kernel block.
The normalized kernel-block covariances introduced below require
$\ntr-r>0$.  When $\ntr=r$ there is no kernel block, and the corresponding
statements are read after deleting every block with second index \(2\).

Throughout this section, the training states
$\bs\rho=(\rho_1,\ldots,\rho_{\ntr})$ are fixed and all averages are
conditional on them; the only randomness is the training shot noise.  We
write the columns of \(\hatbsXi\) as
\begin{equation}
    \hatbsXi=[\hat{\bs\xi}_1,\ldots,\hat{\bs\xi}_{\ntr}],
    \qquad
    \hat{\bs\xi}_i=\sqrt N(\hat\bsp_i-\bsp_i),
    \qquad
    \bs{\Sigma}_i\equiv
    \bs\Sigma(\rho_i) \equiv
    \EE[\hat{\bs\xi}_i\hat{\bs\xi}_i^T\mid\bs\rho]
    =
    \diag(\bsp_i)-\bsp_i\bsp_i^T ,
\end{equation}
where
$\bsp_i\equiv \langle\bs\mu,\rho_i\rangle$,
$\hat\bsp_i=\frac1N\hat{\bs n}^{(i)}$,
$\hat{\bs n}^{(i)}\sim \on{Mult}(N,\bsp_i)$.
Conditionally on \(\bs\rho\), the columns \(\hat{\bs\xi}_i\) are
independent and centered, with covariance matrices $\bs\Sigma_i$.  They
become asymptotically Gaussian as $N\to\infty$, but no Gaussian
approximation is assumed here.

It will be useful to introduce the following normalized covariance blocks.
Given a nonzero projection \(\Pi\in\RR^{\ntr\times\ntr}\) and an isometry
\(U\in\RR^{\nout\times m}\), define
$
    \hatbsC_{U,\Pi}
    \equiv
    \frac1{\Trace(\Pi)}
    U^T\hatbsXi\Pi\hatbsXi^TU,
$
and
$
    \bsC_{U,\Pi}
    \equiv
    \frac1{\Trace(\Pi)}
    \sum_{i=1}^{\ntr}\Pi_{ii}\,U^T\boldsymbol{\Sigma}_iU,
$
and note that
$\EE[\hatbsC_{U,\Pi}]=\bsC_{U,\Pi}$.
The off-diagonal entries of \(\Pi\) do not contribute to the expectation because distinct columns of \(\hatbsXi\) are independent and centered.
For the blocks relevant to the perturbative expansion we use the shorthand
\begin{equation}\label{eq:def_covblocks}
\begin{gathered}
    \hatbsC_{\alpha\beta}\equiv
    \frac{U_\alpha^T \hatbsXi \Pi_2 \hatbsXi^T U_\beta}{\ntr-r}
    =
    \frac{\hatbsXi_{\alpha 2}\hatbsXi_{\beta 2}^T}{\ntr-r},
    \qquad
    \bsC_{\alpha\beta}\equiv \EE[\hatbsC_{\alpha\beta}]
    =
    \frac1{\ntr-r}\sum_i(\Pi_2)_{ii}U_\alpha^T\boldsymbol{\Sigma}_iU_\beta,
    \qquad
    \alpha,\beta\in\{1,2\}.
\end{gathered}
\end{equation}
Thus $\hatbsC_{11}\in\mathbb{R}^{r\times r}$, $\hatbsC_{12}\in\mathbb{R}^{r\times(\nout-r)}$, $\hatbsC_{21}\in\mathbb{R}^{(\nout-r)\times r}$, and $\hatbsC_{22}\in\mathbb{R}^{(\nout-r)\times(\nout-r)}$.
Moreover,
$\bsC_{11},\bsC_{22}\ge0$ and
$\trace(\bsC_{11})+\trace(\bsC_{22})\le1$.
Because $\Pi_2$ projects onto $\supp(\bsP)^\perp$,
$\trace(\Pi_2)=\sum_i(\Pi_2)_{ii}=\ntr-r$; hence the normalization prevents
$\bsC_{ab}$ from acquiring a generic factor of $\ntr$.
The diagonal weights can be made explicit.  Since
\[
    \Pi_1=\bsP^+\bsP
    =\bsP^T(\bsP\bsP^T)^+\bsP
    =\bsM_{\bs\rho}^T
      \frameOp_{\bs\rho}^{-1}
      \bsM_{\bs\rho},
\]
we have
$(\Pi_1)_{ii}
=\langle\rho_i,\frameOp_{\bs\rho}^{-1}(\rho_i)\rangle
=\langle\rho_i,\tilde\rho_i\rangle$ and
$(\Pi_2)_{ii}=1-\langle\rho_i,\tilde\rho_i\rangle$.
In the special case where the training states are pure and form a complex-projective 2-design, we get $(\Pi_2)_{ii} = 1-d^2/\ntr=(\ntr-r)/\ntr$, and thus $\bsC_{ab}=U_a^T\overline{\bs\Sigma} U_b$ where $\overline{\bs\Sigma}\equiv\frac{1}{\ntr} \sum_{i=1}^{\ntr} \bs\Sigma_i$ is the average covariance matrix.
The 2-design property is also sufficient to get an explicit deterministic formula for $\overline{\bs\Sigma}$, which in this case equals
\begin{equation}
    \overline{\bs\Sigma}_{ab} =
    \frac{\trace(\mu_a)}{d} \delta_{ab}
    -
    \frac{\trace(\mu_a)\trace(\mu_b)+\trace(\mu_a\mu_b)}{d(d+1)}.
\end{equation}
If we further assume that the POVM is balanced and rank-one, i.e. $\mu_a=\frac{d}{\nout}\ketbra{\phi_a}$ for some pure states $\ket{\phi_a}$ satisfying $\frac{1}{\nout}\sum_{a=1}^{\nout}\ketbra{\phi_a}=\frac{\bsI}{d}$, then $\bsC_{12}=0$ and $\bsC_{22}=\frac{\bsI}{\nout}$.
These formulas will be used in \cref{sec:costscaling_averages} and in the Gaussian consistency check of \cref{subsec:gaussian_scaling_recovery}.




\subsection{Properties of pseudoinverses}%
\label{app:prop_psuedo}

We review here some basic facts about pseudoinverses that will be useful in the proofs.

\parTitle{Definition via SVD}
The pseudoinverse of a rectangular matrix is most conveniently expressed
through its compact singular value decomposition (SVD).
Given a matrix $\bs A$, write
$\bsA=\bs U\bsD\bs V^\dagger$, where $\bs U,\bs V$ are isometries and
$\bs D>0$ is diagonal.
Its pseudoinverse then reads $\bsA^+={\bs V} \bsD^{-1} {\bs U}^\dagger$.
If $\bsA$ is $n\times m$, then $\bs{U}$ is $n\times r$, $\bsD$ is $r\times r$, and $\bs V$ is $m\times r$, where $r\equiv\on{rank}(\bsA)$.
In dyadic notation, this amounts to $\bsA=\sum_k d_k \boldsymbol u_k \boldsymbol v_k^\dagger$ having pseudoinverse $\bsA^+=\sum_k d_k^{-1} \boldsymbol v_k \boldsymbol u_k^\dagger$.

\parTitle{Properties of the pseudoinverse}
The pseudoinverse $\bsA^+$ is a true (left and right) inverse iff $\bsA$ is invertible.
More generally, $\bsA^+$ is a left inverse,
$\bsA^+\bsA=\bsI$, if and only if $\bsA$ is injective, and it is a right
inverse, $\bsA\bsA^+=\bsI$, if and only if $\bsA$ is surjective.
If $\bsA$ is not full-rank, then $\bsA^+$ is neither a left nor a right inverse, but it can still be seen as an inverse on the range of support of $\bsA$.
In general, $\bsA^+\bsA=\Pi_{\supp(\bsA)}$ is the orthogonal projector
onto the row space of $\bsA$, whereas
$\bsA\bsA^+=\Pi_{\im(\bsA)}$ projects onto its image.

\parTitle{Explicit formulas for special cases}
The pseudoinverse can be given an explicit formula when $\bsA$ is injective or surjective.
For injective $\bsA$, we have $\bsA^+=(\bsA^\dagger \bsA)^{-1} \bsA^\dagger$. For surjective $\bsA$, we have $\bsA^+=\bsA^\dagger (\bsA\bsA^\dagger)^{-1}$.

\parTitle{Product formulas}
It is not generally the case that $(\bsA\bsB)^+=\bsB^+ \bsA^+$.
However, it does hold when $\bsA$ is injective and $\bsB$ is
surjective.  Indeed, the compact SVD of $\bsB$ is
$\bsB={\bs U}_B\bsD_B{\bs V}_B^\dagger$, with ${\bs U}_B$ unitary,
whereas that of $\bsA$ is
$\bsA={\bs U}_A\bsD_A{\bs V}_A^\dagger$, with ${\bs V}_A$ unitary.
Therefore
\begin{equation}
    \bsA\bsB = {\bs U}_A \bsD_A {\bs V}_A^\dagger {\bs U}_B \bsD_B {\bs V}_B^\dagger,
\end{equation}
where ${\bs V}_A^\dagger{\bs U}_B$ is unitary and
$\bsD_A,\bsD_B$ are square and invertible.  Hence
$\bsC\equiv\bsD_A{\bs V}_A^\dagger{\bs U}_B\bsD_B$ is also square and
invertible.  Let $\bsC=\bs U_C\bsD_C\bs V_C^\dagger$ be its SVD.  Then
\begin{equation}
    \bsA\bsB = ({\bs U}_A {\bs U}_C) \bsD_C  ({\bs V}_B {\bs V}_C)^\dagger 
\end{equation}
is an SVD for $\bsA\bsB$, with ${\bs U}_C, {\bs V}_C$ unitaries, hence
\begin{equation}
    (\bsA\bsB)^+ = ({\bs V}_B {\bs V}_C) \bsD_C^{-1} ({\bs U}_A {\bs U}_C)^\dagger
    = {\bs V}_B \bsC^{-1} {\bs U}_A^\dagger,
\end{equation}
which equals $\bsB^+ \bsA^+$.

\parTitle{Block inverses and Schur complements}
Given an invertible square matrix
$\bsM=\left(\begin{smallmatrix}\bsA&\bsB\\\bsC&\bsD\end{smallmatrix}\right)$,
its inverse can be written in terms of Schur complements
\cite{zhang2005SchurComplementIts} as
\begin{equation}
\label{eq:M-1}
\newcommand{\redSchurOneInv}{(\bsM/\bsA)^{-1}}
\newcommand{\blueSchurTwoInv}{(\bsM/\bsD)^{-1}}
\begin{aligned}
    \bsM^{-1}
    &= \begin{pmatrix}
        \bsA^{-1} + \bsA^{-1} \bsB \redSchurOneInv\bsC \bsA^{-1} &
        -\bsA^{-1}\bsB \redSchurOneInv
        \\
        -\redSchurOneInv \bsC \bsA^{-1}
        & \redSchurOneInv
    \end{pmatrix}
    \\
    &= \begin{pmatrix}
        \blueSchurTwoInv &
        - \blueSchurTwoInv \bsB\bsD^{-1}
        \\
        -\bsD^{-1}\bsC \blueSchurTwoInv
        &
        \bsD^{-1} + \bsD^{-1}\bsC\blueSchurTwoInv\bsB\bsD^{-1}
    \end{pmatrix},
\end{aligned}
\end{equation}
where
\begin{equation}
    {(\bsM/\bsA)} \equiv \bsD-\bsC\bsA^{-1}\bsB,
    \qquad
    {(\bsM/\bsD)} \equiv \bsA-\bsB\bsD^{-1}\bsC.
\end{equation}
The first representation assumes that $\bsA$ is invertible, whereas the
second assumes that $\bsD$ is invertible.  This identity is used by both
the complementary structural analysis and the direct proof below.

\subsection{Frame operators}%
\label{app:frame}

The block formulas describe the inverse in matrix coordinates, but the conditioning constants and prediction coefficients must ultimately be expressed in terms of the states and the POVM. The formalism of frame operators is useful to provide this link. In particular, their eigenvalues quantify how well training states and POVM elements cover the operator space, and their canonical duals give the ideal readout weights.
We thus review here the basic language of operator frames which will be useful for some of the later results in \cref{app:asymptotic}.

We make frequent use of frames and frame operators
\cite{casazza2013introduction,casazza2016brief,christensen2003introduction}.
For a finite-dimensional inner-product space $V$, a frame is a collection
$(v_k)_k\subset V$ for which
\begin{equation}
    \alpha\|w\|^2\leq\sum_k|\langle v_k,w\rangle|^2\leq \beta\|w\|^2
\end{equation}
for some $0<\alpha\leq\beta<\infty$.  Frames generalize orthonormal bases
by allowing overcompleteness.  Given any dual frame
$(\tilde v_k)_k$, every $w\in V$ admits the reconstructions
\begin{equation}
    w=\sum_k\langle v_k,w\rangle\tilde v_k
    =\sum_k\langle\tilde v_k,w\rangle v_k.
\end{equation}
The canonical dual is
$\tilde v_k=\frameOp^{-1}(v_k)$, where the positive, invertible frame
operator is $\frameOp=\sum_k v_kv_k^\dagger$.  An overcomplete frame has nonunique duals.

If we take $V$ to be the Hilbert space of linear operators on $\calH$, \textit{i.e.}, $V=\calL(\calH)$, then both the set of training states $\bs{\rho}$ and the POVM $\bs{\mu}$ can be thought of as frames in $\calL(\calH)$, provided that $\ntr,\nout\geq d^2$ and that each ensemble contains at least $d^2$ linearly independent elements. This means that any operator $\calO\in\calL(\calH)$ can be decomposed as
\begin{equation}
    \calO = \sum_{i=1}^{\ntr}\langle\calO,\tilde{\rho}_i\rangle\,\rho_i = \sum_{a=1}^{\nout}\langle\calO,\tilde{\mu}_a\rangle\,\mu_a\,,
\end{equation}
where $\tilde{\bs{\rho}}=\{\tilde{\rho}_i\}_{i=1}^{\ntr}$ and $\tilde{\bs{\mu}}=\{\tilde{\mu}_a\}_{a=1}^{\nout}$ are the dual training-state ensemble and the dual POVM, respectively. As mentioned above, whenever $\ntr,\nout>d^2$, the dual frame is not unique. However, there is a canonical choice, which we will always adopt:
\begin{equation}
    \tilde{\bs{\rho}}=\frameOp_{\bs{\rho}}^{-1}(\bs{\rho})\,,\qquad \tilde{\bs{\mu}}=\frameOp_{\bs{\mu}}^{-1}(\bs{\mu})\,,
\end{equation}
where $\frameOp_{\bs{\rho}}$ is the superoperator defined by $\frameOp_{\bs{\rho}}(\calO)=\sum_{i=1}^{\ntr}\Tr(\calO\rho_i)\,\rho_i$, and $\frameOp_{\bs{\mu}}$ is defined analogously.

In the vectorized formalism, the frame operator associated with the training set $\bs\rho$ can be written as $\frameOp_{\bs \rho} \equiv \bsMrho\bsMrho^T$ and the frame operator associated with the POVM as $\bs\mu$ $\frameOp_{\bsmu}\equiv \bsMmu\bsMmu^T$. They are both intimately related to the probability matrix $\bsP\equiv\langle\bs\mu,\bs\rho\rangle\equiv \bsMmu^T\bsMrho$.
In particular, this relation naturally emerges by looking at their singular values.
The singular values of the probability matrix are by definition the square roots of the eigenvalues of $\bsP\bsP^T=\bsMmu^T\bsMrho\bsMrho^T\bsMmu$ that in turn are the same as the nonzero eigenvalues of $\bsMrho\bsMrho^T\bsMmu\bsMmu^T= \frameOp_{\bs\rho}\frameOp_{\bs\mu}$.
We therefore conclude that the nonzero singular values of $\bsP$ are equal to the square roots of the nonzero eigenvalues of $\frameOp_{\bs\mu}\frameOp_{\bs\rho}$.

\parTitle{Case of symmetric states and measurement}
In particular, if the training states are pure,
$\rho_k=\PP(\psi_k)$, and form a complex-projective 2-design, then
in superoperator notation
\begin{equation}\label{eq:frameop_symmetric_states}
    \frameOp_{\bs\rho}
    \equiv
    \sum_{k=1}^{\ntr}\PP(\rho_k)
    = \ntr\frac{\Id+\PP(\bsI)}{d(d+1)},
\end{equation}
where $\Id$ is the identity superoperator and
$\PP(\bsI)(X)=\Trace(X)\bsI$, with $\bsI$ the identity operator
on $\calH$.
Note that, as a linear operator, this $\frameOp_{\bs\rho}$ has eigenvalues $\frac{\ntr}{d}$ and $\frac{\ntr}{d(d+1)}$, with degeneracies $1$ and $d^2-1$, respectively.
If the training states are independent Haar-random pure states, then their normalized empirical frame operator approaches this form with high probability as $\ntr$ grows.
Similarly, symmetric rank-one POVMs have the form $\mu_b=\frac{d}{\nout} \PP(\phi_b)$ with the states $\ket{\phi_b}$ such that
$\sum_{b=1}^{\nout} \PP(\phi_b)=\frac{\nout}{d}\bsI$. If these states further form a complex-projective 2-design, then the frame superoperator reads
\begin{equation}\label{eq:frameop_symmetric_POVMs}
    \frameOp_{\bs\mu} \equiv
    \sum_{b=1}^{\nout} \PP(\mu_b) =
    \frac{d^2}{\nout} \frac{\Id+\PP(\bsI)}{d(d+1)}.
\end{equation}
Furthermore, if both states and POVM have these symmetric properties, then we find
\begin{equation}
    \frameOp_{\bs\rho}\circ\frameOp_{\bs\mu}
    = \frac{\ntr d^2}{\nout[d(d+1)]^2}
    (\Id+\PP(\bsI))^2
    = \frac{\ntr}{\nout(d+1)^2}(\Id+(2+d)\PP(\bsI)),
\end{equation}
whose eigenvalues are
\begin{equation}
    \sigma(\frameOp_{\bs\rho}\circ\frameOp_{\bs\mu})
    \in \left\{
    \frac{\ntr}{\nout},
    \frac{\ntr}{\nout(d+1)^2}
    \right\}.
\end{equation}
The eigenvector corresponding to $\ntr/\nout$ is the rescaled identity, $\bsI/\sqrt d$, while the other eigenvalue has degeneracy $d^2-1$, and eigenspace spanned by any choice of orthonormal basis for the space of Hermitian traceless $d\times d$ operators.
The singular values of $\bsP$ are then the square roots of the eigenvalues of $\frameOp_{\bs\rho}\circ\frameOp_{\bs\mu}$.
Thus all $d^2$ nonzero singular values of the
$\nout\times\ntr$ probability matrix scale as
$\sqrt{\ntr/\nout}$ (with the two dimension-dependent prefactors shown
above).

\section{Perturbative pseudoinverse expansion}%
\label{app:perturbative_P}

We discussed in the main text how the full empirical pseudoinverse diverges even though physical predictions can remain finite.
In this section we present a perturbative analysis of the pseudoinverse of the empirical probability matrix, identifying where exactly divergences arise.
The main results are the expansions in~\cref{eq:expansionPp11,eq:expansionPp_other}.
The signal block in \cref{eq:expansionPp11} is used later in the homogeneous Gaussian calculation of \cref{app:concentration}, while the remaining blocks in \cref{eq:expansionPp_other} are needed for the target-noise correction.

\parTitle{Discontinuity of pseudoinverses under rank-increasing perturbations}
The main technical difficulty we seek to overcome is that matrix inversion is not continuous with respect to perturbations that change the rank, and the same is true for the Moore-Penrose pseudoinverse~\cite{stewart1977perturbation,stewart1990Matrix,gonzalez2015laurent}.
Let $\bsP$ be the unperturbed matrix and let
$\hatbsP_\tau=\bsP+\tau\hatbsXi$.
Here $\tau$ is a generic parameter; we will use $\tau=N^{-1/2}$ when specializing the discussion to the empirical frequency matrix.
If $\bsP$ is rank deficient and the perturbation $\hatbsXi$ has components outside the row or column spaces of $\bsP$, then $\tau\mapsto\hatbsP_\tau^+$ is discontinuous at $\tau=0$, and in particular $\hatbsP_\tau^+$ can diverge as $\tau\to0$.
More precisely, when the perturbation changes the rank, one has the lower bound~\cite{stewart1990Matrix}
\begin{equation}
\label{eq:discontinuity_of_pinv}
    \|\hatbsP_\tau^+ - \bsP^+\|_{\rm op}
    \ge
    \frac{1}{\tau\|\hatbsXi\|_{\rm op}}.
\end{equation}
For finite-dimensional matrices, continuity of the pseudoinverse is equivalent to the rank being stable under the perturbation.
However, the stochastic setting considered here generally does cause the rank to increase, being $\bsP$ has rank at most $d^2$, while shot noise makes $\hatbsP$ have maximal rank $\min(\ntr,\nout)$ with high probability.
Here and throughout, $\|\cdot\|_{\rm op}$ denotes the spectral norm, i.e.\ the largest singular value.

\parTitle{Rank-increasing perturbations for square matrices}
We can show more explicitly the effects of the discontinuity leveraging the block formula in \cref{eq:M-1}. Before discussing the rectangular case, consider first a symmetric singular matrix $\bsM$ that undergoes a rank-increasing perturbation $\bsM_\tau\equiv \bsM+\tau \bs X$, and let $\bs x\in\supp(\bsM)$ and $\bs y\in\im(\bsM)$.
In a basis adapted to $\supp(\bsM)\oplus\ker(\bsM)$, we can write these as
\begin{equation}
    \bsM = \begin{pmatrix}
        \bsM_{11} & 0 \\ 0 & 0
    \end{pmatrix},
    \qquad
    \bsM_\tau = \begin{pmatrix}
        \bsM_{11} + \tau \bsX_{11}
        & \tau\bsX_{12}
        \\
        \tau\bsX_{21} & \tau\bsX_{22}
    \end{pmatrix},
    \qquad
    \bs x = \binom{\bs x_1}{0},
    \qquad
    \bs y = \binom{\bs y_1}{0}.
\end{equation}
The Schur-complement identities in \cref{eq:M-1} then give
\begin{equation}
    \bsx^T \bsM_\tau^{-1} \bs y
    =
    \bsx_1^T
    (\bsM_\tau/\tau\bsX_{22})^{-1}
    \bs y_1
    =
    \bsx_1^T
    [\bsM_{11}+\tau(\bsX/\bsX_{22})]^{-1}
    \bs y_1,
\end{equation}
where $\bsX_{ij}$ are the blocks of $\bsX$ and
$\bsX/\bsX_{22}
\equiv\bsX_{11}-\bsX_{12}\bsX_{22}^{-1}\bsX_{21}$.
Thus, to leading order in $\tau$, we have
\begin{equation}
    \bsx^T \bsM_\tau^{-1}\bs y
    =
    \bsx_1^T \bsM_{11}^+ \bs y_1
    - \tau
      \bsx_1^T\bsM_{11}^{-1}
      (\bsX/\bsX_{22})
      \bsM_{11}^{-1}\bs y_1
    + O(\tau^2).
\end{equation}
This can be compared to the naive perturbative expansion, which would have been
$\bsx^T\bsM_\tau^{-1}\bs y
= \bsx_1^T \bsM_{11}^+ \bs y_1
-\tau\bsx_1^T
\bsM_{11}^{-1}\bsX_{11}\bsM_{11}^{-1}\bs y_1
+O(\tau^2)$.
Thus, although the full pseudoinverse is discontinuous under a rank-increasing perturbation, the bilinear form
$\bsx^T\bsM_\tau^{-1}\bs y$ remains continuous when $\bs x,\bs y$ are only nonzero in the support of $\bsM$. Nonetheless, the divergent block of $\bsM_\tau^{-1}$ changes the order-$\tau$ correction by replacing $\bsX_{11}$ with $\bsX_{11}-\bsX_{12}\bsX_{22}^{-1}\bsX_{21}$.

\parTitle{Pseudoinverse of noisy probability matrices}
Let us now specialize this observation to the probability matrix, with $\hatbsP_\tau\equiv \bsP+\tau \hatbsXi$, $\tau\equiv N^{-1/2}$.
Note that $\hatbsXi$ does not depend on $\tau$ in its first two moments, but might depend on it at higher order; for example, its third and higher cumulants are $O(N^{-1/2})$. The dependence on $N$ disappears from $\hatbsXi$ only in the Gaussian limit $N\to\infty$.
In the block coordinates introduced in \cref{subsec:block_decompositions}, we write $\bs A_{ij}\equiv U_i^T \bs A V_j$, $i,j=1,2$, and thus
\begin{equation}
    \hatbsP_\tau
    \equiv
    \begin{pmatrix}
        U_1^T\hatbsP_\tau V_1
        & U_1^T\hatbsP_\tau V_2 \\
        U_2^T\hatbsP_\tau V_1
        & U_2^T\hatbsP_\tau V_2
    \end{pmatrix}
    =
    \begin{pmatrix}
        \bsP_{11} + \tau\,\hatbsXi_{11}
        & \tau\,\hatbsXi_{12} \\
        \tau\,\hatbsXi_{21}
        & \tau\,\hatbsXi_{22}
    \end{pmatrix}.
\end{equation}
We assume throughout that the empirical matrix is wide, $\ntr\ge\nout$, and has full row rank after perturbation.
The formulas in this formal expansion are conditional on $\hatbsXi_{22}$ having full row rank and on the displayed inverses existing.
This holds almost surely in the homogeneous Gaussian reference model when $\ntr > \nout$.
For the true multinomial model, the probability of $\hatbsP_\tau$ and $\hatbsXi_{22}$ not having full row rank is nonzero, but decreases exponentially fast with $\ntr$.
When $\hatbsP_\tau$ is full rank, the pseudoinverse is
$\hatbsP_\tau^+
=
\hatbsP_\tau^T
(\hatbsP_\tau\hatbsP_\tau^T)^{-1}$,
where
\begin{equation}
\label{eq:Mtau_definition}
    \hatbsM_\tau
    \equiv
    \hatbsP_\tau\hatbsP_\tau^T
    =
    \bsP\bsP^T
    +
    \tau\left(\bsP\hatbsXi^T+\hatbsXi\bsP^T\right)
    +
    \tau^2\hatbsXi\hatbsXi^T .
\end{equation}
In the block decomposition the components of $\hatbsM_\tau$ are, explicitly,
\begin{equation}\label{eq:componentsMtau}
\begin{alignedat}{2}
    \hatbsM_{11} &= \bsP\bsP^T
    + \tau(\bsP\hatbsXi^T+\hatbsXi\bsP^T)_{11}
    + \tau^2 \sum_{c=1}^2 \hatbsXi_{1c}\hatbsXi^T_{1c},
    \qquad&
    \hatbsM_{12} &= \tau \bsP\hatbsXi^T_{21}
    + \tau^2 \sum_{c=1}^2 \hatbsXi_{1c}\hatbsXi^T_{2c}, \\
    \hatbsM_{21} &= \tau \hatbsXi_{21}\bsP^T
    + \tau^2 \sum_{c=1}^2 \hatbsXi_{2c}\hatbsXi^T_{1c},
    \qquad&
    \hatbsM_{22} &= \tau^2 \sum_{c=1}^2 \hatbsXi_{2c}\hatbsXi^T_{2c},
\end{alignedat}
\end{equation}
where we suppressed for notational simplicity the subscript $\tau$ on $\hatbsM_\tau$ to write its blocks as
$\hatbsM_{ab}$.
From \cref{eq:M-1} we know that $(\hatbsM^{-1})_{11} = \blueMSchurComplementTwoInv$ and $(\hatbsM^{-1})_{22} = (\hatbsM/\hatbsM_{11})^{-1}$, where
\begin{equation}\label{eq:definitions_Schurs_M}
\begin{gathered}
    {\hatbsM/\hatbsM_{11}} \equiv
    \hatbsM_{22} -
    \hatbsM_{21}\hatbsM_{11}^{-1}\hatbsM_{12},
    \qquad
    \blueMSchurComplementTwo \equiv
    \hatbsM_{11} -
    \hatbsM_{12}\hatbsM_{22}^{-1}\hatbsM_{21}.
\end{gathered}
\end{equation}
More generally, we can write $\hatbsM^{-1}$ as
\begin{equation}
\newcommand{\colsc}{(\hatbsM/\hatbsM_{11})^{-1}}
\begin{aligned}
    \hatbsM_\tau^{-1}
    &=
    \begin{pmatrix}
        \hatbsM_{11}^{-1}
        + \hatbsM_{11}^{-1}\hatbsM_{12}\colsc\hatbsM_{21}\hatbsM_{11}^{-1}
        &
        - \hatbsM_{11}^{-1}\hatbsM_{12}\colsc \\
        -\colsc\hatbsM_{21} \hatbsM_{11}^{-1}
        & \colsc
    \end{pmatrix}
    \\
    &=
    \begin{pmatrix}
        \blueMSchurComplementTwoInv
        &
        -\blueMSchurComplementTwoInv\hatbsM_{12}\hatbsM_{22}^{-1} \\
        -\hatbsM_{22}^{-1} \hatbsM_{21} \blueMSchurComplementTwoInv
        &
        \hatbsM_{22}^{-1} + \hatbsM_{22}^{-1}\hatbsM_{21}\blueMSchurComplementTwoInv
        \hatbsM_{12}\hatbsM_{22}^{-1}
    \end{pmatrix}.
\end{aligned}
\end{equation}

\parTitle{Simplification of Schur complements}
To calculate perturbative orders for small $\tau$ it is useful to simplify these expressions.
From \cref{eq:definitions_Schurs_M,eq:componentsMtau} we get
\begin{equation}\label{eq:cool_expr_MM22}
\begin{gathered}
    {\hatbsM/\hatbsM_{22}} =
    [\bsP + \tau(\hatbsXi/\hatbsXi_{22})]
    \hatbsZ^{-1}
    [\bsP + \tau(\hatbsXi/\hatbsXi_{22})]^T
    + \tau^2 \hatbsXi_{12}\hatbsQ\hatbsXi_{12}^T,
    \\ 
    \hatbsZ\equiv \bsI + \hatbsXi_{21}^T (\hatbsXi_{22}\hatbsXi_{22}^T)^{-1} \hatbsXi_{21},
    \qquad
    \hatbsXi/\hatbsXi_{22}
    \equiv
    \hatbsXi_{11}
    -
    \hatbsXi_{12}\hatbsXi_{22}^+\hatbsXi_{21},
    \qquad
    \hatbsQ
    \equiv
    \bsI-\hatbsXi_{22}^+\hatbsXi_{22}.
\end{gathered}
\end{equation}
Here we introduced the generalized Schur complement of the perturbation $\hatbsXi/\hatbsXi_{22}$, and the operator
$\hatbsQ$, which projects onto the kernel of $\hatbsXi_{22}$ and therefore captures the correction that is present in the rectangular case and vanishes when $\hatbsXi_{22}$ is square and full rank.
At order $\tau^0$ this becomes
\begin{equation}
    \left.\hatbsM/\hatbsM_{22}\right|_{\tau=0}=\bsP\bsP^T
    -
    \bsP\hatbsXi_{21}^T
    (
    \hatbsXi_{21}\hatbsXi_{21}^T + \hatbsXi_{22}\hatbsXi_{22}^T
    )^{-1}
    \hatbsXi_{21}\bsP^T
    = \bsP(
    \bsI + \hatbsXi_{21}^T
    (
    \hatbsXi_{22}\hatbsXi_{22}^T
    )^{-1}
    \hatbsXi_{21}
    )^{-1} \bsP^T,
\end{equation}
where we used the \textit{Woodbury identity}
$\bsI - \bsA^T(\bsA\bsA^T + \bsQ)^{-1}\bsA = (\bsI+\bsA^T\bsQ^{-1}\bsA)^{-1}$.
Similar manipulations are used to derive the rest of~\cref{eq:cool_expr_MM22}. 
We now use these expressions to perform a perturbative analysis of the pseudoinverse $\hatbsP_\tau^+=\hatbsP_\tau^T \hatbsM^{-1}$.
We have
\begin{equation}%
\label{eq:expansionPp11}
\begin{aligned}
    (\hatbsP_\tau^+)_{11}
    &=
    (\bsP^T+\tau \hatbsXi^T_{11})(\hatbsM^{-1})_{11}
    + \tau \hatbsXi^T_{21} (\hatbsM^{-1})_{21}
    = (\bsP^T+\tau \hatbsXi_{11}^T
    - \tau \hatbsXi^T_{21} \hatbsM_{22}^{-1}\hatbsM_{21})
    \blueMSchurComplementTwoInv
    \\
    &= \bsP^+
    -
    \tau\bsP^+(\hatbsXi/\hatbsXi_{22})\bsP^+
    +
    \tau^2\bsP^+(\hatbsXi/\hatbsXi_{22})\bsP^+(\hatbsXi/\hatbsXi_{22})\bsP^+
    -
    \tau^2
    \bsP^+\hatbsXi_{12}\hatbsQ\hatbsXi_{12}^T
    \bsP^{-T}\hatbsZ\bsP^+
    +
    O(\tau^3).
\end{aligned}
\end{equation}
This $(1,1)$ block of $\hatbsP_\tau^+$ is particularly important as it is the only part of $\hatbsP_\tau$ that enters predictions in the limit of infinite test statistics and exact targets.
In the square case, when $\hatbsQ=0$, this expression reduces to $[\bsP+\tau(\hatbsXi/\hatbsXi_{22})]^{-1}$, which can be expanded as
\begin{equation}
    [\bsP+\tau(\hatbsXi/\hatbsXi_{22})]^{-1}
    =
    \sum_{i=0}^{\infty}
    [-\tau\bsP^{-1}(\hatbsXi/\hatbsXi_{22})]^i\bsP^{-1}.
\end{equation}
This expansion is convergent whenever
$\tau\|\bsP^{-1}(\hatbsXi/\hatbsXi_{22})\|_{\rm op}<1$.
We can similarly compute the expressions for the other components of the pseudoinverse:
\begin{equation}
\begin{split}
\label{eq:expansionPp_other}
    (\hatbsP_\tau^+)_{12} &=
    -\left[\bsP^{-1}
    - \tau\bsP^{-1}(\hatbsXi/\hatbsXi_{22}) \bsP^{-1}\right]
    \hatbsXi_{12}\hatbsXi_{22}^+
    + \tau \bsP^{-1} \hatbsXi_{12}\hatbsQ\hatbsXi_{12}^T \bsP^{-T}\hatbsXi_{21}^T(\hatbsXi_{22}\hatbsXi_{22}^T)^{-1}
    + O(\tau^2)\\
    (\hatbsP_\tau^+)_{21}
    &= - \hatbsXi_{22}^+\hatbsXi_{21}
    \left[\bsP^{-1}
    - \tau\bsP^{-1}(\hatbsXi/\hatbsXi_{22}) \bsP^{-1}\right]
    + \tau \hatbsQ\hatbsXi_{12}^T\bsP^{-T}\hatbsZ\bsP^{-1}
    + O(\tau^2),\\
    (\hatbsP_\tau^+)_{22}
    &= \frac{\hatbsXi_{22}^+}{\tau}
    + \hatbsXi_{22}^+\hatbsXi_{21} \bsP^{-1} \hatbsXi_{12} \hatbsXi_{22}^+
    - \hatbsQ
    \hatbsXi_{12}^T \bsP^{-T} \hatbsXi_{21}^T(\hatbsXi_{22}\hatbsXi_{22}^T)^{-1}
    + O(\tau).
\end{split}
\end{equation}
Note the asymmetry in the corrections for $(\hatbsP_\tau^+)_{12}$ and $(\hatbsP_\tau^+)_{21}$. This is caused by the asymmetric nature of the problem, due to the difference in the number of rows and columns of $\hatbsP_N$.
The off-diagonal blocks $(\hatbsP_\tau^+)_{12}$ and $(\hatbsP_\tau^+)_{21}$ have finite limits as $\tau\to0$.
By contrast, the singular contribution is localized in the $(2,2)$ block, which contains the divergent term $\tau^{-1}\hatbsXi_{22}^+$.
This is the expected discontinuity of the pseudoinverse when the perturbation populates directions outside the row and column spaces of the unperturbed matrix.

In conclusion, the perturbative expansion of $\hatbsP_\tau^+$ coincides with
that of the square case upon replacing $\hatbsXi_{22}^{-1}$ with
$\hatbsXi_{22}^+$ in the Schur complements, up to rectangular corrections
whose leading orders are
\begin{equation}\label{eq:leading_corrections_rectangular_case}
    \Delta_{\hatbsQ}\hatbsP^+_\tau=
    \begin{pmatrix}
        - \tau^2\bsP^{-1}\hatbsXi_{12}\hatbsQ \hatbsXi_{12}^T\bsP^{-T}\hatbsZ\bsP^{-1}
        &
        \tau\bsP^{-1}\hatbsXi_{12} \hatbsQ \hatbsXi_{12}^T \bsP^{-T}\hatbsXi_{21}^T(\hatbsXi_{22}\hatbsXi_{22}^T)^{-1}
        \\
        \tau \hatbsQ\hatbsXi_{12}^T\bsP^{-T}\hatbsZ \bsP^{-1}
        &
        - \hatbsQ
    \hatbsXi_{12}^T \bsP^{-T} \hatbsXi_{21}^T(\hatbsXi_{22}\hatbsXi_{22}^T)^{-1}
    \end{pmatrix}
    +\text{subleading terms depending on }\hatbsQ.
\end{equation}
When $\ntr=\nout$ and $\hatbsXi_{22}$ is nonsingular,
$\hatbsQ=0$ and these corrections vanish.  Nonsingularity holds almost
surely in the continuous Gaussian reference model; for finite multinomial
noise it is an event whose probability must be controlled rather than an
identity.

\begin{table}[h!]
    \small
    \renewcommand{\arraystretch}{1.18}
    \begin{tabular}{@{}ll@{}}
        \toprule
        \proofgroup{Basic definitions (\cref{app:notation})}
        \proofsymcell{$r$}
        &
        \proofdefcell{%
            $r=\rank(\bsP)=d^2$
        }
        \\
        \proofsymcell{$\bsP$}
        &
        \proofdefcell{%
            $\bsP = \langle\bs\mu,\bs\rho\rangle$,
            $\qquad\bsP^+ = \langle\tilde{\bs\rho},\bstildemu\rangle$,
            $\qquad\bsP_i\equiv U_i^T\bsP$
        }
        \\
        \proofsymcell{$\bs y$, $\bsp_\sigma$}
        &
        \proofdefcell{%
            $\bs y
            =\langle\bs\rho,\calO\rangle
            =\bsP^T\bs w_\calO
            =\bsP_1^T U_1^T\bs w_\calO
            \in\im(\bsP^T)$,
            \quad
            $\bsp_\sigma
            =\langle\bs\mu,\sigma\rangle
            \in\im(\bsP)$}
        \\
        \proofsymcell{$\bs w_\calO$}
        &
        \proofdefcell{%
            $\bs w_\calO
            =\langle\bstildemu,\calO\rangle$}
        \\
        \proofsymcell{$U_1,U_2$}
        &
        \proofdefcell{%
            $\im(U_1)=\im(\bsP)$,
            \quad
            $\im(U_2)=\coker(\bsP)$,
            \quad
            $U_1\in\RR^{\nout\times r}$,
            \quad
            $U_2\in\RR^{\nout\times (\nout-r)}$
        }
        \\
        \proofgroup{Basic stochastic quantities}
        \proofsymcell{$\hatbsP, \hatbsP_N$, $\Delta\hatbs P_N$}
        &
        \proofdefcell{%
            $\hatbsP_N=\hatbsP
            =\bsP+N^{-1/2}\hatbsXi$,
            \qquad
            $\Delta\hatbs P_N = \hatbs P_N-\bs P$
            }
        \\
        \proofsymcell{$\hatbsS,\hatbs\ell$}
        &
        \proofdefcell{%
            $\hatbsS
            =\ntr^{-1}\hatbsP_N\hatbsP_N^T$,
            \quad
            $\hatbs\ell
            =\ntr^{-1}\hatbsP_N\bs y$
            \quad
            (\cref{eq:error_with_pseudoinverse})
        }
        \\
        \proofsymcell{$\Delta\hat o$}
        &
        \proofdefcell{%
            $\Delta\hat o=
            \bs y^T(\Delta\hatbs P_N^+)\bsp_\sigma
            =
            (\hatbs\ell^T
            \hatbsS^+
            - \bs\ell^T \bsS^+)\bsp_\sigma
            =
            \hatbs q^T\hatbsS_{\rm eff}^{-1}\bsp_{\sigma,1}
            $
            \quad
            (\cref{eq:error_with_pseudoinverse,eq:Deltao_with_q})
        }
        \\
        \proofgroup{Population quantities}
        \proofsymcell{$\bsS,\bs\ell$}
        &
        \proofdefcell{%
            $\bsS
            =\ntr^{-1}\bsP\bsP^T$,
            \quad
            $\bs\ell
            =\ntr^{-1}\bsP\bs y$
            \quad
            (\cref{eq:error_with_pseudoinverse})
        }
        \\

        \proofsymcell{$\overline{\bs\Sigma}$, $\overline{\bs\Sigma}_{ab}$, $\bs\Sigma_i$}
        &
        \proofdefcell{%
            $\overline{\bs\Sigma}
            =
            \ntr^{-1}\sum_{i=1}^{\ntr}\bs\Sigma_i$,
            \quad
            $\overline{\bs\Sigma}_{ab}
            =
            U_a^T \overline{\bs\Sigma} U_b$
            \quad
            $\bs\Sigma_i = \EE[\hatbs\xi_i\hatbs\xi_i^T]$
        }
        \\
        \proofgroup{Effective Schur-reduced quantities}
        \proofsymcell{$\bsS_{\rm eff}$}
        &
        \proofdefcell{%
            $\bsS_{\rm eff}
            =
            \bsS_{11}
            +N^{-1}\overline{\bs\Sigma}_{\rm eff}$
            \quad(\cref{eq:block_of_Delta_definitions})
        }
        \\
        \proofsymcell{$\hatbsS_{\rm eff}$}
        &
        \proofdefcell{%
            $\hatbsS_{\rm eff}
            =
            \hatbsS_{11}
            -\hatbsS_{12}
            \hatbsS_{22}^{-1}
            \hatbsS_{21}
            = \bsS_{\rm eff} + (\Delta_c\hatbsS_{\rm eff})_{\poly}
            + (\Delta_c\hatbsS_{\rm eff})_{\rm nl}$
            \quad(\cref{eq:decomposition_DeltaSeff})
        }
        \\
        \proofsymcell{$\overline{\bs\Sigma}_{\rm eff}$}
        &
        \proofdefcell{%
            $\overline{\bs\Sigma}_{\rm eff}
            =
            \overline{\bs\Sigma}_{11}
            -\overline{\bs\Sigma}_{12}
             \overline{\bs\Sigma}_{22}^{-1}
             \overline{\bs\Sigma}_{21}$}
        \\
        \proofsymcell{$\bs w^{\rm eff}_{\calO}$}
        &
        \proofdefcell{%
            $\bs w^{\rm eff}_{\calO}
            =
            (\bsI_{\nout} - U_2 \overline{\bs\Sigma}_{22}^{-1}\overline{\bs\Sigma}_{21} U_1^T)\bs w_\calO.$
            \quad(\cref{eq:def_weff})
        }
        \\
        \proofsymcell{$\hatbs\ell_{\rm eff}$}
        &
        \proofdefcell{%
            $\hatbs\ell_{\rm eff}
            =
            \hatbs\ell_1 - \hatbsS_{12}\hatbsS_{22}^{-1}\hatbs\ell_2
            = \bs\ell_1 + (\Delta_c 
            \hatbs\ell_{\rm eff})_{\poly} + (\Delta_c 
            \hatbs\ell_{\rm eff})_{\rm nl}$
            (\cref{eq:block_of_Delta_definitions,eq:def_ellterms})
        }
        \\
        \proofsymcell{$\hatbs q$}
        &
        \proofdefcell{%
            $\hatbs q \equiv
            \hatbs\ell_{\rm eff} - \hatbsS_{\rm eff}
            U_1^T \bs w_\calO
            =
            \bs q_0 + \hatbs q_{\poly} + \hatbs q_{\rm nl}$
            \quad(\cref{eq:def_q})
        }
        \\
        \bottomrule
    \end{tabular}
    \caption{Summary of main definitions.
    All quantities are conditional on the fixed training states.
    Stochastic quantities are denoted with a hat, and stochasticity is always understood with respect to the training noise realizations.
    The symbols $d$, $\ntr$, $\nout$, and $N$ denote the Hilbert-space dimension, number of training states, number of POVM outcomes, and training statistics, respectively.
    Expressions involving stochastic inverses are understood on events where the displayed inverses exist; the pseudoinverse formula for $\Delta\hat o$ is defined for every realization.
    }
    \label{tab:direct_proof_geometry}
\end{table}

\section{Homogeneous Gaussian noise model}
\label{app:concentration}

This section evaluates the random coefficients in the perturbative
prediction error under the homogeneous Gaussian reference model used in
\cref{sec:gaussian_approximation}.  We first derive the conditional laws of
the generalized Schur complement and of the term containing $\hatbs Q$.
We then use these laws to compute the mean and variance of the two
coefficients in \cref{eq:A1_A2_definitions}.  Conditional on the training
states, we assume
\begin{equation}
    \hatbsXi
    \sim
    \calM\calN(0,\overline{\bs\Sigma},\bs I_{\ntr}),
    \qquad
    \overline{\bs\Sigma}
    \equiv
    \frac{1}{\ntr}\sum_{i=1}^{\ntr}\bs\Sigma_i.
    \label{eq:homogeneous_gaussian_SM}
\end{equation}
Thus the columns are independent Gaussian vectors with a common covariance.
Note that Gaussian sampling and covariance homogeneity are separate assumptions: a large $N$ naturally motivates Gaussian column fluctuations but does not make the state-dependent multinomial covariances equal.
The fact that the homogeneous Gaussian model reproduces performances so well is thus due to specific features of the MSE, not to the empirical probability matrices for multinomial and homogeneous Gaussian models actually being close.

We keep $d$ and $\nout$ fixed and let $\ntr\to\infty$, and assume the
deterministic conditioning bounds
\begin{equation}
    \lambda_{\min}(\overline{\bs\Sigma}_{22})
    \ge\kappa_{22}>0,
    \qquad
    \lambda_{\min}(\bsS_{11})
    \ge\kappa_{11}>0,
    \qquad
    \bsS_{11}=\frac{\bsP_{11}\bsP_{11}^T}{\ntr},
    \label{eq:gaussian_uniform_conditioning}
\end{equation}
uniformly in $\ntr$.  The first bound concerns the covariance of the noise
in the kernel block.  The second concerns the noiseless feature matrix and
is the frame-conditioning assumption used below.  Since the eigenvalues of
$\bsP_{11}\bsP_{11}^T$ are the nonzero squared singular values of $\bsP$,
the frame-operator identity in \cref{app:frame} gives the sufficient bound
\begin{equation}
    \lambda_{\min}(\bsS_{11})
    \ge
    \lambda_{\min}(\frameOp_{\bsmu})
    \lambda_{\min}\!\left(\frac{\frameOp_{\bs\rho}}{\ntr}\right).
    \label{eq:gaussian_frame_sufficient_condition}
\end{equation}
Thus a fixed informationally complete POVM together with a family of
training frames satisfying
$\lambda_{\min}(\frameOp_{\bs\rho}/\ntr)>0$ uniformly in $\ntr$ implies the
second condition in \cref{eq:gaussian_uniform_conditioning}.  Completeness
of every finite training set alone would not be sufficient, since its lower
frame bound could still tend to zero.  For the balanced tight designs of
\cref{eq:frameop_symmetric_states,eq:frameop_symmetric_POVMs}, one obtains
explicitly
\begin{equation}
    \lambda_{\min}(\bsS_{11})
    =\frac{1}{\nout(d+1)^2},
    \label{eq:tight_gaussian_frame_bound}
\end{equation}
which is independent of $\ntr$.  The bound also holds with probability
tending to one for independent Haar-random training states.  Indeed, the
rank-one superoperators
$\PP(\rho_i)$ are uniformly bounded and the operator space has fixed
dimension $d^2$.  The entrywise law of large numbers and norm equivalence in
fixed dimension therefore give, in operator norm,
\begin{equation}
    \frac{\frameOp_{\bs\rho}}{\ntr}
    =
    \frac{\Id+\PP(\bs I)}{d(d+1)}+o_{\PP}(1).
    \label{eq:Haar_normalized_frame_limit}
\end{equation}
The smallest eigenvalue of the limiting operator is $1/[d(d+1)]$.
Consequently, for example,
$\lambda_{\min}(\frameOp_{\bs\rho}/\ntr)
\ge1/[2d(d+1)]$ with probability tending to one.  Together with
\cref{eq:gaussian_frame_sufficient_condition}, this proves the required
uniform lower-frame bound for Haar-random training designs with probability
tending to one. The block dimensions
and the identity $r=d^2$ are those of \cref{subsec:block_decompositions}.

\subsection{Conditional Gaussian laws}
\label{app:QY_control}

The Gaussian model is useful only if it resolves the nonlinear inverse appearing in the perturbative coefficients.
We do this by holding the lower blocks fixed: the upper-block residuals then have Gaussian laws, and orthogonal projections separate the Schur term from the kernel contribution.
Assuming $\barSigma_{22}\succ0$ and $\ntr\ge\nout$ ensures the lower block has full row rank almost surely; averaging its inverse will require the stronger threshold stated below.

Recall the two quantities entering these conditional laws:
\[
    \overline{\bs\Sigma}_{\rm eff}
    =\overline{\bs\Sigma}_{11}
    -\overline{\bs\Sigma}_{12}
     \overline{\bs\Sigma}_{22}^{-1}
     \overline{\bs\Sigma}_{21},
    \qquad
    \hatbsZ
    =\bs I_r
    +\hatbsXi_{21}^T
     (\hatbsXi_{22}\hatbsXi_{22}^T)^{-1}
     \hatbsXi_{21}.
\]
The first is the covariance Schur complement used above, and the second is
the matrix introduced in \cref{eq:cool_expr_MM22}.  Moreover,
$\hatbsXi_{22}=\overline{\bs\Sigma}_{22}^{1/2}\hatbs G$, where
$\hatbs G\in\RR^{(\nout-r)\times(\ntr-r)}$ has independent standard normal
entries.  Hence $\hatbsXi_{22}$ has full row rank almost surely when
$\ntr\ge\nout$.  It follows that
$\hatbsXi_{22}^+\hatbsXi_{22}$ is the orthogonal projector onto the row
space of $\hatbsXi_{22}$, whereas
\begin{equation}
    \hatbs Q
    =\bs I_{\ntr-r}-\hatbsXi_{22}^+\hatbsXi_{22}
    \label{eq:gaussian_Q_projector}
\end{equation}
is the orthogonal projector onto $\ker(\hatbsXi_{22})$.  Since
$\hatbsXi_{22}$ has $\nout-r$ rows and $\ntr-r$ columns, the rank-nullity
theorem gives $\rank(\hatbs Q)=\ntr-\nout$.  Conditioning on the lower
blocks $(\hatbsXi_{21},\hatbsXi_{22})$ fixes both $\hatbs Q$ and $\hatbsZ$,
while leaving the Gaussian fluctuations in the upper blocks to be
averaged.

\begin{lemma}[Conditional laws of the Schur and $Q$ terms]
\label{lem:gaussian_conditional_block_laws}
Assume \cref{eq:homogeneous_gaussian_SM} and $\ntr\ge\nout$.  Conditional
on $(\hatbsXi_{21},\hatbsXi_{22})$,
\begin{equation}
    \hatbsXi/\hatbsXi_{22}
    \sim
    \calM\calN(0,\overline{\bs\Sigma}_{\rm eff},\hatbsZ),
    \label{eq:conditional_schur_law}
\end{equation}
whereas
$\hatbsXi_{12}\hatbs Q\hatbsXi_{12}^T$ has a Wishart distribution with
scale $\overline{\bs\Sigma}_{\rm eff}$ and $\ntr-\nout$ degrees of
freedom.  The two random matrices in this statement are conditionally
independent.  Consequently,
\begin{gather}
    \EE\!\left[
        \hatbsXi_{12}\hatbs Q\hatbsXi_{12}^T
        \mathrel{\big|}\hatbsXi_{21},\hatbsXi_{22}
    \right]
    =(\ntr-\nout)\overline{\bs\Sigma}_{\rm eff},
    \label{eq:gaussian_Q_conditional_mean}\\
    \EE[\hatbsZ]
    =
    \frac{\ntr-r-1}{\ntr-\nout-1}\,\bs I_r,
    \qquad \ntr>\nout+1.
    \label{eq:gaussian_Y_mean}
\end{gather}
\end{lemma}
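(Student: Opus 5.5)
The plan is to whiten the lower (cokernel) rows and regress the upper rows on them, so that conditioning on $(\hatbsXi_{21},\hatbsXi_{22})$ leaves a matrix-normal residual. Write the row blocks of $\hatbsXi$ as $\hatbsXi_{1\cdot}=U_1^T\hatbsXi$ and $\hatbsXi_{2\cdot}=U_2^T\hatbsXi$; these are $r\times\ntr$ and $(\nout-r)\times\ntr$ matrices. Conjugation by the orthogonal matrices $[U_1,U_2]$ and $[V_1,V_2]$ preserves the column covariance $\bs I_{\ntr}$. Hence $(\hatbsXi_{1\cdot};\hatbsXi_{2\cdot})$ is matrix normal with row covariance given by the blocks $\barSigma_{ab}$ and column covariance $\bs I$. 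Standard Gaussian conditioning, applied to each column independently, then gives
\[
\hatbsXi_{1\cdot}=\barSigma_{12}\barSigma_{22}^{-1}\hatbsXi_{2\cdot}+\hat{\bs E},\qquad \hat{\bs E}\sim\calM\calN(0,\barSigma_{\rm eff},\bs I_{\ntr}),
\]
where $\hat{\bs E}$ is independent of $\hatbsXi_{2\cdot}$. Splitting columns with $V_1,V_2$ yields $\hatbsXi_{11}=\bs B\hatbsXi_{21}+\hat{\bs E}_1$ and $\hatbsXi_{12}=\bs B\hatbsXi_{22}+\hat{\bs E}_2$, with $\bs B\equiv\barSigma_{12}\barSigma_{22}^{-1}$. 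Here $\hat{\bs E}_1=\hat{\bs E}V_1$ and $\hat{\bs E}_2=\hat{\bs E}V_2$ are independent, with column covariances $\bs I_r$ and $\bs I_{\ntr-r}$ respectively.

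Next I substitute these expressions. The $\bs B$ terms cancel in the Schur complement because $\hatbsXi_{22}\hatbsXi_{22}^+=\bs I$ (full row rank almost surely), which leaves
\[
\hatbsXi/\hatbsXi_{22}=\hat{\bs E}_1-\hat{\bs E}_2\hatbsXi_{22}^+\hatbsXi_{21}.
\]
Conditionally on the lower blocks, this is a linear image of the Gaussian matrix $(\hat{\bs E}_1,\hat{\bs E}_2)$, so it is matrix normal with row covariance $\barSigma_{\rm eff}$. Its column covariance is $\bs I+\hatbsXi_{21}^T(\hatbsXi_{22}^+)^T\hatbsXi_{22}^+\hatbsXi_{21}$. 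Using $(\hatbsXi_{22}^+)^T\hatbsXi_{22}^+=(\hatbsXi_{22}\hatbsXi_{22}^T)^{-1}$, this equals $\hatbsZ$.

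For the $Q$ term, $\hatbsXi_{22}\hatbs Q=0$ gives $\hatbsXi_{12}\hatbs Q=\hat{\bs E}_2\hatbs Q$. Writing $\hatbs Q=\bs O\bs O^T$ with $\bs O$ an isometry of rank $\ntr-\nout$, the matrix $\hat{\bs E}_2\bs O$ is conditionally $\calM\calN(0,\barSigma_{\rm eff},\bs I_{\ntr-\nout})$. This gives a Wishart law with $\ntr-\nout$ degrees of freedom and conditional mean $(\ntr-\nout)\barSigma_{\rm eff}$.

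Conditional independence follows because $\hat{\bs E}_2\hatbs Q$ and $\hat{\bs E}_2\hatbsXi_{22}^+$ involve orthogonal column subspaces of $\hat{\bs E}_2$: $\hatbs Q$ projects onto $\ker\hatbsXi_{22}$, while the range of $\hatbsXi_{22}^+$ is the row space of $\hatbsXi_{22}$. These pieces are therefore uncorrelated jointly Gaussian matrices and hence independent. They are also independent of $\hat{\bs E}_1$.

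For $\EE[\hatbsZ]$, I use that $\hatbsXi_{21}$ and $\hatbsXi_{22}$ are independent, since they are disjoint column sets of $\hatbsXi_{2\cdot}$, each matrix normal with row covariance $\barSigma_{22}$. Conditioning on $\hatbsXi_{22}$ gives
\[
\EE[\hatbsXi_{21}^T\bs A\hatbsXi_{21}]=\Trace(\bs A\barSigma_{22})\,\bs I_r,\qquad \bs A=(\hatbsXi_{22}\hatbsXi_{22}^T)^{-1}.
\]
The matrix $\hatbsXi_{22}\hatbsXi_{22}^T$ is Wishart with scale $\barSigma_{22}$ and $\ntr-r$ degrees of freedom in dimension $p=\nout-r$. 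Its inverse therefore has mean $\barSigma_{22}^{-1}/(\ntr-r-p-1)=\barSigma_{22}^{-1}/(\ntr-\nout-1)$ for $\ntr>\nout+1$. The trace then equals $(\nout-r)/(\ntr-\nout-1)$, so
\[
\EE[\hatbsZ]=\left(1+\frac{\nout-r}{\ntr-\nout-1}\right)\bs I_r=\frac{\ntr-r-1}{\ntr-\nout-1}\,\bs I_r .
\]

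I expect the main obstacle to be the conditional-independence claim and the bookkeeping that the column splits by $V_1,V_2$ preserve independence. This rests on orthogonality of $V_1$ and $V_2$ together with the identity column covariance, and I will state it explicitly rather than leave it implicit.
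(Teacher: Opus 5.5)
Your proposal is correct and takes essentially the same route as the paper. You regress the upper blocks on the lower ones to obtain independent centered residuals, and use $\hatbsXi_{22}\hatbsXi_{22}^+=\bs I$ to reduce the Schur complement to $\hat{\bs E}_1-\hat{\bs E}_2\hatbsXi_{22}^+\hatbsXi_{21}$ with column covariance $\hatbsZ$. You then use $\hatbsXi_{22}\hatbsQ=0$ for the Wishart term, orthogonality of the row space and kernel of $\hatbsXi_{22}$ for conditional independence, and the inverse-Wishart mean for $\EE[\hatbsZ]$; the only cosmetic difference is that you regress the full row blocks once and then split columns by $V_1,V_2$, whereas the paper conditions the two column groups separately.
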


\begin{proof}
The conditional-normal formula, applied separately to the first $r$ and
last $\ntr-r$ columns, gives \cite{eaton2007wishart}:
\begin{align*}
    \hatbsXi_{11}\mid\hatbsXi_{21}
    &\sim
    \calM\calN\!\left(
        \overline{\bs\Sigma}_{12}
        \overline{\bs\Sigma}_{22}^{-1}\hatbsXi_{21},
        \overline{\bs\Sigma}_{\rm eff},\bs I_r
    \right),\\
    \hatbsXi_{12}\mid\hatbsXi_{22}
    &\sim
    \calM\calN\!\left(
        \overline{\bs\Sigma}_{12}
        \overline{\bs\Sigma}_{22}^{-1}\hatbsXi_{22},
        \overline{\bs\Sigma}_{\rm eff},\bs I_{\ntr-r}
    \right).
\end{align*}
For a jointly Gaussian pair, the residual after subtracting the conditional
mean is independent of the variable on which one conditions.  Hence
\[
    \hatbsXi_{12}
    -\overline{\bs\Sigma}_{12}
     \overline{\bs\Sigma}_{22}^{-1}\hatbsXi_{22}
\]
is independent of $\hatbsXi_{22}$ and has conditional mean zero.  This can
be seen directly, column by column, from
\[
    \Cov\!\left(
        \hatbsXi_{12}
        -\overline{\bs\Sigma}_{12}
         \overline{\bs\Sigma}_{22}^{-1}\hatbsXi_{22},
        \hatbsXi_{22}
    \right)
    =\overline{\bs\Sigma}_{12}
     -\overline{\bs\Sigma}_{12}
      \overline{\bs\Sigma}_{22}^{-1}
      \overline{\bs\Sigma}_{22}
    =0:
\]
zero cross-covariance implies independence because the pair is jointly
Gaussian.  The
analogous statement holds for the residual formed from
$\hatbsXi_{11}$ and $\hatbsXi_{21}$.  Moreover, the two residuals are
independent of each other because the first $r$ columns of $\hatbsXi$ are
independent of the last $\ntr-r$ columns.  Consequently, even after
conditioning on both lower blocks, the second term below has conditional
mean
\begin{align*}
    \EE\!\left[
        \left(
            \hatbsXi_{12}
            -\overline{\bs\Sigma}_{12}
             \overline{\bs\Sigma}_{22}^{-1}\hatbsXi_{22}
        \right)
        \hatbsXi_{22}^+\hatbsXi_{21}
        \mathrel{\big|}\hatbsXi_{21},\hatbsXi_{22}
    \right]=
    \EE\!\left[
        \hatbsXi_{12}
        -\overline{\bs\Sigma}_{12}
         \overline{\bs\Sigma}_{22}^{-1}\hatbsXi_{22}
        \mathrel{\big|}\hatbsXi_{22}
    \right]
    \hatbsXi_{22}^+\hatbsXi_{21}
    =0.
\end{align*}
Here $\hatbsXi_{22}^+\hatbsXi_{21}$ is fixed under the conditioning.

The conditional mean of the generalized Schur complement can also be
checked directly:
\begin{align*}
    \EE\!\left[
        \hatbsXi/\hatbsXi_{22}
        \mathrel{\big|}\hatbsXi_{21},\hatbsXi_{22}
    \right]=
    \overline{\bs\Sigma}_{12}
    \overline{\bs\Sigma}_{22}^{-1}\hatbsXi_{21}
    -\overline{\bs\Sigma}_{12}
     \overline{\bs\Sigma}_{22}^{-1}\hatbsXi_{22}
     \hatbsXi_{22}^+\hatbsXi_{21}
    =0,
\end{align*}
where the last equality uses
$\hatbsXi_{22}\hatbsXi_{22}^+=\bs I_{\nout-r}$.  Equivalently, subtracting
the two conditional means gives the centered representation
\begin{align*}
    \hatbsXi/\hatbsXi_{22}
    ={}
    \left(
        \hatbsXi_{11}
        -\overline{\bs\Sigma}_{12}
         \overline{\bs\Sigma}_{22}^{-1}\hatbsXi_{21}
    \right)-\left(
        \hatbsXi_{12}
        -\overline{\bs\Sigma}_{12}
         \overline{\bs\Sigma}_{22}^{-1}\hatbsXi_{22}
    \right)
    \hatbsXi_{22}^+\hatbsXi_{21}.
\end{align*}
The first centered matrix in the last display has column covariance
$\bs I_r$; the second is independent of it and, after right multiplication by
$\hatbsXi_{22}^+\hatbsXi_{21}$, has column covariance
$(\hatbsXi_{22}^+\hatbsXi_{21})^T
 (\hatbsXi_{22}^+\hatbsXi_{21})$.  Therefore the column covariance of their
difference is
\[
    \bs I_r
    +(\hatbsXi_{22}^+\hatbsXi_{21})^T
      (\hatbsXi_{22}^+\hatbsXi_{21})
    =\bs I_r
      +
      \hatbsXi_{21}^T
      (\hatbsXi_{22}\hatbsXi_{22}^T)^{-1}
      \hatbsXi_{21}
    =\hatbsZ.
\]
The row covariance remains $\overline{\bs\Sigma}_{\rm eff}$, which proves
\cref{eq:conditional_schur_law}.

The same centered representation also exposes the $\hatbs Q$ term.  Since
$\hatbsXi_{22}\hatbs Q=0$ and $\hatbs Q^2=\hatbs Q$, we have
\begin{align*}
    \hatbsXi_{12}\hatbs Q\hatbsXi_{12}^T
    ={}
    \left(
        \hatbsXi_{12}
        -\overline{\bs\Sigma}_{12}
         \overline{\bs\Sigma}_{22}^{-1}\hatbsXi_{22}
    \right)
    \hatbs Q
    \left(
        \hatbsXi_{12}
        -\overline{\bs\Sigma}_{12}
         \overline{\bs\Sigma}_{22}^{-1}\hatbsXi_{22}
    \right)^T.
\end{align*}
In an orthonormal basis that diagonalizes $\hatbs Q$, the nonzero part of
the centered matrix in this display consists of $\rank(\hatbs Q)
=\ntr-\nout$ independent Gaussian columns, each with covariance
$\overline{\bs\Sigma}_{\rm eff}$.  Their sum of outer products is therefore
Wishart with the scale and degrees of freedom stated in the lemma.

It remains to justify the claimed conditional independence.  In the Schur
complement, the contribution from the centered $\hatbsXi_{12}$ block is
right-multiplied by $\hatbsXi_{22}^+\hatbsXi_{21}$ and hence depends only on
its projection onto the row space of $\hatbsXi_{22}$.  By contrast, the
Wishart matrix in the preceding display depends only on its projection onto
$\ker(\hatbsXi_{22})$.  These two projections are orthogonal, since
\[
    (\hatbsXi_{22}^+\hatbsXi_{22})\hatbs Q=0.
\]
The centered $\hatbsXi_{12}$ block is matrix normal with identity column
covariance, so its two orthogonal projections have zero cross-covariance
and, being jointly Gaussian, are independent.  The centered
$\hatbsXi_{11}$ block appearing in the Schur complement is independent of
both because it consists of different columns.  This proves the conditional
independence of the two matrices in the lemma.  Finally, the mean of a
Wishart matrix is its number of degrees of freedom times its scale, which
gives \cref{eq:gaussian_Q_conditional_mean}.

The upper-block calculation is complete, but the coefficient averages still contain $\hatbsZ$.
Its mean is the point at which integrability of the random lower inverse enters.
For \cref{eq:gaussian_Y_mean}, the two lower blocks are independent and
$\hatbsXi_{22}\hatbsXi_{22}^T$ is Wishart with scale
$\overline{\bs\Sigma}_{22}$ and $\ntr-r$ degrees of freedom.  Conditional
on $\hatbsXi_{22}$, the $r$ columns of $\hatbsXi_{21}$ are independent with
covariance $\overline{\bs\Sigma}_{22}$.  Consequently,
\begin{align*}
    \EE\!\left[
        \hatbsZ-\bs I_r
        \mathrel{\big|}\hatbsXi_{22}
    \right]=
    \Trace\!\left[
        (\hatbsXi_{22}\hatbsXi_{22}^T)^{-1}
        \overline{\bs\Sigma}_{22}
    \right]\bs I_r.
\end{align*}
The inverse-Wishart identity
\[
    \EE[(\hatbsXi_{22}\hatbsXi_{22}^T)^{-1}]
    =\frac{\overline{\bs\Sigma}_{22}^{-1}}
    {\ntr-\nout-1}\quad \Rightarrow\quad
    \EE[\hatbsZ-\bs I_r]
    =\frac{\nout-r}{\ntr-\nout-1}\bs I_r.
\]
Here the denominator is
$(\ntr-r)-(\nout-r)-1=\ntr-\nout-1$, and the numerator comes from
$\Trace(\overline{\bs\Sigma}_{22}^{-1}
\overline{\bs\Sigma}_{22})=\nout-r$.
Adding $\bs I_r$ and combining the two terms gives
$(\ntr-r-1)/(\ntr-\nout-1)$, as in
\cref{eq:gaussian_Y_mean}; see also
\cite[Chapter~3]{muirhead2009aspects}.
\end{proof}

\subsection{Leading resource scalings}
\label{subsec:gaussian_scaling_recovery}

The truncated infinite statistics MSE in \cref{eq:delta_P_11} is expressed in terms of the scalar contractions $\bs w_{\calO}^TU_1\bs A_kV_1^T\langle\tilde{\bs\rho},\sigma\rangle$, $k=1,2$, with prefactors $N^{-1/2}$ and $N^{-1}$, where
$\bs A_1=-\hatbsXi/\hatbsXi_{22}$ and
$\bs A_2=(\hatbsXi/\hatbsXi_{22})\bsP_{11}^{-1}(\hatbsXi/\hatbsXi_{22})-\hatbsXi_{12}\hatbs Q\hatbsXi_{12}^T\bsP_{11}^{-T}\hatbsZ$.
Their means, variances, and covariance determine the squared bias and the training variance.
Conditional on the lower blocks, \cref{lem:gaussian_conditional_block_laws} makes the Schur complement centered matrix-normal and $\hatbsXi_{12}\hatbs Q\hatbsXi_{12}^T$ an independent Wishart matrix, so each of these moments is a Gaussian or Wishart moment at fixed $\hatbsZ$, subsequently averaged over the lower blocks.
\Cref{lem:gaussian_A1_A2_moments} below provides such means and variances, depending on the training frame, the test state, and the observable only through the two deterministic coefficients $b_{\rm hom}$ and $V_{\rm hom}$ given in \cref{eq:gaussian_b_V_identification}.

Although both coefficients carry an explicit factor $\ntr$ in their definition, they are actually $O(1)$ with respect to the number of training states. In fact, since $\bsS_{11}=\bsP_{11}\bsP_{11}^T/\ntr$, that condition implies $\sigma_{\min}(\bsP_{11})^2\ge\kappa_{11}\ntr$, and hence
\begin{equation}
    \|\bsP_{11}^{-1}\|_{\rm op}
    =\frac{1}{\sigma_{\min}(\bsP_{11})}
    \le\frac{1}{\sqrt{\kappa_{11}\ntr}}
    =O(\ntr^{-1/2}),
    \qquad
    \|V_1^T\langle\tilde{\bs\rho},\sigma\rangle\|_2
    \le
    \frac{\|\bsp_{\sigma,1}\|_2}{\sqrt{\kappa_{11}\ntr}}
    \le\frac{1}{\sqrt{\kappa_{11}\ntr}}
    =O(\ntr^{-1/2}),
    \label{eq:gaussian_frame_scaling}
\end{equation}
because
$V_1^T\langle\tilde{\bs\rho},\sigma\rangle
=\bsP_{11}^{-1}\bsp_{\sigma,1}$ and
$\|\bsp_{\sigma,1}\|_2\le\|\bsp_\sigma\|_2\le1$.  The last inequality
uses that $\bsp_\sigma$ is a probability vector. In this notation, the coefficients introduced in \cref{sec:gaussian_approximation} read:
\begin{equation}
\begin{aligned}
    b_{\rm hom}
    &\equiv
    \ntr\,\bs w_{\calO}^TU_1\overline{\bs\Sigma}_{\rm eff}
    \bsP_{11}^{-T}V_1^T
    \langle\tilde{\bs\rho},\sigma\rangle,\\
    V_{\rm hom}
    &\equiv
    \ntr\,
    \|V_1^T\langle\tilde{\bs\rho},\sigma\rangle\|_2^2
    \bs w_{\calO}^TU_1\overline{\bs\Sigma}_{\rm eff}U_1^T
    \bs w_{\calO}.
\end{aligned}
\label{eq:gaussian_b_V_identification}
\end{equation}
The explicit factor $\ntr$ in $b_{\rm hom}$ is canceled by the two
$O(\ntr^{-1/2})$ factors just identified.  In $V_{\rm hom}$, it is canceled by the
squared norm of the dual-frame vector.  Since $d$ and $\nout$ are fixed,
the remaining contractions are bounded under the stated assumptions, and
hence $b_{\rm hom},V_{\rm hom}=O(1)$.

\begin{lemma}[Moments of the two perturbative coefficients]
\label{lem:gaussian_A1_A2_moments}
For $\ntr>\nout+3$, the contractions appearing in
\cref{eq:delta_P_11} satisfy
\begin{align}
    \EE\!\left[
        \bs w_{\calO}^TU_1\bs A_1
        V_1^T\langle\tilde{\bs\rho},\sigma\rangle
    \right]
    &=0,
    \label{eq:gaussian_A1_mean}\\
    \Var\!\left[
        \bs w_{\calO}^TU_1\bs A_1
        V_1^T\langle\tilde{\bs\rho},\sigma\rangle
    \right]
    &=
    \frac{V_{\rm hom}}{\ntr}
    \frac{\ntr-r-1}{\ntr-\nout-1}
    =\frac{V_{\rm hom}}{\ntr}+O(\ntr^{-2}),
    \label{eq:gaussian_A1_variance}\\
    \EE\!\left[
        \bs w_{\calO}^TU_1\bs A_2
        V_1^T\langle\tilde{\bs\rho},\sigma\rangle
    \right]
    &=-\left(1-\frac{r+1}{\ntr}\right)b_{\rm hom}
    =-b_{\rm hom}+O(\ntr^{-1}),
    \label{eq:gaussian_A2_mean}\\
    \Var\!\left[
        \bs w_{\calO}^TU_1\bs A_2
        V_1^T\langle\tilde{\bs\rho},\sigma\rangle
    \right]
    &=O(\ntr^{-1}).
    \label{eq:gaussian_A2_variance}
\end{align}
The covariance between the two displayed contractions is zero.
The condition $\ntr>\nout+3$ ensures that the second inverse-Wishart
moment entering \cref{eq:gaussian_A2_variance} is finite; the mean formulas
only require $\ntr>\nout+1$.
In particular, the two contractions are respectively
$O_{\PP}(\ntr^{-1/2})$ and
$-b+O_{\PP}(\ntr^{-1/2})$.
\end{lemma}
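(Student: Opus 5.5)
The plan is to condition on the lower blocks $(\hatbsXi_{21},\hatbsXi_{22})$, which fixes $\hatbsZ$ and $\hatbs Q$. We then use \cref{lem:gaussian_conditional_block_laws}: conditionally, $\hatbs G\equiv\hatbsXi/\hatbsXi_{22}\sim\calM\calN(0,\barSigma_{\rm eff},\hatbsZ)$, and $\hatbs W\equiv\hatbsXi_{12}\hatbs Q\hatbsXi_{12}^T$ is an independent Wishart$(\barSigma_{\rm eff},\ntr-\nout)$. Afterwards we average over the lower blocks using inverse-Wishart moments. Throughout, write $\bs a\equiv U_1^T\bs w_\calO$ and $\bs c\equiv V_1^T\langle\tilde{\bs\rho},\sigma\rangle=\bsP_{11}^{-1}\bsp_{\sigma,1}$. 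The two contractions are then $X_1=-\bs a^T\hatbs G\bs c$ and $X_2=\bs a^T\hatbs G\bsP_{11}^{-1}\hatbs G\bs c-\bs a^T\hatbs W\bsP_{11}^{-T}\hatbsZ\bs c$.

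\textbf{Step 1: the $\bs A_1$ moments.} The quantity $X_1$ is conditionally centered Gaussian with variance $(\bs a^T\barSigma_{\rm eff}\bs a)(\bs c^T\hatbsZ\bs c)$. This gives \cref{eq:gaussian_A1_mean}. Averaging with \cref{eq:gaussian_Y_mean} gives $(\bs a^T\barSigma_{\rm eff}\bs a)\|\bs c\|^2(\ntr-r-1)/(\ntr-\nout-1)$, which equals \cref{eq:gaussian_A1_variance} by the definition of $V_{\rm hom}$ in \cref{eq:gaussian_b_V_identification}.

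\textbf{Step 2: the $\bs A_2$ mean.} For the quadratic term we use the matrix-normal second moment $\EE[\hatbs G\bs B\hatbs G\mid\cdot]=\barSigma_{\rm eff}\bs B^T\hatbsZ$, valid for $\hatbs G\sim\calM\calN(0,\bs R,\bs C)$ with $\EE[G_{ij}G_{kl}]=R_{ik}C_{jl}$, which yields $\bs R\bs B^T\bs C$. This gives $\bs a^T\barSigma_{\rm eff}\bsP_{11}^{-T}\hatbsZ\bs c$. The Wishart term gives $-(\ntr-\nout)\bs a^T\barSigma_{\rm eff}\bsP_{11}^{-T}\hatbsZ\bs c$. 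Taking $\EE[\hatbsZ]$ from \cref{eq:gaussian_Y_mean} multiplies the sum by
\[
(1-\ntr+\nout)\frac{\ntr-r-1}{\ntr-\nout-1}=-(\ntr-r-1).
\]
Since $\bs a^T\barSigma_{\rm eff}\bsP_{11}^{-T}\bs c=b_{\rm hom}/\ntr$, the mean is $-(1-(r+1)/\ntr)b_{\rm hom}$, which is \cref{eq:gaussian_A2_mean}. The zero covariance between $X_1$ and $X_2$ follows from parity. Conditionally, $X_1$ is odd in $\hatbs G$, while $X_2$ is even in $\hatbs G$ plus a term independent of $\hatbs G$. The third moments of a centered Gaussian vanish, so $\EE[X_1X_2\mid\cdot]=0$.

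\textbf{Step 3: the $\bs A_2$ variance (the main obstacle).} This requires fourth Gaussian moments (Isserlis), second Wishart moments, and second moments of $\hatbsZ$. The latter involve $\EE[(\hatbsXi_{22}\hatbsXi_{22}^T)^{-1}\otimes(\hatbsXi_{22}\hatbsXi_{22}^T)^{-1}]$, and these are finite exactly when $(\ntr-r)-(\nout-r)-3>0$, i.e.\ $\ntr>\nout+3$. We will not compute this variance exactly; it suffices to show that it is $O(\ntr^{-1})$.

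The Wishart term equals $(\ntr-\nout)$ times an average of $\ntr-\nout$ i.i.d.\ rank-one terms, so its conditional variance is $O(\ntr)\cdot\|\bs a\|^2\,\|\bsP_{11}^{-1}\|^2\,\|\hatbsZ\bs c\|^2$. By \cref{eq:gaussian_frame_scaling}, $\|\bsP_{11}^{-1}\|^2\|\bs c\|^2=O(\ntr^{-2})$, so this contribution is $O(\ntr^{-1})$. The Gaussian quadratic term has conditional variance $O(\|\bsP_{11}^{-1}\|^2\|\bs c\|^2)=O(\ntr^{-2})$ times moments of $\hatbsZ$. The fluctuation of $\hatbsZ$ around its mean contributes $\Var(\hatbsZ)=O(\ntr^{-1})$, multiplied by $(b_{\rm hom}\cdot O(1))^2$. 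Bounded second moments of $\hatbsZ$ (which has $\hatbsZ-\bs I=O_{\PP}(\nout/\ntr)$) then close the bound. The $O_\PP$ statements follow by Chebyshev.
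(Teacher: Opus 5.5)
Your proposal is correct and follows essentially the same route as the paper. You condition on $(\hatbsXi_{21},\hatbsXi_{22})$, use the matrix-normal law of the Schur complement and the independent Wishart law of $\hatbsXi_{12}\hatbs Q\hatbsXi_{12}^T$, and average $\hatbsZ$ with the inverse-Wishart mean. You then obtain the zero covariance by parity and bound the $\bs A_2$ variance through the law of total variance, using second inverse-Wishart moments (finite for $\ntr>\nout+3$).

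The only slip is that $\hatbsZ$ fluctuates less than you state: in fact $\EE\|\hatbsZ-\bs I_r\|_{\rm op}^2=O(\ntr^{-2})$, consistent with your own remark $\hatbsZ-\bs I=O_{\PP}(\nout/\ntr)$. So the conditional-mean contribution is $O(\ntr^{-2})$ rather than $O(\ntr^{-1})$. This is harmless, because your looser bound still yields the claimed $O(\ntr^{-1})$.
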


\begin{proof}
By \cref{eq:A1_A2_definitions}, $\bs A_1$ is minus the generalized Schur
complement.  Its conditional law in \cref{eq:conditional_schur_law} is
centered, which proves \cref{eq:gaussian_A1_mean} after averaging over the
lower blocks.  The matrix-normal covariance rule gives, under the same
conditioning,
\begin{align*}
    \Var\!\left[
        \bs w_{\calO}^TU_1\bs A_1
        V_1^T\langle\tilde{\bs\rho},\sigma\rangle
        \mathrel{\big|}\hatbsXi_{21},\hatbsXi_{22}
    \right]=
    \bs w_{\calO}^TU_1
    \overline{\bs\Sigma}_{\rm eff}
    U_1^T\bs w_{\calO}\,
    \bigl(V_1^T\langle\tilde{\bs\rho},\sigma\rangle\bigr)^T
    \hatbsZ
    \bigl(V_1^T\langle\tilde{\bs\rho},\sigma\rangle\bigr).
\end{align*}
Averaging this identity and inserting \cref{eq:gaussian_Y_mean} replaces
$\hatbsZ$ by a scalar multiple of $\bs I_r$.  The definition of $V$ in
\cref{eq:gaussian_b_V_identification} then gives
\cref{eq:gaussian_A1_variance}.

For the second coefficient, condition again on
$(\hatbsXi_{21},\hatbsXi_{22})$.  If a matrix is matrix normal with row
covariance $\overline{\bs\Sigma}_{\rm eff}$ and column covariance
$\hatbsZ$, its entries obey
\[
    \EE\!\left[
        (\hatbsXi/\hatbsXi_{22})_{ia}
        (\hatbsXi/\hatbsXi_{22})_{bj}
        \mathrel{\big|}\hatbsXi_{21},\hatbsXi_{22}
    \right]
    =
    (\overline{\bs\Sigma}_{\rm eff})_{ib}(\hatbsZ)_{aj}.
\]
Contracting the two inner indices therefore gives
\[
    \EE\!\left[
        (\hatbsXi/\hatbsXi_{22})\bsP_{11}^{-1}
        (\hatbsXi/\hatbsXi_{22})
        \mathrel{\big|}\hatbsXi_{21},\hatbsXi_{22}
    \right]
    =\overline{\bs\Sigma}_{\rm eff}\bsP_{11}^{-T}\hatbsZ.
\]
On the other hand, the conditional Wishart law gives
\[
    \EE\!\left[
        \hatbsXi_{12}\hatbs Q\hatbsXi_{12}^T
        \bsP_{11}^{-T}\hatbsZ
        \mathrel{\big|}\hatbsXi_{21},\hatbsXi_{22}
    \right]
    =(\ntr-\nout)\overline{\bs\Sigma}_{\rm eff}
      \bsP_{11}^{-T}\hatbsZ.
\]
The matrix $\hatbsZ$ is fixed under the conditioning, so subtracting the
two conditional means according to \cref{eq:A1_A2_definitions} gives
\begin{align*}
    \EE[\bs A_2\mid\hatbsXi_{21},\hatbsXi_{22}]
    =[1-(\ntr-\nout)]
      \overline{\bs\Sigma}_{\rm eff}
      \bsP_{11}^{-T}\hatbsZ=-(\ntr-\nout-1)
      \overline{\bs\Sigma}_{\rm eff}
      \bsP_{11}^{-T}\hatbsZ.
\end{align*}
Now \cref{eq:gaussian_Y_mean} cancels the denominator
$\ntr-\nout-1$ and yields
\[
    \EE[\bs A_2]
    =-(\ntr-r-1)\overline{\bs\Sigma}_{\rm eff}\bsP_{11}^{-T},
\]
which is \cref{eq:gaussian_A2_mean} after contraction and use of
\cref{eq:gaussian_b_V_identification}: dividing $\ntr-r-1$ by the factor
$\ntr$ contained in $b$ produces $1-(r+1)/\ntr$.  This calculation also
shows explicitly how the dependence on $\nout$ cancels between the
Wishart degrees of freedom and the inverse-Wishart mean of $\hatbsZ$.

We next spell out the variance estimate.  The second inverse-Wishart moment,
which is finite for $\ntr>\nout+3$, implies
\[
    \EE[\|\hatbsZ\|_{\rm op}^2]=O(1),
    \qquad
    \EE[\|\hatbsZ-\bs I_r\|_{\rm op}^2]=O(\ntr^{-2}).
\]
Together with \cref{eq:gaussian_frame_scaling}, the vector multiplying the
Wishart matrix on its right satisfies the second-moment bound
\[
    \EE\!\left[
    \left\|
        \bsP_{11}^{-T}\hatbsZ
        V_1^T\langle\tilde{\bs\rho},\sigma\rangle
    \right\|_2^2
    \right]
    =O(\ntr^{-2}).
\]
In fixed dimension, a Wishart matrix with $\ntr-\nout$ degrees of freedom
has variance proportional to $\ntr-\nout$ in every fixed scalar
contraction.  After the preceding right contraction, its centered part
therefore contributes
$O(\ntr\cdot\ntr^{-2})=O(\ntr^{-1})$ to the variance.  The fluctuation of
its conditional mean caused by $\hatbsZ-\bs I_r$ contributes only
$O(\ntr^{-2})$.  For the first, Schur-complement term of $\bs A_2$, the
conditional fourth Gaussian moment is bounded because
$\EE[\|\hatbsZ\|_{\rm op}^2]=O(1)$.  Its two deterministic contractions in
\cref{eq:gaussian_frame_scaling} supply a total factor $O(\ntr^{-1})$ in
amplitude, so its second moment is $O(\ntr^{-2})$.  The Wishart fluctuation
is therefore the leading random contribution, and
\cref{eq:gaussian_A2_variance} follows.

Finally, condition on the lower blocks once more.  The $\bs A_1$
contraction is linear in the centered Gaussian Schur complement, whereas
the first term of the $\bs A_2$ contraction is quadratic in that same
centered matrix.  Their product has odd total degree and hence zero
conditional expectation.  The second term of $\bs A_2$ depends on the
orthogonal Gaussian projection that produces the Wishart matrix; by
\cref{lem:gaussian_conditional_block_laws} it is independent of the Schur
complement.  Since the conditional mean of the $\bs A_1$ contraction is
zero, this product also has zero conditional expectation.  Averaging over
the lower blocks proves that the two contractions have zero covariance.
\end{proof}

The two terms retained in \cref{eq:delta_P_11} therefore obey
\begin{align}
\EE\!\left[
\left\{
\bs w_{\calO}^TU_1
\left(\frac{\bs A_1}{\sqrt N}+\frac{\bs A_2}{N}\right)
V_1^T\langle\tilde{\bs\rho},\sigma\rangle
\right\}^{\!2}
\right]\nonumber&=
\left(1-\frac{r+1}{\ntr}\right)^2\frac{b_{\rm hom}^2}{N^2}
+\frac{V_{\rm hom}}{N\ntr}
 \frac{\ntr-r-1}{\ntr-\nout-1}
+O\!\left(\frac{1}{N^2\ntr}\right)\nonumber\\
&=
\frac{b_{\rm hom}^2}{N^2}+\frac{V_{\rm hom}}{N\ntr}
+O\!\left(\frac{1}{N^2\ntr}+\frac{1}{N\ntr^2}\right).
\label{eq:gaussian_recovered_resource_scaling}
\end{align}
In the first equality, the term proportional to $b^2$ is the squared mean
of the $N^{-1}\bs A_2$ contribution, while the term proportional to $V$ is
the variance of the centered $N^{-1/2}\bs A_1$ contribution.  The
$O(N^{-2}\ntr^{-1})$ term contains the variance of the $\bs A_2$
contraction, and no mixed term remains because the two contractions have
zero covariance.  The second equality uses fixed $d$ and $\nout$ and
bounded $b,V$: both rational prefactors differ from one by
$O(\ntr^{-1})$.  This moment expansion concerns the first two coefficients
of the fixed-realization expansion in $N^{-1/2}$.  Standard inverse-Wishart
moment bounds ensure integrability of every coefficient at any prescribed
fixed order once $\ntr-\nout$ is sufficiently large.  Uniform control of the
expansion remainder, as well as the result for the true multinomial model,
is established in \cref{app:direct_remainder}.

\subsection{Target-noise corrections to the test and training variances}
\label{subsec:gaussian_target_noise_variances}

We now compute the two target-noise corrections introduced in
\cref{eq:target_noise_test_variance_exact,eq:target_noise_training_variance_exact}.
The calculation uses the homogeneous Gaussian model
\cref{eq:homogeneous_gaussian_SM} and the block coordinates of
\cref{subsec:block_decompositions}.  We write
\begin{equation}
    \bs\Sigma_{\sigma,22}=U_2^T\bs\Sigma_\sigma U_2,
    \qquad
    \bsp_{\sigma,1}=U_1^T\bsp_\sigma.
    \label{eq:target_noise_test_blocks}
\end{equation}

\begin{lemma}[Gaussian target-noise corrections]
\label{lem:gaussian_target_noise_corrections}
Assume \cref{eq:homogeneous_gaussian_SM,eq:gaussian_uniform_conditioning},
$\ntr>\nout+3$, and the independent Gaussian target-noise model in
\cref{eq:target_noise_model}.  For every fixed physical test vector
$\bsp_\sigma\in\im(\bsP)$, the large-$N$ corrections are
\begin{align}
    \Delta\Var_{\rm test}^{(\eta)}(N,M,\sigma)
    &=
    \frac{N\gamma^2}{M}
    \frac{
        \Trace(\overline{\bs\Sigma}_{22}^{-1}
        \bs\Sigma_{\sigma,22})
    }{\ntr-\nout-1}
    +O\!\left(\frac{\gamma^2\sqrt N}{M}\right),
    \label{eq:gaussian_target_test_variance_fixed_sigma}
    \\
    \Delta\Var_{\rm tr}^{(\eta)}(N,\sigma)
    &=
    \gamma^2
    \frac{\ntr-r-1}{\ntr-\nout-1}
    \|\bsP_{11}^{-1}\bsp_{\sigma,1}\|_2^2
    +O\!\left(\frac{\gamma^2}{\sqrt N}\right).
    \label{eq:gaussian_target_training_variance}
\end{align}
The leading coefficients themselves require only $\ntr>\nout+1$; the
stronger condition stated above ensures that the moments entering the
remainders are finite.  The $O(\cdot)$ terms are taken at fixed
$\bsP,\ntr$, and $\nout$.  If the test ensemble satisfies
$\EE_\sigma[\bs\Sigma_\sigma]=\overline{\bs\Sigma}$, then
\begin{equation}
    \EE_\sigma[
        \Delta\Var_{\rm test}^{(\eta)}(N,M,\sigma)
    ]
    =\frac{N\gamma^2}{M}
     \frac{\nout-r}{\ntr-\nout-1}
     +O\!\left(\frac{\gamma^2\sqrt N}{M}\right).
    \label{eq:gaussian_target_test_variance_averaged}
\end{equation}
\end{lemma}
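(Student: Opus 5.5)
Both corrections are quadratic forms in $\hatbsP_N^+$, so I would insert the block expansion \cref{eq:expansionPp11,eq:expansionPp_other} with $\tau=N^{-1/2}$ and keep, in each case, only the blocks the contraction actually probes. For the test correction, write $\Trace[\hatbsP_N^+\bs\Sigma_\sigma(\hatbsP_N^+)^T]=\sum_{a,b}\Trace[(\hatbsP_N^+)_{\cdot a}(\bs\Sigma_\sigma)_{ab}(\hatbsP_N^+)_{\cdot b}^T]$ in the outcome blocks $a,b\in\{1,2\}$. Only the $(2,2)$ block of $\hatbsP_N^+$ carries the $\tau^{-1}$ singularity, so the leading term is
\[
    N\,\Trace\!\left[\hatbsXi_{22}^+\bs\Sigma_{\sigma,22}(\hatbsXi_{22}^+)^T\right]
    =N\,\Trace\!\left[(\hatbsXi_{22}\hatbsXi_{22}^T)^{-1}\bs\Sigma_{\sigma,22}\right],
\]
using $(\hatbsXi_{22}^+)^T\hatbsXi_{22}^+=(\hatbsXi_{22}\hatbsXi_{22}^T)^{-1}$ for full row rank. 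The cross terms between the $\tau^{-1}$ part and the $O(1)$ parts of the $(\cdot,2)$ and $(\cdot,1)$ columns are $O(\sqrt N)$. All remaining terms are $O(1)$.

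To take expectations, note that $\hatbsXi_{22}\hatbsXi_{22}^T$ is Wishart with scale $\barSigma_{22}$ and $\ntr-r$ degrees of freedom. This is the same fact used in \cref{lem:gaussian_conditional_block_laws}. The inverse-Wishart mean $\barSigma_{22}^{-1}/(\ntr-\nout-1)$ then gives the leading coefficient in \cref{eq:gaussian_target_test_variance_fixed_sigma} after multiplying by $\gamma^2/M$. Averaging over $\sigma$ with $\EE_\sigma[\bs\Sigma_\sigma]=\barSigma$ replaces $\bs\Sigma_{\sigma,22}$ by $\barSigma_{22}$, so the trace becomes $\nout-r$, which yields \cref{eq:gaussian_target_test_variance_averaged}.

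For the training correction, $\bsp_\sigma\in\im(\bsP)$ means $\hatbsP_N^+\bsp_\sigma$ only sees the first outcome-column blocks. Its leading terms are
\[
    (\hatbsP_N^+)_{11}\bsp_{\sigma,1}=\bsP_{11}^{-1}\bsp_{\sigma,1}+O(\tau),
    \qquad
    (\hatbsP_N^+)_{21}\bsp_{\sigma,1}=-\hatbsXi_{22}^+\hatbsXi_{21}\bsP_{11}^{-1}\bsp_{\sigma,1}+O(\tau).
\]
Set $\bs v\equiv\bsP_{11}^{-1}\bsp_{\sigma,1}$. The squared norm of the stacked vector is then $\bs v^T\hatbsZ\bs v$, since $\hatbsZ=\bs I+\hatbsXi_{21}^T(\hatbsXi_{22}\hatbsXi_{22}^T)^{-1}\hatbsXi_{21}$. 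Applying \cref{eq:gaussian_Y_mean} gives the factor $(\ntr-r-1)/(\ntr-\nout-1)$ and hence \cref{eq:gaussian_target_training_variance}. The cross terms between the $O(1)$ and $O(\tau)$ pieces produce the stated $O(\gamma^2/\sqrt N)$ remainder.

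The main obstacle is justifying the remainders in expectation rather than realization-wise. The expansion \cref{eq:expansionPp_other} is formal, and its subleading coefficients involve products of $\hatbsXi_{22}^+$, $\hatbsZ$ and Gaussian blocks. I would bound their second moments by Cauchy--Schwarz, combining Gaussian moments with inverse-Wishart moments of $(\hatbsXi_{22}\hatbsXi_{22}^T)^{-1}$ up to second order, which are finite for $\ntr>\nout+3$. These bounds show that each subleading coefficient is integrable at fixed $\bsP$, $\ntr$ and $\nout$. Controlling the tail of the series uniformly is the delicate part. I would state the $O(\cdot)$ claims, as in \cref{lem:gaussian_A1_A2_moments}, at the level of the first omitted coefficient and not attempt an all-orders remainder bound.
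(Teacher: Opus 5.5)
Your proposal is correct and follows essentially the same route as the paper. For the test correction you isolate the singular block $\sqrt N\,V_2\hatbsXi_{22}^+U_2^T$ of $\hatbsP_N^+$ and apply the inverse-Wishart mean; for the training correction you use the $(1,1)$ and $(2,1)$ blocks, the identity $\|\hatbsXi_{22}^+\hatbsXi_{21}\bs a\|_2^2=\bs a^T(\hatbsZ-\bs I_r)\bs a$, and $\EE[\hatbsZ]$. The only cosmetic differences are that you apply the inverse-Wishart mean with scale $\barSigma_{22}$ directly instead of whitening through $\hatbs G$, and that you take the two corrections as given quadratic forms instead of rederiving them from $\EE[\hatbs\eta^T\bs A\hatbs\eta\mid\hatbsXi]=\gamma^2\Trace(\bs A)$.
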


\begin{proof}
We first separate the target fluctuations from the feature fluctuations.
For any deterministic matrix $\bs A$ of compatible size, the conditional
Gaussian covariance in \cref{eq:target_noise_model} gives
\begin{equation}
    \EE_{\hatbs\eta}
    [\hatbs\eta^T\bs A\hatbs\eta\mid\hatbsXi]
    =\gamma^2\Trace(\bs A).
    \label{eq:gaussian_target_quadratic_identity}
\end{equation}
Taking
$\bs A=\hatbsP_N^+\bs\Sigma_\sigma(\hatbsP_N^+)^T/M$
gives the last line of
\cref{eq:target_noise_test_variance_exact}.  Taking instead
$\bs A=\hatbsP_N^+\bsp_\sigma\bsp_\sigma^T(\hatbsP_N^+)^T$
and applying the law of total variance gives
\cref{eq:target_noise_training_variance_exact}.  The cross terms with the
exact-target estimator vanish because
$\EE[\hatbs\eta\mid\hatbsXi]=0$.

We now insert the large-$N$ block expansion of the pseudoinverse already used
in \cref{sec:perturbation}.  In particular,
\cref{eq:Delt_PN_small,eq:expansionPp_other} give
\begin{equation}
    \hatbsP_N^+
    =\sqrt N\,V_2\hatbsXi_{22}^+U_2^T+O(1).
    \label{eq:target_noise_pinv_singular_block}
\end{equation}
Substitution in \cref{eq:target_noise_test_variance_exact} yields
\begin{equation}
    \Delta\Var_{\rm test}^{(\eta)}(N,M,\sigma)
    =\frac{N\gamma^2}{M}
      \EE_{\hatbsXi}\Trace[
        \hatbsXi_{22}^+\bs\Sigma_{\sigma,22}
        (\hatbsXi_{22}^+)^T
        +O\!\left(\frac{\gamma^2\sqrt N}{M}\right).
        \label{eq:target_noise_test_leading_block}
\end{equation}
The remainder comes from the cross products between the singular term in
\cref{eq:target_noise_pinv_singular_block} and the $O(1)$ blocks.  Its
expectation is finite under the moment condition in the lemma.

To compute the leading coefficient, use the Gaussian factorization from
\cref{app:QY_control},
\begin{equation}
    \hatbsXi_{22}
    =\overline{\bs\Sigma}_{22}^{1/2}\hatbs G,
    \qquad
    \hatbs G\in\RR^{(\nout-r)\times(\ntr-r)},
    \label{eq:target_noise_G_factorization}
\end{equation}
where the entries of $\hatbs G$ are independent standard normal variables.
The matrix $\hatbs G$ has full row rank almost surely and
\begin{equation}
\Trace[
        \hatbsXi_{22}^+\bs\Sigma_{\sigma,22}
        (\hatbsXi_{22}^+)^T
    ]=
    \Trace\!\left[
        \overline{\bs\Sigma}_{22}^{-1/2}
        \bs\Sigma_{\sigma,22}
        \overline{\bs\Sigma}_{22}^{-1/2}
        (\hatbs G\hatbs G^T)^{-1}
    \right].
\label{eq:target_noise_test_wishart_reduction}
\end{equation}
The matrix $\hatbs G\hatbs G^T$ is Wishart with $\ntr-r$ degrees of freedom
and identity scale.  Its inverse mean is~\cite{muirhead2009aspects}
\begin{equation}
    \EE[(\hatbs G\hatbs G^T)^{-1}]
    =\frac{\bsI_{\nout-r}}{\ntr-\nout-1}.
    \label{eq:target_noise_inverse_wishart_mean}
\end{equation}
Substitution in \cref{eq:target_noise_test_leading_block} proves
\cref{eq:gaussian_target_test_variance_fixed_sigma}.  Averaging over
$\sigma$ replaces $\bs\Sigma_{\sigma,22}$ by
$\overline{\bs\Sigma}_{22}$, whose contraction with its inverse has trace
$\nout-r$, and proves
\cref{eq:gaussian_target_test_variance_averaged}.

For the training variance, the right multiplication by the physical vector
$\bsp_\sigma=U_1\bsp_{\sigma,1}$ projects out the singular $(2,2)$ block.
Defining $\bs a=\bsP_{11}^{-1}\bsp_{\sigma,1}$, the $(1,1)$ and $(2,1)$
blocks of \cref{eq:expansionPp11,eq:expansionPp_other} give
\begin{equation}
    \hatbsP_N^+\bsp_\sigma
    =V_1\bs a
     -V_2\hatbsXi_{22}^+\hatbsXi_{21}\bs a
     +O(N^{-1/2}).
    \label{eq:target_noise_physical_vector_expansion}
\end{equation}
Since the two displayed terms belong to the orthogonal subspaces spanned by
$V_1$ and $V_2$, respectively, taking the squared norm and averaging gives
\begin{equation}
\begin{aligned}
    \EE_{\hatbsXi}\|\hatbsP_N^+\bsp_\sigma\|_2^2
    &=\|\bs a\|_2^2
    +\EE_{\hatbsXi}
      \|\hatbsXi_{22}^+\hatbsXi_{21}\bs a\|_2^2
      +O(N^{-1/2}).
    \label{eq:target_noise_training_two_terms}
\end{aligned}
\end{equation}
The remaining average is already contained in the matrix $\hatbsZ$ introduced
in \cref{eq:cool_expr_MM22}.  Indeed,
\begin{equation}
    \|\hatbsXi_{22}^+\hatbsXi_{21}\bs a\|_2^2
    =\bs a^T(\hatbsZ-\bsI_r)\bs a.
    \label{eq:target_noise_training_Y_identity}
\end{equation}
Using the exact Gaussian mean \cref{eq:gaussian_Y_mean} in
\cref{eq:target_noise_training_Y_identity} gives
\begin{equation}
    \EE_{\hatbsXi}
    \|\hatbsXi_{22}^+\hatbsXi_{21}\bs a\|_2^2
    =\frac{\nout-r}{\ntr-\nout-1}\|\bs a\|_2^2.
    \label{eq:target_noise_training_Y_mean}
\end{equation}
Substituting this identity in \cref{eq:target_noise_training_two_terms} yields
\begin{equation}
\begin{aligned}
    \EE_{\hatbsXi}\|\hatbsP_N^+\bsp_\sigma\|_2^2
    &=\left(
        1+\frac{\nout-r}{\ntr-\nout-1}
      \right)\|\bs a\|_2^2
      +O(N^{-1/2})
    \\
    &=\frac{\ntr-r-1}{\ntr-\nout-1}
      \|\bsP_{11}^{-1}\bsp_{\sigma,1}\|_2^2
      +O(N^{-1/2}),
\end{aligned}
\label{eq:target_noise_training_wishart_result}
\end{equation}
which proves \cref{eq:gaussian_target_training_variance} after multiplication
by $\gamma^2$.
\end{proof}

For comparison, if the feature matrix is known exactly, then
$\bsP^+\bsp_\sigma=V_1\bsP_{11}^{-1}\bsp_{\sigma,1}$ and target noise adds
\begin{equation}
    \Delta\Var_{\rm tr,exact}^{(\eta)}(\sigma)
    =\gamma^2\|\bsP_{11}^{-1}\bsp_{\sigma,1}\|_2^2.
    \label{eq:target_noise_exact_feature_training_variance}
\end{equation}
At leading order, the rank-increasing feature noise therefore multiplies the
exact-feature answer by
$(\ntr-r-1)/(\ntr-\nout-1)$.  The frame bound in
\cref{eq:gaussian_frame_scaling} gives
$\|\bsP_{11}^{-1}\bsp_{\sigma,1}\|_2^2=O(\ntr^{-1})$, so this contribution
is $O(\gamma^2/\ntr)$ for fixed $d$ and $\nout$.  This is to be contrasted
with \cref{eq:gaussian_target_test_variance_averaged}, whose unnormalized
test-variance contribution is
$O(N\gamma^2\nout/(M\ntr))$ when $\ntr\gg\nout$.

\section{Multinomial noise model: bias, linear fluctuation, and remainder}
\label{app:direct_remainder}

This section contains the proof of the MSE and resource-scaling result stated in \cref{sec:asymptotic_bias}, and is logically independent of the formal small-$\tau$ expansion in \cref{app:perturbative_P}.
Both analyses use the exact wide-matrix identity $\hatbsP^+=\hatbsP^T(\hatbsP\hatbsP^T)^{-1}$ when $\hatbsP$ has full row rank, and the block inverse formula \cref{eq:M-1}, but none of the truncated expansions \cref{eq:expansionPp11,eq:expansionPp_other} is used below.
Instead, we first eliminate the rank-increasing directions by an exact Schur complement and only then expand the resulting well-conditioned $d^2\times d^2$ matrix.
This order of operations enables a uniform high-probability bound on the remainder in the $\ntr\to\infty$ regime, even for small $N$.

\parTitle{Basic assumptions}
Throughout this proof, $d$ and $\nout$ are fixed, the targets and test probabilities are exact, and all expectations are conditional on the training states.
Each training state is measured with $N\ge 1$ independent categorical draws, independently across states.
No Gaussian or homogeneous-covariance approximation is used.
We assume $r\equiv\rank(\bsP)=d^2<\nout\le\ntr$ and consider families of training sets for which there exist constants $\kappa_{11},\kappa_{22}>0$, independent of $N$ and $\ntr$, such that
\begin{equation}\label{eq:direct_uniform_conditioning}
    \lambda_{\min}\!\left[
        \frameOp_{\bsmu}^{1/2}
        \frac{\frameOp_{\bs\rho}}{\ntr}
        \frameOp_{\bsmu}^{1/2}
    \right]
    \ge \kappa_{11},
    \qquad
    \bs v^T\overline{\bs\Sigma}\bs v
    \ge
    \kappa_{22}\|\bs v\|_2^2
    \quad
    \text{for all }\bs v\in\ker(\bsMmu),
\end{equation}
where
$\overline{\bs\Sigma}\equiv\ntr^{-1}\sum_{i=1}^{\ntr}\bs\Sigma_i$
and $\bs\Sigma_i=\EE[\hatbs\xi_i\hatbs\xi_i^T]$.
The first condition ensures that the training states are well-conditioned and that even in the large $\ntr$ limit there are no operator directions that become increasingly less covered by the training states.
The second condition requires instead the averaged shot-noise covariance to remain nondegenerate on the redundant outcome directions.
A sufficient criterion to guarantee this is a uniform lower bound on the average training state: because $\bs v^T\bs p_i=0$ for $v\in\ker(\bsMmu)$,
\begin{equation}
    \bs v^T\overline{\bs\Sigma}\bs v
    =\frac{1}{\ntr}\sum_{i=1}^{\ntr}
    \bs v^T\on{diag}(\bs p_i)\bs v
    = \sum_{a=1}^{\nout}
    v_a^2 \langle\mu_a,\bar\rho\rangle
    \ge c (\min_a \tr(\mu_a))\|\bs v\|_2^2,
\end{equation}
where $\bar\rho\equiv\ntr^{-1}\sum_{i=1}^{\ntr} \rho_i$.
Hence one may take $c=\lambda_{\rm min}(\bar\rho)$.
A uniform $\kappa_{22}>0$ follows if $\lambda_{\rm min}(\bar\rho)$ stays bounded away from zero and the fixed POVM has no zero effects. On the other hand, $\bar\rho>0$ is sufficient to guarantee $\kappa_{22}>0$ but not its independence from $\nout,\ntr$.
We also use the elementary uniform bound $|y_i|=|\langle\calO,\rho_i\rangle|\le\|\calO\|_\infty$ in the concentration estimates, with $B_y\equiv\max_i|y_i|$ introduced explicitly in \cref{eq:def_good_event}.

The two main results that will be proved in this section are the following:
\begin{theorem}
Under the standing assumptions above, there exist constants
$u_0\in(0,1]$ and $K_{\rm good},K_{\rm bad}>0$, independent of $N$ and $\ntr$, such that the following holds for every $N\ge1$ and $\ntr\ge\nout/u_0$.
We have
\begin{equation}
\label{eq:opening_error_bias_definitions}
\begin{gathered}
    \Delta\hat o\equiv \bs y^T(\hatbsP_N^+-\bsP^+)\bsp_\sigma
    = -\frac{b_N}{N} + \hat L + \hat R,
    \qquad
    b_N\equiv
    \bs w_\calO^TU_1\overline{\bs\Sigma}_{\rm eff}
    \bsS_{\rm eff}^{-1}\bsp_{\sigma,1},
    \\
    \hat L \equiv 
    (\Delta_c\hatbs\ell_{\rm eff})_{\poly}^T
    \bsS_{\rm eff}^{-1} \bsp_{\sigma,1}
    -
    \bs\ell_1^T
    \bsS_{\rm eff}^{-1}
    (\Delta_c\hatbsS_{\rm eff})_{\poly}
    \bsS_{\rm eff}^{-1}
    \bsp_{\sigma,1},
\end{gathered}
\end{equation}
and $\hat R\equiv\Delta\hat o+b_N/N-\hat L$.
These satisfy $\EE[\hat L]=0$, and
\begin{equation}
\begin{gathered}
    \EE[\hat L^2]
    = \frac{V_N}{N \ntr} + O\Big(\frac{1}{N^2\ntr}\Big),
    \qquad
    V_N = \frac{1}{\ntr} \sum_{i=1}^{\ntr}
    (\bs p_{i,1}^T
    \bsS_{\rm eff}^{-1} \bsp_{\sigma,1})^2
    \Var_i(\bs w^{\rm eff}_{\calO}),
    \qquad
    \Var_i(\bs w^{\rm eff}_{\calO})
    \equiv
    (\bs w^{\rm eff}_{\calO})^T
    \bs\Sigma_i
    \bs w^{\rm eff}_{\calO}.
\end{gathered}
\end{equation}
The remainder satisfies
\begin{equation}
\label{eq:opening_total_remainder_bound}
    \EE[\hat R^2]\le \mathcal R^{(2)}_{N,\ntr}
    \equiv
    \frac{K_{\rm good}\nout^2}{N\ntr^2}
    +16K_{\rm bad}\ntr(N\nout)^{2\nout}e^{-u_0\ntr+\nout}.
\end{equation}
Consequently,
\begin{equation}
\label{eq:opening_direct_moment_bounds}
\begin{gathered}
    \left|\EE[\Delta\hat o]+\frac{b_N}{N}\right|
    \le \sqrt{\mathcal R^{(2)}_{N,\ntr}},
    \qquad
    \left|\Var(\Delta\hat o)-\EE[\hat L^2]\right|
    \le 2\sqrt{\EE[\hat L^2]\mathcal R^{(2)}_{N,\ntr}}
        +\mathcal R^{(2)}_{N,\ntr},
    \\
    \left|\EE[\Delta\hat o^2]-\frac{b_N^2}{N^2}-\EE[\hat L^2]\right|
    \le 2\left(\frac{|b_N|}{N}+\sqrt{\EE[\hat L^2]}\right)
        \sqrt{\mathcal R^{(2)}_{N,\ntr}}
        +\mathcal R^{(2)}_{N,\ntr}.
\end{gathered}
\end{equation}
Thus the deterministic floor and the exact first-order fluctuation are
controlled at arbitrary shot count, without a Gaussian approximation.
\end{theorem}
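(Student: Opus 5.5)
The plan is to avoid the fixed-realization expansion entirely and write $\Delta\hat o$ exactly through normalized sample moments. On the event that $\hatbsP_N$ has full row rank, $\hatbsP_N^+=\hatbsP_N^T(\hatbsP_N\hatbsP_N^T)^{-1}$, so $\bs y^T\hatbsP_N^+\bsp_\sigma=\hatbs\ell^T\hatbsS^{-1}\bsp_\sigma$ with $\hatbsS=\ntr^{-1}\hatbsP_N\hatbsP_N^T$ and $\hatbs\ell=\ntr^{-1}\hatbsP_N\bs y$. Since $\bsp_\sigma=U_1\bsp_{\sigma,1}$, the block inverse formula \cref{eq:M-1} (second form, eliminating the $(2,2)$ block $\hatbsS_{22}$) gives the exact identity
\[
\hatbs\ell^T\hatbsS^{-1}\bsp_\sigma=\hatbs\ell_{\rm eff}^T\hatbsS_{\rm eff}^{-1}\bsp_{\sigma,1}.
\]
Subtracting $\bs\ell_1^T\bsS_{11}^{-1}\bsp_{\sigma,1}=\bs w_\calO^TU_1\bsp_{\sigma,1}$, and using $\bs\ell_1=\bsS_{11}U_1^T\bs w_\calO$, I obtain $\Delta\hat o=\hatbs q^T\hatbsS_{\rm eff}^{-1}\bsp_{\sigma,1}$, where $\hatbs q=\hatbs\ell_{\rm eff}-\hatbsS_{\rm eff}U_1^T\bs w_\calO$.

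The key observation is that $\hatbsS_{22}=\ntr^{-1}\hatbsXi_{2\cdot}\hatbsXi_{2\cdot}^T/N$ concentrates at $N^{-1}\barSigma_{22}$, which is invertible by $\kappa_{22}$. Likewise $\hatbsS_{12}$ concentrates at $N^{-1}\barSigma_{12}$ and $\hatbsS_{11}$ at $\bsS_{11}+N^{-1}\barSigma_{11}$. Hence $\hatbsS_{\rm eff}$ concentrates at $\bsS_{\rm eff}=\bsS_{11}+N^{-1}\barSigma_{\rm eff}$. I will split each fluctuation into a part linear in the centered sample moments, written $(\cdot)_{\poly}$, and a nonlinear part, written $(\cdot)_{\rm nl}$, which is at least quadratic in the deviations of $\hatbsS_{22}$. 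The same split applies to $\hatbs\ell_{\rm eff}$.

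The mean of $\hatbs q$ at linear order gives $\bs q_0=-N^{-1}\barSigma_{\rm eff}U_1^T\bs w_\calO$. Here the cokernel correction $\barSigma_{12}\barSigma_{22}^{-1}$ acting on $\EE\hatbs\ell_2$ and $\EE\hatbsS_{21}$ produces the Schur complement. Contracting with $\bsS_{\rm eff}^{-1}\bsp_{\sigma,1}$ then yields $-b_N/N$. The first-order fluctuation, after a first-order resolvent expansion $\hatbsS_{\rm eff}^{-1}\approx\bsS_{\rm eff}^{-1}-\bsS_{\rm eff}^{-1}\Delta\bsS_{\rm eff}^{-1}$, is exactly $\hat L$, and it is centered by construction. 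For its second moment I will write $\hat L$ as a sum over columns, $\ntr^{-1}\sum_i f_i(\hatbs\xi_i)$. Terms linear in $\hatbs\xi_i/\sqrt N$ are independent and centered, and their contraction reduces to $(\bs p_{i,1}^T\bsS_{\rm eff}^{-1}\bsp_{\sigma,1})(\bs w^{\rm eff}_\calO)^T\hatbs\xi_i$, which gives $V_N/(N\ntr)$. The $\bs w^{\rm eff}$ deformation comes from the $\hatbs\ell_2$ and $\hatbsS_{21}$ pieces multiplied by $\barSigma_{12}\barSigma_{22}^{-1}$. Terms quadratic in $\hatbs\xi_i$ carry $1/N$, and their centered parts contribute $O(1/(N^2\ntr))$. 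Multinomial moments are bounded, so no Gaussianity is needed.

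The main obstacle is the remainder $\hat R$, which I will treat by a good/bad event split. On the good event, every centered sample moment deviates by at most $u\le u_0$ times its scale. Matrix Bernstein, or Hoeffding on bounded columns since $\|\hatbs\xi_i\|\le2\sqrt N$, gives failure probability at most $e^{-u_0\ntr+\nout}$ up to dimensional factors. On this event $\hatbsS_{22}\succeq\kappa_{22}/(2N)$ and $\hatbsS_{\rm eff}\succeq\kappa_{11}/2$, so Neumann expansions converge. $\hat R$ is then bounded by products of two centered moment deviations, each of size $O(\ntr^{-1/2}N^{-1/2})$ relative to scale, which gives $\EE[\hat R^2\mathbf 1_{\rm good}]\le K_{\rm good}\nout^2/(N\ntr^2)$. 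This bound must be tracked uniformly in $N$, which I would do by working with $N\hatbsS_{22}$ and $N\hatbsS_{12}$. On the bad event I need a deterministic, polynomial-in-$N$ bound on $|\Delta\hat o|$. The difficulty is that the pseudoinverse can be huge there. I would use the lattice structure: the entries of $N\hatbsP_N$ are integers, so every nonzero singular value of $\hatbsP_N$ is at least $(N\nout)^{-\nout}$ up to constants, by a determinant argument on minors. This yields the $(N\nout)^{2\nout}$ factor times $\ntr$ in $\mathcal R^{(2)}$.

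Finally, \cref{eq:opening_direct_moment_bounds} follows from Cauchy--Schwarz applied to $\Delta\hat o=-b_N/N+\hat L+\hat R$ together with $\EE\hat L=0$.
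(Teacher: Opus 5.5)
Your skeleton is the paper's: the Schur reduction to $\hatbs q^T\hatbsS_{\rm eff}^{-1}\bsp_{\sigma,1}$, the poly/nl split with $\bs q_0$ producing $-b_N/N$, a good/bad-event split, and the integer-minor bound on $\|\hatbsP_N^+\|_{\rm op}$ off the good event. The gap is in the step you flag yourself: the $N$-bookkeeping of $\hat R$ on the good event.

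There, $\hatbsS_{22}^{-1}$ is of order $N$. Meanwhile $\Delta\hatbsS_{12}$ and $\Delta\hatbs\ell_2$ fluctuate at scale $u/\sqrt N$, dominated by their linear parts $\ntr^{-1}N^{-1/2}\bsP_1\hatbsXi_2^T$ and $\ntr^{-1}N^{-1/2}\hatbsXi_2\bs y$, not by their $O(N^{-1})$ means. So a nonlinear Schur term such as $\Delta\hatbsS_{12}\hatbsS_{22}^{-1}\Delta\hatbs\ell_2$ is $O(u^2)$ with no power of $N$. Your count of ``two deviations of size $\ntr^{-1/2}N^{-1/2}$'' drops the factor $N$ coming from the inverse. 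Bounded that way, $\hatbs q_{\rm nl}$ only gives $|\hat R|\lesssim u^2$ and $\EE[\hat R^2\mathbf{1}_{\rm good}]\lesssim\nout^2/\ntr^2$. This loses the $1/N$ in $K_{\rm good}\nout^2/(N\ntr^2)$, and the bound then swamps $V_N/(N\ntr)$ once $N\gtrsim\ntr$. Rescaling to $N\hatbsS_{12}$ does not help, since its fluctuation is of order $\sqrt{N/\ntr}$.

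The missing idea is an exact cancellation. Because $\bs y=\bsP^T\bs w_\calO$, the $N^{-1/2}$ parts of $\Delta\hatbsS_{21}U_1^T\bs w_\calO$ and $\Delta\hatbs\ell_2$ cancel identically. Hence $\hatbs h=\Delta\hatbsS_{21}U_1^T\bs w_\calO-\Delta\hatbs\ell_2$ is purely quadratic, of size $u/N$. Factor $\hatbs q_{\rm nl}$ so that its right factor is $\hatbs h$ minus a $\Delta\hatbsS_{22}$ term (both $O(u/N)$), its middle factor is $\hatbsS_{22}^{-1}=O(N)$, and its left factor is $O(u/\sqrt N)$. This yields $|\hat R|\lesssim u^2/\sqrt N$. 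The smallness of $\hatbs h$ must be imposed as its own condition in the good event, as the paper does.

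The concentration step has a related $N$-uniformity problem. Hoeffding or matrix Bernstein driven by the column bound $\|\hatbs\xi_i\|_2\le2\sqrt N$ gives exponents of order $\ntr u^2/N$ for the linear sums $\ntr^{-1}\sum_i\bs p_{i,1}\hatbs\xi_i^T$. For the centered quadratic sums it gives $\ntr u^2/(1+Nu)$. The resulting $u_0$ would therefore depend on $N$, whereas the statement requires $u_0$, and hence the bad-event probability $e^{-u_0\ntr+\nout}$, to be uniform in $N\ge1$.

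What you need is a shot-level estimate. Write $\hatbs\xi_i=N^{-1/2}\sum_s(\bs e_{Z_{i,s}}-\bs p_i)$ and apply Hoeffding's lemma to each draw; then $\bs v^T\hatbs\xi_i$ is subgaussian with parameter at most $\|\bs v\|_2^2/2$ for every $N$. Products of such variables are subexponential, and Bernstein plus $\varepsilon$-nets give $N$-free tails. You then integrate these tails to get the fourth moment of the realized fluctuation radius on the good event.

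Finally, in $\EE[\hat L^2]$ you omit the cross term between the $N^{-1/2}$ linear and $N^{-1}$ centered quadratic parts of $\hat L_i$. Bounded moments only give $O(N^{-3/2}\ntr^{-1})$ for it. Reaching the stated $O(N^{-2}\ntr^{-1})$ requires the multinomial identity $\EE[\hatbs\xi_i^{\otimes3}]=N^{-1/2}\EE[(\bs e_{Z_{i,1}}-\bs p_i)^{\otimes3}]$. Separately, the linear coefficient equals $-(\bs p_{i,1}^T\bsS_{\rm eff}^{-1}\bsp_{\sigma,1})\bs w^{\rm eff}_\calO$ only up to $O(N^{-1})$, which is harmless at this order.
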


\begin{theorem}
    Assume in addition that $N,\ntr\to\infty$ and $\ntr/\log(2+N)\to\infty$, with $d$ and $\nout$ fixed.
    Use $\Delta\hat o$ and $b_N$ as defined in \cref{eq:opening_error_bias_definitions}.
    Then
    \begin{equation}
    \label{eq:opening_direct_joint_mse}
        \EE[\Delta\hat o^2]
        =\frac{b_N^2}{N^2}+\frac{V_N}{N\ntr}
        +O\!\left(
            \frac{1}{N^2\ntr}
            +\frac{\nout}{N^{3/2}\ntr}
            +\frac{\nout}{N\ntr^{3/2}}
            +\frac{\nout^2}{N\ntr^2}
        \right)
        =\frac{b_N^2}{N^2}+\frac{V_N}{N\ntr}
        +o\!\left(N^{-2}+(N\ntr)^{-1}\right).
    \end{equation}
    The first term is the squared finite-shot population shift and the second
    is the training fluctuation.  If $|b_N|$ and $V_N$ stay bounded above and
    away from zero, their crossover occurs at
    $\ntr\asymp NV_N/b_N^2$.
\end{theorem}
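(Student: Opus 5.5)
The plan is to derive the statement directly from the preceding theorem. That theorem already gives the exact decomposition $\Delta\hat o=-b_N/N+\hat L+\hat R$, the second-moment identity $\EE[\hat L^2]=V_N/(N\ntr)+O(N^{-2}\ntr^{-1})$, and the bound \cref{eq:opening_direct_moment_bounds} on $\EE[\Delta\hat o^2]-b_N^2/N^2-\EE[\hat L^2]$ in terms of $\mathcal R^{(2)}_{N,\ntr}$. Substituting the formula for $\EE[\hat L^2]$ gives
\[
\EE[\Delta\hat o^2]=\frac{b_N^2}{N^2}+\frac{V_N}{N\ntr}+O\!\Big(\frac{1}{N^2\ntr}\Big)+O\!\Big(\Big(\frac{|b_N|}{N}+\sqrt{\EE[\hat L^2]}\Big)\sqrt{\mathcal R^{(2)}_{N,\ntr}}+\mathcal R^{(2)}_{N,\ntr}\Big).
\]
What remains is to bound $b_N$, $V_N$ and $\mathcal R^{(2)}_{N,\ntr}$.

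\emph{Step 1: uniform boundedness of $b_N$ and $V_N$.} Since $N^{-1}\overline{\bs\Sigma}_{\rm eff}\succeq0$, we have $\bsS_{\rm eff}\succeq\bsS_{11}$. Through the frame identity of \cref{app:frame}, the first condition in \cref{eq:direct_uniform_conditioning} gives $\lambda_{\min}(\bsS_{11})\ge\kappa_{11}$, hence $\|\bsS_{\rm eff}^{-1}\|_{\rm op}\le\kappa_{11}^{-1}$ for all $N$. The remaining factors are bounded uniformly in $N$ and $\ntr$:
\begin{itemize}
\item $\|\overline{\bs\Sigma}\|_{\rm op}\le1$ and $\|\bs\Sigma_i\|_{\rm op}\le1$, since these are categorical covariances;
\item $\|\bsp_\sigma\|_2\le1$ and $\|\bsp_i\|_2\le1$;
\item $\|\bs w_\calO\|$ depends only on the fixed POVM and $\calO$;
\item $\|\bs w^{\rm eff}_\calO\|\le(1+\kappa_{22}^{-1})\|\bs w_\calO\|$, using the second condition in \cref{eq:direct_uniform_conditioning}.
\end{itemize}
Hence $|b_N|\le B$ and $0\le V_N\le B$ for a constant $B$ independent of $N$ and $\ntr$.

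\emph{Step 2: the rare-event term is negligible.} This is the only genuinely delicate point. I would write
\[
16K_{\rm bad}\ntr(N\nout)^{2\nout}e^{-u_0\ntr+\nout}=\exp\big[-u_0\ntr+\log\ntr+2\nout\log(N\nout)+O(1)\big].
\]
The goal is to show this is at most $\nout^2/(N\ntr^2)$ for large $N,\ntr$, which amounts to requiring
\[
u_0\ntr-3\log\ntr-(2\nout+1)\log N\to\infty.
\]
Since $\ntr/\log(2+N)\to\infty$ and $\nout$ is fixed, eventually $u_0\ntr\ge 2(2\nout+1)\log(2+N)+6\log\ntr$, so this holds. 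Then $\mathcal R^{(2)}_{N,\ntr}=O(\nout^2/(N\ntr^2))$.

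\emph{Step 3: collect the error terms.} Insert $\sqrt{\mathcal R^{(2)}}=O(\nout/(\sqrt N\ntr))$ together with $|b_N|\le B$ and $\sqrt{\EE[\hat L^2]}=O((N\ntr)^{-1/2})$. The cross terms become $O(\nout/(N^{3/2}\ntr))$ and $O(\nout/(N\ntr^{3/2}))$, and $\mathcal R^{(2)}$ itself contributes $O(\nout^2/(N\ntr^2))$. This gives the first equality in \cref{eq:opening_direct_joint_mse}.

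For the little-$o$ form, divide each error term by $N^{-2}+(N\ntr)^{-1}$:
\begin{itemize}
\item $(N^2\ntr)^{-1}\le N^{-2}\ntr^{-1}$, which is $o(N^{-2})$;
\item $\nout/(N^{3/2}\ntr)=(N\ntr)^{-1}\cdot\nout N^{-1/2}$, which is $o((N\ntr)^{-1})$;
\item $\nout/(N\ntr^{3/2})$ and $\nout^2/(N\ntr^2)$ are $(N\ntr)^{-1}$ times $\nout\ntr^{-1/2}$ and $\nout^2\ntr^{-1}$ respectively, so both are $o((N\ntr)^{-1})$.
\end{itemize}

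\emph{Step 4: the crossover.} When $|b_N|$ and $V_N$ are bounded above and away from zero, the two leading terms $b_N^2/N^2$ and $V_N/(N\ntr)$ are comparable exactly when $\ntr\asymp NV_N/b_N^2$.
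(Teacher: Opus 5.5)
Your proposal is correct and follows the same route as the paper. The paper's proof likewise combines the finite-$N$ bound of the preceding theorem, i.e.\ the identity $\EE[\Delta\hat o^2]-B^2-\EE[\hat L^2]=2B\EE[\hat R]+2\EE[\hat L\hat R]+\EE[\hat R^2]$ with Cauchy--Schwarz, with the growth condition $\ntr/\log(2+N)\to\infty$ to make the exceptional-event part of $\mathcal R^{(2)}_{N,\ntr}$ negligible, and then reads off the four terms of $\mathrm{Err}_{N,\ntr}$ exactly as you do. Your explicit uniform bounds on $b_N$ and $V_N$, and your check that $u_0\ntr-3\log\ntr-(2\nout+1)\log N\to\infty$, only spell out steps the paper states tersely. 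You might also note that the hypothesis $\ntr\ge\nout/u_0$ of the preceding theorem holds eventually because $\ntr\to\infty$.
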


\parTitle{Section outline}
In~\cref{subsec:deterministicdecomposition}, we prove a decomposition $\Delta\hat o=B+\hat L + \hat R$ of the infinite-test-statistics error $\Delta\hat o$ into a deterministic component $B$, a centered fluctuation $\hat L$, and a remainder $\hat R$.
We identify the deterministic bias component $B$ in~\cref{subsec:direct_bias}.
In \cref{subsec:direct_linear_fluctuation} we derive the form of the fluctuation term $\hat L$, show that it has zero average, and compute its variance.
We then prove in \cref{subsec:direct_remainder_bound} that the remainder $\hat R$ is subleading, conditionally on a stochastic event where the pseudoinverses are sufficiently well-conditioned.
In \cref{sec:highprobbounds} we then prove that this event has high probability for large $\ntr$.
To conclude the bound on the MSE we only need to make sure that the stochastic events in which pseudoinverses are ill-conditioned do not make the MSE explode.
We do this in \cref{subsec:direct_mse_bound}, where we show that the divergence of $\hat R$ in the exceptional realizations does not overcome the exponentially small probability of these events. This concludes the proof and provides us with an unconditional MSE bound for large $\ntr$.
A summary of the the notation used in the proof is given in~\cref{tab:direct_proof_geometry}.

\subsection{Deterministic decomposition of infinite-test-statistics error}
\label{subsec:deterministicdecomposition}

We begin by providing an exact decomposition of the estimation error with infinite-test-statistics and exact targets, $\Delta\hat o=\bs y^T(\hatbsP_N^+-\bsP^+)\bsp_\sigma$, as $\Delta\hat o=B+\hat L+\hat R$, with $B$ a deterministic bias, $\hat L$ a centered fluctuation, and $\hat R$ a remainder term that is subleading with high proability for large $\ntr$.

\parTitle{Block decomposition notation}
Let $U=(U_1,U_2)$ be orthogonal, with
$\im(U_1)=\im(\bsP)$ and
$\im(U_2)=\ker(\bsP^T)=\im(\bsP)^\perp$.
We use this basis to separate the $r=d^2$ directions already present in
the noiseless data from the $\nout-r$ directions generated by shot noise.
Write
$\bsP_1\equiv U_1^T\bsP$, $\bsP_2=0$,
$U_a^T\hatbsP=\hatbsP_a$, and
$U_1^T\bsp_\sigma=\bsp_{\sigma,1}$.
Thus $\bsP_1,\hatbsP_1\in\RR^{r\times\ntr}$ and
$\hatbsP_2\in\RR^{(\nout-r)\times\ntr}$.
We denote the $i$-th column of $\bsP_1$ by $\bs p_{i,1}$ and set
$\hatbsXi_a\equiv U_a^T\hatbsXi$, with columns
$\hatbs\xi_{i,a}\equiv U_a^T\hatbs\xi_i$.

\parTitle{Basic rewriting of $\Delta\hat o$}
We study the infinite-test-statistics estimation error
$\Delta\hat o=\bs y^T(\hatbsP_N^+-\bsP^+)\bsp_\sigma$, where
$\bs y^T=\langle\calO,\bs\rho\rangle$ is the row of training targets and
$\bsp_\sigma=\langle\bs\mu,\sigma\rangle$ is the test-state probability
vector.  We assume
$r\equiv\rank(\bsP)=d^2<\nout\le\ntr$ and, as in the main text,
$\bs y\in\im(\bsP^T)$ and $\bsp_\sigma\in\im(\bsP)$.
The minimal informationally complete case $\nout=d^2$ has no cokernel
block and is handled by the ordinary inverse expansion on
$\im(\bsP)$; the Schur-complement analysis below concerns the
rank-increasing case \(\nout>d^2\).
The identity $\bs A^+=\bs A^T(\bs A\bs A^T)^+$ gives
\begin{equation}
\label{eq:error_with_pseudoinverse}
    \Delta\hat o =
    \hatbs\ell^T
    \hatbsS^+ \bsp_\sigma
    - \bs\ell^T \bsS^+ \bsp_\sigma,
    \qquad
    \hatbs\ell\equiv \frac{\hatbsP\bs y}{\ntr},
    \quad
    \bs\ell\equiv \frac{\bsP\bs y}{\ntr},
    \quad
    \hatbsS \equiv \frac{\hatbsP\hatbsP^T}{\ntr},
    \quad
    \bsS \equiv \frac{\bsP\bsP^T}{\ntr}.
\end{equation}
Here $\hatbsP\equiv\hatbsP_N=\bsP+N^{-1/2}\hatbsXi$.
In this section we omit the $N$ subscript on stochastic quantities for notational brevity.
Conditional on the training
states, the columns of $\hatbsXi$ are independent and centered, with
$\EE[\hatbs\xi_i\hatbs\xi_i^T]=\bs\Sigma_i$.  Consequently,
\begin{equation}
    \EE[\hatbs\ell]=\bs\ell,
    \qquad
    \EE[\hatbsS]=\bsS+\frac{1}{N}\overline{\bs\Sigma},
    \qquad
    \overline{\bs\Sigma}\equiv
    \frac{1}{\ntr}\sum_{i=1}^{\ntr}\bs\Sigma_i.
\end{equation}
The deterministic $N^{-1}\overline{\bs\Sigma}$ shift is the source of
the leading bias.  Directly expanding $\hatbsS^{-1}$ is nevertheless
unhelpful: the eigenvalues created outside $\im(\bsP)$ are of order
$N^{-1}$, so that $\|\EE[\hatbsS]^{-1}\|_{\rm op}$ is of order $N$.
The Schur complement below removes these apparently singular directions before the fluctuation expansion is made.
In this notation, the assumptions in \cref{eq:direct_uniform_conditioning} can be restated as
\begin{equation}\label{eq:direct_uniform_conditioning_2}
\lambda_{\min}(\bsS_{11})\ge\kappa_{11}>0,
\qquad
\lambda_{\min}(\overline{\bs\Sigma}_{22})\ge\kappa_{22}>0,
\end{equation}
with $\kappa_{11},\kappa_{22}$ independent of $N$ and $\ntr$.

\parTitle{Additional notation to get a finer-grained decomposition $\Delta\hat o$}
We use
\[
    \hatbs\ell_a=\ntr^{-1}\hatbsP_a\bs y,\qquad
    \bs\ell_a=\ntr^{-1}\bsP_a\bs y,\qquad
    \hatbsS_{ab}=U_a^T\hatbsS U_b,\qquad
    \bsS_{ab}=U_a^T\bsS U_b,
\]
where $\bsP_2=0$, and hence
$\bs\ell_2=\bsS_{12}=\bsS_{21}=\bsS_{22}=0$.  When $\hatbsS_{22}$ and its Schur complement are invertible, we have
\begin{equation}
\begin{gathered}
    \hatbs\ell^T\hatbsS^{-1}\bsp_\sigma
    =
    \frac{1}{\ntr}(\hatbsP\bs y)^T \hatbsS^{-1}\bsp_\sigma
    =
    \hatbs\ell_{\rm eff}^T
    \hatbsS_{\rm eff}^{-1}
    \bsp_{\sigma,1},
    \qquad
    \hatbs\ell_{\rm eff} \equiv
    \hatbs\ell_1 - \hatbsS_{12}\hatbsS_{22}^{-1}\hatbs\ell_2,
    \qquad
    \hatbsS_{\rm eff} \equiv
    \hatbsS_{11}
    -
    \hatbsS_{12} \hatbsS_{22}^{-1} \hatbsS_{21},
\end{gathered}
\end{equation}
Observe that 
$\bs y=\bsP^T\bs w_{\calO}$ and $\bs\ell_1=\bsS_{11}U_1^T\bs w_{\calO}$.
If $\overline{\bs\Sigma}_{22}$ is invertible, define
$\overline{\bs\Sigma}_{\rm eff}
=\overline{\bs\Sigma}_{11}
-\overline{\bs\Sigma}_{12}
\overline{\bs\Sigma}_{22}^{-1}
\overline{\bs\Sigma}_{21}$.
Introduce
\begin{equation}\label{eq:block_of_Delta_definitions}
\begin{alignedat}{3}
    \bsS_{\rm eff}
    &\equiv
    \EE[\hatbsS_{11}]
    -\EE[\hatbsS_{12}]
     \EE[\hatbsS_{22}]^{-1}
     \EE[\hatbsS_{21}]
    =\bsS_{11}
    +\frac{1}{N}\overline{\bs\Sigma}_{\rm eff},
    &\qquad
    \Delta\hatbsS_{ab}
    &\equiv
    \hatbsS_{ab}-\EE[\hatbsS_{ab}],
    &\qquad
    \Delta_c\hatbsS_{\rm eff}
    &\equiv
    \hatbsS_{\rm eff}-\bsS_{\rm eff},
    \\
    \bs\ell_{\rm eff}
    &\equiv
    \EE[\hatbs\ell_1]
    -\EE[\hatbsS_{12}]
     \EE[\hatbsS_{22}]^{-1}
     \EE[\hatbs\ell_2]
    =\bs\ell_1,
    &\qquad
    \Delta\hatbs\ell_a
    &\equiv
    \hatbs\ell_a-\bs\ell_a,
    &\qquad
    \Delta_c\hatbs\ell_{\rm eff}
    &\equiv
    \hatbs\ell_{\rm eff}-\bs\ell_{\rm eff}.
\end{alignedat}
\end{equation}
where $\Delta$ denotes a centered block fluctuation, whereas $\Delta_c$
denotes deviation from the deterministic concentration target.  Thus
$\EE[\Delta\hatbsS_{ab}]=0$, but in general
$\EE[\Delta_c\hatbsS_{\rm eff}]\ne0$ because the Schur complement is
nonlinear.

\parTitle{Rewriting the terms of $\Delta\hat o$ via effective quantiites}
Observing that $\hatbs\ell^T\hatbsS^{-1}\bs p_\sigma=\hatbs\ell_{\rm eff}^T\hatbsS_{\rm eff}^{-1}\bs p_{\sigma,1}$ and $\bs\ell^T\bsS^+\bs p_\sigma=\bs\ell_1^T\bsS_{11}^{-1}\bs p_{\sigma,1}$, and adding and subtracting $\bs\ell_1^T\bsS_{\rm eff}^{-1}\bsp_{\sigma,1}$ and $\hatbs\ell_{\rm eff}^T\bsS_{\rm eff}^{-1}\bsp_{\sigma,1}$ in \cref{eq:error_with_pseudoinverse}, we get
\begin{equation}\label{eq:exact_Deltao_decomp}
    \Delta\hat o =
    \bs\ell_1^T[
    \bsS_{\rm eff}^{-1} - \bsS_{11}^{-1}
    ]\bsp_{\sigma,1}
    +
    \hatbs\ell_{\rm eff}^T
    (\hatbsS_{\rm eff}^{-1} - \bsS_{\rm eff}^{-1})\bsp_{\sigma,1}
    + \Delta_c\hatbs\ell_{\rm eff}^T
    \bsS_{\rm eff}^{-1} \bsp_{\sigma,1}.
\end{equation}
Applying the resolvent identity twice,
\begin{equation}
    \hatbs\ell_{\rm eff}^T
    (\hatbsS_{\rm eff}^{-1} - \bsS_{\rm eff}^{-1})\bsp_{\sigma,1}
    =
    -
    \hatbs\ell_{\rm eff}^T
    \bsS_{\rm eff}^{-1}
    \Delta_c\hatbsS_{\rm eff}
    \bsS_{\rm eff}^{-1}
    \bsp_{\sigma,1}
    + 
    \hatbs\ell_{\rm eff}^T
    \bsS_{\rm eff}^{-1}
    \Delta_c\hatbsS_{\rm eff}
    \bsS_{\rm eff}^{-1}
    \Delta_c\hatbsS_{\rm eff}
    \hatbsS_{\rm eff}^{-1}
    \bsp_{\sigma,1},
\end{equation}
We next decompose $\hatbs\ell_{\rm eff}$ in its deterministic, centered fluctuations, and nonlinear components, as:
\begin{equation}\label{eq:def_ellterms}
\begin{gathered}
    \hatbs\ell_{\rm eff}
    =
    \bs\ell_1
    +
    (\Delta_c\hatbs\ell_{\rm eff})_{\poly}
    +
    (\Delta_c\hatbs\ell_{\rm eff})_{\rm nl},
    \\
    (\Delta_c\hatbs\ell_{\rm eff})_{\poly}
    \equiv
    \Delta\hatbs\ell_1
    -
    \overline{\bs\Sigma}_{12}
    \overline{\bs\Sigma}_{22}^{-1}
    \Delta\hatbs\ell_2
    =
    \frac{1}{\ntr \sqrt N} (\hatbsXi_1-\overline{\bs\Sigma}_{12}
    \overline{\bs\Sigma}_{22}^{-1}
    \hatbsXi_2)
    \bs y
    =
    \frac{1}{\ntr\sqrt N}
    \sum_{i=1}^{\ntr}
    \hatbs\xi_{i,\rm eff} y_i,
\end{gathered}
\end{equation}
where $\hatbs\xi_{i,\rm eff}\equiv
\hatbs\xi_{i,1}-\overline{\bs\Sigma}_{12}
\overline{\bs\Sigma}_{22}^{-1}\hatbs\xi_{i,2}$ denotes the $i$-th
column of $\hatbsXi_{\rm eff} \equiv
\hatbsXi_1-\overline{\bs\Sigma}_{12}
\overline{\bs\Sigma}_{22}^{-1}\hatbsXi_2$.
Similarly,    
\begin{equation}%
\label{eq:decomposition_DeltaSeff}
\begin{gathered}
    \Delta_c\hatbsS_{\rm eff}
    =
    (\Delta_c\hatbsS_{\rm eff})_{\poly}
    +
    (\Delta_c\hatbsS_{\rm eff})_{\rm nl},
    \\
    (\Delta_c\hatbsS_{\rm eff})_{\poly}
    \equiv
    \Delta\hatbsS_{11}
    -
    \Delta\hatbsS_{12}
    \overline{\bs\Sigma}_{22}^{-1}
    \overline{\bs\Sigma}_{21}
    -
    \overline{\bs\Sigma}_{12}
    \overline{\bs\Sigma}_{22}^{-1}
    \Delta\hatbsS_{21}
    +
    \overline{\bs\Sigma}_{12}
    \overline{\bs\Sigma}_{22}^{-1}
    \Delta\hatbsS_{22}
    \overline{\bs\Sigma}_{22}^{-1}
    \overline{\bs\Sigma}_{21}
    \\
    =
    \frac{1}{\ntr\sqrt N}
    \sum_{i=1}^{\ntr}
    \left(
        \bs p_{i,1}\hatbs\xi_{i,\rm eff}^T
        +
        \hatbs\xi_{i,\rm eff}\bs p_{i,1}^T
    \right)
    +
    \frac{1}{N\ntr}
    \sum_{i=1}^{\ntr}
    \left(
        \hatbs\xi_{i,\rm eff}\hatbs\xi_{i,\rm eff}^T
        -
        \EE[\hatbs\xi_{i,\rm eff}\hatbs\xi_{i,\rm eff}^T]
    \right).
\end{gathered}
\end{equation}
Both first-order terms are centered:
$\EE[(\Delta_c\hatbs\ell_{\rm eff})_{\poly}]
=\EE[(\Delta_c\hatbsS_{\rm eff})_{\poly}]=0$.
Here the subscript ``$\poly$'' denotes terms that are linear or quadratic in $\hatbsXi$;
the last term of $(\Delta_c\hatbsS_{\rm eff})_{\poly}$ is an example of the latter.

\parTitle{Exact error decomposition in bias, fluctuation, and remainder terms}
Expanding~\cref{eq:exact_Deltao_decomp} with \cref{eq:def_ellterms,eq:decomposition_DeltaSeff}, we finally get
\begin{equation}%
\label{eq:def_Deltao_andL}
\begin{gathered}
    \Delta\hat o =
    B + \hat L + \hat R,
    \\
    B \equiv
    \bs\ell_1^T
    \left(
        \bsS_{\rm eff}^{-1}
        -
        \bsS_{11}^{-1}
    \right)
    \bsp_{\sigma,1},
    \qquad
    \hat L \equiv 
    (\Delta_c\hatbs\ell_{\rm eff})_{\poly}^T
    \bsS_{\rm eff}^{-1} \bsp_{\sigma,1}
    -
    \bs\ell_1^T
    \bsS_{\rm eff}^{-1}
    (\Delta_c\hatbsS_{\rm eff})_{\poly}
    \bsS_{\rm eff}^{-1}
    \bsp_{\sigma,1}.
\end{gathered}
\end{equation}
Here $B$ is deterministic, $\hat L$ is centered and first order in the centered block fluctuations, and $\hat R\equiv\Delta\hat o-B-\hat L$ will be shown to be subleading with high probability.
The exact statistical bias is thus $\EE[\Delta\hat o]=B+\EE[\hat R]$.

\subsection{Deterministic bias component}
\label{subsec:direct_bias}

From~\cref{eq:def_Deltao_andL}, remembering $\bsS_{\rm eff}=\bsS_{11}
+N^{-1}\overline{\bs\Sigma}_{\rm eff}$ and
$\bs\ell_1=\bsS_{11}U_1^T\bs w_{\calO}$, the resolvent identity gives
\begin{equation}
    B
    =
    \bs\ell_1^T
    \left(
        \bsS_{\rm eff}^{-1}
        -
        \bsS_{11}^{-1}
    \right)
    \bsp_{\sigma,1}
    =
    -\frac{1}{N}
    \bs w_{\calO}^T
    U_1
    \overline{\bs\Sigma}_{\rm eff}
    \bsS_{\rm eff}^{-1}
    \bsp_{\sigma,1}
    =
    -\frac{1}{N}
    \bs w_{\calO}^T
    U_1
    \overline{\bs\Sigma}_{\rm eff}
    \bsS_{11}^{-1}
    \bsp_{\sigma,1}
    +
    O\left(\frac{1}{N^2}\right).
\label{eq:bias_exact}
\end{equation}
We can therefore write the deterministic bias component as $B=-b_N/N$ with $b_N=\bs w_{\calO}^T U_1
\overline{\bs\Sigma}_{\rm eff} \bsS_{\rm eff}^{-1} \bsp_{\sigma,1}$.
We can further replace $\bsS_{\rm eff}^{-1}$ by $\bsS_{11}^{-1}$ when $\|\bsS_{11}^{-1/2}\overline{\bs\Sigma}_{\rm eff}\bsS_{11}^{-1/2}\|_{\rm op}\ll N$.

\subsection{Centered first-order fluctuation}
\label{subsec:direct_linear_fluctuation}

Having identified the deterministic shift in \cref{subsec:direct_bias}, we now compute the centered fluctuation that contributes the leading training variance.
The goal is to express it as a sum of independent column contributions and compute its second moment before controlling the nonlinear remainder in
\cref{subsec:direct_remainder_bound}.
From \cref{eq:def_Deltao_andL} we rewrite $\hat L$ as
\begin{equation}
    \hat L = \left[
    \frac{1}{\ntr\sqrt N}
    \bs w_\calO^T U_1 \bsP_1\hatbsXi_{\rm eff}^T
    -
    \bs\ell_1^T \bsS_{\rm eff}^{-1}
    (\Delta_c \hatbsS_{\rm eff})_{\poly}
    \right]
    \bsS_{\rm eff}^{-1}\bsp_{\sigma,1}.
\end{equation}
Using~\cref{eq:decomposition_DeltaSeff}, this amounts to
$
    \hat L
    =
    \frac{1}{\ntr}
    \sum_{i=1}^{\ntr}\hat L_i
$,
where
\begin{equation}
\begin{aligned}
    \hat L_i
    &=
    \left\{
    \frac{1}{\sqrt N}
    \left[
        y_i\hatbs\xi_{i,\rm eff}^T
        -
        \bs\ell_1^T \bsS_{\rm eff}^{-1}
        \left(
            \bs p_{i,1}\hatbs\xi_{i,\rm eff}^T
            +
            \hatbs\xi_{i,\rm eff}\bs p_{i,1}^T
        \right)
    \right]
    -
    \frac{1}{N}
    \bs\ell_1^T \bsS_{\rm eff}^{-1}
    \left(
        \hatbs\xi_{i,\rm eff}\hatbs\xi_{i,\rm eff}^T
        -
        \EE[\hatbs\xi_{i,\rm eff}\hatbs\xi_{i,\rm eff}^T]
    \right)
    \right\}
    \bsS_{\rm eff}^{-1}\bsp_{\sigma,1}
\label{eq:def_gi}
\end{aligned}
\end{equation}
The first linear term in~\cref{eq:def_gi} contains the coefficient
$y_i-\bs\ell_1^T\bsS_{\rm eff}^{-1}\bs p_{i,1}$.
This coefficient is itself $O(N^{-1})$.
Indeed,
\begin{equation}\label{eq:linear_coefficient_correction}
    \bsS_{\rm eff}^{-1}\bs\ell_1-U_1^T\bs w_\calO
    =
    -\frac{1}{N}\bsS_{\rm eff}^{-1}
    \overline{\bs\Sigma}_{\rm eff}U_1^T\bs w_\calO
    = O(N^{-1}).
\end{equation}
Thus this part of the nominally $N^{-1/2}$ fluctuation contributes only at subleading order.

The variables $\hat L_i$ are independent and centered, therefore
$\EE[\hat L]=0$ and
$\EE[\hat L^2]
=\frac{1}{\ntr^2}\sum_{i=1}^{\ntr}\EE[\hat L_i^2]$.
Let
$\bs w_{\calO}=\langle\bstildemu,\calO\rangle\in\im(\bsP)\subset \RR^{\nout}$ be the canonical dual-frame outcome weights for the observable $\calO$, and define
\begin{equation}\label{eq:def_weff}
    \bs w^{\rm eff}_{\calO}
    \equiv (\bs I_{\nout} - U_2 \overline{\bs\Sigma}_{22}^{-1}\overline{\bs\Sigma}_{21} U_1^T)\bs w_\calO.
\end{equation}
Interestingly, the two weight vectors $\bs w_\calO$ and $\bs w_\calO^{\rm eff}$ have parallel variational interpretations: $\bs w_\calO$ solves $\bsP^T\bs u=\langle\bs\rho,\calO\rangle$ with minimum Euclidean norm, whereas $\bs w^{\rm eff}_{\calO}$ solves the same linear system but minimizing the weighted quadratic form $\bs u^T\overline{\bs\Sigma}\bs u$.

Using $\bs w_\calO^T U_1 \hatbs\xi_{i,\rm eff}=(\bs w^{\rm eff}_{\calO})^T\hatbs\xi_i$, \cref{eq:def_gi} becomes
\begin{equation}
    \hat L_i
    =
    -\frac{1}{\sqrt N}
    (\bs p_{i,1}^T\bsS_{\rm eff}^{-1}\bsp_{\sigma,1})
    (\bs w^{\rm eff}_{\calO})^T\hatbs\xi_i
    +
    \hat r_i,
    \qquad
    \EE[\hat r_i^2]
    =
    O\left(\frac{1}{N^2}\right).
\end{equation}
When squaring~\cref{eq:def_gi}, one also encounters the cross term between
the $N^{-1/2}$ contribution linear in $\hatbs\xi_{i,\rm eff}$ and the
$N^{-1}$ centered quadratic contribution:
$
    -\frac{1}{N}
    \bs\ell_1^T\bsS_{\rm eff}^{-1}
    \left(
        \hatbs\xi_{i,\rm eff}\hatbs\xi_{i,\rm eff}^T
        -
        \EE[\hatbs\xi_{i,\rm eff}\hatbs\xi_{i,\rm eff}^T]
    \right)
    \bsS_{\rm eff}^{-1}\bsp_{\sigma,1}
$.
Its expectation is governed by the third moment of
$\hatbs\xi_{i,\rm eff}$.
To show that this is subleading, write $\hatbs\xi_i=N^{-1/2}\sum_{s=1}^N (\bs e_{Z_{i,s}}-\bs p_i)$, with independent categorical draws $Z_{i,s}$.
Independence and centering give the exact identity
$
    \EE[\hatbs\xi_i^{\otimes3}]
    =\frac{1}{\sqrt N}
    \EE[(\bs e_{Z_{i,1}}-\bs p_i)^{\otimes3}],
$
thus the apparent $N^{-3/2}$ prefactor in that cross term acquires a further factor $N^{-1/2}$ after averaging.
The same identity holds after applying the deterministic map $U_1^T-\overline{\bs\Sigma}_{12}\overline{\bs\Sigma}_{22}^{-1} U_2^T$ to each tensor factor, hence $\EE[\hatbs\xi_{i,\rm eff}^{\otimes 3}]=O(N^{-1/2})$ holds as well.
The fourth moments are uniformly bounded, as follows from the subgaussian estimate in~\cref{lem:multinomial_subgaussian_direct}.
Together with~\cref{eq:linear_coefficient_correction}, these observations give
\begin{equation}\label{eq:L_variance_leading}
\begin{gathered}
    \EE[\hat L^2]
    = \frac{V_N}{N \ntr} + O\Big(\frac{1}{N^2\ntr}\Big),
    \qquad
    V_N = \frac{1}{\ntr} \sum_{i=1}^{\ntr}
    (\bs p_{i,1}^T
    \bsS_{\rm eff}^{-1} \bsp_{\sigma,1})^2
    \Var_i(\bs w^{\rm eff}_{\calO}),
    \qquad
    \Var_i(\bs w^{\rm eff}_{\calO})
    \equiv
    (\bs w^{\rm eff}_{\calO})^T
    \bs\Sigma_i
    \bs w^{\rm eff}_{\calO}.
\end{gathered}
\end{equation}
The $O(N^{-2}\ntr^{-1})$ term is small relative to the $V_N/N\ntr$ only in the large-$N$ limit; at fixed $N$, both terms scale as $\ntr^{-1}$ and can be comparable.
For large $N$, we can furthermore write
\begin{equation}
    V_N = V + O(N^{-1}),
    \qquad
    V \equiv \frac{1}{\ntr} \sum_{i=1}^{\ntr}
    (\bs p_{i,1}^T
    \bsS_{11}^{-1} \bsp_{\sigma,1})^2
    \Var_i(\bs w^{\rm eff}_{\calO})
    =
    \ntr \sum_{i=1}^{\ntr}
    \langle\tilde\rho_i,\sigma\rangle^2
    \Var_i(\bs w^{\rm eff}_{\calO}),
\end{equation}
using the identity $\bs p_{i,1}^T\bsS_{11}^{-1}\bsp_{\sigma,1}=\ntr \langle\tilde\rho_i,\sigma\rangle$, where $\tilde\rho_i\equiv \frameOp_\rho^{-1}(\rho_i)$ (see \cref{app:frame} for the relevant frame notation).

\subsection{Deterministic bound on the remainder}
\label{subsec:direct_remainder_bound}

We derived in \cref{subsec:direct_bias,subsec:direct_linear_fluctuation} the explicit expressions for $B$ and $\hat L$, but their interpretation as leading bias and variance terms depends on the remainder being sufficiently small.
To control $\hat R$ we will here first provide a deterministic upper bound on $\lvert\hat R\rvert$ conditionally on an event in which empirical moments are sufficiently close to their concentration targets, for arbitrary $N\ge 1$.
We will show in \cref{sec:highprobbounds} that this event does indeed happen with high probability for large $\ntr$.
The main result of this subsection is \cref{eq:R_bound}.

\parTitle{Residual and exact reminder}
Define
\begin{equation}%
\label{eq:def_q}
\begin{gathered}
    \hatbs q \equiv
    \hatbs\ell_{\rm eff} - \hatbsS_{\rm eff}U_1^T\bs w_\calO
    =
    \bs q_0 + \hatbs q_{\poly} + \hatbs q_{\rm nl},
    \qquad
    \bs q_0 = \bs\ell_1 - \bsS_{\rm eff}U_1^T\bs w_\calO = -\frac{1}{N} \overline{\bs\Sigma}_{\rm eff}U_1^T\bs w_\calO,
    \\
    \hatbs q_{\poly} \equiv
    (\Delta_c\hatbs\ell_{\rm eff})_{\poly} - (\Delta_c\hatbsS_{\rm eff})_{\poly}U_1^T\bs w_\calO,
    \qquad
    \hatbs q_{\rm nl} \equiv
    (\Delta_c\hatbs\ell_{\rm eff})_{\rm nl} - (\Delta_c\hatbsS_{\rm eff})_{\rm nl}U_1^T\bs w_\calO.
\end{gathered}
\end{equation}
Since $\bs\ell_1=\bs S_{11} U_1^T\bs w_\calO$ and $\bs\ell_1^T\bsS_{11}^{-1}\bsp_{\sigma,1}
=\bs w_{\calO}^T U_1\bsp_{\sigma,1}$, we can rewrite \cref{eq:exact_Deltao_decomp} as
\begin{equation}\label{eq:Deltao_with_q}
    \Delta\hat o
    =
    \hatbs\ell_{\rm eff}^T \hatbs S_{\rm eff}^{-1} \bs p_{\sigma,1}
    - \bs\ell_1^T \bsS_{11}^{-1}\bs p_{\sigma,1}
    =
    (\hatbs\ell_{\rm eff}-\hatbsS_{\rm eff} U_1^T\bs w_\calO)^T
    \hatbsS_{\rm eff}^{-1}\bs p_{\sigma,1}
    =
    \hatbs q^T
    \hatbsS_{\rm eff}^{-1}
    \bsp_{\sigma,1}.
\end{equation}
This rewriting preserves the cancellation between $\hatbs\ell_{\rm eff}$ and $\hatbsS_{\rm eff}U_1^T\bs w_\calO$ instead of bounding these two
$O(1)$ quantities separately.
Their difference contains the deterministic $O(N^{-1})$ term $\bs q_0$ and the smaller fluctuation terms controlled below.
In particular, the residual need not tend to zero as $\ntr\to\infty$ at fixed $N$, because $\bs q_0$ need not vanish.
In this notation,
\begin{equation}
    B=\bs q_0^T\bsS_{\rm eff}^{-1}\bsp_{\sigma,1},
    \qquad
    \hat L=
    \left[
        \hatbs q_{\poly}
        -(\Delta_c\hatbsS_{\rm eff})_{\poly}
        \bsS_{\rm eff}^{-1}\bs q_0
    \right]^T
    \bsS_{\rm eff}^{-1}\bsp_{\sigma,1}.
\end{equation}
Applying the resolvent identity
$
    (\bs A+\bs F)^{-1}
    =\bs A^{-1}-\bs A^{-1}\bs F\bs A^{-1}
    +(\bs A^{-1}\bs F)^2(\bs A+\bs F)^{-1}
$
with $\bs A=\bsS_{\rm eff}$ and
$\bs F=\Delta_c\hatbsS_{\rm eff}$ then yields
\begin{equation}%
\label{eq:remainder_with_q}
    \hat R =
    - \bs q_0^T\bsS_{\rm eff}^{-1}
    (\Delta_c\hatbsS_{\rm eff})_{\rm nl}
    \bsS_{\rm eff}^{-1}\bsp_{\sigma,1}
    - (\hatbs q - \bs q_0)^T
    \bsS_{\rm eff}^{-1}
    \Delta_c\hatbsS_{\rm eff}
    \bsS_{\rm eff}^{-1}\bsp_{\sigma,1}
    + \hatbs q^T
    (\bsS_{\rm eff}^{-1}\Delta_c\hatbsS_{\rm eff})^2
    \hatbsS_{\rm eff}^{-1}\bsp_{\sigma,1}
    +
    \hatbs q_{\rm nl}^T
    \bsS_{\rm eff}^{-1}\bsp_{\sigma,1}.
\end{equation}
Each term now contains either two centered fluctuations or one nonlinear Schur-complement remainder.

\parTitle{Conditioning event}
The remainder identity in \cref{eq:remainder_with_q} is only useful if we can control the inverse factors in it.
By the standing assumptions in~\cref{eq:direct_uniform_conditioning,eq:direct_uniform_conditioning_2} and $\overline{\bs\Sigma}\ge0$, we have 
$
    \overline{\bs\Sigma}_{\rm eff}
    =
    \overline{\bs\Sigma}_{11}
    -
    \overline{\bs\Sigma}_{12}
    \overline{\bs\Sigma}_{22}^{-1}
    \overline{\bs\Sigma}_{21}
    \ge0
$ (because $\bs x^T\overline{\bs\Sigma}_{\rm eff}\bs x = (\bs x^T,-\bs x^T \bs A^T)\overline{\bs\Sigma}(\bs x,-\bs A\bs x)$ with $\bs A\equiv \overline{\bs\Sigma}_{22}^{-1}\overline{\bs\Sigma}_{21}$), and
$\bsS_{\rm eff}=\bsS_{11}+\frac{1}{N} \overline{\bs\Sigma}_{\rm eff}\ge\bsS_{11} \ge \kappa_{11}>0$.
It will be useful to also define the term
\begin{equation}
    \hatbs h\equiv
    \Delta\hatbsS_{21}U_1^T \bs w_{\calO}-\Delta\hatbs\ell_2
     =
    \frac{1}{N\ntr}
    \sum_{i=1}^{\ntr}
    \left(
        \hatbs\xi_{i,2}\hatbs\xi_{i,1}^T
        -
        \EE[
            \hatbs\xi_{i,2}\hatbs\xi_{i,1}^T
        ]
    \right)
    U_1^T\bs w_{\calO}.
    \label{eq:def_h}
\end{equation}
This identity holds because the $N^{-1/2}$ term in $\Delta\hatbsS_{21}U_1^T \bs w_{\calO}$ is $\hatbsXi_2\bsP_1^TU_1^T\bs w_\calO/(\ntr\sqrt N)$, which cancels with $\Delta\hatbs\ell_2$.
The remaining vector $\hatbs h$ has the smaller, quadratic fluctuation scale $N^{-1}$.
For $u>0$, define the ``good event'':
\begin{equation}\label{eq:def_good_event}
    \calE_u=\left\{
    \begin{aligned}
    &\|\Delta\hatbsS_{11}\|_{\rm op},\ \|\Delta\hatbsS_{12}\|_{\rm op},\ \|\Delta\hatbsS_{21}\|_{\rm op}\le u/\sqrt N,
    &&\|\Delta\hatbsS_{22}\|_{\rm op}\le u/N,\\
    &\|\Delta\hatbs\ell_a\|_2\le2B_yu/\sqrt N\quad(a=1,2),
    &&\|\hatbs h\|_2\le\|\bs w_\calO\|_2u/N
    \end{aligned}\right\}.
\end{equation}
We also introduce the two constants which will appear when controlling the terms of $\hat R$:
\begin{equation}\label{eq:schur_bound_constants}
    K_S^{\poly}=\left(1+\|\overline{\bs\Sigma}_{12}\overline{\bs\Sigma}_{22}^{-1}\|_{\rm op}\right)^2,
    \qquad K_S^{\rm nl}=2K_S^{\poly}/\kappa_{22},
\end{equation}
and remember that $\kappa_{22}>0$ is defined by the previously given definition $\kappa_{22}\le  \lambda_{\min}(\overline{\bs\Sigma}_{22})$.
In the remainder of this subsection, we will assume that $u$ satisfies
\begin{equation}\label{eq:deterministic_smallness}
    u\le1,\qquad u\le\kappa_{22}/2,\qquad
    K_S^{\poly}u/\sqrt N+K_S^{\rm nl}u^2\le\kappa_{11}/2.
\end{equation}
All subsequent bounds are deterministic on $\calE_u$. The associated constants may depend on the choice of POVM, $\calO$, $\sigma$, and the training states through the conditioning constants, but not on $N$ or $\ntr$.

\parTitle{Control of nonlinear terms in the remainder}
We start observing that $(\Delta_c\hatbsS_{\rm eff})_{\rm nl}$ can be written as
\begin{equation}\label{eq:fullexp_DeltaSeffNL}
    (\Delta_c\hatbsS_{\rm eff})_{\rm nl}
    =-\left(\Delta\hatbsS_{12}-\overline{\bs\Sigma}_{12}\overline{\bs\Sigma}_{22}^{-1}\Delta\hatbsS_{22}\right)
    \hatbsS_{22}^{-1}
    \left(\Delta\hatbsS_{21}-\Delta\hatbsS_{22}\overline{\bs\Sigma}_{22}^{-1}\overline{\bs\Sigma}_{21}\right).
\end{equation}
Remembering that $\hatbs S_{22}=N^{-1}\overline{\bs\Sigma}_{22}+\Delta\hatbs S_{22}$, Weyl's inequality on the event $\calE_u$ gives $\lambda_{\rm min}(\hatbs S_{22})\ge \frac{\kappa_{22}-u}{N}\ge \frac{\kappa_{22}}{2N}$, and thus $\|\hatbs S_{22}^{-1}\|_{\rm op}\le 2N/\kappa_{22}$.
Each outer factor in \cref{eq:fullexp_DeltaSeffNL} has instead norm at most $(1+\|\overline{\bs\Sigma}_{12}\overline{\bs\Sigma}_{22}^{-1}\|_{\rm op})u/\sqrt N$. Together with the first-order expression in \cref{eq:decomposition_DeltaSeff}, this proves
\begin{equation}\label{eq:bound_SeffFull}
    \|(\Delta_c\hatbsS_{\rm eff})_{\rm nl}\|_{\rm op}\le K_S^{\rm nl}u^2,
    \qquad
    \|\Delta_c\hatbsS_{\rm eff}\|_{\rm op}\le K_S^{\poly}u/\sqrt N+K_S^{\rm nl}u^2.
\end{equation}
The smallness conditions \cref{eq:deterministic_smallness} therefore imply
\begin{equation}\label{eq:good_event_invertibility}
    \lambda_{\min}(\hatbsS_{22})\ge\kappa_{22}/(2N),
    \qquad \lambda_{\min}(\hatbsS_{\rm eff})\ge\kappa_{11}/2.
\end{equation}
In particular, $\hatbsP$ has full row rank on this event.
The same factorization applied to $\hatbs\ell_{\rm eff}$ yields
\begin{equation}\label{eq:explicit_leff_remainder}
    (\Delta_c\hatbs\ell_{\rm eff})_{\rm nl}
    =-\left(\Delta\hatbsS_{12}-\overline{\bs\Sigma}_{12}\overline{\bs\Sigma}_{22}^{-1}\Delta\hatbsS_{22}\right)\hatbsS_{22}^{-1}\Delta\hatbs\ell_2,
\end{equation}
and using these in $\hatbs q_{\rm nl}$ gives
\begin{equation}\label{eq:bound_qnl}
\begin{aligned}
    \hatbs q_{\rm nl}
    =\left(\Delta\hatbsS_{12}-\overline{\bs\Sigma}_{12}\overline{\bs\Sigma}_{22}^{-1}\Delta\hatbsS_{22}\right)\hatbsS_{22}^{-1}
    \left(\hatbs h-\Delta\hatbsS_{22}\overline{\bs\Sigma}_{22}^{-1}\overline{\bs\Sigma}_{21}U_1^T\bs w_\calO\right)
    \implies
    \|\hatbs q_{\rm nl}\|_2\le K_S^{\rm nl}\|\bs w_\calO\|_2u^2/\sqrt N.
\end{aligned}
\end{equation}
Note that the three factors in this expansion for $\hatbs q_{\rm nl}$ have scales $u/\sqrt N$, $N$, and $u/N$, respectively, and bounding $\Delta\hatbsS_{21}U_1^T\bs w_\calO$ and $\Delta\hatbs\ell_2$ separately would lose the cancellation in the last factor.
The definitions of $\bs q_0$ and $\hatbs q_{\poly}$ now imply
\begin{equation}\label{eq:bounds_fullq}
    \|\bs q_0\|_2\le C/N,\qquad
    \|\hatbs q-\bs q_0\|_2\le Cu/\sqrt N,\qquad
    \|\hatbs q\|_2\le C/\sqrt N,
\end{equation}
where the constant may be taken as
\begin{equation}\label{eq:def_Cq}
    C = C_0^{(q)}+C_1^{(q)},
    \qquad
    C_0^{(q)}=\|\overline{\bs\Sigma}_{\rm eff}\|_{\rm op}\|\bs w_\calO\|_2,
    \qquad
    C_1^{(q)}=
    2B_y\left(
    1+\|\overline{\bs\Sigma}_{12}\overline{\bs\Sigma}_{22}^{-1}\|_{\rm op}
    \right)
    +(K_S^{\poly}+K_S^{\rm nl})\|\bs w_\calO\|_2,
\end{equation}
where $C_0^{(q)}/N$ bounds $\bs q_0$, while the two terms in $C_1^{(q)}$ bound $\hatbs q_p$ and $\hatbs q_{\rm nl}$ on $\calE_u$; the last inequality in \cref{eq:bounds_fullq} also uses $u\le 1$ and $N\ge 1$.

\parTitle{Remainder bound}
With these bounds we can now control the magnitude of $\hat R$ conditionally on $\calE_u$ and \cref{eq:deterministic_smallness}: for a constant $K_R$ independent of $N$ and $\ntr$:
\begin{equation}\label{eq:R_bound}
    \calE_u\ \text{and \cref{eq:deterministic_smallness}}
    \quad\Longrightarrow\quad |\hat R|\le K_Ru^2/\sqrt N.
\end{equation}
We will show in \cref{thm:remainder_high_probability} how to tie the factor $u$ to a confidence interval; this will be needed to make statements about the maximum magnitude of $\hat R$ being below a certain threshold with some given probability.

A concrete, conservative choice for the constant is obtained by bounding the four terms in \cref{eq:remainder_with_q} separately.
With $K_S\equiv K_S^{\poly}+K_S^{\rm nl}$ from \cref{eq:schur_bound_constants}, the constant $C$ from \cref{eq:def_Cq}, and the conditioning constant $\kappa_{11}$ from \cref{eq:direct_uniform_conditioning_2}, one may take
\begin{equation}\label{eq:def_KR}
    K_R=\|\bsp_{\sigma}\|_2\left[
    \frac{C K_S^{\rm nl}}{\kappa_{11}^2}
    +\frac{C K_S}{\kappa_{11}^2}
    +\frac{2C K_S^2}{\kappa_{11}^3}
    +\frac{K_S^{\rm nl}\|\bs w_\calO\|_2}{\kappa_{11}}
    \right].
\end{equation}
The four summands correspond, in order, to the four terms of \cref{eq:remainder_with_q}; \cref{eq:good_event_invertibility} supplies $\|\bsS_{\rm eff}^{-1}\|_{\rm op}\le\kappa_{11}^{-1}$ and $\|\hatbsS_{\rm eff}^{-1}\|_{\rm op}\le2\kappa_{11}^{-1}$. 
These constants are not optimized so tighter bounds might very well be possible.

\subsection{High-probability bounds}%
\label{sec:highprobbounds}

The deterministic argument in \cref{subsec:direct_remainder_bound} has reduced remainder control to one question: how often do all the fluctuations defining $\calE_u$ remain small?
We answer it for independent multinomial shots, conditional on the fixed training states and uniformly for $N\ge1$.
A scalar concentration bound first controls each noise column; vector and matrix versions then control the linear and quadratic moments together.
Substituting their common concentration radius into \cref{eq:R_bound} gives an explicit sample-size and confidence condition.
This will control typical realizations in which $\calE_u$ is realized with $u$ satisfying \cref{eq:deterministic_smallness}; we will discuss later on in \cref{subsec:direct_mse_bound} how to deal with ill-conditioned realizations, which is necessary to bound the MSE in expectation.

For each training state $i\in\{1,\ldots,\ntr\}$, let $Z_{i,1},\ldots,Z_{i,N}$ be independent categorical random variables with probability vector $\bs p_i\in\RR^{\nout}$. Assume also that the samples associated with distinct training states are independent.
Define
\begin{equation}
    \hatbs p_i=\frac{1}{N}\sum_{s=1}^N\bs e_{Z_{i,s}},
    \qquad
    \hatbs\xi_i=\sqrt N\left(\hatbs p_i-\bs p_i\right),
    \label{eq:multinomial_noise_definition}
\end{equation}
where $\bs e_{Z_{i,s}}$ denotes the canonical vectors.
The two block dimensions are $m_1=d^2$ and $m_2=\nout-d^2$, so that
$m_1+m_2=\nout$.  We denote by
$\hatbs\xi_{i,a}\in\RR^{m_a}$ the $a$-th block of $\hatbs\xi_i$.
Recall that $y_i=\bs p_{i,1}^T U_1^T\bs w_{\calO}$ and $B_y\equiv\max_i|\langle\calO,\rho_i\rangle|
\le\|\calO\|_\infty$.
For $0<\delta<1$, define
\begin{equation}
    q_\delta=2\nout\log 9+\log\!\left(\frac{32}{\delta}\right),
    \qquad
    r_\delta
    =\sqrt{\frac{q_\delta}{\ntr}}+\frac{q_\delta}{\ntr}.
    \label{eq:def_concentration_radius}
\end{equation}
The numerical factors in $q_\delta$ are chosen so that the vector and matrix bounds below hold simultaneously.
\begin{lemma}[Subgaussian multinomial error]
\label{lem:multinomial_subgaussian_direct}
For every $\bs v\in\RR^{\nout}$ and $\lambda\in\RR$,
\begin{equation}
    \EE\exp\!\left(\lambda\bs v^T(\hatbs p_i-\bs p_i)\right)
    \le
    \exp\!\left(\frac{\lambda^2\|\bs v\|_2^2}{4N}\right).
    \label{eq:multinomial_subgaussian_mgf}
\end{equation}
Equivalently, in the convention
$\EE e^{\lambda X}\le e^{\lambda^2\sigma^2/2}$, the random variable
$\bs v^T\hatbs\xi_i$ is subgaussian with parameter
$\sigma^2\le\|\bs v\|_2^2/2$.
\end{lemma}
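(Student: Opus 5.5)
The plan is to reduce the statement to a single categorical draw and then apply Hoeffding's lemma. Since $\hatbs p_i-\bs p_i=\frac1N\sum_{s=1}^N(\bs e_{Z_{i,s}}-\bs p_i)$ with the $Z_{i,s}$ independent, the moment generating function factorizes:
\[
\EE\exp\!\left(\lambda\bs v^T(\hatbs p_i-\bs p_i)\right)=\prod_{s=1}^N\EE\exp\!\left(\tfrac{\lambda}{N}X_s\right),
\qquad X_s\equiv v_{Z_{i,s}}-\bs v^T\bs p_i .
\]
It therefore suffices to show that each factor is bounded by $\exp(\lambda^2\|\bs v\|_2^2/(4N^2))$. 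The product of the $N$ factors then gives the stated bound $\exp(\lambda^2\|\bs v\|_2^2/(4N))$.

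Each $X_s$ is centered, because $\EE[v_{Z}]=\sum_a p_{i,a}v_a=\bs v^T\bs p_i$. It also takes values in the interval $[\min_a v_a-\bs v^T\bs p_i,\ \max_a v_a-\bs v^T\bs p_i]$, whose length is $\max_a v_a-\min_a v_a$. Hoeffding's lemma states that a centered variable supported on an interval of length $\ell$ satisfies $\EE e^{tX}\le e^{t^2\ell^2/8}$. The only substantive step is then the elementary estimate
\[
(\max_a v_a-\min_a v_a)^2\le 2\|\bs v\|_2^2 .
\]
To prove it, let $a^\star$ and $a_\star$ be the indices attaining the maximum and minimum. If they coincide, the left side is zero. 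Otherwise $(v_{a^\star}-v_{a_\star})^2\le 2(v_{a^\star}^2+v_{a_\star}^2)\le 2\|\bs v\|_2^2$ by $(x-y)^2\le 2x^2+2y^2$.

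Substituting $t=\lambda/N$ and $\ell^2\le 2\|\bs v\|_2^2$ into Hoeffding's lemma gives $\exp(\lambda^2\cdot 2\|\bs v\|_2^2/(8N^2))=\exp(\lambda^2\|\bs v\|_2^2/(4N^2))$ for each factor, as required. Multiplying the $N$ factors yields \cref{eq:multinomial_subgaussian_mgf}. For the equivalent formulation, $\bs v^T\hatbs\xi_i=\sqrt N\,\bs v^T(\hatbs p_i-\bs p_i)$, so replacing $\lambda$ by $\lambda\sqrt N$ gives $\EE e^{\lambda\bs v^T\hatbs\xi_i}\le e^{\lambda^2\|\bs v\|_2^2/4}$. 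This is the convention $e^{\lambda^2\sigma^2/2}$ with $\sigma^2=\|\bs v\|_2^2/2$.

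No step is a serious obstacle. The one point that needs care is the range bound: bounding the range by $2\|\bs v\|_\infty$ would give only $\ell^2\le 4\|\bs v\|_2^2$. That loses a factor of $2$ and would not produce the stated constant $1/4$.
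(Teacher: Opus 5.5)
Your proposal is correct and follows essentially the same route as the paper's proof. Both arguments write $\bs v^T(\hatbs p_i-\bs p_i)$ as an average of $N$ independent centered single-draw variables and bound the range of each by $\max_a v_a-\min_a v_a\le\sqrt2\,\|\bs v\|_2$. Both then apply Hoeffding's lemma to each draw and factorize the moment generating function over the draws. Your version additionally spells out the range estimate and the rescaling to $\bs v^T\hatbs\xi_i$.
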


\begin{proof}
Set $Y_s=\bs v^T(\bs e_{Z_{i,s}}-\bs p_i)$. Then $\EE[Y_s]=0$, and the range of $Y_s$ has length $\max_j v_j-\min_j v_j\le\sqrt2\|\bs v\|_2$.
Hoeffding's lemma therefore implies $\EE e^{tY_s}\le\exp(t^2\|\bs v\|_2^2/4)$. Independence and $\bs v^T(\hatbs p_i-\bs p_i)=N^{-1}\sum_{s=1}^NY_s$ now give
\begin{equation}
    \EE\exp\!\left(\lambda\bs v^T(\hatbs p_i-\bs p_i)\right)
    =\prod_{s=1}^N\EE\exp\!\left(\frac{\lambda}{N}Y_s\right)
    \le\exp\!\left(\frac{\lambda^2\|\bs v\|_2^2}{4N}\right).
\end{equation}
\end{proof}

\begin{lemma}[Concentration of $\Delta\hatbs\ell_a$]
\label{lem:ell_concentration}
With probability at least $1-\delta$, simultaneously for
$a\in\{1,2\}$,
\begin{equation}
    \|\Delta\hatbs\ell_a\|_2
    \le
    2B_y\sqrt{\frac{q_\delta}{N\ntr}}
    \le
    \frac{2B_yr_\delta}{\sqrt N}.
    \label{eq:ell_concentration_complete}
\end{equation}
\end{lemma}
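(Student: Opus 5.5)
We need to prove the concentration of $\Delta\hatbs\ell_a=\ntr^{-1}N^{-1/2}\hatbsXi_a\bs y=\frac{1}{\ntr\sqrt N}\sum_i y_i\hatbs\xi_{i,a}$. The plan is a standard $\epsilon$-net argument combined with the subgaussian bound of \cref{lem:multinomial_subgaussian_direct}.

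\emph{Step 1: scalar concentration in a fixed direction.} Fix a unit vector $\bs v\in\RR^{m_a}$ and set $\bs u=U_a\bs v\in\RR^{\nout}$, which is also a unit vector since $U_a$ is an isometry. Then
\[
\bs v^T\Delta\hatbs\ell_a=\frac{1}{\ntr}\sum_i y_i\,\bs u^T(\hatbs p_i-\bs p_i).
\]
The columns are independent, so \cref{lem:multinomial_subgaussian_direct} gives
\[
\EE\exp(\lambda\,\bs v^T\Delta\hatbs\ell_a)\le\exp\Bigl(\frac{\lambda^2}{4N\ntr^2}\sum_i y_i^2\Bigr)\le\exp\Bigl(\frac{\lambda^2B_y^2}{4N\ntr}\Bigr).
\]
Chernoff's bound on both tails then yields
\[
\PP\bigl(|\bs v^T\Delta\hatbs\ell_a|>t\bigr)\le 2\exp\bigl(-N\ntr t^2/B_y^2\bigr).
\]

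\emph{Step 2: net and union bound.} Take a $1/2$-net $\mathcal N_a$ of the unit sphere of $\RR^{m_a}$ with $|\mathcal N_a|\le 5^{m_a}\le 9^{\nout}$. The standard net inequality gives $\|\bs x\|_2\le 2\max_{\bs v\in\mathcal N_a}\bs v^T\bs x$. We union-bound over both nets ($a=1,2$), which costs a factor at most $2\cdot 9^{\nout}$ in total, together with the two-sided factor $2$.

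We choose $t=B_y\sqrt{q_\delta/(N\ntr)}$. The failure probability is then at most
\[
4\cdot 9^{\nout}e^{-q_\delta}=4\cdot 9^{\nout}\,9^{-2\nout}\,\frac{\delta}{32}\le\delta.
\]
On the complementary event, $\|\Delta\hatbs\ell_a\|_2\le 2t=2B_y\sqrt{q_\delta/(N\ntr)}$ for both $a$ simultaneously. The second inequality in the statement is immediate, because $\sqrt{q_\delta/\ntr}\le r_\delta$ by the definition of $r_\delta$.

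\emph{Main point to check.} The only delicate step is the bookkeeping of the constants: the net cardinality, the two-sided factor, and the two blocks must all fit within the budget $2\nout\log9+\log(32/\delta)$. The generous choice of $q_\delta$ leaves ample slack for this. One must also note that the subgaussian parameter uses $\|\bs u\|_2=1$, which holds precisely because $U_a$ is an isometry.
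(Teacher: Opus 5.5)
Your proposal is correct and follows essentially the same route as the paper: the directional moment-generating-function bound from \cref{lem:multinomial_subgaussian_direct} (with $\|U_a\bs v\|_2=1$), a two-sided Chernoff tail, and a net-plus-union-bound argument over both blocks. The differences are cosmetic. The paper uses a $1/4$-net of size $9^{m_a}$ and loosens the factor $4/3$ to $2$, whereas you use a $1/2$-net of size $5^{m_a}$, which is equally fine given the slack in $q_\delta$. The only step you leave implicit is the degenerate case $B_y=0$: there your Chernoff exponent divides by zero, but the claim holds trivially because then $\Delta\hatbs\ell_a\equiv0$.
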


\begin{proof}
Because
$\Delta\hatbs\ell_a=(\ntr\sqrt N)^{-1}\sum_{i=1}^{\ntr}y_i\hatbs\xi_{i,a}$,
\cref{lem:multinomial_subgaussian_direct} and independence across $i$ imply,
for every unit vector
$\bs v\in\RR^{m_a}$,
\begin{equation}
    \EE\exp\!\left(\lambda\bs v^T\Delta\hatbs\ell_a\right)
    \le
    \exp\!\left(
        \frac{\lambda^2}{4N\ntr^2}\sum_{i=1}^{\ntr}y_i^2
    \right)
    \le
    \exp\!\left(\frac{B_y^2\lambda^2}{4N\ntr}\right).
\end{equation}
If \(B_y=0\), then every \(y_i=0\) and
\(\Delta\hatbs\ell_a=0\), so the claim is immediate.  Assume henceforth
that \(B_y>0\).
The Chernoff bound therefore yields
\begin{equation}
    \PP\!\left(|\bs v^T\Delta\hatbs\ell_a|>t\right)
    \le 2\exp\!\left(-\frac{N\ntr t^2}{B_y^2}\right).
    \label{eq:ell_fixed_direction_tail}
\end{equation}
However, a bound in one direction is insufficient for $\calE_u$, which requires control over the full vector norm. To derive a bound in the other direction, let $\mathcal N_a$ be
a $1/4$-net of the unit sphere in $\RR^{m_a}$ with
$|\mathcal N_a|\le9^{m_a}$. For any nonzero $\bs z\in\RR^{m_a}$, choose
$\bs v\in\mathcal N_a$ within distance $1/4$ of
$\bs z/\|\bs z\|_2$; the case $\bs z=0$ is immediate.  Then
\begin{equation}
    \|\bs z\|_2
    =\frac{\bs z^T\bs z}{\|\bs z\|_2}
    \le |\bs v^T\bs z|+\frac14\|\bs z\|_2,
\end{equation}
so
$\|\bs z\|_2\le(4/3)\max_{\bs v\in\mathcal N_a}
|\bs v^T\bs z|$; we use the looser factor $2$ below.
Taking $t=B_y\sqrt{q_\delta/(N\ntr)}$ in
~\cref{eq:ell_fixed_direction_tail} and applying a union bound gives
\begin{equation}
    \PP\!\left(
        \|\Delta\hatbs\ell_a\|_2
        >2B_y\sqrt{\frac{q_\delta}{N\ntr}}
    \right)
    \le 2\,9^{m_a}e^{-q_\delta}
    \le\frac{\delta}{16}.
\end{equation}
A second union bound over $a\in\{1,2\}$ proves the simultaneous statement.
\end{proof}

\begin{lemma}[Quadratic concentration]
\label{lem:quadratic_concentration}
For $a,b\in\{1,2\}$, define
\begin{equation}
    \hatbs Q_{ab}
    =\frac{1}{\ntr}\sum_{i=1}^{\ntr}
    \left(
        \hatbs\xi_{i,a}\hatbs\xi_{i,b}^T
        -\EE[\hatbs\xi_{i,a}\hatbs\xi_{i,b}^T]
    \right).
\end{equation}
There is an absolute constant $K_{\rm B}>0$ such that, simultaneously for
all four pairs $(a,b)$, with probability at least $1-\delta$,
\begin{equation}
    \|\hatbs Q_{ab}\|_{\mathrm{op}}
    \le K_{\rm B}r_\delta.
    \label{eq:quadratic_concentration_complete}
\end{equation}
Equivalently,
\begin{equation}
    \left\|
        \frac{1}{N\ntr}\sum_{i=1}^{\ntr}
        \left(
            \hatbs\xi_{i,a}\hatbs\xi_{i,b}^T
            -\EE[\hatbs\xi_{i,a}\hatbs\xi_{i,b}^T]
        \right)
    \right\|_{\mathrm{op}}
    \le\frac{K_{\rm B}r_\delta}{N}.
\end{equation}
\end{lemma}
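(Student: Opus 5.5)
We will prove \cref{lem:quadratic_concentration} by the same route as \cref{lem:ell_concentration}: reduce the operator norm to bilinear forms over finite nets, prove a Bernstein-type tail for each fixed pair of directions, and take a union bound. For unit vectors $\bs u\in\RR^{m_a}$, $\bs v\in\RR^{m_b}$ we have $\bs u^T\hatbs Q_{ab}\bs v=\ntr^{-1}\sum_i (X_iY_i-\EE[X_iY_i])$, where $X_i=\bs u^T\hatbs\xi_{i,a}=(U_a\bs u)^T\hatbs\xi_i$ and $Y_i=(U_b\bs v)^T\hatbs\xi_i$. Since $U_a\bs u$ and $U_b\bs v$ are unit vectors in $\RR^{\nout}$, \cref{lem:multinomial_subgaussian_direct} shows that $X_i$ and $Y_i$ are subgaussian with parameter at most $1/2$. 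This holds uniformly in $N\ge1$, which is the point of the $\sqrt N$ rescaling.

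\textbf{Step 1 (fixed directions).} The product $X_iY_i$ of two subgaussian variables is subexponential: $\|X_iY_i\|_{\psi_1}\le\|X_i\|_{\psi_2}\|Y_i\|_{\psi_2}\le c$. Centering costs at most a constant factor. The summands are independent across $i$, because shots for distinct training states are independent. Bernstein's inequality for subexponential sums then gives
\[
\PP\bigl(|\bs u^T\hatbs Q_{ab}\bs v|>K t\bigr)\le 2\exp\bigl(-\ntr\min(t^2,t)\bigr).
\]
Choosing $t=\sqrt{q/\ntr}+q/\ntr$ gives $\min(t^2,t)\ge q/\ntr$, so the tail is at most $2e^{-q}$. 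This is exactly why $r_\delta$ has its combined $\sqrt{q_\delta/\ntr}+q_\delta/\ntr$ form.

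\textbf{Step 2 (nets).} Take $1/4$-nets $\mathcal N_a,\mathcal N_b$ of the unit spheres, of cardinalities at most $9^{m_a}$ and $9^{m_b}$. The standard approximation argument gives $\|\bs A\|_{\rm op}\le(1-1/4-1/4)^{-1}\max_{\mathcal N_a\times\mathcal N_b}|\bs u^T\bs A\bs v|=2\max|\cdot|$. The union bound over the nets costs a factor $9^{m_a+m_b}\le 9^{2\nout}$, and a further factor $4$ covers the pairs $(a,b)$. The total failure probability is then at most $8\cdot 9^{2\nout}e^{-q_\delta}=8\delta/32\le\delta$, using $q_\delta=2\nout\log9+\log(32/\delta)$. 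The statement follows with $K_{\rm B}=2K$. The ``equivalently'' form is just division by $N$, since $\hatbs\xi_i\hatbs\xi_i^T/N=(\hatbs p_i-\bs p_i)(\hatbs p_i-\bs p_i)^T$.

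\textbf{Main obstacle.} The main difficulty is making the subexponential Bernstein step explicit with an absolute constant. For the product of subgaussians we will use the elementary bound $|xy|\le(x^2+y^2)/2$ together with the MGF bound of $X_i^2$ that follows from \cref{eq:multinomial_subgaussian_mgf}, for instance $\EE e^{sX_i^2}\le(1-s)^{-1/2}$ for $s<1$. This yields $\EE e^{\lambda(X_iY_i-\EE X_iY_i)}\le e^{c\lambda^2}$ for $|\lambda|\le\lambda_0$. Chernoff then gives the two-regime tail, with optimization over $\lambda$ producing the $\min(t^2,t)$ exponent. Constants are not tracked sharply, since $K_{\rm B}$ is only claimed to be absolute.
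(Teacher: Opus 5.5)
Your proposal is correct and follows essentially the same route as the paper. The paper also proves a fixed-direction Bernstein bound for the centered products $(\bs u^T\hatbs\xi_{i,a})(\bs v^T\hatbs\xi_{i,b})$ via the Orlicz product inequality (which, as you note, does not require the two factors to be independent) and the uniform subgaussian bound, then uses $1/4$-nets with the factor-$2$ approximation and a union bound over the nets and the four block pairs, giving failure probability $8\cdot 9^{2\nout}e^{-q_\delta}=\delta/4$; your MGF argument via $|xy|\le(x^2+y^2)/2$ is simply a more explicit version of the subexponential step the paper cites.
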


\begin{proof}
Fix unit vectors $\bs u\in\RR^{m_a}$ and $\bs v\in\RR^{m_b}$, and let
\begin{equation}
    X_i
    =(\bs u^T\hatbs\xi_{i,a})(\bs v^T\hatbs\xi_{i,b})
    -\EE[(\bs u^T\hatbs\xi_{i,a})(\bs v^T\hatbs\xi_{i,b})].
\end{equation}
Both factors have uniformly bounded subgaussian norm by
\cref{lem:multinomial_subgaussian_direct}.
They need not be independent: the standard product inequality for Orlicz
norms still gives a uniform subexponential bound for their product, and
centering changes that bound by at most a factor of two. Thus
$\|X_i\|_{\psi_1}\le K_X$ for one absolute constant $K_X$. Bernstein's
inequality then gives an absolute constant $K_{\rm B}$ such that, for every
$s>0$,
\begin{equation}
    \PP\!\left(
        \left|\frac{1}{\ntr}\sum_{i=1}^{\ntr}X_i\right|
        >K_{\rm B}\left(\sqrt{\frac{s}{\ntr}}+\frac{s}{\ntr}\right)
    \right)
    \le2e^{-s}.
    \label{eq:quadratic_fixed_direction_tail}
\end{equation}
Let $\mathcal N_a$ and $\mathcal N_b$ be $1/4$-nets of the corresponding
unit spheres. Their product has cardinality at most $9^{m_a+m_b}$, and any
$\bs B\in\RR^{m_a\times m_b}$ satisfies
$
    \|\bs B\|_{\mathrm{op}}
    \le2\max_{\bs u\in\mathcal N_a,\bs v\in\mathcal N_b}
    |\bs u^T\bs B\bs v|
$.
Apply~\cref{eq:quadratic_fixed_direction_tail} with $s=q_\delta$. Since
$m_a+m_b\le2\nout$, a union bound over the two nets and all four block pairs
has failure probability at most
$8\,9^{2\nout}e^{-q_\delta}=\delta/4$. Absorbing the net factor $2$ into
$K_{\rm B}$ proves~\cref{eq:quadratic_concentration_complete}.
\end{proof}

\begin{lemma}[Concentration of the linear matrix terms]
\label{lem:linear_matrix_concentration}
For $a\in\{1,2\}$, define
$\hatbs A_a=\ntr^{-1}\sum_{i=1}^{\ntr}\bs p_{i,1}\hatbs\xi_{i,a}^T$,
where $\bs p_{i,1}\in\RR^{m_1}$ is the first block of $\bs p_i$. Then,
simultaneously for $a\in\{1,2\}$, with probability at least $1-\delta$,
\begin{equation}
    \|\hatbs A_a\|_{\mathrm{op}}
    \le2\sqrt{\frac{q_\delta}{\ntr}}
    \le2r_\delta.
    \label{eq:linear_matrix_concentration_complete}
\end{equation}
Consequently,
\begin{equation}
    \left\|
        \frac{1}{\ntr\sqrt N}\sum_{i=1}^{\ntr}
        \bs p_{i,1}\hatbs\xi_{i,a}^T
    \right\|_{\mathrm{op}}
    \le\frac{2r_\delta}{\sqrt N}.
\end{equation}
\end{lemma}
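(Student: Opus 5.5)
The plan is to follow the template of \cref{lem:ell_concentration}: a fixed-direction subgaussian tail, followed by a net argument over the unit spheres.

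First fix unit vectors $\bs u\in\RR^{m_1}$ and $\bs v\in\RR^{m_a}$ and consider the scalar
\[
    \bs u^T\hatbs A_a\bs v=\ntr^{-1}\sum_{i=1}^{\ntr}(\bs u^T\bs p_{i,1})\,(\bs v^T\hatbs\xi_{i,a}).
\]
Each summand is linear in the independent columns $\hatbs\xi_i$, since $\bs v^T\hatbs\xi_{i,a}=(U_a\bs v)^T\hatbs\xi_i$ with $\|U_a\bs v\|_2=1$. Hence \cref{lem:multinomial_subgaussian_direct}, applied with vector $(\bs u^T\bs p_{i,1})U_a\bs v$ and rescaled by $\sqrt N$, gives
\[
    \EE\exp(\lambda\,\bs u^T\hatbs A_a\bs v)\le\exp\Bigl(\frac{\lambda^2}{4\ntr^2}\sum_i(\bs u^T\bs p_{i,1})^2\Bigr)\le\exp\Bigl(\frac{\lambda^2}{4\ntr}\Bigr).
\]
The last step uses $|\bs u^T\bs p_{i,1}|\le\|\bs p_i\|_2\le1$, because $\bs p_i$ is a probability vector. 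The point is that $\bs p_{i,1}$ plays the role that $y_i$ played before, with $B_y$ replaced by $1$. The Chernoff bound then gives
\[
    \PP\bigl(|\bs u^T\hatbs A_a\bs v|>t\bigr)\le2e^{-\ntr t^2}.
\]

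Next take $1/4$-nets $\mathcal N_1,\mathcal N_a$ with $|\mathcal N_1||\mathcal N_a|\le9^{m_1+m_a}\le9^{2\nout}$. Use the standard bilinear net inequality $\|\bs B\|_{\rm op}\le2\max|\bs u^T\bs B\bs v|$; the constant is $1/(1-2\epsilon-\epsilon^2)\le 16/7<\ldots$, and the factor $2$ is adequate once $1/4$-nets are used on both sides. Choose $t=\sqrt{q_\delta/\ntr}$. A union bound over the product net and over $a\in\{1,2\}$ bounds the failure probability by
\[
    2\cdot2\cdot9^{2\nout}e^{-q_\delta}=4\cdot\frac{\delta}{32}\le\delta,
\]
using the definition of $q_\delta$ in \cref{eq:def_concentration_radius}. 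On the complement, $\|\hatbs A_a\|_{\rm op}\le2\sqrt{q_\delta/\ntr}\le2r_\delta$, since $r_\delta\ge\sqrt{q_\delta/\ntr}$. Dividing by $\sqrt N$ yields the stated consequence, because $\hatbs\xi_i$ already carries the $\sqrt N$ normalization.

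The main obstacle is the net constant. With $1/4$-nets on both sides, the clean bound is $\|\bs B\|_{\rm op}\le(1-1/4-1/4)^{-1}\max|\bs u^T\bs B\bs v|=2\max|\bs u^T\bs B\bs v|$. This follows from $\bs u^T\bs B\bs v-\bs u_0^T\bs B\bs v_0$ splitting into two error terms, each at most $\|\bs B\|_{\rm op}/4$. It matches the factor $2$ in the claim, so no slack is lost. The remaining check is that the union-bound count fits inside $\delta$ given $q_\delta$, which it does with room to spare.
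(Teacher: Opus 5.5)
Your proposal is correct and follows essentially the same route as the paper: the fixed-direction MGF bound $\exp(\lambda^2/(4\ntr))$ via \cref{lem:multinomial_subgaussian_direct} using $\|\bs p_{i,1}\|_2\le1$, the tail $2e^{-\ntr t^2}$, $1/4$-nets on both spheres with the factor-$2$ bilinear net inequality, the choice $t=\sqrt{q_\delta/\ntr}$, and a union bound giving $\delta/16$ per block and $\delta/8$ overall. The aside about $1/(1-2\epsilon-\epsilon^2)=16/7$ comes from a three-term split and can be dropped; your two-term split correctly gives exactly the factor $2$.
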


\begin{proof}
Fix unit vectors $\bs u\in\RR^{m_1}$ and $\bs v\in\RR^{m_a}$. The
subgaussian moment-generating-function bound gives
\begin{equation}
    \EE\exp\!\left(\lambda\bs u^T\hatbs A_a\bs v\right)
    \le
    \exp\!\left(
        \frac{\lambda^2}{4\ntr^2}
        \sum_{i=1}^{\ntr}(\bs u^T\bs p_{i,1})^2
    \right)
    \le\exp\!\left(\frac{\lambda^2}{4\ntr}\right),
\end{equation}
where we used $\|\bs p_{i,1}\|_2\le\|\bs p_i\|_2\le1$. Hence $\PP(|\bs u^T\hatbs A_a\bs v|>t)\le2e^{-\ntr t^2}$.
Applying the same $1/4$-net argument as above with $t=\sqrt{q_\delta/\ntr}$ gives, for each $a$,
\begin{equation}
    \PP\!\left(
        \|\hatbs A_a\|_{\mathrm{op}}>2\sqrt{\frac{q_\delta}{\ntr}}
    \right)
    \le2\,9^{m_1+m_a}e^{-q_\delta}
    \le\frac{\delta}{16}.
\end{equation}
A union bound over the two values of $a$ completes the proof.
\end{proof}

We can now assemble the event needed for the deterministic proof:
\begin{corollary}[Simultaneous basic event]
\label{cor:prob_events}
There is a fixed absolute constant $K_{\rm S}>0$ such that, with
probability at least $1-\delta$, all of the following bounds hold
simultaneously, where $r_\delta$ is defined in
\cref{eq:def_concentration_radius}:
\begin{equation}
\begin{gathered}
    \|\Delta\hatbsS_{11}\|_{\mathrm{op}},\
    \|\Delta\hatbsS_{12}\|_{\mathrm{op}},\
    \|\Delta\hatbsS_{21}\|_{\mathrm{op}}
    \le\frac{K_{\rm S}r_\delta}{\sqrt N},
    \qquad
    \|\hatbs h\|_2
    \le
    K_{\rm B}\|\bs w_{\calO}\|_2\frac{r_\delta}{N},
    \\
    \|\Delta\hatbsS_{22}\|_{\mathrm{op}}
    \le\frac{K_{\rm S}r_\delta}{N},
    \qquad
    \|\Delta\hatbs\ell_a\|_2
    \le\frac{2B_yr_\delta}{\sqrt N},
    \quad a\in\{1,2\}.
    \label{eq:simultaneous_basic_event}
\end{gathered}
\end{equation}
Thus, for $C_0\ge\max\{1,K_{\rm S},K_{\rm B}\}$, we have $\PP(\calE_{C_0 r_\delta})\ge1-\delta$.
\end{corollary}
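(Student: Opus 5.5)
The plan is to run a union bound over the three concentration lemmas just proved, after first writing each block of $\Delta\hatbsS$ and the vector $\hatbs h$ explicitly in terms of the random matrices they control. Since $\hatbsP=\bsP+N^{-1/2}\hatbsXi$ and $\bsP_2=0$, we have
\[
\Delta\hatbsS_{ab}=\frac{1}{\ntr\sqrt N}\sum_i\left(\bs p_{i,a}\hatbs\xi_{i,b}^T+\hatbs\xi_{i,a}\bs p_{i,b}^T\right)+\frac1N\hatbs Q_{ab},
\]
with $\bs p_{i,2}=0$. Consequently $\Delta\hatbsS_{11}=N^{-1/2}(\hatbs A_1+\hatbs A_1^T)+N^{-1}\hatbs Q_{11}$ and $\Delta\hatbsS_{12}=N^{-1/2}\hatbs A_2+N^{-1}\hatbs Q_{12}$. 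The block $\Delta\hatbsS_{21}$ is the transpose of $\Delta\hatbsS_{12}$, and $\Delta\hatbsS_{22}=N^{-1}\hatbs Q_{22}$ is purely quadratic. Likewise, \cref{eq:def_h} gives $\hatbs h=N^{-1}\hatbs Q_{21}U_1^T\bs w_\calO$, and $\Delta\hatbs\ell_a$ is exactly the quantity bounded in \cref{lem:ell_concentration}.

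Next, I apply \cref{lem:ell_concentration,lem:quadratic_concentration,lem:linear_matrix_concentration}, each with failure probability $\delta/3$. By a union bound, with probability at least $1-\delta$ all three conclusions hold simultaneously. Replacing $\delta$ by $\delta/3$ only changes $q_\delta$ by the additive constant $\log 3$. This can be absorbed either by enlarging the constants, since $r_{\delta/3}\le c\,r_\delta$ for an absolute $c$ because $q_\delta\ge\log 32$, or by noting that the lemmas' own proofs already spend only $\delta/8$, $\delta/4$ and $\delta/8$, which sum to less than $\delta$. The second route is cleaner: I would rerun each lemma's union-bound count to confirm that the stated $q_\delta$ already leaves that margin, so the corollary holds with the same $r_\delta$.

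On this event, the triangle inequality gives $\|\Delta\hatbsS_{11}\|_{\rm op}\le(4r_\delta+K_{\rm B}r_\delta/\sqrt N)/\sqrt N\le(4+K_{\rm B})r_\delta/\sqrt N$, using $N\ge1$. The off-diagonal blocks are bounded by $(2+K_{\rm B})r_\delta/\sqrt N$. The remaining bounds follow directly: $\|\Delta\hatbsS_{22}\|_{\rm op}\le K_{\rm B}r_\delta/N$, and $\|\hatbs h\|_2\le K_{\rm B}\|U_1^T\bs w_\calO\|_2r_\delta/N\le K_{\rm B}\|\bs w_\calO\|_2r_\delta/N$. Setting $K_{\rm S}=4+K_{\rm B}$ proves \cref{eq:simultaneous_basic_event}.

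For the final claim, take $u=C_0r_\delta$ with $C_0\ge\max\{1,K_{\rm S},K_{\rm B}\}$. Each bound in \cref{eq:simultaneous_basic_event} is then at most the corresponding threshold in \cref{eq:def_good_event}. In particular, the threshold for $\Delta\hatbs\ell_a$ is $2B_yu/\sqrt N\ge 2B_yr_\delta/\sqrt N$ because $C_0\ge1$. Hence the simultaneous event is contained in $\calE_{C_0r_\delta}$, and $\PP(\calE_{C_0r_\delta})\ge1-\delta$.

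The only delicate point is the probability bookkeeping in the union bound; the norm estimates themselves are routine.
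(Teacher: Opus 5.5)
Your proposal is correct and follows the paper's proof: the same exact decompositions of the $\Delta\hatbs S_{ab}$ blocks into $\hatbs A_a/\sqrt N$ and $\hatbs Q_{ab}/N$ pieces, the identity $\hatbs h=N^{-1}\hatbs Q_{21}U_1^T\bs w_\calO$, and a union bound. The union bound works because the common $q_\delta$ leaves total failure probability $\delta/8+\delta/4+\delta/8<\delta$, which is your second bookkeeping route and the one the paper uses. Your explicit choice $K_{\rm S}=4+K_{\rm B}$ simply makes concrete the constant the paper leaves implicit.
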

\begin{proof}
The block fluctuations have the exact decompositions
\begin{equation}
\begin{gathered}
    \Delta\hatbsS_{11}
    =\frac{\hatbs A_1+\hatbs A_1^T}{\sqrt N}
    +\frac{\hatbs Q_{11}}{N},
    \qquad
    \Delta\hatbsS_{12}
    =\frac{\hatbs A_2}{\sqrt N}
    +\frac{\hatbs Q_{12}}{N},
    \\
    \Delta\hatbsS_{21}
    =\frac{\hatbs A_2^T}{\sqrt N}
    +\frac{\hatbs Q_{21}}{N},
    \qquad
    \Delta\hatbsS_{22}
    =\frac{\hatbs Q_{22}}{N}.
    \label{eq:block_fluctuation_decompositions}
\end{gathered}
\end{equation}
Combining
\cref{lem:ell_concentration,lem:quadratic_concentration,lem:linear_matrix_concentration}
and using $N^{-1}\le N^{-1/2}$ proves the asserted block and vector
bounds.  The common choice of $q_\delta$ makes their total failure
probability smaller than $\delta$.
Finally, the bound on $\hatbs h$ follows from
\cref{eq:def_h,eq:quadratic_concentration_complete}
with $(a,b)=(2,1)$.
\end{proof}

It now remains to compare the concentration radius in \cref{cor:prob_events} with the radius for which the inverse bounds hold, i.e. with \cref{eq:deterministic_smallness}.
Under the standing uniform conditioning assumptions, a fixed threshold can enforce those bounds for every $N\ge1$.
Our main objective here is thus to bound the $u$ in \cref{eq:R_bound} into a factor that depends on the confidence interval $\delta$.
\begin{theorem}[High-probability remainder bound]
\label{thm:remainder_high_probability}
For fixed $d$ and $\nout$, there are deterministic constants $u_\star,C_R,C_0>0$, independent of $N$ and $\ntr$, such that for any $\delta$ satisfying $C_0 r_\delta\le u_\star$, with probability at least $1-\delta$, we have
$
    |\hat R|
    \le C_R\frac{r_\delta^2}{\sqrt N}
$.
\end{theorem}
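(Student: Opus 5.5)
The plan is to combine the deterministic remainder bound \cref{eq:R_bound} with the simultaneous concentration event of \cref{cor:prob_events}, choosing $u=C_0 r_\delta$. By \cref{cor:prob_events}, with $C_0\ge\max\{1,K_{\rm S},K_{\rm B}\}$, the event $\calE_{C_0r_\delta}$ holds with probability at least $1-\delta$.

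On this event, \cref{eq:R_bound} yields $|\hat R|\le K_R C_0^2 r_\delta^2/\sqrt N$ provided $u=C_0r_\delta$ satisfies the smallness conditions \cref{eq:deterministic_smallness}. The natural choice is therefore $C_R\equiv K_RC_0^2$.

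The task thus reduces to exhibiting a threshold $u_\star$, independent of $N$ and $\ntr$, such that every $u\le u_\star$ satisfies \cref{eq:deterministic_smallness} for every $N\ge1$. The first two conditions are handled by taking $u_\star\le\min\{1,\kappa_{22}/2\}$. For the third, I would use $N^{-1/2}\le1$ to bound $K_S^{\poly}u/\sqrt N+K_S^{\rm nl}u^2\le K_S^{\poly}u+K_S^{\rm nl}u^2$. The right-hand side is increasing in $u$ and independent of $N$, so it suffices to pick $u_\star>0$ with $K_S^{\poly}u_\star+K_S^{\rm nl}u_\star^2\le\kappa_{11}/2$, for instance $u_\star\le\kappa_{11}/(4K_S^{\poly})$ together with $u_\star\le\sqrt{\kappa_{11}/(4K_S^{\rm nl})}$.

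All of $K_S^{\poly}$, $K_S^{\rm nl}$, $\kappa_{11}$, $\kappa_{22}$ and $K_R$ depend only on the POVM, $\calO$, $\sigma$ and the conditioning constants of \cref{eq:direct_uniform_conditioning_2}, never on $N$ or $\ntr$. This is exactly the uniformity required. The hypothesis $C_0r_\delta\le u_\star$ then places $u$ in the admissible range.

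The main point to verify carefully is that every constant entering $K_R$ is genuinely uniform in $\ntr$. In particular, $\|\bs w_\calO\|_2$, $\|\bsp_\sigma\|_2$, $B_y\le\|\calO\|_\infty$ and $\|\overline{\bs\Sigma}_{12}\overline{\bs\Sigma}_{22}^{-1}\|_{\rm op}\le1/\kappa_{22}$ (using $\|\overline{\bs\Sigma}\|_{\rm op}\le1$) must each be bounded without reference to $\ntr$.

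A second check is the mismatch between the $\hatbs h$ bound in \cref{cor:prob_events}, which reads $K_{\rm B}\|\bs w_\calO\|_2 r_\delta/N$, and the requirement $\|\bs w_\calO\|_2u/N$ in $\calE_u$. This is resolved by $C_0\ge K_{\rm B}$. Likewise, the $\Delta\hatbs\ell_a$ bound is compatible with $\calE_u$ because $C_0\ge1$.

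Once these are confirmed, the theorem follows immediately. No further probabilistic input is needed: the exponentially rare events where $C_0r_\delta>u_\star$ are excluded by the hypothesis and are deferred to \cref{subsec:direct_mse_bound}.
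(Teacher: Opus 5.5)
Your proposal is correct and follows the paper's proof essentially verbatim: you use the same threshold $u_\star=\min\{1,\kappa_{22}/2,\kappa_{11}/(4K_S^{\poly}),\sqrt{\kappa_{11}/(4K_S^{\rm nl})}\}$, the same choice $u=C_0r_\delta$ with $C_0\ge\max\{1,K_{\rm S},K_{\rm B}\}$ (so that $\calE_{C_0r_\delta}$ has probability at least $1-\delta$), and the same constant $C_R=K_RC_0^2$ coming from the deterministic bound on the good event. Your explicit uniformity checks, such as $\|\overline{\bs\Sigma}_{12}\overline{\bs\Sigma}_{22}^{-1}\|_{\rm op}\le1/\kappa_{22}$, usefully spell out what the paper only asserts; the one slip is your closing remark, which treats the deterministic condition $C_0r_\delta\le u_\star$ as if it were an event, whereas the realizations outside $\calE_{C_0r_\delta}$ are simply absorbed into the failure probability $\delta$.
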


\begin{proof}
Choose $C_0\ge\max\{1,K_{\rm S},K_{\rm B}\}$ as in~\cref{cor:prob_events}, and set
\begin{equation}\label{eq:direct_uniform_threshold}
    u_\star\equiv
    \min\left\{
        1,\,
        \frac{\kappa_{22}}{2},\,
        \frac{\kappa_{11}}{4K_S^{\poly}},\,
        \sqrt{\frac{\kappa_{11}}{4K_S^{\rm nl}}}
    \right\}.
\end{equation}
With the uniform constants chosen above, $u_\star$ is independent of $N$ and $\ntr$. For every $N\ge1$ and $u\le u_\star$, $K_S^{\poly}u/\sqrt N+K_S^{\rm nl}u^2 \le\kappa_{11}/2$.
Thus~\cref{eq:R_bound,eq:good_event_invertibility} applies on $\calE_u$.
If $C_0r_\delta\le u_\star$, the event $\calE_{C_0r_\delta}$ has probability at least $1-\delta$, and on that event $|\hat R|\le K_R C_0^2r_\delta^2/\sqrt N$.
The claim follows with $C_R\equiv K_R C_0^2$.
\end{proof}

The smallness condition on $r_\delta$ can also be written directly as a requirement on $\ntr$. With
$c_\star\equiv
(\sqrt{1+4u_\star/C_0}-1)^2/4$, the definition
in~\cref{eq:def_concentration_radius} gives
\begin{equation}\label{eq:fixed_confidence_sample_requirement}
    C_0r_\delta\le u_\star
    \quad\Longleftrightarrow\quad
    \ntr\ge\frac{q_\delta}{c_\star}
    =\frac{2\nout\log9+\log(32/\delta)}{c_\star}.
\end{equation}
Thus for fixed dimensions, training states, and $\delta$, if $\ntr \ge q_\delta/c_\star$, then
\begin{equation}\label{eq:remainder_op_scaling}
    \PP\!\left(|\hat R|\le\frac{4C_Rq_\delta}{\ntr\sqrt N}\right)\ge1-\delta.
\end{equation}
For $\hat R$ to be small relative to $|B|=|b_N|/N$ and the fluctuation, we also require $4C_Rq_\delta\sqrt N\ll |b_N|\ntr$ and $4C_Rq_\delta\ll \ntr\sqrt{N\EE[\hat L^2]}$, respectively, which are satisfied for sufficiently large $\ntr$.

\subsection{Mean-squared error bound}
\label{subsec:direct_mse_bound}

\Cref{sec:highprobbounds}, and in particular \cref{thm:remainder_high_probability}, controls the magnitude of $\hat R$ on typical training realizations.
However, to bound the MSE we need control on $\hat R$ \textit{on average}, which means we must make sure that rare realizations with very ill-conditioned matrices do not blow up $\hat R$ enough to make the MSE expansion invalid.
We therefore split the remainder's second moment over the fixed event $\calE_{u_\star}$ and its complement.  The threshold $u_\star$ is the deterministic inverse-stability radius defined in \cref{eq:direct_uniform_threshold}; unlike the variable radius $u$ used to state the family of events $\calE_u$, this is independent of the confidence level, $N$, and $\ntr$.  On $\calE_{u_\star}$ we integrate the concentration tail, while on its complement we use a global bound on $\|\hatbsP^+\|_{\rm op}$.

\begin{lemma}[Remainder on the conditioning event]
\label{lem:good_event_remainder_moment}
    Under \cref{eq:direct_uniform_conditioning_2}, for $\ntr\ge\nout$ and every $N\ge1$,
    \begin{equation}\label{eq:remainder_good_event_second_moment}
        \EE[\hat R^2\boldsymbol1_{\calE_{u_\star}}]
        \le K_{\rm good}\frac{\nout^2}{N\ntr^2},
    \end{equation}
    where $K_{\rm good}$ is independent of $N,\ntr$.
\end{lemma}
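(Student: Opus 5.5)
The plan is to integrate the high-probability bound of \cref{thm:remainder_high_probability} over the confidence level, while staying on the fixed event $\calE_{u_\star}$. The key observation is that on $\calE_{u}$ with $u\le u_\star$ the deterministic estimate \cref{eq:R_bound} gives $|\hat R|\le K_R u^2/\sqrt N$. The events $\calE_u$ are nested in $u$. So on $\calE_{u_\star}$ we can define the random ``effective radius'' $\hat u\equiv\inf\{u\le u_\star:\calE_u\text{ holds}\}$. Then $\hat R^2\boldsymbol1_{\calE_{u_\star}}\le K_R^2\hat u^4/N$ pointwise, since \cref{eq:R_bound} holds on every $\calE_u$, $u\le u_\star$. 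This reduces the claim to the moment bound $\EE[\hat u^4\boldsymbol1_{\calE_{u_\star}}]\lesssim \nout^2/\ntr^2$.

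To bound this fourth moment I will use \cref{cor:prob_events}, which gives $\PP(\hat u> C_0 r_\delta,\ \calE_{u_\star})\le\PP(\calE_{C_0r_\delta}^c)\le\delta$ whenever $C_0r_\delta\le u_\star$. Writing $s=\log(32/\delta)$ and $q=2\nout\log9+s$, the radius is $r=\sqrt{q/\ntr}+q/\ntr$. Hence for $t$ in the range $C_0\sqrt{(2\nout\log 9)/\ntr}\lesssim t\le u_\star$ the tail satisfies $\PP(\hat u>t,\calE_{u_\star})\le 32\,\exp(2\nout\log9-c\,\ntr\min(t^2,t))$, with $c$ depending only on $C_0$. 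For $t>u_\star$ the probability is zero by definition of $\hat u$. I then apply the layer-cake formula $\EE[\hat u^4\boldsymbol1]=\int_0^{u_\star}4t^3\PP(\hat u>t,\calE_{u_\star})\,dt$ and split at $t_0\asymp\sqrt{\nout/\ntr}$. The part below $t_0$ contributes at most $t_0^4\asymp\nout^2/\ntr^2$. Above $t_0$ the tail is a subgaussian/subexponential integral. Because $u_\star\le1$ the quadratic regime dominates, and it gives $O(\ntr^{-2})$ with constants absorbed once $t_0^2\ntr\gtrsim\nout$ cancels the $9^{2\nout}$ factor. Combining the pieces gives $\EE[\hat u^4\boldsymbol 1]\le K\nout^2/\ntr^2$, and hence \cref{eq:remainder_good_event_second_moment} with $K_{\rm good}=K_R^2K$.

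The main obstacle is the bookkeeping near the edge of the allowed range of $\delta$. The corollary is only stated for $C_0r_\delta\le u_\star$, which requires $\ntr\ge q_\delta/c_\star$. For $\ntr$ merely $\ge\nout$, the set of admissible $\delta$ may be empty. In that regime I will simply use the trivial bound $\hat u\le u_\star\le1$, which gives $\EE[\hat u^4\boldsymbol1]\le1\le\ntr^2/\nout^2\cdot(\nout^2/\ntr^2)$. Since $\ntr<q_{\delta}/c_\star$ with $\delta$ of order one forces $\ntr\lesssim\nout$, the ratio is bounded, and this case is absorbed into $K_{\rm good}$ (for $\ntr\lesssim \nout/c_\star$). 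I also need to check that the cancellation structure used in \cref{eq:R_bound} (through $\hatbs h$) holds for every $u\le u_\star$ and not just for one fixed $u$. That is immediate because all the constants in \cref{eq:def_KR} were derived uniformly under \cref{eq:deterministic_smallness}.
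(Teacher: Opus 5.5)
Your proposal is correct and follows essentially the paper's proof. The paper likewise defines the realized radius $\mathfrak e=\inf\{u\ge0:\calE_u\text{ occurs}\}$, uses \cref{eq:R_bound} to get $\hat R^2\le K_R^2\mathfrak e^4/N$ on $\calE_{u_\star}=\{\mathfrak e\le u_\star\}$, and integrates the tail from \cref{cor:prob_events} to obtain $\EE[\mathfrak e^4]\lesssim\nout^2/\ntr^2+\nout^4/\ntr^4\lesssim\nout^2/\ntr^2$. The edge case you flag as the main obstacle does not arise: \cref{cor:prob_events} holds for every $\delta\in(0,1)$, and the restriction $C_0r_\delta\le u_\star$ belongs only to \cref{thm:remainder_high_probability}, so the paper bounds the unrestricted fourth moment of $\mathfrak e$ without any case split.
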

\begin{proof}
The deterministic bound in \cref{eq:R_bound} makes $\hat R$ quadratic in the moment fluctuations, so a second-moment estimate for $\hat R$ requires a fourth-moment estimate for their size.
To integrate all componentwise bounds at once, define the realized fluctuation radius
\begin{equation}\label{eq:def_realized_radius}
    \mathfrak e\equiv\inf\{u\ge0:\calE_u\text{ occurs}\}.
\end{equation}
The events $\calE_u$ are nested, so $\mathfrak e$ is simply the smallest radius that bounds every quantity in \cref{eq:def_good_event}; equivalently, $\{\mathfrak e\le u\}=\calE_u$.

By \cref{cor:prob_events}, $\PP(\mathfrak e>C_0r_\delta)\le\delta$.  For $t\ge\log16$, set $\delta=16e^{-t}$ and define
\begin{equation}\label{eq:def_tail_constant}
    C_q\equiv2\log9+\log2+1,
    \qquad K_{\rm tail}\equiv C_0\max\{\sqrt{C_q},C_q\}.
\end{equation}
Since $q_\delta=2\nout\log9+t+\log2\le C_q(\nout+t)$ for $\nout\ge1$, the definition of $r_\delta$ yields
\begin{equation}\label{eq:normalized_fluctuation_tail}
    \PP\!\left(\mathfrak e>K_{\rm tail}\left[\sqrt{\frac{\nout+t}{\ntr}}+\frac{\nout+t}{\ntr}\right]\right)\le16e^{-t},\qquad t\ge0.
\end{equation}
For $0\le t<\log16$, the right-hand side $16e^{-t}$ is at least one, so the bound is trivial.
To convert the tail bound into a fourth moment, put
$g(s)=K_{\rm tail}[\sqrt{(\nout+\log16+s)/\ntr}+(\nout+\log16+s)/\ntr]$.
Taking $t=\log16+s$ in \cref{eq:normalized_fluctuation_tail} gives $\PP(\mathfrak e>g(s))\le e^{-s}$.
For any nonnegative random variable, its fourth moment is the area under the tail of its fourth power.
Splitting that area at $g(0)^4$ and using the preceding bound gives
\begin{equation}
    \EE[\mathfrak e^4]\le g(0)^4+\int_0^\infty e^{-s}\,\rmd[g(s)^4]
    =\int_0^\infty e^{-s}g(s)^4\,\rmd s
    \le C\left[\frac{\nout^2}{\ntr^2}+\frac{\nout^4}{\ntr^4}\right].
\end{equation}
Here, the equality is obtained integrating by parts; the boundary at infinity vanishes because $g$ grows only linearly.  Since $\ntr\ge\nout$, the last expression is at most $C\nout^2/\ntr^2$ for a universal constant $C$.  On the fixed conditioning event $\calE_{u_\star}=\{\mathfrak e\le u_\star\}$, \cref{eq:R_bound} gives $\hat R^2\le K_R^2\mathfrak e^4/N$, which proves \cref{eq:remainder_good_event_second_moment}.
\end{proof}

\Cref{lem:good_event_remainder_moment} handles the contributions from $\calE_{u^\star}$. On $\calE_{u^\star}^c$, however, the Schur inverse formulas used in the previous sections do not apply, so we must return to working directly with the pseudoinverse $\hatbsP^+_N$, which remains defined rank-deficient realizations.
We will leverage that the discreteness of the shot counts gives a coarse bound on its largest possible singular value, and then exploit the exponentially small probability of these unlikely events to conclude that $\EE[\hat R^2 \bs1_{\calE_{u^\star}^c}]$ is also negligible.

\begin{lemma}[Integer-frequency pseudoinverse bound]
\label{lem:integer_pseudoinverse_bound}
For every realization of a multinomial frequency matrix,
\begin{equation}\label{eq:integer_pseudoinverse_bound}
    \|\hatbsP^+\|_{\rm op}\le\nout^{\nout-1}N^{\nout}.
\end{equation}
\end{lemma}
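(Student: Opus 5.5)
The plan is to exploit the fact that $\bs A\equiv N\hatbsP$ has nonnegative integer entries (the raw counts), with each column summing to $N$. Since $\hatbsP^+=N\bs A^+$, we have $\|\hatbsP^+\|_{\rm op}=N/\sigma_k(\bs A)$, where $k=\rank(\bs A)\le\nout$ and $\sigma_k$ denotes the smallest nonzero singular value. If $k=0$, then $\hatbsP^+=0$ and the claim is trivial. Otherwise it suffices to show $\sigma_k(\bs A)\ge(N\sqrt{\nout})^{-(k-1)}$, which gives $\|\hatbsP^+\|_{\rm op}\le N^{k}\nout^{(k-1)/2}\le\nout^{\nout-1}N^{\nout}$.

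The key point is that the bound must not depend on $\ntr$, so we cannot work with $\bs A$ as a whole. Its largest singular value can be of order $N\sqrt{\ntr}$. Instead we reduce to a $k\times k$ integer submatrix in two monotonicity steps.

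\emph{Column reduction.} Choose a set $S$ of $k$ linearly independent columns of $\bs A$. Then $\bs A\bs A^T\succeq\bs A_S\bs A_S^T$, so by Weyl's monotonicity the $k$-th largest eigenvalues satisfy $\sigma_k(\bs A)\ge\sigma_k(\bs A_S)$. Here $\bs A_S$ is $\nout\times k$ with rank $k$.

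\emph{Row reduction.} Choose $k$ rows $T$ such that the square block $\bs A_{TS}$ is nonsingular. Deleting rows can only decrease singular values, because $\bs A_S^T\bs A_S\succeq\bs A_{TS}^T\bs A_{TS}$. Hence $\sigma_k(\bs A_S)\ge\sigma_{\min}(\bs A_{TS})$.

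For the nonsingular integer matrix $\bs A_{TS}$ we have $|\det\bs A_{TS}|\ge1$. Since $|\det\bs A_{TS}|=\prod_{j}\sigma_j(\bs A_{TS})$, it follows that $\sigma_{\min}(\bs A_{TS})\ge\|\bs A_{TS}\|_{\rm op}^{-(k-1)}$. It remains to bound the operator norm by the Frobenius norm. Each column of $\bs A_{TS}$ is a subvector of a count vector with nonnegative entries summing to $N$, so its Euclidean norm is at most $N$. Therefore $\|\bs A_{TS}\|_{\rm op}\le\|\bs A_{TS}\|_F\le N\sqrt{k}\le N\sqrt{\nout}$. Chaining these inequalities gives the claimed bound.

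The only step that needs real care is the first one: recognizing that the naive estimate via the full matrix introduces a dependence on $\ntr$, and that the column and row reductions remove it. The remaining steps are elementary.
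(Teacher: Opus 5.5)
Your proof is correct and follows the same route as the paper's. You pass to the integer count matrix $N\hatbsP$, extract a nonsingular $k\times k$ integer minor with $|\det|\ge1$, lower-bound its smallest singular value by dividing out the top $k-1$ singular values, and transfer this to the full matrix by singular-value interlacing under row and column deletion, which you justify explicitly through the two Loewner-order steps.

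The only difference is that you bound the minor's norm by $N\sqrt{k}$ using the column sums, rather than the paper's cruder $kN$. This gives the slightly sharper intermediate bound $N^{k}k^{(k-1)/2}$ before you relax it to $\nout^{\nout-1}N^{\nout}$.
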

\begin{proof}
Let $s=\rank(\hatbsP)$.  If $s=0$, the pseudoinverse vanishes.  Otherwise, $\bs A=N\hatbsP$ has an invertible $s\times s$ integer minor $\bs B$, so $|\det\bs B|\ge1$ and $\|\bs B\|_{\rm op}\le sN$.  The product of the largest $s-1$ singular values is at most $(sN)^{s-1}$, hence $\sigma_s(\bs B)\ge(sN)^{-(s-1)}$.  Singular-value interlacing under row and column restriction gives $\sigma_s(\bs A)\ge\sigma_s(\bs B)$.  Therefore $\|\hatbsP^+\|_{\rm op}=N/\sigma_s(\bs A)\le s^{s-1}N^s\le\nout^{\nout-1}N^{\nout}$.
\end{proof}

\begin{lemma}[Exceptional realizations]
\label{lem:bad_event_remainder}
For every realization there is a constant $K_{\rm bad}$, independent of $N$ and $\ntr$, such that
\begin{equation}\label{eq:bad_event_deterministic_bound}
    |\hat R|^2\le K_{\rm bad}\ntr(N\nout)^{2\nout}.
\end{equation}
Define
\begin{equation}\label{eq:def_u0}
    u_0\equiv\min\left\{1,\left[\frac{u_\star}{2K_{\rm tail}}\right]^2,
    \frac{u_\star}{2K_{\rm tail}}\right\}.
\end{equation}
The role of $u_0$ is only to choose a tail parameter for which the radius in \cref{eq:normalized_fluctuation_tail} is no larger than the fixed stability threshold $u_\star$.  If $u_0\ntr\ge\nout$, then
\begin{equation}\label{eq:bad_event_second_moment}
    \PP(\calE_{u_\star}^c)\le16e^{-u_0\ntr+\nout},
    \qquad
    \EE[\hat R^2\boldsymbol1_{\calE_{u_\star}^c}]
    \le16K_{\rm bad}\ntr(N\nout)^{2\nout}e^{-u_0\ntr+\nout}.
\end{equation}
\end{lemma}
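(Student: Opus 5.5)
The proof splits into a deterministic bound valid for every realization and a probability estimate for the complement of $\calE_{u_\star}$; the second-moment bound then follows by multiplying the two.

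\emph{Deterministic bound.} We write $\hat R=\Delta\hat o-B-\hat L$ and bound each of the three pieces on every realization.

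For $\Delta\hat o=\bs y^T\hatbsP^+\bsp_\sigma-\bs y^T\bsP^+\bsp_\sigma$ we use \cref{lem:integer_pseudoinverse_bound}. The bounds $\|\bs y\|_2\le\sqrt{\ntr}B_y$ and $\|\bsp_\sigma\|_2\le1$ give
\[
|\bs y^T\hatbsP^+\bsp_\sigma|\le\sqrt{\ntr}B_y\nout^{\nout-1}N^{\nout}.
\]
The ideal term equals $\bs w_\calO^T\bsp_\sigma$, which is bounded by a constant.

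The bias $B=-b_N/N$ is deterministic and bounded uniformly, since $\|\bsS_{\rm eff}^{-1}\|_{\rm op}\le\kappa_{11}^{-1}$.

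For $\hat L$ we use its explicit expression: it is at most quadratic in the columns $\hatbs\xi_i$. Each column satisfies $\|\hatbs\xi_i\|_2\le 2\sqrt N$ deterministically, because $\hatbs p_i$ and $\bs p_i$ are probability vectors. Averaging over $i$ with the $1/\ntr$ normalizations then gives $|\hat L|\le C(1+N^{-1/2}\sqrt N+N^{-1}N)=O(1)$ uniformly.

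Squaring and using $(a+b+c)^2\le3(a^2+b^2+c^2)$, the dominant contribution is $\ntr B_y^2\nout^{2\nout-2}N^{2\nout}$. This is at most $K_{\rm bad}\ntr(N\nout)^{2\nout}$, with $K_{\rm bad}$ absorbing $B_y^2$, $\|\bs w_\calO\|$, the conditioning constants and the $O(1)$ pieces (using $N,\nout\ge1$).

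\emph{Probability of the bad event.} We invoke \cref{eq:normalized_fluctuation_tail} with $t=u_0\ntr$. Since $u_0\ntr\ge\nout$, we have $\nout+t\le 2u_0\ntr$, so the radius is bounded by
\[
K_{\rm tail}\bigl[\sqrt{2u_0}+2u_0\bigr].
\]
Each of the two terms must be at most $u_\star/2$. This requires $\sqrt{2u_0}\le u_\star/(2K_{\rm tail})$ and $2u_0\le u_\star/(2K_{\rm tail})$. The definition of $u_0$ gives this only up to factors of $2$, so this is where the constants need care: I would first try tightening $u_0$ by absolute factors, or using $\nout+t\le 2t$ more carefully. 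Those factors can be absorbed, for example by redefining $K_{\rm tail}$, without affecting the stated form.

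With the radius at most $u_\star$, we get $\PP(\mathfrak e>u_\star)\le16e^{-u_0\ntr}$. Since $e^{-u_0\ntr}\le16e^{-u_0\ntr+\nout}$, this yields $\PP(\calE_{u_\star}^c)\le16e^{-u_0\ntr+\nout}$; the extra $e^{\nout}$ slack is also what can absorb the net-size factors if the tail is instead applied at $t=u_0\ntr-\nout$.

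\emph{Second moment.} Finally,
\[
\EE[\hat R^2\boldsymbol1_{\calE_{u_\star}^c}]\le\sup|\hat R|^2\,\PP(\calE_{u_\star}^c),
\]
which gives \cref{eq:bad_event_second_moment}.

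The main obstacle is the constant bookkeeping linking $u_0$ to $u_\star$ through the tail radius; the deterministic bound is routine once \cref{lem:integer_pseudoinverse_bound} is available.
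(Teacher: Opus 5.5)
Your deterministic bound follows the paper's argument: the integer-frequency pseudoinverse bound for the empirical prediction, a bounded ideal term, a uniform bound on $B$ from $\bsS_{\rm eff}\ge\kappa_{11}$, and $|\hat L|=O(1)$ from the deterministic column bound. Your $|\langle\calO,\sigma\rangle|\le\|\calO\|_\infty$ for the ideal term is as good as the paper's $B_y/\sqrt{\kappa_{11}}$. The final multiplication of $\sup|\hat R|^2$ by $\PP(\calE_{u_\star}^c)$ is also the paper's.

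\textbf{The gap: the probability step with $t=u_0\ntr$.} With this choice the radius is at most $K_{\rm tail}(\sqrt{2u_0}+2u_0)$. Your half-and-half split then needs $K_{\rm tail}\sqrt{2u_0}\le u_\star/2$, which is false for the stated $u_0$. Whenever the squared entry of the minimum in \cref{eq:def_u0} is active, $K_{\rm tail}\sqrt{2u_0}=u_\star/\sqrt2$. Your proposed repair does not help. Redefining $K_{\rm tail}$ or shrinking $u_0$ proves a weaker lemma, because both constants are fixed by \cref{eq:def_tail_constant,eq:def_u0}.

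\textbf{The fix.} Use the choice you mention only in passing: apply \cref{eq:normalized_fluctuation_tail} at $t=u_0\ntr-\nout$.
\begin{itemize}
\item This $t$ is nonnegative exactly because of the hypothesis $u_0\ntr\ge\nout$; that is what the hypothesis is for.
\item Then $\nout+t=u_0\ntr$ exactly, so the radius equals $K_{\rm tail}(\sqrt{u_0}+u_0)$.
\item By the second and third entries of the minimum, this is at most $u_\star/2+u_\star/2=u_\star$.
\item The tail then gives directly $\PP(\calE_{u_\star}^c)=\PP(\mathfrak e>u_\star)\le16e^{-t}=16e^{-u_0\ntr+\nout}$.
\end{itemize}
So the factor $e^{\nout}$ is not slack for net-size factors. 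Those are already accounted for by the $\nout$ inside the tail radius. The $e^{\nout}$ is precisely the cost of shifting $t$.

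Your original route can be rescued, but not by an even split. Since $K_{\rm tail}\ge C_q>6$ and $u_\star\le1$, we have $u_0=a^2$ with $a=u_\star/(2K_{\rm tail})<1/12$. The radius at $t=u_0\ntr$ is then at most $u_\star(1/\sqrt2+a)<u_\star$. This rescue relies on an uneven split and on the numerical size of $K_{\rm tail}$.
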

\begin{proof}
By \cref{lem:integer_pseudoinverse_bound}, the empirical prediction obeys
\begin{equation}\label{eq:global_prediction_bound}
    |\bs y^T\hatbsP^+\bsp_\sigma|
    \le\|\bs y\|_2\|\hatbsP^+\|_{\rm op}\|\bsp_\sigma\|_2
    \le B_y\sqrt{\ntr}\,\nout^{\nout-1}N^{\nout}.
\end{equation}
The ideal term satisfies $|\bs y^T\bsP^+\bsp_\sigma|\le B_y/\sqrt{\kappa_{11}}$, because $\|\bs y\|_2\le B_y\sqrt{\ntr}$, $\|\bsp_\sigma\|_2\le1$, and $\|\bsP^+\|_{\rm op}\le(\kappa_{11}\ntr)^{-1/2}$.  Combining this with \cref{eq:global_prediction_bound} gives $|\Delta\hat o|\le C_\Delta\sqrt{\ntr}(N\nout)^{\nout}$.  In addition, $|B|\le C_B$ follows directly from \cref{eq:bias_exact} and uniform conditioning.  Finally, inserting the deterministic bound $\|\hatbs\xi_i\|_2\le\sqrt{2N}$ into the explicit column formula \cref{eq:def_gi} cancels its $N^{-1/2}$ and $N^{-1}$ prefactors and gives $|\hat L_i|\le C_L$; hence $|\hat L|\le C_L$.  Therefore
\begin{equation}
    |\hat R|^2=|\Delta\hat o-B-\hat L|^2
    \le3(C_\Delta^2\ntr(N\nout)^{2\nout}+C_B^2+C_L^2),
\end{equation}
which is \cref{eq:bad_event_deterministic_bound} after enlarging $K_{\rm bad}$.  Multiplying this deterministic bound by $\PP(\calE_{u_\star}^c)$ is the precise step that controls the exceptional contribution to $\EE[\hat R^2]$.
The deterministic estimate alone grows with $N$; its usefulness comes from the rarity of the event on which it is needed.
To combine the two, take $t=u_0\ntr-\nout$ in \cref{eq:normalized_fluctuation_tail}.
Its threshold is at most $u_\star$, proving the probability bound; multiplying by the deterministic bound proves the second claim.
\end{proof}

Combining \cref{lem:good_event_remainder_moment,lem:bad_event_remainder} controls the second moment of $\hat R$ over the entire multinomial experiment, including nearly singular empirical matrices.
The following theorem states the resulting finite-parameter control before taking any asymptotic limit.
It deliberately keeps $\EE[\hat L^2]$ exact: simplifying that variance at large $N$ is a separate step.

\begin{theorem}[Finite-$N$ control of the mean, variance, and MSE]
\label{prop:direct_moments}
Assume \cref{eq:direct_uniform_conditioning}, $N\ge1$, and $\ntr\ge\nout/u_0$.
Define the remainder bound
\begin{equation}\label{eq:total_remainder_second_moment}
    \EE[\hat R^2]\le\mathcal R^{(2)}_{N,\ntr}
    \equiv\frac{K_{\rm good}\nout^2}{N\ntr^2}
    +16K_{\rm bad}\ntr(N\nout)^{2\nout}e^{-u_0\ntr+\nout}.
\end{equation}
With $B=-b_N/N$ and the exact $\EE[\hat L^2]=\ntr^{-2}\sum_i\EE[\hat L_i^2]$ from \cref{eq:def_gi},
\begin{equation}\label{eq:direct_mean_variance_control}
\begin{aligned}
    |\EE[\Delta\hat o]+b_N/N|&\le\sqrt{\mathcal R^{(2)}_{N,\ntr}},
    \qquad
    |\Var(\Delta\hat o)-\EE[\hat L^2]|&\le2\sqrt{\EE[\hat L^2]\mathcal R^{(2)}_{N,\ntr}}+\mathcal R^{(2)}_{N,\ntr}.
\end{aligned}
\end{equation}
Moreover,
\begin{equation}\label{eq:MSE_remainder_bound}
    |\EE[\Delta\hat o^2]-B^2-\EE[\hat L^2]|
    \le2\bigl(|B|+\sqrt{\EE[\hat L^2]}\bigr)\sqrt{\mathcal R^{(2)}_{N,\ntr}}+\mathcal R^{(2)}_{N,\ntr}.
\end{equation}
\end{theorem}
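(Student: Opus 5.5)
The plan is to bound $\EE[\hat R^2]$ by splitting over the fixed conditioning event $\calE_{u_\star}$ and its complement. Everything else then follows from the exact decomposition $\Delta\hat o=B+\hat L+\hat R$ of \cref{eq:def_Deltao_andL}, with $B=-b_N/N$ deterministic and $\EE[\hat L]=0$, by Cauchy--Schwarz.

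\emph{Remainder bound.} Write $\EE[\hat R^2]=\EE[\hat R^2\boldsymbol1_{\calE_{u_\star}}]+\EE[\hat R^2\boldsymbol1_{\calE_{u_\star}^c}]$. \Cref{lem:good_event_remainder_moment} applies because $\ntr\ge\nout/u_0\ge\nout$, using $u_0\le1$, and it bounds the first term by $K_{\rm good}\nout^2/(N\ntr^2)$. The hypothesis $u_0\ntr\ge\nout$ is exactly what \cref{lem:bad_event_remainder} requires, and that lemma bounds the second term by $16K_{\rm bad}\ntr(N\nout)^{2\nout}e^{-u_0\ntr+\nout}$. Summing the two gives $\EE[\hat R^2]\le\mathcal R^{(2)}_{N,\ntr}$.

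\emph{Mean.} Taking expectations of the decomposition gives $\EE[\Delta\hat o]=B+\EE[\hat R]$, since $\EE[\hat L]=0$. Jensen then yields $|\EE[\Delta\hat o]+b_N/N|=|\EE\hat R|\le\sqrt{\EE[\hat R^2]}\le\sqrt{\mathcal R^{(2)}_{N,\ntr}}$.

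\emph{Variance.} Write $\Delta\hat o-\EE[\Delta\hat o]=\hat L+(\hat R-\EE\hat R)$. Then
\begin{equation*}
\Var(\Delta\hat o)=\EE[\hat L^2]+2\EE[\hat L(\hat R-\EE\hat R)]+\Var(\hat R).
\end{equation*}
Because $\hat L$ is centered, $\EE[\hat L\,\EE\hat R]=0$, so the cross term equals $2\EE[\hat L\hat R]$. Cauchy--Schwarz bounds this by $2\sqrt{\EE[\hat L^2]\EE[\hat R^2]}$. Finally $0\le\Var(\hat R)\le\EE[\hat R^2]$, which gives the stated two-sided bound.

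\emph{MSE.} Expand $\EE[(B+\hat L+\hat R)^2]=B^2+\EE[\hat L^2]+\EE[\hat R^2]+2B\EE[\hat R]+2\EE[\hat L\hat R]$; the $2B\,\EE[\hat L]$ term vanishes. Bound $|2B\EE\hat R|\le2|B|\sqrt{\mathcal R^{(2)}_{N,\ntr}}$ and $|2\EE[\hat L\hat R]|\le2\sqrt{\EE[\hat L^2]}\sqrt{\mathcal R^{(2)}_{N,\ntr}}$. This yields \cref{eq:MSE_remainder_bound} with $|B|=|b_N|/N$.

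The only substantive step is combining the two event lemmas, and in particular checking their hypotheses: $\ntr\ge\nout$ for the good-event lemma and $u_0\ntr\ge\nout$ for the exceptional-event lemma, both uniformly in $N\ge1$. Both are guaranteed by $\ntr\ge\nout/u_0$. The algebraic steps after that are routine.
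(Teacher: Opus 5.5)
Your proposal is correct and follows the paper's proof essentially step for step. The paper likewise adds \cref{lem:good_event_remainder_moment,lem:bad_event_remainder} to get $\EE[\hat R^2]\le\mathcal R^{(2)}_{N,\ntr}$, then uses $\EE[\hat L]=0$ and Cauchy--Schwarz on the identity $\EE[\Delta\hat o^2]-B^2-\EE[\hat L^2]=2B\EE[\hat R]+2\EE[\hat L\hat R]+\EE[\hat R^2]$ and on the expansion of $\Var(\hat L+\hat R)$; your explicit check that $\ntr\ge\nout/u_0$ with $u_0\le1$ supplies the hypotheses of both lemmas is a detail the paper leaves implicit.
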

\begin{proof}
Adding the good-event and exceptional-event estimates from \cref{lem:good_event_remainder_moment,lem:bad_event_remainder} gives \cref{eq:total_remainder_second_moment}.
We can therefore average the exact decomposition without excluding nearly singular realizations.
Centering removes the $B\hat L$ cross term, but not the correlations with the remainder:
$\EE[\hat L]=0$ gives
$\EE[\Delta\hat o]=B+\EE[\hat R]$ and
\begin{equation}\label{eq:mse_cross_terms}
    \EE[\Delta\hat o^2]-B^2-\EE[\hat L^2]
    =2B\EE[\hat R]+2\EE[\hat L\hat R]+\EE[\hat R^2].
\end{equation}
The inequalities $|\EE[\hat R]|\le\sqrt{\EE[\hat R^2]}$ and $|\EE[\hat L\hat R]|\le\sqrt{\EE[\hat L^2]\EE[\hat R^2]}$ applied to \cref{eq:mse_cross_terms} prove the first line of \cref{eq:direct_mean_variance_control} and \cref{eq:MSE_remainder_bound}.
For the variance, the deterministic $B$ drops out; expanding $\Var(\hat L+\hat R)$ and using $\Var(\hat R)\le\EE[\hat R^2]$ gives the stated bound.
Thus the mean is controlled by comparing $\sqrt{\mathcal R^{(2)}_{N,\ntr}}$ with $|B|$, whereas the variance is controlled by comparing $\mathcal R^{(2)}_{N,\ntr}$ with $\EE[\hat L^2]$.  These are different relative-error tests even though both use the same absolute remainder bound.
\end{proof}

\parTitle{Growth conditions and relative errors}
\Cref{prop:direct_moments} gives finite-parameter absolute errors for the mean, variance, and MSE whenever the uniform conditioning assumption \cref{eq:direct_uniform_conditioning} holds, $N\ge1$, and $\ntr\ge\nout/u_0$.
To obtain a scaling for the MSE, we must ensure that the exceptional contributions are negligible, which means that the exponential term in \cref{eq:total_remainder_second_moment} must be small compared to the scale of the other MSE contributions.
For example, to make it negligible relative to $1/(N\ntr^2)$, a sufficient condition is
\begin{equation}\label{eq:exceptional_relative_condition}
    u_0\ntr-(2\nout+1)\log N-3\log\ntr-2\nout\log\nout-\nout\longrightarrow+\infty.
\end{equation}
At fixed dimensions and uniform conditioning, the simpler sufficient condition
\begin{equation}\label{eq:direct_growth_condition}
    \ntr/\log(2+N)\longrightarrow\infty
\end{equation}
makes that term smaller than every fixed inverse polynomial in $N,\ntr$.

This implies in particular that $\ntr\to\infty$ at any fixed $N\ge1$ is a sufficient condition to make $\EE[\hat R^2]$ negligible, although it might not be necessary due to the very coarse bound we used for \cref{lem:bad_event_remainder}.

\begin{theorem}[Asymptotic mean and variance at finite shot count]\label{thm:direct_mean_variance_asymptotics}
Assume the setting of \cref{prop:direct_moments}, with fixed $d,\nout$, uniform conditioning, and the growth condition \cref{eq:direct_growth_condition}.  Then
\begin{equation}\label{eq:direct_mean_variance_scaling}
\begin{aligned}
    \EE[\Delta\hat o]&=-b_N/N+O\!\left(\frac{\nout}{\sqrt N\,\ntr}\right),
    \\
    \Var(\Delta\hat o)&=\EE[\hat L^2]+O\!\left(\frac{\nout}{N\ntr^{3/2}}+\frac{\nout^2}{N\ntr^2}\right)
    =\frac{V_N}{N\ntr}+O\!\left(\frac{1}{N^2\ntr}+\frac{\nout}{N\ntr^{3/2}}+\frac{\nout^2}{N\ntr^2}\right).
\end{aligned}
\end{equation}
\end{theorem}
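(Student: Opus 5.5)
The plan is to read off both statements from the finite-parameter bounds of \cref{prop:direct_moments}, once we know that the total remainder bound $\mathcal R^{(2)}_{N,\ntr}$ in \cref{eq:total_remainder_second_moment} is dominated by its good-event part. We first check that the hypothesis $\ntr\ge\nout/u_0$ holds eventually; this follows because $\ntr/\log(2+N)\to\infty$ forces $\ntr\to\infty$. Next we show that the exceptional term $16K_{\rm bad}\ntr(N\nout)^{2\nout}e^{-u_0\ntr+\nout}$ is $o(\nout^2/(N\ntr^2))$. Its ratio to $1/(N\ntr^2)$ is $\exp\bigl(-u_0\ntr+(2\nout+1)\log N+3\log\ntr+O(1)\bigr)$. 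Since $\log N=o(\ntr)$ and $\log\ntr=o(\ntr)$, condition \cref{eq:exceptional_relative_condition} holds and this ratio tends to zero. Hence $\mathcal R^{(2)}_{N,\ntr}=O(\nout^2/(N\ntr^2))$, and $\sqrt{\mathcal R^{(2)}_{N,\ntr}}=O(\nout/(\sqrt N\,\ntr))$.

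For the mean, we insert this into the first inequality of \cref{eq:direct_mean_variance_control}. This gives $\EE[\Delta\hat o]=-b_N/N+O(\nout/(\sqrt N\ntr))$ directly.

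For the variance, we use the second inequality of \cref{eq:direct_mean_variance_control} together with the leading-order identity $\EE[\hat L^2]=V_N/(N\ntr)+O(1/(N^2\ntr))$ from \cref{eq:L_variance_leading}. That identity requires $V_N$ to be bounded uniformly. This follows from the conditioning assumptions: $\|\bsS_{\rm eff}^{-1}\|_{\rm op}\le\kappa_{11}^{-1}$, $\|\bs p_{i,1}\|_2\le1$, and $\Var_i(\bs w^{\rm eff}_\calO)$ is bounded because $\|\overline{\bs\Sigma}_{22}^{-1}\|_{\rm op}\le\kappa_{22}^{-1}$. Consequently $\EE[\hat L^2]=O(1/(N\ntr))$. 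The cross term is then
\[
2\sqrt{\EE[\hat L^2]\mathcal R^{(2)}_{N,\ntr}}=O\!\left(\frac{1}{\sqrt{N\ntr}}\cdot\frac{\nout}{\sqrt N\,\ntr}\right)=O\!\left(\frac{\nout}{N\ntr^{3/2}}\right),
\]
and the quadratic term is $\mathcal R^{(2)}_{N,\ntr}=O(\nout^2/(N\ntr^2))$. These two give the first equality. Substituting the expansion of $\EE[\hat L^2]$ then gives the second equality, with the extra $O(1/(N^2\ntr))$ term.

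The main obstacle is confirming that the $O(1/(N^2\ntr))$ correction in \cref{eq:L_variance_leading} holds uniformly in $N\ge1$, and not only for large $N$. The argument there rests on two facts: the linear-coefficient correction \cref{eq:linear_coefficient_correction} is $O(N^{-1})$, and the third moment of $\hatbs\xi_{i,\rm eff}$ is $O(N^{-1/2})$. Both carry constants depending only on $\kappa_{11}$, $\kappa_{22}$, $\|\bs w_\calO\|_2$, and the subgaussian fourth-moment bound of \cref{lem:multinomial_subgaussian_direct}, all of which are independent of $N$ and $\ntr$. I would re-verify this uniformity explicitly. The rest of the argument is bookkeeping with the already-established bounds.
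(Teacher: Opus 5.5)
Your proposal is correct and follows essentially the same route as the paper. The growth condition \cref{eq:direct_growth_condition}, via \cref{eq:exceptional_relative_condition}, makes the exceptional term negligible, so $\mathcal R^{(2)}_{N,\ntr}=O(\nout^2/(N\ntr^2))$; inserting this and $\EE[\hat L^2]=V_N/(N\ntr)+O(1/(N^2\ntr))=O(1/(N\ntr))$ into the bounds of \cref{prop:direct_moments} gives the stated estimates, and the $N$-uniformity you flag is already contained in \cref{eq:L_variance_leading}, whose constants depend only on the conditioning data and on $N$-independent moment bounds.
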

The factors of $\nout$ shown here come from the concentration estimate; the implicit constants also depend on the conditioning data and are not uniform in $d$ or $\nout$.
For nonzero coefficients of fixed order, the mean has relative error tending to zero if $\ntr\gg\nout\sqrt N$, whereas the variance is asymptotic to $\EE[\hat L^2]$ if $\ntr\gg\nout^2$ and $N\ntr\,\EE[\hat L^2]$ stays bounded away from zero.
Replacing this exact fluctuation variance by $V_N/(N\ntr)$ additionally requires $N\to\infty$ and nonvanishing $V_N$.
More generally, the finite-parameter tests are $\sqrt{\mathcal R^{(2)}_{N,\ntr}}\ll|b_N|/N$ for the mean and $\mathcal R^{(2)}_{N,\ntr}\ll\EE[\hat L^2]$ for the variance.
These comparisons remain meaningful when the coefficients are small; the simpler power-counting statements do not.

\parTitle{Fixed-$N$ and joint limits}
The comparisons above lead to two distinct conclusions from the same theorem.
At fixed $N$, \cref{prop:direct_moments,eq:def_gi} control the exact first-variation variance and the deterministic floor $b_N^2/N^2$.  The replacement of that exact variance by $V_N/(N\ntr)$ uses the large-$N$ estimate \cref{eq:L_variance_leading}, and the two-term MSE further requires the joint-limit conditions that will be stated in \cref{thm:direct_joint_mse}.
If $\ntr\,\EE[\hat L^2]$ has a nonzero limit, it gives the leading $1/\ntr$ training variance.
The MSE satisfies $\EE[\Delta\hat o^2]=b_N^2/N^2+O(\ntr^{-1})$.
This identifies a nonzero bias floor when $b_N$ has a nonzero limit, but does not identify the coefficient of the $1/\ntr$ correction: the term $2B\EE[\hat R]$ can contribute at that same order.
Even replacing $V_N/(N\ntr)$ by the exact $\EE[\hat L^2]$ does not remove this limitation.
If $b_N=0$, that cross term vanishes, and \cref{eq:MSE_remainder_bound} gives the sharper comparison directly.

When both resources grow, every term omitted from $B^2+\EE[\hat L^2]$ in the exact identity \cref{eq:mse_cross_terms} becomes small on the scale of the total leading error.  In particular, the potentially largest omitted correlation is $2B\EE[\hat R]$; the other two are $2\EE[\hat L\hat R]$ and $\EE[\hat R^2]$.
To check this without assuming either bias or variance dominates, we collect the four error scales as
\begin{equation}\label{eq:def_MSE_error}
    \mathrm{Err}_{N,\ntr}
    =\frac{1}{N^2\ntr}+\frac{\nout}{N^{3/2}\ntr}
    +\frac{\nout}{N\ntr^{3/2}}+\frac{\nout^2}{N\ntr^2}.
\end{equation}
The first term is the variance replacement error in \cref{eq:L_variance_leading}; the remaining terms bound $2B\EE[\hat R]$, $2\EE[\hat L\hat R]$, and $\EE[\hat R^2]$, respectively.

\begin{theorem}[Joint-limit MSE]\label{thm:direct_joint_mse}
Assume fixed $d,\nout$, the uniform conditioning in \cref{eq:direct_uniform_conditioning}, $N,\ntr\to\infty$, and \cref{eq:direct_growth_condition}.  Then
\begin{equation}\label{eq:final_MSE_expansion}
    \EE[\Delta\hat o^2]=\frac{b_N^2}{N^2}+\frac{V_N}{N\ntr}+O(\mathrm{Err}_{N,\ntr}),
\end{equation}
and, because every term in $\mathrm{Err}_{N,\ntr}$ is $o(N^{-2}+(N\ntr)^{-1})$,
\begin{equation}\label{eq:direct_leading_joint_MSE}
    \EE[\Delta\hat o^2]=\frac{b_N^2}{N^2}+\frac{V_N}{N\ntr}
    +o\!\left(N^{-2}+(N\ntr)^{-1}\right).
\end{equation}
If $|b_N|$ and $V_N$ stay bounded above and away from zero, this is a relative approximation to the total MSE, and the two displayed contributions cross at $\ntr\asymp NV_N/b_N^2$.
\end{theorem}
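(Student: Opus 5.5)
The plan is to start from the exact identity \cref{eq:mse_cross_terms},
\[
\EE[\Delta\hat o^2]-B^2-\EE[\hat L^2]=2B\EE[\hat R]+2\EE[\hat L\hat R]+\EE[\hat R^2],
\]
with $B=-b_N/N$. Substituting $\EE[\hat L^2]=V_N/(N\ntr)+O(1/(N^2\ntr))$ from \cref{eq:L_variance_leading} produces the first term of $\mathrm{Err}_{N,\ntr}$. The remaining work is to bound the three correlation terms by the other three scales in \cref{eq:def_MSE_error}. Each of these bounds follows from \cref{prop:direct_moments}, once we use the growth condition \cref{eq:direct_growth_condition} to discard the exceptional-event term.

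\textbf{Step 1: control the remainder.} Under \cref{eq:direct_growth_condition}, the exponential contribution $16K_{\rm bad}\ntr(N\nout)^{2\nout}e^{-u_0\ntr+\nout}$ is eventually smaller than any fixed inverse polynomial in $N,\ntr$. In particular it is eventually below $\nout^2/(N\ntr^2)$. This uses the sufficient condition \cref{eq:exceptional_relative_condition}, since $\log N=o(\ntr)$ and $\log\ntr=o(\ntr)$. Consequently $\mathcal R^{(2)}_{N,\ntr}=O(\nout^2/(N\ntr^2))$, which gives
\[
\EE[\hat R^2]=O\!\left(\frac{\nout^2}{N\ntr^2}\right),\qquad
\sqrt{\EE[\hat R^2]}=O\!\left(\frac{\nout}{\sqrt N\,\ntr}\right).
\]

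\textbf{Step 2: bound the cross terms.} Uniform conditioning gives $|b_N|\le C$. Indeed, $\|\bsS_{\rm eff}^{-1}\|_{\rm op}\le\kappa_{11}^{-1}$, and $\overline{\bs\Sigma}_{\rm eff}$ and $\bs w_\calO$ are bounded, so $|B|\le C/N$. By Cauchy--Schwarz, $|2B\EE[\hat R]|\le 2|B|\sqrt{\EE\hat R^2}=O(\nout/(N^{3/2}\ntr))$.

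Next, $V_N$ is bounded, because $\bs p_{i,1}^T\bsS_{\rm eff}^{-1}\bsp_{\sigma,1}$ is bounded by $\kappa_{11}^{-1}$. Hence $\EE[\hat L^2]=O(1/(N\ntr))$, and $|2\EE[\hat L\hat R]|=O\bigl((N\ntr)^{-1/2}\cdot\nout/(\sqrt N\ntr)\bigr)=O(\nout/(N\ntr^{3/2}))$. The term $\EE[\hat R^2]$ supplies the last scale. Adding all pieces yields \cref{eq:final_MSE_expansion}.

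\textbf{Step 3: the little-$o$ statement.} Each term of $\mathrm{Err}_{N,\ntr}$ is $o(N^{-2}+(N\ntr)^{-1})$ once $N,\ntr\to\infty$ with $d,\nout$ fixed. The terms $1/(N^2\ntr)$, $\nout/(N\ntr^{3/2})$ and $\nout^2/(N\ntr^2)$ are each $o(1/(N\ntr))$. For $\nout/(N^{3/2}\ntr)$, AM--GM gives $N^{-3/2}\ntr^{-1}\le N^{-1/2}\ntr^{-1/2}\,(N^{-2}+N^{-1}\ntr^{-1})$, whose prefactor tends to zero. If $|b_N|$ and $V_N$ are bounded above and away from zero, the leading sum is of exact order $N^{-2}+(N\ntr)^{-1}$, so the expansion is a relative approximation. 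Equating $b_N^2/N^2$ with $V_N/(N\ntr)$ gives the crossover at $\ntr\asymp NV_N/b_N^2$.

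\textbf{Main obstacle.} The delicate point is Step 1. Showing that the exceptional-event bound, which grows polynomially in $N$ through $(N\nout)^{2\nout}$, is beaten by $e^{-u_0\ntr}$ needs $\ntr\gg\log N$, and this is exactly why the growth hypothesis is required. I also need to confirm that the implicit constants, namely $K_{\rm good}$, $K_{\rm bad}$, $u_0$ and the bounds on $b_N$ and $V_N$, are uniform in $N$ and $\ntr$. I expect this uniformity to follow from \cref{eq:direct_uniform_conditioning}.
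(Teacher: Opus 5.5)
Your proof is correct and follows the paper's own route: you start from the exact identity \cref{eq:mse_cross_terms}, use the growth condition to make the exceptional term in $\mathcal R^{(2)}_{N,\ntr}$ negligible so that $\EE[\hat R^2]=O(\nout^2/(N\ntr^2))$, and apply Cauchy--Schwarz with $|B|=O(1/N)$ and $\EE[\hat L^2]=O(1/(N\ntr))$; this is exactly how the paper assigns the four scales of $\mathrm{Err}_{N,\ntr}$. One slip in Step 3: the displayed inequality $N^{-3/2}\ntr^{-1}\le N^{-1/2}\ntr^{-1/2}(N^{-2}+N^{-1}\ntr^{-1})$ is false (it fails at $N=\ntr>4$), but you do not need it, since $\nout/(N^{3/2}\ntr)=\nout N^{-1/2}\cdot(N\ntr)^{-1}=o((N\ntr)^{-1})$ directly, and AM--GM would in any case give the prefactor $\tfrac12\ntr^{-1/2}$.
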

The theorem controls the sum of squared bias and variance.  It does not, by itself, show that $\EE[\Delta\hat o]=-b_N/N+o(b_N/N)$: that separate identification of the statistical bias requires $\sqrt{\mathcal R^{(2)}_{N,\ntr}}\ll|b_N|/N$, as stated after \cref{eq:direct_mean_variance_scaling}.

This completes the direct proof.
The finite-$N$ theorem remains available when the simple resource fit is not controlled, and the next section expresses its coefficients and conditioning requirements in terms of the training and measurement frames.

\subsection{Comparison with the homogeneous Gaussian model}
\label{subsec:gaussian_multinomial_comparison}

We can now compare the coefficients extracted from the homogeneous Gaussian model in \cref{eq:gaussian_b_V_identification}, with those obtained directly for the multinomial model.
This comparison concerns the leading large-$N$ coefficients $b$ and $V$, rather than the finite-$N$ coefficients $b_N,V_N$ or the subleading terms in the two expansions.
For the multinomial model, \cref{eq:frame_coefficients_infinite_N} gives
\begin{equation}
\begin{aligned}
    b
    =\frac{1}{\ntr}\sum_{i=1}^{\ntr}
      \Cov_i\!\left(
        \bs w^{\rm eff}_{\calO},
        \ntr\langle\bstildemu,\tilde\sigma\rangle
      \right),
    \qquad V=\frac{1}{\ntr}\sum_{i=1}^{\ntr}
      \bigl(\ntr\langle\tilde\rho_i,\sigma\rangle\bigr)^2
      \Var_i(\bs w^{\rm eff}_{\calO}).
\end{aligned}
\label{eq:multinomial_b_V_for_comparison}
\end{equation}
Here every covariance is computed with the state-dependent multinomial
covariance $\bs\Sigma_i$, while $\bs w^{\rm eff}_{\calO}$ is defined from
their average $\overline{\bs\Sigma}$ as in \cref{eq:def_weff}.

The bias comparison does not in fact require homogeneous column
covariances.  Since the second argument of the covariance in
\cref{eq:multinomial_b_V_for_comparison} is independent of $i$, the sum
over the training states gives
\begin{equation}
\label{eq:multinomial_gaussian_b_match}
    b=(\bs w^{\rm eff}_{\calO})^T
      \overline{\bs\Sigma}\,
      U_1\bsS_{11}^{-1}\bsp_{\sigma,1}=\ntr\,\bs w_{\calO}^TU_1
      \overline{\bs\Sigma}_{\rm eff}
      \bsP_{11}^{-T}V_1^T
      \langle\tilde{\bs\rho},\sigma\rangle.
\end{equation}
The second equality follows from two elementary block
identities:
\begin{enumerate}
    \item $
    (\bs w^{\rm eff}_{\calO})^T
    \overline{\bs\Sigma}U_1
    =\bs w_{\calO}^TU_1
      \overline{\bs\Sigma}_{\rm eff}
    $
    holds because in the $(U_1,U_2)$ basis the two blocks of
    $\bs w^{\rm eff}_{\calO}$ are
    $U_1^T\bs w_{\calO}$ and
    $-\overline{\bs\Sigma}_{22}^{-1} \overline{\bs\Sigma}_{21}U_1^T\bs w_{\calO}$;
    multiplying them by the two block rows of $\overline{\bs\Sigma}U_1$ produces precisely its Schur complement.
    \item $
        \bsS_{11}^{-1}\bsp_{\sigma,1}
        =\ntr\,\bsP_{11}^{-T}V_1^T
          \langle\tilde{\bs\rho},\sigma\rangle$
    holds because $\bsS_{11}=\bsP_{11}\bsP_{11}^T/\ntr$ and
    $\bsP_{11}^{-1}\bsp_{\sigma,1} =V_1^T\langle\tilde{\bs\rho},\sigma\rangle$.
\end{enumerate}
The right-hand side of \cref{eq:multinomial_gaussian_b_match} is exactly
the Gaussian coefficient in \cref{eq:gaussian_b_V_identification}.
Thus replacing the state-dependent multinomial covariances by their
average leaves the leading bias coefficient unchanged.

For the variance, suppose now that the multinomial column covariances are homogeneous, that is,
$\bs\Sigma_i=\overline{\bs\Sigma}$, $i=1,\ldots,\ntr$. Under this approximation,
\begin{equation}
    \Var_i(\bs w^{\rm eff}_{\calO})
    =(\bs w^{\rm eff}_{\calO})^T
      \overline{\bs\Sigma}\bs w^{\rm eff}_{\calO}
    =\bs w_{\calO}^TU_1
      \overline{\bs\Sigma}_{\rm eff}U_1^T\bs w_{\calO},
    \label{eq:weff_variance_identity}
\end{equation}
independently of $i$.  Moreover,
$\langle\tilde{\bs\rho},\sigma\rangle$ belongs to the row space of
$\bsP$, and hence
\begin{equation}
    \frac{1}{\ntr}\sum_{i=1}^{\ntr}
      \bigl(\ntr\langle\tilde\rho_i,\sigma\rangle\bigr)^2
    =\ntr\,
      \|V_1^T\langle\tilde{\bs\rho},\sigma\rangle\|_2^2.
    \label{eq:dual_norm_identity_comparison}
\end{equation}
Substitution of
\cref{eq:weff_variance_identity,eq:dual_norm_identity_comparison} into the
second line of \cref{eq:multinomial_b_V_for_comparison} gives
\begin{equation}
    V
    =\ntr\,
      \|V_1^T\langle\tilde{\bs\rho},\sigma\rangle\|_2^2
      \bs w_{\calO}^TU_1
      \overline{\bs\Sigma}_{\rm eff}U_1^T\bs w_{\calO},
    \label{eq:multinomial_gaussian_V_match}
\end{equation}
which is exactly the Gaussian expression in
\cref{eq:gaussian_b_V_identification}.

Consequently, the homogeneous Gaussian model reproduces both leading multinomial coefficients when the multinomial covariances are homogeneous.

\section{Cost scaling of training a QELM}%
\label{app:asymptotic}

\Cref{app:direct_remainder} has established which parts of the prediction error can be controlled and in which limits.
We now translate these results into a resource-cost statement for the MSE.
In particular, \cref{sec:costscaling_infteststat} rewrites the exact bias and variance coefficients in frame language, \cref{sec:costscaling_conclusions} converts
their asymptotic limits into the dataset-size/shot-budget tradeoff, and \cref{sec:costscaling_averages} evaluates the resulting leading coefficients for symmetric tight designs.

\parTitle{Problem setting and notation}
Suppose that every state in the training set is measured \(N\) times.
Because shot noise generally makes the target rows inconsistent with the
row space of the observed frequency matrix, QELM training is naturally
formulated as the least-squares problem
\begin{equation}\label{eq:training_linear_problem}
    \hatbsW_{\bs \calO}
    \in
    \operatorname*{arg\,min}_{\bs W}
    \left\|
        \bsW\hatbsP_N-\bsv_\calO^T\bsMrho
    \right\|_{\rm F}^2 .
\end{equation}
Here $\hatbsP_N$ is the noisy probability matrix, whose $k$-th column
contains the frequencies obtained from \(N\) measurements of $\rho_k$,
and
$\bsP=\EE[\hatbsP_N]=\langle\bs\mu,\bs\rho\rangle
=\bsMmu^T\bsMrho$.
As explained in \cref{app:notation}, $\bsv_\calO=\on{vec}(\calO)$ denotes the vectorization of the target observable $\calO$ and $\bsMrho=\sum_{i=1}^{\ntr} \on{vec}(\rho_i)\bra i$ collects the (vectorized) training states $\bs\rho\equiv\{\rho_i\}_{i=1}^{\ntr}$.
The canonical least-squares solution is
$\hatbsW_{\bs\calO}
=\langle\bs\calO,\bs\rho\rangle\hatbsP_N^+$.

\parTitle{Training with infinite statistics}
At infinite training statistics, $\hatbsP_N$ is replaced by its
expectation $\bsP$.  Under the completeness assumptions stated below,
the noiseless least-squares residual vanishes, and the canonical solution is
\begin{equation}\label{eq:WO_solution}
    \bsW_{\calO}
    = (\bsv_\calO^T\bsMrho)(\bsMmu^T\bsMrho)^+
    = \bsv_\calO^T \bsMmu^{T+}
    = \langle\calO, \bstildemu\rangle.
\end{equation}
This result assumes that both $\bsMrho$ and $\bsMmu$ have full row rank: physically this amounts to the training states being complete --- i.e., spanning the space of all Hermitian operators --- and the POVM being informationally complete.
Under this assumption, the pseudoinverse of the product also factorises as $(\bsMmu^T\bsMrho)^+=\bsMrho^+\bsMmu^{T+}$.
Furthermore, it is a general fact that the pseudoinverse of a surjective matrix can be interpreted as the transpose of the matrix representing the \textit{canonical dual frame} of the frame given by the columns of the original matrix.
Formally, this means that $\bsMmu^{+}= (\bsMtildemu)^T$.
For more details on this perspective, we refer to the discussion in~\cite{innocenti2023potential,innocenti2023shadow}.
In conclusion, training with infinite statistics amounts to recovering a description of the POVM $\bs\mu$ through its dual $\bstildemu$.


\subsection{Explicit expressions for bias and variance coefficients}%
\label{sec:costscaling_infteststat}

For any $\bs u,\bs v\in \RR^{\nout}$, define the conditional covariance and variance as
$\Cov_i(\bs u,\bs v)\equiv \bs u^T\bs\Sigma_i\bs v$, and $\Var_i(\bs u)\equiv \Cov_i(\bs u,\bs u)$.
The finite-$N$ coefficients in \cref{eq:final_MSE_expansion} can then be written as
\begin{equation}\label{eq:def_bN_VN}
\begin{aligned}
    b_N &=
    \frac{1}{\ntr} \sum_{i=1}^{\ntr}
    \Cov_i(
    \bs w^{\rm eff}_{\calO},
    U_1 \bsS_{\rm eff}^{-1}\bsp_{\sigma,1}
    )
    =
    (\bs w^{\rm eff}_{\calO})^T
    \overline{\bs\Sigma}
    U_1\bsS_{\rm eff}^{-1}\bsp_{\sigma,1}
    =
    \bs w_\calO^T
    \overline{\bs\Sigma}
    \left(\bsS+\frac{1}{N}\overline{\bs\Sigma}\right)^{-1}
    \!\bsp_\sigma,
    \\
    V_N &=
    \frac{1}{\ntr} \sum_{i=1}^{\ntr}
    (\bs p_{i,1}^T\bsS_{\rm eff}^{-1}\bsp_{\sigma,1})^2
    \Var_i(\bs w^{\rm eff}_{\calO})
    =
    \frac{1}{\ntr}\sum_{i=1}^{\ntr}
    \left(
    \bs p_i^T
    \Big(
    \bsS+\frac{1}{N}\overline{\bs\Sigma}
    \Big)^{-1}
    \bsp_\sigma
    \right)^2
    \Var_i(\bs w^{\rm eff}_{\calO}).
\end{aligned}
\end{equation}
Notice in particular how $b_N$ only depends on the averaged covariance matrix $\overline{\bs\Sigma}$, whereas this is not exactly the case for $V_N$.
The term $\Var_i(\bs w^{\rm eff}_{\calO})$ in $V_N$ can be written more explicitly but does not simplify as much:
\begin{equation}
    \Var_i(\bs w^{\rm eff}_{\calO})
    =
    (\bs w^{\rm eff}_{\calO})^T
    \bs\Sigma_i
    \bs w^{\rm eff}_{\calO}
    =
    \bs w_\calO^T
    (\bs I - U_2 \overline{\bs\Sigma}_{22}^{-1}\overline{\bs\Sigma}_{21} U_1^T)^T
    \bs\Sigma_i
    (\bs I - U_2 \overline{\bs\Sigma}_{22}^{-1}\overline{\bs\Sigma}_{21} U_1^T)
    \bs w_\calO.
\end{equation}
Note that $\bsS_{\rm eff}=\bsS_{11}+O(N^{-1})$, and
\begin{equation}
    U_1\bsS_{11}^{-1}\bsp_{\sigma,1}
    = \ntr
    \langle\bstildemu,\tilde\sigma\rangle,
    \qquad
    \bs p_{i,1}^T\bsS_{11}^{-1}\bsp_{\sigma,1}
    =\ntr\langle\tilde\rho_i,\sigma\rangle,
    \label{eq:dual_frame_coefficients}
\end{equation}
where $\tilde\rho_i=\frameOp_\rho^{-1}(\rho_i)$ and
$\tilde\sigma=\frameOp_\rho^{-1}(\sigma)$.
Substitution in \cref{eq:def_bN_VN} therefore yields
$b_N=b+O(N^{-1})$ and $V_N=V+O(N^{-1})$, with
\begin{equation}
\begin{gathered}
    b
    =\frac1{\ntr}\sum_{i=1}^{\ntr}
    \Cov_i(\bs w^{\rm eff}_{\calO},
    \ntr\langle\bstildemu,\tilde\sigma\rangle),
    \qquad
    V
    =\frac1{\ntr}\sum_{i=1}^{\ntr}
    \bigl(\ntr\langle\tilde\rho_i,\sigma\rangle\bigr)^2
    \Var_i(\bs w^{\rm eff}_{\calO}).
    \label{eq:frame_coefficients_infinite_N}
\end{gathered}
\end{equation}
The coefficients $b$ and $V$ are independent of $N$, but generally
retain dependence on the training ensemble. Under the uniform
conditioning assumptions, they remain bounded as $\ntr$ varies;
convergence as $\ntr\to\infty$ requires convergence of the relevant
empirical training moments.
By~\cref{eq:bias_exact}, the deterministic bias component is
$B=-b_N/N$, whereas the full mean is $B+\EE[\hat R]$.
The estimate in~\cref{eq:direct_mean_variance_scaling} gives $\EE[\hat R]=O(\nout/(\sqrt N\,\ntr))$ in the stated growth regime.
For nonvanishing $b_N$, this is subleading relative to $B$ when $\ntr\gg\nout\sqrt N$.

\subsection{Resource scaling}%
\label{sec:costscaling_conclusions}

\Cref{sec:costscaling_infteststat} identifies the finite-$N$ frame coefficients and their large $N$ limits.
To obtain the two-term resource fit, assume the fixed-dimensional conditioning hypotheses of \cref{thm:direct_joint_mse}, let $N,\ntr\to\infty$, and require $\ntr/\log(2+N)\to\infty$.  Substituting \cref{eq:def_bN_VN} into \cref{eq:final_MSE_expansion}, followed by the controlled replacements $b_N=b+O(N^{-1})$ and $V_N=V+O(N^{-1})$, gives
\begin{equation}%
\label{eq:leading_bias2_terms}
    \EE_{\hatbsXi}[\Delta\hat o^2]
    =
    \frac{b_N^2}{N^2}
    +
    \frac{V_N}{N\ntr}
    +O\!\left(\mathrm{Err}_{N,\ntr}\right)
    =
    \frac{b^2}{N^2}
    +
    \frac{V}{N\ntr}
    +O\!\left(\mathrm{Err}_{N,\ntr}+\frac1{N^3}\right),
\end{equation}
with $\mathrm{Err}_{N,\ntr}$ defined in
\cref{eq:def_MSE_error}.
The additional \(N^{-3}\) term comes from replacing \(b_N^2\) by \(b^2\); the \(N\)-dependence of \(V_N\) is already absorbed by \(\mathrm{Err}_{N,\ntr}\).
Note that the $N^{-3}$ correction is retained because $\mathrm{Err}_{N,\ntr}$ is $O(\ntr^{-1})$ for large $\ntr$, and $N^{-3}$ need not be smaller than terms like $N^{-2}\ntr^{-1}$ when $\ntr > N$.
At fixed $d$ and $\nout$, the leading variance is controlled by the total number $N_{\rm tot}=N\ntr$ of measured training copies, whereas the squared bias is controlled by the number of shots per state.
If $|b|$ and $V$ remain bounded above and away from zero, increasing $\ntr$ at fixed $N$ suppresses the variance but leaves the bias floor $b_N^2/N^2$, and the crossover occurs at $\ntr\asymp N$.

Note that the bias is not due to the rank increase alone.
Indeed, when $\nout=d^2$ there is no such rank increase, but $\bsS+N^{-1}\barSigma$ still differs from $\bsS$.
The ordinary inverse argument then gives the deterministic shift $-N^{-1}\bs w_\calO^T\barSigma(\bsS+N^{-1}\barSigma)^{-1}\bsp_\sigma$, which need not vanish.
Overcompleteness changes that shift through the Schur complement and creates additional noise directions, but it is not the only source of bias.

\parTitle{Quantitative sample requirements}
\Cref{eq:leading_bias2_terms}, on its own, is still not sufficient to conclusively determine whether a given experiment and noise realization lies in the controlled regime.
For this, we require the empirical inverses to be stable, then compare the remainder and exceptional-event contribution with the accuracy sought.
For a fixed confidence $\delta$, a sufficient condition valid for every $N\ge1$ is given by \cref{eq:fixed_confidence_sample_requirement}.
Alternatively, one may directly use the $N$-dependent conditions \cref{eq:deterministic_smallness} with $u=C_0r_\delta$.
For example, a sufficient asymptotic condition for those inequalities to hold with a margin is
\begin{equation}\label{eq:sample_condition_with_geometry}
    \ntr\gg C_0^2q_\delta\max\left\{1,\kappa_{22}^{-2},
    \frac{(K_S^{\poly})^2}{N\kappa_{11}^2},\frac{K_S^{\rm nl}}{\kappa_{11}}\right\}.
\end{equation}
To derive \cref{eq:sample_condition_with_geometry}, first take $\ntr\ge q_\delta$, so that \cref{eq:def_concentration_radius} gives $r_\delta\le2\sqrt{q_\delta/\ntr}$.  Insert $u=C_0r_\delta$ into the three requirements of \cref{eq:deterministic_smallness}.  The bounds $u\ll1$ and $u\ll\kappa_{22}$ give, respectively, $\ntr\gg C_0^2q_\delta$ and $\ntr\gg C_0^2q_\delta\kappa_{22}^{-2}$.  Splitting the last requirement into $K_S^{\poly}u/\sqrt N\ll\kappa_{11}$ and $K_S^{\rm nl}u^2\ll\kappa_{11}$ gives the last two terms in the maximum.  Numerical factors are absorbed by $\gg$; no finite-size threshold is claimed.
This condition controls the inverses and the absolute high-probability remainder, not its relative size or the exceptional-event contribution to the MSE.
Those additional checks are given in \cref{eq:direct_mean_variance_control,eq:exceptional_relative_condition,eq:MSE_remainder_bound}.
At fixed small $N$, increasing $\ntr$ controls the exact variance and the bias floor, but does not justify dropping $O(N^{-2}\ntr^{-1})$ from the variance or resolving the $1/\ntr$ correction to the MSE.
The constants in the concentration estimates have not been numerically optimized, so these are sufficient analytical criteria, not calibrated finite-size thresholds.

For illustration, the balanced tight designs below have $\kappa_{22}=1/\nout$, $\kappa_{11}=1/[\nout(d+1)^2]$, and $K_S^{\poly}=1$, $K_S^{\rm nl}=2\nout$.
With these exact values, \cref{eq:sample_condition_with_geometry} becomes the conservative requirement $\ntr\gg C_0^2q_\delta\nout^2\max\{(d+1)^2,(d+1)^4/N\}$.
The uniform threshold $u_\star=1/[4\nout(d+1)^2]$ is more restrictive when $N$ is large.
These explicit conditioning penalties explain why the present proof does not supply a dimension-independent rule such as $\ntr\gg\nout$.

\subsection{Leading coefficients for symmetric states and POVMs}%
\label{sec:costscaling_averages}

Here we specialize the results of \cref{sec:costscaling_conclusions} to a symmetric reference example in which training states are pure and form a complex projective 2-design, and the POVM is a fixed balanced rank-1 tight measurement frame:
$\mu_a=\frac d{\nout}\PP(\psi_a)$, $a=1,\cdots,\nout$,
with $(\psi_a)_a$ a complex projective two-design.
When averaging over test state $\sigma$ and target observable $\calO$ these are also taken as independent Haar-random rank-one projectors.
In this case we can compute explicitly the terms contributing to bias and variance.

\parTitle{Some useful identities}
Before stating the main results we start by showing what some of the quantities involved in the expressions for $b_N$ and $V_N$ look like in this symmetric model.
The duals of effects and training states are
\begin{equation}
    \tilde\mu_a=(d+1)\PP(\psi_a)-\bsI,
    \qquad
    \tilde\rho_i
    =\frac d{\ntr}\bigl[(d+1)\rho_i-\bsI\bigr].
    \label{eq:tight_dual_frames}
\end{equation}
For the averaged covariance matrix, remember that $\bs\Sigma(\rho)=\on{diag}(\bs p(\rho))-\bs p(\rho)\bs p(\rho)^T$, hence
\begin{equation}
\begin{gathered}
    \overline{\bs\Sigma} = \frac{1}{\ntr}\sum_{i=1}^{\ntr}
    [\on{diag}(\bs p(\rho_i))-\bs p(\rho_i)\bs p(\rho_i)^T]
    =
    \on{diag}(\langle\bs\mu,\bar\rho\rangle)
    -\bs S,
    \qquad
    \bs S\equiv \frac{1}{\ntr}\bs P\bs P^T=
    \frac{1}{\ntr}\sum_{i=1}^{\ntr}\langle\bs\mu,\rho_i\rangle\langle\rho_i,\bs\mu\rangle.
\end{gathered}
\end{equation}
Furthermore, for all states, $\bs\Sigma(\rho)\bs1=0$, and for all training states ensembles $\overline{\bs\Sigma}_{12}=U_1^T \on{diag}(\langle\bs\mu,\bar\rho\rangle) U_2$, as by construction $U_1^T \bs S U_2=0$.
In the symmetric case we further have $\langle\bs\mu,\bar\rho\rangle=\bsI_{\nout}/\nout$, and thus $\overline{\bs\Sigma}_{12}=0$.
Again in the symmetric case we can decompose $\bs S$ exactly into its action in the traceless subspace and its orthogonal, similarly to what we showed in \cref{app:frame}:
\begin{equation}\label{eq:decomposition_S_Pis}
    \bs S = \frac{1}{\nout} \Pi_0 + \frac{1}{\nout(d+1)^2} \Pi_{\rm tr},
    \qquad
    \Pi_0 \equiv \frac{\bs1\bs1^T}{\nout},
    \quad
    \Pi_{\rm tr} = U_1 U_1^T - \Pi_0,
\end{equation}
where $\Pi_0$ is the projection onto the identity subspace, and $\Pi_{\rm tr}$ is the projection onto the remainder of the ``signal subspace'', specified by $U_1U_1^T$.
From $\bsI_{\nout}=\Pi_0+\Pi_{\rm tr}+\Pi_2$, with $\Pi_2=U_2U_2^T-U_1 U_1^T$ projection on the noisy subspace, we then get
\begin{equation}\label{eq:barSigma_decomposed}
    \overline{\bs\Sigma} = \frac{d(d+2)}{\nout(d+1)^2}\Pi_{\rm tr} + \frac{1}{\nout}\Pi_2,
    \qquad
    \overline{\bs\Sigma}_{11}
    =
    \frac{d(d+2)}{\nout(d+1)^2} \Pi_{\rm tr}.
\end{equation}
For $\bs S_{\rm eff}=\bs S_{11}+\frac{1}{N}\overline{\bs\Sigma}_{\rm eff}$ we have instead, using the same symmetry assumptions, $\overline{\bs\Sigma}_{\rm eff}=\overline{\bs\Sigma}_{11}$, and
\begin{equation}\label{eq:Seff_decomposed}
    \bs S_{\rm eff} = \frac{1}{\nout} \Pi_0 + \frac{1+d(d+2)/N}{\nout (d+1)^2} \Pi_{\rm tr},
    \qquad
    \bs S_{\rm eff}^{-1} = \nout \Pi_0 + \frac{\nout (d+1)^2}{1+d(d+2)/N} \Pi_{\rm tr},
\end{equation}
which differs from \cref{eq:decomposition_S_Pis} only in the eigenvalue in the traceless signal subspace.

We can now move on to state the main results of this section:
\begin{lemma}
    If the training states form a complex projective 2-design, and the POVM is a balanced rank-1 tight measurement frame, the large-$N$ bias and variance coefficients $b,V$ as given in \cref{eq:frame_coefficients_infinite_N}, averaged over Haar-random pure test states and observables, equal
    \begin{equation}
        \EE_{\sigma,\calO}[b^2] =
        \frac{(d-1)(d+2)^2}{d+1},
        \qquad
        \EE_{\sigma,\calO}[V] =
        \frac{d(d-1)(d+2)}{d+1}.
    \end{equation}
\end{lemma}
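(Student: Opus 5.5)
The plan is to exploit the symmetric structure established just above, namely $\overline{\bs\Sigma}_{12}=0$, $\overline{\bs\Sigma}_{\rm eff}=\overline{\bs\Sigma}_{11}$ and the spectral decompositions \cref{eq:decomposition_S_Pis,eq:barSigma_decomposed}. With these, both coefficients in \cref{eq:frame_coefficients_infinite_N} reduce to low-degree polynomials in overlaps of pure states. The averages are then obtained from the Haar moments $\EE|\langle\phi|\psi\rangle|^2=1/d$ and $\EE|\langle\phi|\psi\rangle|^4=2/[d(d+1)]$.

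\textbf{Step 1: simplify the weights.} Since $\overline{\bs\Sigma}_{12}=0$, definition \cref{eq:def_weff} gives $\bs w^{\rm eff}_\calO=\bs w_\calO$. By \cref{eq:tight_dual_frames}, for $\calO=\PP(\phi)$ the weights are
\[
w_a=\langle\tilde\mu_a,\calO\rangle=(d+1)|\langle\psi_a|\phi\rangle|^2-1 .
\]

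\textbf{Step 2: compute $b$.} Using the middle expression of \cref{eq:def_bN_VN} at $N=\infty$, $b=\bs w_\calO^T\overline{\bs\Sigma}\,U_1\bsS_{11}^{-1}\bsp_{\sigma,1}$. On the signal space, $\overline{\bs\Sigma}$ acts as $\frac{d(d+2)}{\nout(d+1)^2}\Pi_{\rm tr}$ and $\bsS_{11}^{-1}$ as $\nout(d+1)^2$ on $\Pi_{\rm tr}$. The $\Pi_0$ component is annihilated. Hence
\[
b=d(d+2)\,\bs w_\calO^T\Pi_{\rm tr}\bsp_\sigma=d(d+2)\bigl[\bs w_\calO^T\bsp_\sigma-\bs w_\calO^T\Pi_0\bsp_\sigma\bigr].
\]
Here $\bs w_\calO^T\bsp_\sigma=\trace(\calO\sigma)$. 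The identity $\trace\calO=\frac d{\nout}\sum_a w_a$ together with $\bs 1^T\bsp_\sigma=1$ gives $\bs w_\calO^T\Pi_0\bsp_\sigma=\trace(\calO)/d$. Therefore
\[
b=(d+2)\bigl[d\,\trace(\calO\sigma)-\trace\calO\bigr].
\]
For rank-one $\calO$ and Haar $\sigma$, expand the square and insert the two moments:
\[
\EE[b^2]=(d+2)^2\Bigl[\tfrac{2d}{d+1}-2+1\Bigr]=\tfrac{(d-1)(d+2)^2}{d+1}.
\]

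\textbf{Step 3: compute $V$.} Here $\ntr\langle\tilde\rho_i,\sigma\rangle=d[(d+1)\trace(\rho_i\sigma)-1]$. The test state $\sigma$ and the observable $\calO$ are independent, so I average the two factors in each summand separately.
\begin{itemize}
\item For fixed pure $\rho_i$, the Haar average over $\sigma$ gives $\EE_\sigma[(d+1)\trace(\rho_i\sigma)-1]^2=1$. The squared dual coefficient therefore averages to $d^2$.
\item For the other factor, $\Var_i(\bs w_\calO)=\sum_a w_a^2p_a(\rho_i)-\trace(\calO\rho_i)^2$. For every $a$, $\EE_\calO w_a^2=1$, and $\sum_a p_a(\rho_i)=1$, so the first term averages to $1$. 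The second averages to $2/[d(d+1)]$.
\end{itemize}
Combining, $\EE_{\sigma,\calO}[V]=d^2\bigl(1-\tfrac{2}{d(d+1)}\bigr)=\tfrac{d(d-1)(d+2)}{d+1}$.

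\textbf{Main obstacle.} This is the bookkeeping in Step 2: correctly identifying $\Pi_0$ as the $\bs 1$ direction, checking that $\bs w_\calO\in\im(\bsP)$ so that only the signal blocks contribute, and computing $\sum_a w_a$. Step 3 is safe from a pitfall because averaging over $\sigma$ and $\calO$ first makes each summand $\rho_i$-independent. A naive order of averaging would instead require fourth moments of the training design, which a $2$-design does not supply.
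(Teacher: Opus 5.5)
Your proof is correct and follows the paper's argument. You make the same reduction $\bs w^{\rm eff}_\calO=\bs w_\calO$, average the two factors of $V$ over $\sigma$ and $\calO$ summand by summand in the same way, and reach the same closed form $b=(d+2)[d\,\trace(\calO\sigma)-\trace\calO]$ before applying the Haar moments. The only variation is how you get $b$. You let the spectral decompositions of $\overline{\bs\Sigma}$ and $\bsS_{11}$ act directly on $\bsp_\sigma$. The paper instead uses the equivalent covariance identity $\ntr^{-1}\sum_i\Cov_i(\bs w_\calO,\bs w_A)=\frac{d+2}{d+1}[\trace(\calO A)-\trace(\calO)\trace(A)/d]$ with $A=\ntr\tilde\sigma$.
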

\begin{proof}
    Because $\overline{\bs\Sigma}_{12}=0$ we have
    $\bs w_\calO^{\rm eff} = (\bs I - U_2 \overline{\bs\Sigma}_{22}^{-1}\overline{\bs\Sigma}_{21} U_1^T)
    \bs w_\calO = \bs w_\calO$.
    Furthermore, standard design identities give
    \begin{equation}
    \begin{gathered}
        \ntr\langle\tilde\rho_i,\sigma\rangle
        =
        d\left[
            (d+1)\trace(\rho_i\sigma)-1
        \right],
        \qquad
        \EE_\sigma\!\left[
            \bigl(
                \ntr\langle\tilde\rho_i,\sigma\rangle
            \bigr)^2
        \right]
        =
        d^2,
        \\
        \EE_{\calO}
        \Var_i(\bs w_{\calO})
        =
        1-
        \EE_{\calO}
        [\trace(\calO\rho_i)^2]
        =
        \frac{(d-1)(d+2)}{d(d+1)},
    \end{gathered}
    \label{eq:tight_dual_coefficient_second_moment}
    \end{equation}
    and defining $\bs w_A\equiv\langle\bstildemu,A\rangle\in\RR^{\nout}$, also
    \begin{equation}
        \frac{1}{\ntr}
        \sum_{i=1}^{\ntr}
        \Cov_i(\bs w_{\calO},\bs w_A)
        =
        \frac{d+2}{d+1}
        \left[
            \trace(\calO A)
            -
            \frac{\trace(\calO)\trace(A)}{d}
        \right].
        \label{eq:tight_covariance_identity}
    \end{equation}
    Taking
    $A=\ntr\tilde\sigma=d[(d+1)\sigma-\bsI]$
    in \cref{eq:frame_coefficients_infinite_N} then yields
    \begin{equation}
        b
        =
        (d+2)
        \left[
            d\trace(\calO\sigma)-\trace(\calO)
        \right]
        =
        (d+2)
        \left[
            d\trace(\calO\sigma)-1
        \right],
        \label{eq:tight_bias_amplitude}
    \end{equation}
    where the second equality uses that $\calO$ is a rank-one projector.
    For independent Haar-random pure states,
    \begin{equation}
        \EE[\trace(\calO\sigma)]
        =
        \frac{1}{d},
        \qquad
        \EE[\trace(\calO\sigma)^2]
        =
        \frac{2}{d(d+1)}.
    \end{equation}
    These prove the statement.
\end{proof}

\begin{lemma}\label{lem:bN_vs_b}
    If the training states form a complex projective 2-design, and the POVM is a balanced rank-1 tight measurement frame, then the coefficients $b_N$ and $b$, as given in \cref{eq:def_bN_VN,eq:frame_coefficients_infinite_N}, satisfy
    \begin{equation}
        b_N = \frac{b}{1+d(d+2)/N},
        \qquad
        \frac{b_N^2}{N^2} =
        \frac{b^2}{[N+d(d+2)]^2}.
    \end{equation}
\end{lemma}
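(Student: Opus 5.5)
The argument works directly with the second expression for $b_N$ in \cref{eq:def_bN_VN}, namely $b_N=(\bs w^{\rm eff}_{\calO})^T\overline{\bs\Sigma}U_1\bsS_{\rm eff}^{-1}\bsp_{\sigma,1}$. We compare it with the same formula in which $\bsS_{\rm eff}^{-1}$ is replaced by $\bsS_{11}^{-1}$, which gives $b$. In the symmetric model we already know that $\overline{\bs\Sigma}_{12}=0$, so $\bs w^{\rm eff}_\calO=\bs w_\calO$ and $\overline{\bs\Sigma}_{\rm eff}=\overline{\bs\Sigma}_{11}$. We also have the spectral decompositions \cref{eq:decomposition_S_Pis,eq:barSigma_decomposed,eq:Seff_decomposed}, all written in terms of the two commuting projectors $\Pi_0$ and $\Pi_{\rm tr}$ on the signal subspace.

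\emph{Step 1.} Compute $\overline{\bs\Sigma}U_1\bsS_{\rm eff}^{-1}U_1^T$. Since $\overline{\bs\Sigma}$ restricted to $\im(U_1)$ equals $\frac{d(d+2)}{\nout(d+1)^2}\Pi_{\rm tr}$ and annihilates $\Pi_0$ (because $\bs\Sigma(\rho)\bs 1=0$), only the $\Pi_{\rm tr}$ eigenvalue of $\bsS_{\rm eff}^{-1}$ survives. This gives
\[
\overline{\bs\Sigma}U_1\bsS_{\rm eff}^{-1}U_1^T=\frac{d(d+2)}{1+d(d+2)/N}\,\Pi_{\rm tr},
\]
whereas the same computation with $\bsS_{11}^{-1}=\nout\Pi_0+\nout(d+1)^2\Pi_{\rm tr}$ gives $d(d+2)\Pi_{\rm tr}$.

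\emph{Step 2.} Contract both sides with $\bs w_\calO^T$ on the left and $\bsp_\sigma=U_1\bsp_{\sigma,1}$ on the right. This yields
\[
b_N=\frac{d(d+2)\,\bs w_\calO^T\Pi_{\rm tr}\bsp_\sigma}{1+d(d+2)/N},\qquad b=d(d+2)\,\bs w_\calO^T\Pi_{\rm tr}\bsp_\sigma .
\]
The second expression must agree with the frame form \cref{eq:frame_coefficients_infinite_N} of $b$. This follows from the identity $U_1\bsS_{11}^{-1}\bsp_{\sigma,1}=\ntr\langle\bstildemu,\tilde\sigma\rangle$ in \cref{eq:dual_frame_coefficients}, together with the first equality of \cref{eq:def_bN_VN} evaluated at $N=\infty$. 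Dividing the two expressions gives $b_N=b/(1+d(d+2)/N)$. Squaring and multiplying by $N^{-2}$ then gives $b_N^2/N^2=b^2/[N+d(d+2)]^2$.

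The main obstacle is bookkeeping rather than analysis. Taken literally, \cref{eq:barSigma_decomposed} uses $\Pi_2$ where the cokernel projector $U_2U_2^T$ is meant, so one has to check that $\overline{\bs\Sigma}$ has no $\Pi_0$ component and no cross terms between the signal and cokernel blocks. The argument also relies on $\Pi_0$ and $\Pi_{\rm tr}$ being the exact eigenprojectors of $\bsS_{11}$, which holds because the training states form a 2-design and the POVM is balanced. The first point follows from $\overline{\bs\Sigma}\bs 1=0$ and $\bs 1\in\im(U_1)$; the second is the statement $\overline{\bs\Sigma}_{12}=0$ already established for the symmetric case. Once both are confirmed, the claim is the scalar computation above.
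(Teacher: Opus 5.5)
Your proof is correct and follows essentially the paper's route: the paper also reads off from the spectral decompositions that $\overline{\bs\Sigma}_{11}\bsS_{\rm eff}^{-1}=\frac{d(d+2)}{1+d(d+2)/N}\,\Pi_{\rm tr}=\frac{1}{1+d(d+2)/N}\,\overline{\bs\Sigma}_{11}\bsS_{11}^{-1}$, and then contracts with $\bs w_\calO^TU_1$ on the left and $\bsp_{\sigma,1}$ on the right. Two details you supply are left implicit in the paper: that the $\bsS_{11}^{-1}$ expression coincides with the frame form of $b$ (via the dual-frame identity), and that the $\Pi_2$ appearing in the decomposition of $\overline{\bs\Sigma}$ should be read as $U_2U_2^T$.
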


\begin{proof}
From \cref{eq:barSigma_decomposed,eq:Seff_decomposed} we get
\begin{equation}
    \overline{\bs\Sigma}_{11}\bs S_{\rm eff}^{-1}
    =\frac{d(d+2)}{1+d(d+2)/N}\Pi_{\rm tr}
    =\frac{1}{1+d(d+2)/N}
    \overline{\bs\Sigma}_{11}\bs S_{11}^{-1} ,
\end{equation}
and thus
\begin{equation}
    b_N = \bs w_\calO^T U_1 \overline{\bs\Sigma}_{11}\bs S_{\rm eff}^{-1}\bs p_{\sigma,1}
    =
    \frac{1}{1+d(d+2)/N}
    \bs w_\calO^T U_1 \overline{\bs\Sigma}_{11}\bs S_{11}^{-1}\bs p_{\sigma,1}
    =
    \frac{b}{1+d(d+2)/N}.
\end{equation}
\end{proof}

\begin{lemma}\label{lem:VN_vs_V}
    If the training states form a complex projective 2-design, and the POVM is a balanced rank-1 tight measurement frame, then the coefficients $V_N$ and $Vb$, as given in \cref{eq:def_bN_VN,eq:frame_coefficients_infinite_N}, satisfy
    \begin{equation}
    \begin{gathered}
        V_N = 
        \frac{1}{\ntr} \sum_{i=1}^{\ntr}
        [1+\beta_{N,d} (q_i-1)]^2
        \Var_i(\bs w_\calO)
        =
        \beta_{N,d}^2 V
        + 2\beta_{N,d}(1-\beta_{N,d})
        C + (1-\beta_{N,d})^2A,
        \\
        q_i \equiv \ntr\langle\tilde\rho_i,\sigma\rangle,
        \qquad
        \beta_{N,d} \equiv \frac{1}{1+d(d+2)/N},
        \qquad
        C \equiv \frac{1}{\ntr} \sum_{i=1}^{\ntr}
        q_i \Var_i(\bs w_\calO),
        \qquad
        A \equiv \frac{1}{\ntr} \sum_{i=1}^{\ntr}
        \Var_i(\bs w_\calO).
    \end{gathered}
    \end{equation}
    Averaging over test state and observables the relation simplifies to
    \begin{equation}
        \EE_{\sigma,\calO}[V_N] =
        \EE_{\sigma,\calO}[V]
        \left(
        \beta_{N,d}^2
        + \frac{1-\beta_{N,d}^2}{d^2}
        \right)
        =
        \frac{(d-1)(d+2)}{d(d+1)}
        \left[
        1+\frac{d^2-1}{[1+d(d+2)/N]^2}
        \right]
    \end{equation}
\end{lemma}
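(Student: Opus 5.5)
The plan is a direct computation in the symmetric model: first evaluate the finite-$N$ contraction $\bs p_{i,1}^T\bsS_{\rm eff}^{-1}\bsp_{\sigma,1}$ exactly, then expand the square inside $V_N$ from \cref{eq:def_bN_VN}, and finally average over $\sigma$ and $\calO$ using the design identities of the previous lemma.

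\emph{Step 1 (the contraction).} In the symmetric model $\overline{\bs\Sigma}_{12}=0$, so $\bs w^{\rm eff}_\calO=\bs w_\calO$ and $\Var_i(\bs w^{\rm eff}_\calO)=\Var_i(\bs w_\calO)$. By \cref{eq:decomposition_S_Pis,eq:Seff_decomposed}, $\bsS_{11}^{-1}$ and $\bsS_{\rm eff}^{-1}$ share the eigenvalue $\nout$ on $\Pi_0$. On $\Pi_{\rm tr}$ they have eigenvalues $\nout(d+1)^2$ and $\beta_{N,d}\,\nout(d+1)^2$ respectively.

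Since every probability vector satisfies $\bs 1^T\bs p=1$, we have $\bs p_i^T(\nout\Pi_0)\bsp_\sigma=1$. Using \cref{eq:dual_frame_coefficients}, $q_i=\bs p_{i,1}^T\bsS_{11}^{-1}\bsp_{\sigma,1}=1+\nout(d+1)^2\,\bs p_i^T\Pi_{\rm tr}\bsp_\sigma$. Here all vectors lie in $\im(U_1)$, so the $\Pi_2$ component plays no role. Therefore
\begin{equation*}
\bs p_{i,1}^T\bsS_{\rm eff}^{-1}\bsp_{\sigma,1}=1+\beta_{N,d}(q_i-1).
\end{equation*}
Substituting this into \cref{eq:def_bN_VN} gives the first expression for $V_N$.

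\emph{Step 2 (expanding the square).} Write $1+\beta(q_i-1)=\beta q_i+(1-\beta)$ and square it. This gives $\beta^2q_i^2+2\beta(1-\beta)q_i+(1-\beta)^2$. Summing against $\Var_i(\bs w_\calO)/\ntr$ reproduces $\beta^2V+2\beta(1-\beta)C+(1-\beta)^2A$, where $V$ is taken from \cref{eq:frame_coefficients_infinite_N} with $\bs w^{\rm eff}_\calO=\bs w_\calO$.

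\emph{Step 3 (averaging).} The key point is that $\sigma$ and $\calO$ are independent. Moreover, $q_i$ depends only on $\sigma$, while $\Var_i(\bs w_\calO)$ depends only on $\calO$, so the expectations factorize term by term.

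From $q_i=d[(d+1)\trace(\rho_i\sigma)-1]$ and $\EE_\sigma\trace(\rho_i\sigma)=1/d$ we get $\EE_\sigma q_i=1$. From \cref{eq:tight_dual_coefficient_second_moment}, $\EE_\sigma q_i^2=d^2$, and $\EE_\calO\Var_i(\bs w_\calO)=\frac{(d-1)(d+2)}{d(d+1)}$ independently of $i$. Hence $\EE[V]=d^2\EE[A]$, $\EE[C]=\EE[A]$, and
\begin{equation*}
\EE[V_N]=\EE[A]\bigl(\beta^2d^2+2\beta(1-\beta)+(1-\beta)^2\bigr)=\EE[A]\bigl(1+(d^2-1)\beta^2\bigr).
\end{equation*}
This expression equals $\EE[V](\beta^2+(1-\beta^2)/d^2)$. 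Inserting $\EE[A]$ and $\beta_{N,d}=1/(1+d(d+2)/N)$ gives the closed form in the statement.

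\emph{Main obstacle.} The only delicate step is Step 1. One must check that the $\Pi_0$ contribution is exactly $1$ for both inverses, which relies on normalization and on $\Pi_0$ lying inside the signal block $U_1U_1^T$. One must also check that $q_i-1$ is precisely the $\Pi_{\rm tr}$ part, so that $\bsS_{\rm eff}$ rescales only that part by $\beta_{N,d}$. Everything after that is routine algebra and design moments.
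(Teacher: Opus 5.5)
Your proof is correct and follows the paper's route. The paper's one-line proof also rests on the spectral form of $\bsS_{\rm eff}^{-1}$ from \cref{eq:Seff_decomposed}, whose $\Pi_0$ eigenvalue matches that of $\bsS_{11}^{-1}$ while its $\Pi_{\rm tr}$ eigenvalue is rescaled by $\beta_{N,d}$. You have simply made explicit what the paper leaves implicit: the contraction $1+\beta_{N,d}(q_i-1)$, and the factorized averages $\EE_\sigma q_i=1$ and $\EE_\sigma q_i^2=d^2$ that give the averaged formula.
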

\begin{proof}
    Note that with the definitions given in the lemma, and remembering that $\bs w_\calO=\bs w_\calO^{\rm eff}$, we have $V=\ntr^{-1}\sum_{i=1}^{\ntr} q_i^2 \Var_i(\bs w_\calO)$.
    The statement follows from the same identities for $\bs S_{\rm eff}$ used in \cref{lem:bN_vs_b}.
\end{proof}

Thus, as a function of $N$, we have
\begin{equation}
    \EE_{\sigma,\calO}[V_N] \sim
    \begin{cases}
    1, & N \ll d, \\
    N^2/d^2, & d \ll N \ll d^2, \\
    d^2, & N \gg d^2.
    \end{cases}
    \qquad
    \mathbb{E}_{\sigma,\mathcal O}[b_N^2]
    \sim
    \begin{cases}
    N^2/d^2,
    & N \ll d^2, \\
    d^2,
    & N \gg d^2.
    \end{cases}
\end{equation}

\section{Numerical results}
\label{app:numerics}

In this section, we present several numerical simulations supporting the theoretical discussion in the manuscript. All simulations estimate the training bias and variance numerically using a finite number of realizations of the shot noise arising during training or testing. Achieving full agreement between the simulations and the theoretical results therefore requires unbiased estimators of the bias and variance. This subtlety is discussed in detail in \cref{subsec:bias_variance_estimators}.

The numerical simulations serve two distinct purposes. First, we explore a range of settings to demonstrate the broad applicability of the manuscript’s theoretical findings. For example, in \cref{subsec:random_povm}, we reproduce the predicted scaling of the MSE, bias, and variance for randomly generated POVMs. Second, the simulations in \cref{subsec:gaussian_test} provide empirical evidence supporting the validity of the Gaussian approximation discussed in the main text.

\subsection{Unbiased estimators for the training bias and variance}
\label{subsec:bias_variance_estimators}

The bias--variance decomposition in \cref{eq:bias_var_decomposition_train,eq:test_bias_var} is defined in terms of exact averages over the shot noise. In the simulations, the average over $\hatbsp_\sigma$ is performed analytically using \cref{eq:test_bias_var} while the average over the training noise is replaced by a finite sample mean. We derive the resulting estimators at fixed training set, POVM, observable $\calO$, and test state $\sigma$; any further average over these objects is performed only afterwards.

For each realization of the training shot-noise, let
\begin{equation}
    \hat y=\hatbsW_N^{(r)}\bsp_\sigma,
    \qquad
    y=\bsW_\calO\bsp_\sigma=\langle\calO,\sigma\rangle.
    \label{eq:numerical_training_predictions}
\end{equation}
Thus $\hat y$ is the infinite-test-statistics prediction obtained from a realization training shot-noise $\hatbsXi$, whereas $y$ is the exact target. If $\bar y_{\rm tr}\equiv\EE_{\hatbsXi}[\hat y]$, the quantities defined in \cref{eq:bias_var_decomposition_train} may also written as:
\begin{equation} 
\bias^2(N,\sigma)=(\bar y_{\rm tr}-y)^2\,,\quad\Var_{\rm tr}(N,\sigma)=\EE_{\hatbsXi}[(\hat y-\bar y_{\rm tr})^2]\,.
\end{equation}
Let $R\geq2$ be the number of independent training-noise realizations and let introduce the empirical mean and sample variance
\begin{equation}
    \hat y_R\equiv\frac{1}{R}\sum_{r=1}^R \hat y_r,
    \qquad
    \hat S_{R}^2\equiv\frac{1}{R-1}\sum_{r=1}^R(\hat y_r-\hat y_R)^2.
    \label{eq:numerical_training_sample_moments}
\end{equation}
where $\{\hat y_r\}_{r=1}^R$ are $R$ independent realization of $\hat y$ over the shot-noise. Using the independence of the $R$ training realization, it is simple to show that:
\begin{equation}
\label{eq:some_estimator_mean}
    \EE[\hat y_R]=\bar y_{\rm tr}\,,\quad \Var[\hat y_{R}]=\frac{\Var_{\rm tr}(N,\sigma)}{R}\,,\quad \EE[\hat S^2_R]=\Var_{\rm tr}(N,\sigma)\,.
\end{equation}
In particular, using the last identity in \cref{eq:some_estimator_mean}, it is immediate to recognize that the sample variance is actually an unbiased estimator of the true training variance:
\begin{equation}
    \widehat{\Var}_{\rm tr}(N,\sigma)\equiv \hat S^2_{R} = \frac{1}{R-1}\sum_{r=1}^R(\hat y_r-\hat y_R)^2
\end{equation}
is actually an unbiased estimator of the training variance.

On the other hand, the estimator of the square bias is more subtle. A naive guess would be to use $(\hat y_R-y)^2$ to estimate the actual bias. However, using again the independence of the noise realizations, it is easy to check that this would not be an unbiased estimator. In fact:
\begin{equation}
    \EE[(\hat y_R-y)^2] = (y-\EE[\hat y_R])^2 + \Var[\hat y_R] = \bias(N,\sigma)^2+\frac{1}{R}\Var(N,\sigma)
    \label{eq:numerical_bias_finite_R}
\end{equation}
The term on the left is therefore positively biased at finite $R$. Unbiased estimator of the squared bias is instead
\begin{equation}
    \widehat{\bias^2}(N,\sigma)\equiv(\hat y_R-y)^2-\frac{1}{R}\hat S_{R}^2.
    \label{eq:unbiased_training_bias_variance_estimators}
\end{equation}

\subsection{Numerical tests of the Gaussian approximation}
\label{subsec:gaussian_test}

We compare the prediction errors obtained with exact multinomial sampling and with a Gaussian approximation having the same conditional mean and covariance. For each training state, the empirical probability vector is generated according to either
\begin{equation}
\widehat{\boldsymbol p}
=\frac{1}{N}\operatorname{Mult}(N,\boldsymbol p),
\qquad\text{or}\qquad
\widehat{\boldsymbol p}
=\boldsymbol p+\frac{\boldsymbol\xi}{\sqrt N},
\quad
\boldsymbol\xi\sim\mathcal N\!\left(
0,\operatorname{diag}(\boldsymbol p)-\boldsymbol p\boldsymbol p^\top
\right).
\end{equation}
The Gaussian model retains the state dependence of the categorical covariance and its normalization constraint. It therefore tests the replacement of multinomial fluctuations by Gaussian fluctuations, separately from the additional homogeneity assumption introduced in section \cref{app:concentration}.

\Cref{fig:multi_vs_gauss_1} presents this comparison for a qubit ($d=2$) measured with either the MUB POVM ($n_{\text{out}}=6$) or a random POVM with $n_{\text{out}}=16$ outcomes. Each realization of the training noise defines a fitted regressor, whose squared prediction error is averaged over 50 Haar-random test states and 20 rank-one observables. We report both the mean and the median over noise realizations, together with the corresponding spread, to distinguish typical performance from sensitivity to rare events.

For $n_{\mathrm{tr}}<d^2=4$, the training states do not span the full operator space. Increasing $N$ then suppresses measurement noise but does not remove the lack of information in the training set, resulting in a nonzero large-$N$ error. Once the training states span the operator space, the error decreases with increasing measurement statistics. The multinomial and Gaussian models exhibit the same qualitative behavior, with close agreement between their medians and percentile ranges across the settings shown.

Near the interpolation threshold, the distribution of the single-experiment MSE becomes broad, and its mean is particularly sensitive to rare, poorly conditioned training matrices. The resulting fluctuations occur for both noise models and make the empirical mean a less stable indicator of agreement than the median and quantiles. These comparisons support the Gaussian approximation for the prediction statistics considered here. They do not, by themselves, establish the existence of inverse moments near the interpolation threshold or justify replacing the state-dependent covariance by a homogeneous one.
\begin{figure}[h!]
    \centering 
    \includegraphics[width=0.75\linewidth]{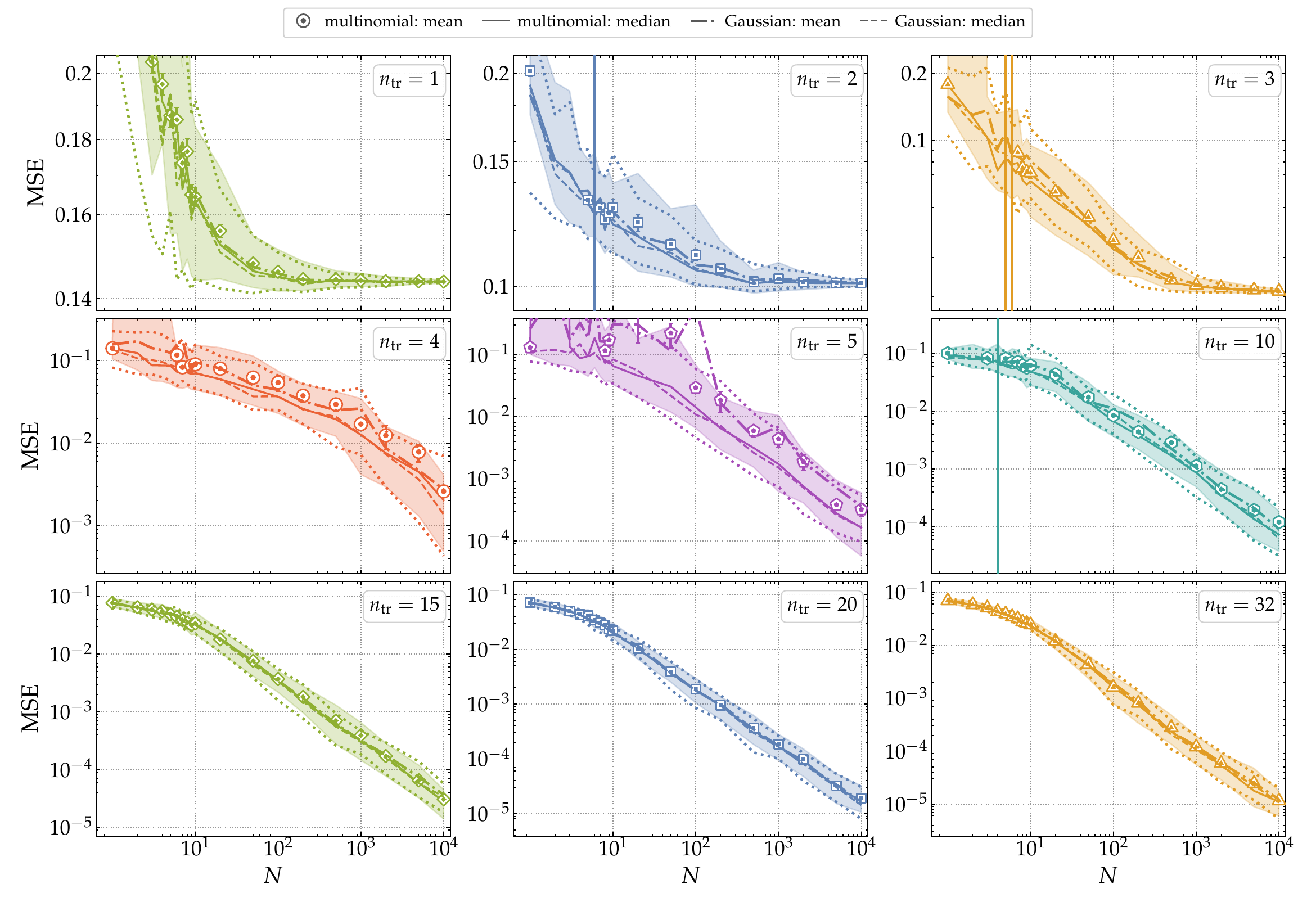}
    \includegraphics[width=0.75\linewidth]{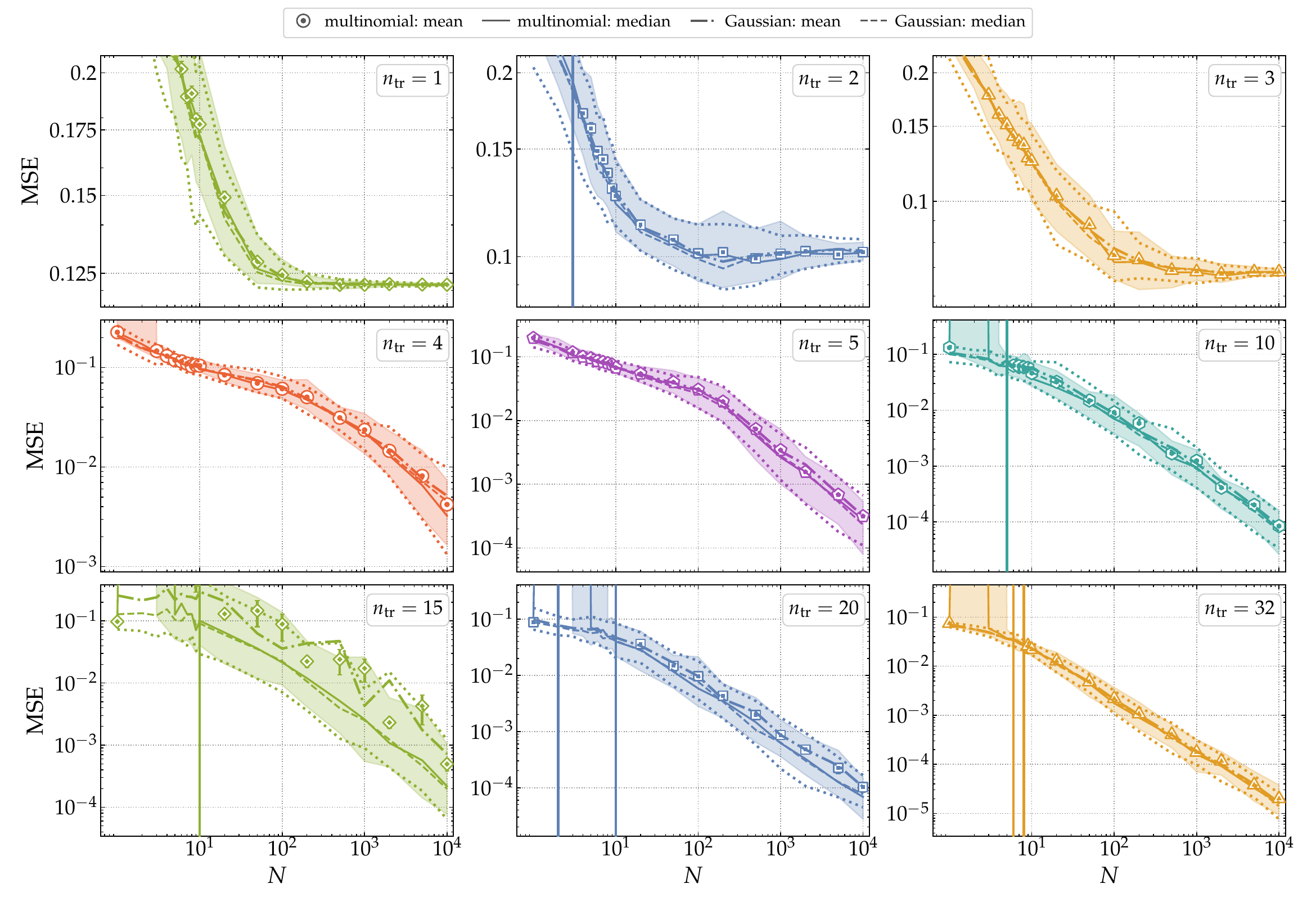}
    \caption{Single-experiment MSE of pseudoinverse regression under multinomial shot noise and under its Gaussian approximation, for a qubit ($d=2$) measured with the MUB (upper panel) and a random POVM ($n_{\rm out}=6$ outcomes) (lower panel), as a function of the number $N$ of shots per training state. An experiment is one draw of the shot noise on the $n_{\rm tr}$ training states; its MSE is the squared prediction error averaged over $50$ Haar-random test states and $20$ rank-one observables. 
    Multinomial noise, $\hat{p}=\mathrm{Mult}(N,p)/N$: markers show the mean, with error bars giving its standard error; the solid line shows the median; the shaded band spans the 10th--90th percentiles. Gaussian approximation, $\hat{p}=p+\xi/\sqrt{N}$ with $\xi\sim\mathcal{N}\big(0,\mathrm{diag}(p)-pp^{\top}\big)$: the dash-dotted line shows the mean, the dashed line the median, and the dotted lines the 10th and 90th percentiles. For $n_{\rm tr}<d^2=4$ the training states do not span the space of qubit operators and the MSE saturates at large $N$; for $n_{\rm tr}\geq 4$ it keeps decreasing. Near the interpolation threshold $n_{\rm tr}\simeq n_{\rm out}$ the single-experiment MSE is heavy-tailed for both noise models, so the mean is dominated by rare experiments and fluctuates strongly; means outside the plotted range are not shown.}
    \label{fig:multi_vs_gauss_1}
\end{figure}

\newpage
\subsection{Simulations with random POVMs}
\label{subsec:random_povm}

We next investigate whether the resource scalings derived above persist for randomly generated POVMs, beyond the symmetric measurement designs for which explicit ensemble averages are available. \Cref{fig:povm_random_M_inf_vs_N} and \cref{fig:povm_random_M_1000_vs_N} consider a qubit with $n_{\mathrm{out}}\in\{8,16,32,64\}$, trained on $n_{\mathrm{tr}}=256$ states and evaluated on $200$ test states. The plotted markers and bands summarize the median and the $10$th-$90$th percentiles over $100$ random target observables.

With exact test probabilities,  \cref{fig:povm_random_M_inf_vs_N} displays the two regimes predicted by section \cref{app:asymptotic}. At lower training statistics, the systematic bias dominates and the MSE decreases approximately as $N^{-2}$. At higher statistics, the training fluctuations dominate, giving the slower decay $(Nn_{\mathrm{tr}})^{-1}$. The symmetric-design expressions provide reference curves for these contributions. The random POVM results reproduce their scaling and show only a weak dependence on the number of outcomes over the range considered, although the smallest POVM exhibits a larger error and a broader spread.

Finite test statistics change the large-$N$ behavior. In \cref{fig:povm_random_M_1000_vs_N}, where $M=10^3$, the initial decrease is arrested by a plateau arising from the test-sampling contribution
\begin{equation}
\frac{1}{M}\,
\mathbb E_{\mathrm{tr}}\!\left[
\widehat{\boldsymbol w}_N
\Sigma_\sigma
\widehat{\boldsymbol w}_N^\top
\right].
\end{equation}
The symmetric qubit value $2/(3M)$ is shown as a reference. The observed saturation illustrates that increasing the training statistics alone cannot eliminate the error caused by a fixed test-measurement budget.

\Cref{fig:povm_random_single_experiment} examines a single fixed regression instance, allowing the analytical prediction to be compared with simulations without averaging over different measurement geometries. We vary the training-target noise, the test-target noise, and the test budget, and compare the empirical MSE with
\begin{equation}
f(N)=C_0+\frac{C_1}{N}+\frac{C_2}{N^2}+C_3N,
\end{equation}
using coefficients evaluated for that instance. The decreasing terms describe the training fluctuation and squared-bias contributions, while the constant term includes the noise floors. When training-target noise and finite test sampling are both present, the term proportional to $N$ produces an eventual increase of the error: more precise training features amplify target noise through the increasingly small singular values of the empirical feature matrix. With exact test probabilities, this growing contribution is absent. Independent test-target noise instead adds an irreducible constant contribution. The agreement shown in the figure supports this separation of the different error mechanisms.

Finally, \cref{fig:povm_random_M_inf_vs_nout} and \cref{fig:povm_random_M_1000_vs_nout} vary $n_{\mathrm{out}}$ at fixed training set size. The error initially decreases as the POVM becomes more redundant, remains comparatively flat at intermediate outcome numbers, and rises sharply near the interpolation threshold $n_{\mathrm{out}}\simeq n_{\mathrm{tr}}$. Beyond this region, it decreases again. Finite test sampling strongly enhances the peak because poorly conditioned regressors also amplify fluctuations in the test probabilities. Away from the peak, the finite-$M$ curves become largely insensitive to $N$ once test sampling dominates. These scans show that additional measurement outcomes can improve performance, but their benefit depends on the conditioning of the empirical regression problem and on the available training and test statistics.



\begin{figure}[htbp]
    \centering
    \begin{minipage}[t]{0.48\textwidth}
        \centering
        \includegraphics[width=\linewidth]{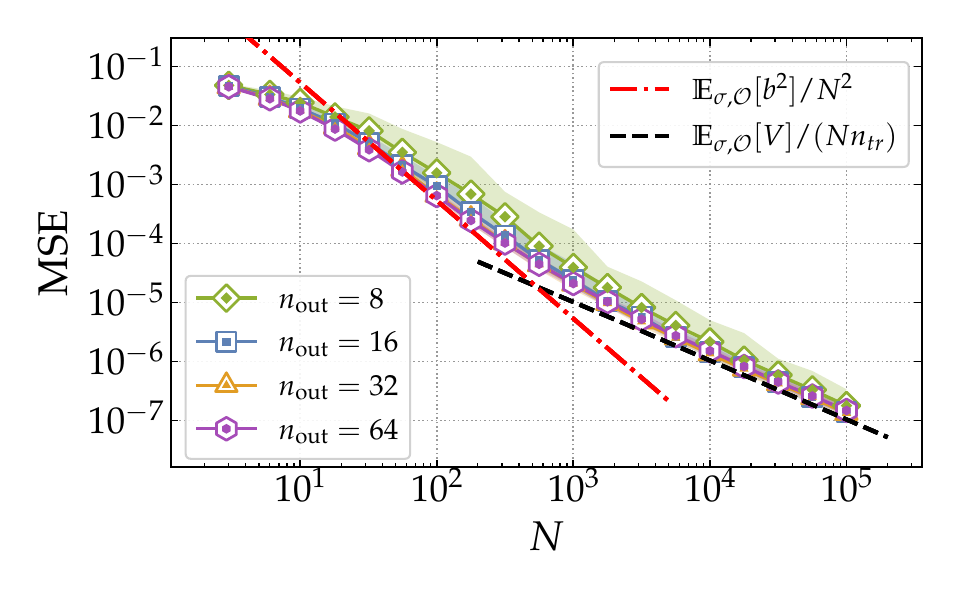}
    \caption{Test MSE as a function of $N$ for a single qubit ($d=2$) measured with random POVMs with $n_{\mathrm{out}} \in \{8, 16, 32, 64\}$ outcomes, with exact outcome probabilities ($M = \infty$). Markers show the median over $n_{\mathrm{obs}} = 100$ random observables $\mathcal{O}$, and shaded bands show the 10th--90th percentile range. The model is trained on $n_{\mathrm{tr}} = 256$ states and tested on $200$ states. Two regimes appear. For small $N$, the error is dominated by the bias and decays as $N^{-2}$, following $\mathbb{E}_{\sigma,\mathcal{O}}[b^2]/N^2$ (red dash-dotted line). For large $N$, the variance takes over and the error decays as $N^{-1}$, following $\mathbb{E}_{\sigma,\mathcal{O}}[V]/(N n_{\mathrm{tr}})$ (black dashed line). Unlike the finite-$M$ case, there is no plateau. The curves depend only weakly on $n_{\mathrm{out}}$: the $n_{\mathrm{out}} = 8$ curve sits slightly higher and has a wider spread.}
    \label{fig:povm_random_M_inf_vs_N}
    \end{minipage}
    \hfill
            \begin{minipage}[t]{0.48\textwidth}
        \centering
        \includegraphics[width=\linewidth]{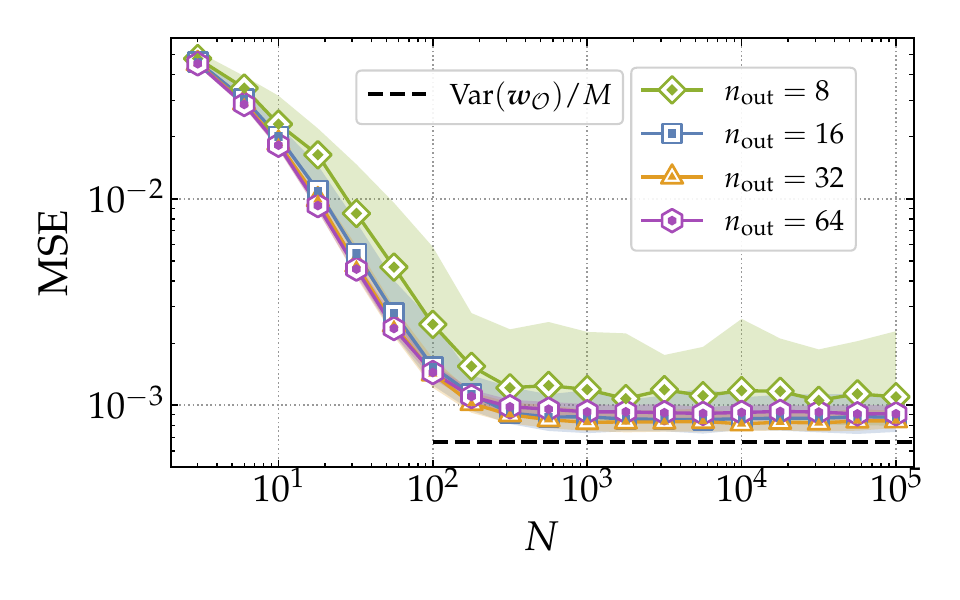}
        \caption{Same as Fig.~\ref{fig:povm_random_M_inf_vs_N}, but with finite statistics of $M = 10^3$ shots per state. As $N$ grows, the MSE drops and then levels off at a plateau set by shot noise, instead of continuing to decay. The dashed line is the limit $\mathrm{Var}(\boldsymbol{w}_{\mathcal{O}})/M = 2/(3M)$. As in the $M=\infty$ case, the curves depend only weakly on $n_{\mathrm{out}}$: the $n_{\mathrm{out}} = 8$ curve sits slightly higher and has a wider spread.}
        \label{fig:povm_random_M_1000_vs_N}
    \end{minipage}
\end{figure}


\begin{figure}[t]
    \centering
    \includegraphics[width=0.75\linewidth]{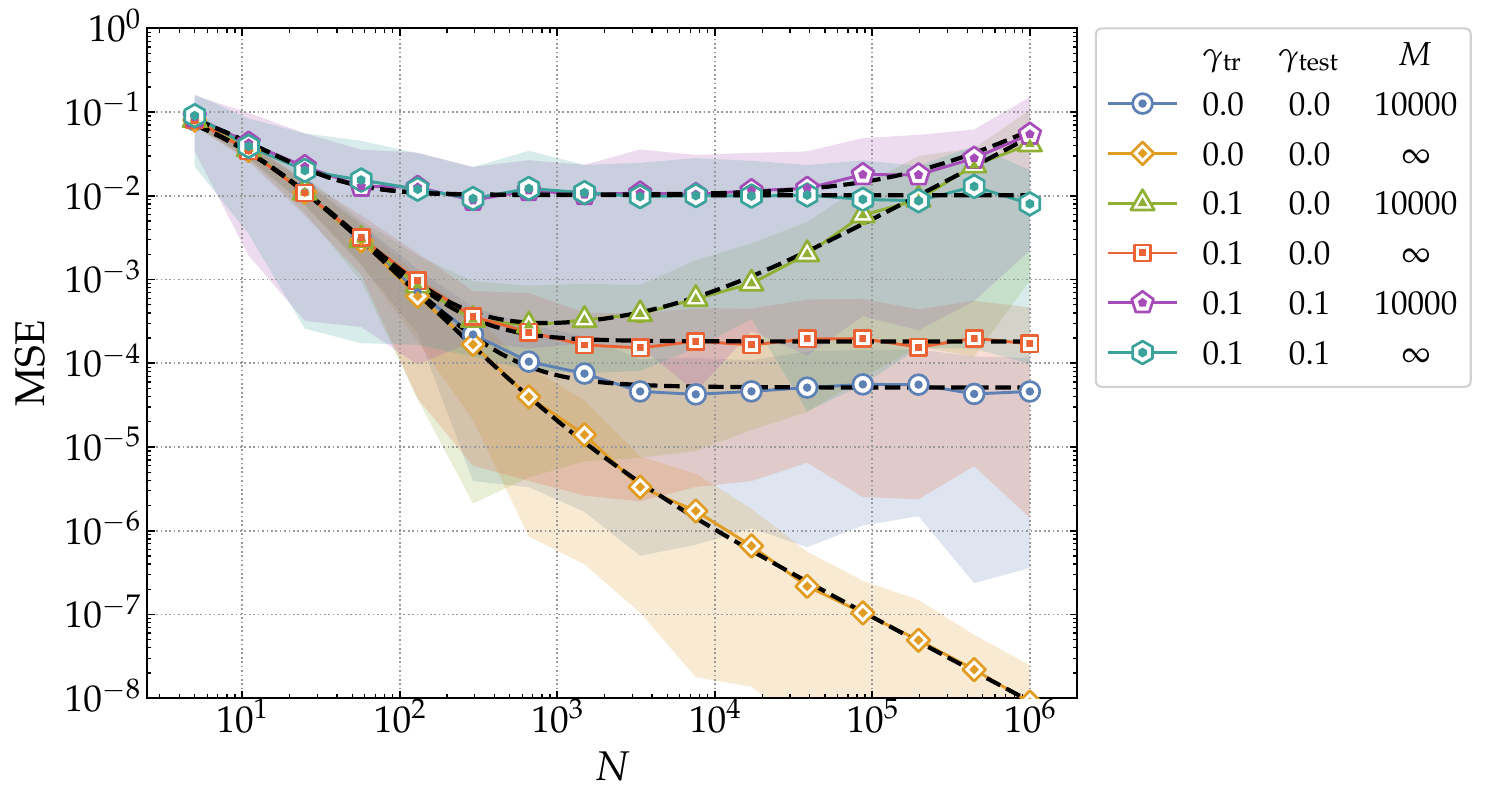}
\caption{Empirical MSE of a single fixed regression instance compared with the prediction $f(N) = C_0 + C_1/N + C_2/N^2 + C_3 N$, as a function of the number $N$ of training shots per training state. The instance has $d=2$, a Haar-random rank-one POVM with $n_{\rm out}=16$ outcomes, $n_{\rm tr}=256$ training states, one test state $\sigma$ and one target observable. Each curve corresponds to one choice of the training label-noise level $\gamma_{\rm tr}$, the test label-noise level $\gamma_{\rm test}$ and the number of test shots $M$, as given in the legend. For each $N$, all noise sources are sampled jointly over $100$ independent realizations. Markers show the mean squared error and shaded bands span the 10th--90th percentiles. Black dashed lines show $f(N)$, with coefficients evaluated in closed form for this instance.}
\label{fig:povm_random_single_experiment}
\end{figure}

\begin{figure}[htbp]
    \centering
        \begin{minipage}[t]{0.48\textwidth}
        \centering
        \includegraphics[width=\linewidth]{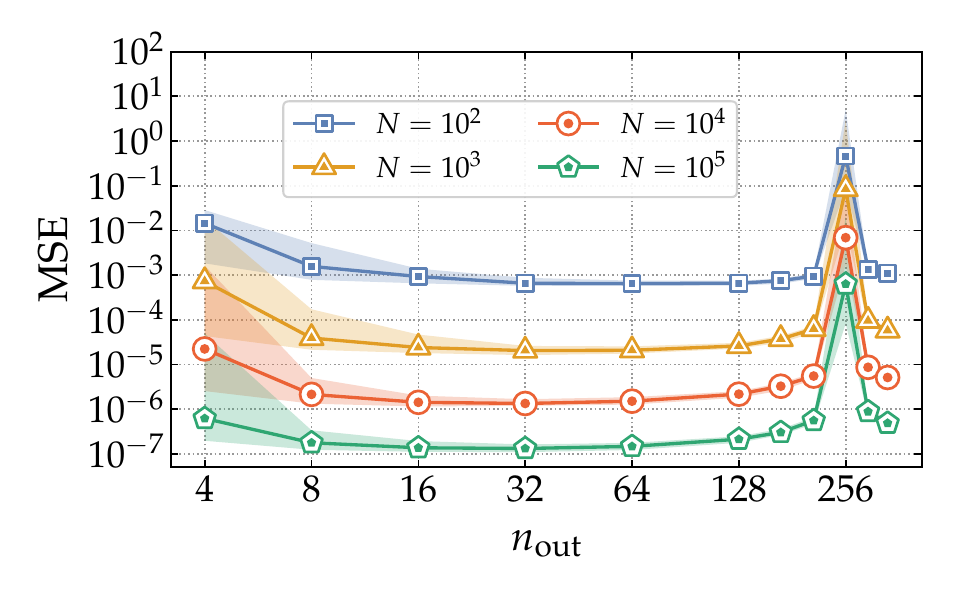}
        \caption{Test MSE as a function of the number of POVM outcomes $n_{\mathrm{out}}$ for a single qubit ($d=2$) measured with random POVMs and exact outcome probabilities ($M = \infty$), for $N \in \{10^2, 10^3, 10^4, 10^5\}$. Markers show the median over $n_{\mathrm{obs}} = 100$ random observables $\mathcal{O}$, and shaded bands show the 10th--90th percentile range. The model is trained on $n_{\mathrm{tr}} = 256$ states and tested on $200$ states. The MSE falls quickly from $n_{\mathrm{out}} = 4$ and stays nearly flat between $n_{\mathrm{out}} \approx 16$ and $64$. It then rises and peaks sharply at $n_{\mathrm{out}} = n_{\mathrm{tr}}$, where the linear system is square and ill-conditioned (interpolation threshold), before decreasing again for $n_{\mathrm{out}} > n_{\mathrm{tr}}$. At every $n_{\mathrm{out}}$, increasing $N$ lowers the error by roughly an order of magnitude per decade, in line with the $N^{-1}$ scaling of Fig.~\ref{fig:povm_random_M_inf_vs_N}.}
        \label{fig:povm_random_M_inf_vs_nout}
    \end{minipage}
    \hfill
    \begin{minipage}[t]{0.48\textwidth}
        \centering
        \includegraphics[width=\linewidth]{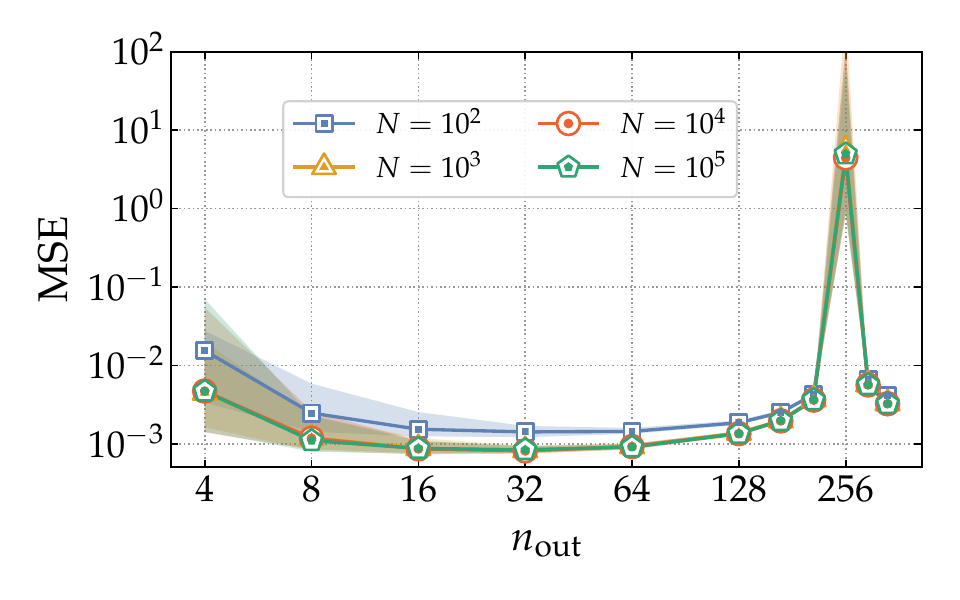}
        \caption{Same as Fig.~\ref{fig:povm_random_M_inf_vs_nout}, but with finite statistics of $M = 10^3$ shots per state. For $N \gtrsim 10^3$ the curves collapse onto one another, because the error is set by the shot-noise floor $\mathrm{Var}(\boldsymbol{w}_{\mathcal{O}})/M$ rather than by $N$ (cf.\ Fig.~\ref{fig:povm_random_M_1000_vs_N}). The dependence on $n_{\mathrm{out}}$ is the same as for $M=\infty$: the error is lowest for $16 \lesssim n_{\mathrm{out}} \lesssim 64$ and peaks at $n_{\mathrm{out}} = n_{\mathrm{tr}} = 256$. Shot noise makes the peak much stronger, with an MSE of order unity or higher.}
        \label{fig:povm_random_M_1000_vs_nout}
    \end{minipage}
\end{figure}

\end{widetext}

\bibliography{bibliography}

@article{jaeger2004harnessing,
  title={Harnessing nonlinearity: Predicting chaotic systems and saving energy in wireless communication},
  author={Jaeger, Herbert and Haas, Harald},
  journal={science},
  volume={304},
  number={5667},
  pages={78--80},
  year={2004},
  publisher={American Association for the Advancement of Science}
}

@article{huang2006extreme,
  title={Extreme learning machine: theory and applications},
  author={Huang, Guang-Bin and Zhu, Qin-Yu and Siew, Chee-Kheong},
  journal={Neurocomputing},
  volume={70},
  number={1-3},
  pages={489--501},
  year={2006},
  publisher={Elsevier}
}

@book{scholkopf2001learning,
  title={Learning with kernels: support vector machines, regularization, optimization, and beyond},
  author={Sch{\"o}lkopf, Bernhard and Smola, Alexander J},
  year={2001},
  publisher={the MIT Press}
}

@article{tripledescend,
  author        = {Lo Monaco, Gabriele and others},
  title         = {Triple descent in quantum linear regression},
  journal       = {},
  year          = {},
  note          = {to appear},
  eprint        = {},
  archivePrefix = {},
}

@article{jager2023universal,
  author = {J{\"a}ger, Jonas and Krems, Roman V},
  title = {Universal expressiveness of variational quantum classifiers and quantum kernels for support vector machines},
  journal = {Nature Communications},
  volume = {14},
  number = {1},
  pages = {576},
  year = {2023},
  doi = {10.1038/s41467-023-36144-5},
  eprint = {2207.05865},
  archiveprefix = {arXiv}
}

@misc{jager2026provable,
  title = {Provable and Scalable Quantum {{Gaussian}} Processes for Quantum Learning},
  author = {J{\"a}ger, Jonas and Braccia, Paolo and Bermejo, Pablo and Algaba, Manuel G. and {Garc{\'i}a-Mart{\'i}n}, Diego and Cerezo, M.},
  year = 2026,
  month = apr,
  number = {arXiv:2605.00099},
  eprint = {2605.00099},
  primaryclass = {quant-ph},
  publisher = {arXiv},
  doi = {10.48550/arXiv.2605.00099},
  archiveprefix = {arXiv}
}

@misc{jager2026QuantumGaussianProcesses,
  title = {Quantum {{Gaussian}} Processes for Prediction of Channel Observations},
  author = {J{\"a}ger, Jonas and Khmelnitskiy, Yaroslav and Braccia, Paolo and Miroszewski, Artur and {Garc{\'i}a-Mart{\'i}n}, Diego and Cerezo, M. and Czarnik, Piotr},
  year = 2026,
  month = aug,
  number = {arXiv:2608.19306},
  eprint = {2608.19306},
  primaryclass = {quant-ph},
  publisher = {arXiv},
  doi = {10.48550/arXiv.2608.19306},
  archiveprefix = {arXiv}
}

@article{wedin1973perturbation,
  author = {Wedin, Per-{\AA}ke},
  title = {Perturbation theory for pseudo-inverses},
  journal = {BIT Numerical Mathematics},
  volume = {13},
  number = {2},
  pages = {217--232},
  year = {1973},
  doi = {10.1007/BF01933494}
}

@article{zia2025quantum,
  author = {Zia, Danilo and Innocenti, Luca and Minati, Giorgio and Lorenzo, Salvatore and Suprano, Alessia and Di Bartolo, Rosario and Spagnolo, Nicol{\`o} and Giordani, Taira and Cimini, Valeria and Palma, G. Massimo and Ferraro, Alessandro and Sciarrino, Fabio and Paternostro, Mauro},
  title = {Quantum Reservoir Computing for Photonic Entanglement Witnessing},
  journal = {Science Advances},
  volume = {11},
  number = {50},
  pages = {eady7987},
  year = {2025},
  month = dec,
  doi = {10.1126/sciadv.ady7987},
  eprint = {2502.18361},
  archiveprefix = {arXiv},
  primaryclass = {quant-ph}
}

@article{suprano2024experimental,
  author = {Suprano, Alessia and Zia, Danilo and Innocenti, Luca and Lorenzo, Salvatore and Cimini, Valeria and Giordani, Taira and Palmisano, Ivan and Polino, Emanuele and Spagnolo, Nicol{\`o} and Sciarrino, Fabio and Palma, G. Massimo and Ferraro, Alessandro and Paternostro, Mauro},
  title = {Experimental {{Property Reconstruction}} in a {{Photonic Quantum Extreme Learning Machine}}},
  journal = {Physical Review Letters},
  volume = {132},
  number = {16},
  pages = {160802},
  year = {2024},
  month = apr,
  issn = {0031-9007},
  doi = {10.1103/PhysRevLett.132.160802},
  eprint = {2308.04543},
  archiveprefix = {arXiv},
  primaryclass = {quant-ph},
  langid = {english}
}

@article{lomonaco2024quantum,
  author = {Lo Monaco, Gabriele and Bertini, Marco and Lorenzo, Salvatore and Palma, G. Massimo},
  title = {Quantum Extreme Learning of Molecular Potential Energy Surfaces and Force Fields},
  journal = {Machine Learning: Science and Technology},
  volume = {5},
  number = {3},
  pages = {035014},
  year = {2024},
  issn = {2632-2153},
  doi = {10.1088/2632-2153/ad6120},
  eprint = {2406.14607},
  archiveprefix = {arXiv},
  primaryclass = {quant-ph}
}

@article{khan2024practical,
  author = {Khan, Saeed A and Kaufman, Ryan and Mesits, Boris and Hatridge, Michael and T{\"u}reci, Hakan E},
  title = {Practical Trainable Temporal Postprocessor for Multistate Quantum Measurement},
  journal = {PRX Quantum},
  volume = {5},
  number = {2},
  pages = {020364},
  year = {2024},
  doi = {10.1103/PRXQuantum.5.020364},
  eprint = {2310.18519},
  archiveprefix = {arXiv}
}

@article{hu2023tackling,
  author = {Hu, Fangjun and Angelatos, Gerasimos and Khan, Saeed A and Vives, Marti and T{\"u}reci, Esin and Bello, Leon and Rowlands, Graham E and Ribeill, Guilhem J and T{\"u}reci, Hakan E},
  title = {Tackling sampling noise in physical systems for machine learning applications: Fundamental limits and eigentasks},
  journal = {Physical Review X},
  volume = {13},
  number = {4},
  pages = {041020},
  year = {2023},
  doi = {10.1103/PhysRevX.13.041020},
  eprint = {2307.16083},
  archiveprefix = {arXiv}
}

@article{kempkes2026double,
  author = {Kempkes, Marie and Ijaz, Aroosa and {Gil-Fuster}, Elies and {Bravo-Prieto}, Carlos and Spiegelberg, Jakob and Van Nieuwenburg, Evert and Dunjko, Vedran},
  title = {Double {{Descent}} in {{Quantum Kernel Methods}}},
  journal = {PRX Quantum},
  volume = {7},
  number = {1},
  pages = {010312},
  year = {2026},
  month = jan,
  issn = {2691-3399},
  doi = {10.1103/cn64-gs6b},
  eprint = {2501.10077},
  archiveprefix = {arXiv},
  primaryclass = {quant-ph},
  urldate = {2026-08-01},
  langid = {english}
}

@article{liang2020just,
  author = {Liang, Tengyuan and Rakhlin, Alexander},
  title = {Just interpolate: Kernel “ridgeless” regression can generalize},
  journal = {The Annals of Statistics},
  volume = {48},
  number = {3},
  pages = {1329--1347},
  year = {2020},
  doi = {10.1214/19-AOS1849},
  eprint = {1808.00387},
  archiveprefix = {arXiv}
}

@article{dobriban2018high,
  author = {Dobriban, Edgar and Wager, Stefan},
  title = {High-dimensional asymptotics of prediction: Ridge regression and classification},
  journal = {The Annals of Statistics},
  volume = {46},
  number = {1},
  pages = {247--279},
  year = {2018},
  doi = {10.1214/17-AOS1549},
  eprint = {1507.03003},
  archiveprefix = {arXiv}
}

@article{bishop1995training,
  author = {Bishop, Chris M},
  title = {Training with noise is equivalent to Tikhonov regularization},
  journal = {Neural Computation},
  volume = {7},
  number = {1},
  pages = {108--116},
  year = {1995},
  doi = {10.1162/neco.1995.7.1.108}
}

@inproceedings{belkin2018understand,
  author = {Belkin, Mikhail and Ma, Siyuan and Mandal, Soumik},
  editor = {Dy, Jennifer and Krause, Andreas},
  title = {To understand deep learning we need to understand kernel learning},
  booktitle = {Proceedings of the 35th International Conference on Machine Learning},
  series = {Proceedings of Machine Learning Research},
  volume = {80},
  pages = {541--549},
  year = {2018},
  publisher = {PMLR},
  eprint = {1802.01396},
  archiveprefix = {arXiv}
}

@article{belkin2019ReconcilingModernMachine,
  title = {Reconciling Modern Machine Learning Practice and the Bias-Variance Trade-Off},
  author = {Belkin, Mikhail and Hsu, Daniel and Ma, Siyuan and Mandal, Soumik},
  year = 2019,
  month = aug,
  journal = {Proceedings of the National Academy of Sciences},
  volume = {116},
  number = {32},
  eprint = {1812.11118},
  primaryclass = {stat.ML},
  pages = {15849--15854},
  issn = {0027-8424, 1091-6490},
  doi = {10.1073/pnas.1903070116},
  urldate = {2026-09-26},
  archiveprefix = {arXiv}
}

@article{muthukumar2020harmless,
  author = {Muthukumar, Vidya and Vodrahalli, Kailas and Subramanian, Vignesh and Sahai, Anant},
  title = {Harmless interpolation of noisy data in regression},
  journal = {IEEE Journal on Selected Areas in Information Theory},
  volume = {1},
  number = {1},
  pages = {67--83},
  year = {2020},
  doi = {10.1109/JSAIT.2020.2984716},
  eprint = {1903.09139},
  archiveprefix = {arXiv}
}

@inproceedings{derezinski2020exact,
  author = {Derezinski, Michal and Liang, Feynman T and Mahoney, Michael W},
  title = {Exact expressions for double descent and implicit regularization via surrogate random design},
  booktitle = {Advances in Neural Information Processing Systems},
  volume = {33},
  pages = {5152--5164},
  year = {2020},
  eprint = {1912.04533},
  archiveprefix = {arXiv}
}

@article{bartlett2020benign,
  author = {Bartlett, Peter L and Long, Philip M and Lugosi, G{\'a}bor and Tsigler, Alexander},
  title = {Benign overfitting in linear regression},
  journal = {Proceedings of the National Academy of Sciences},
  volume = {117},
  number = {48},
  pages = {30063--30070},
  year = {2020},
  doi = {10.1073/pnas.1907378117},
  eprint = {1906.11300},
  archiveprefix = {arXiv}
}

@inproceedings{bibas2019new,
  author = {Bibas, Koby and Fogel, Yaniv and Feder, Meir},
  title = {A new look at an old problem: A universal learning approach to linear regression},
  booktitle = {2019 IEEE International Symposium on Information Theory (ISIT)},
  pages = {2304--2308},
  year = {2019},
  organization = {IEEE},
  doi = {10.1109/ISIT.2019.8849398},
  eprint = {1905.04708},
  archiveprefix = {arXiv}
}

@article{mei2022generalization,
  author = {Mei, Song and Montanari, Andrea},
  title = {The generalization error of random features regression: Precise asymptotics and the double descent curve},
  journal = {Communications on Pure and Applied Mathematics},
  volume = {75},
  number = {4},
  pages = {667--766},
  year = {2022},
  doi = {10.1002/cpa.22008},
  eprint = {1908.05355},
  archiveprefix = {arXiv}
}

@article{hastie2022surprises,
  author = {Hastie, Trevor and Montanari, Andrea and Rosset, Saharon and Tibshirani, Ryan J},
  title = {Surprises in high-dimensional ridgeless least squares interpolation},
  journal = {The Annals of Statistics},
  volume = {50},
  number = {2},
  pages = {949--986},
  year = {2022},
  doi = {10.1214/21-AOS2133},
  eprint = {1903.08560},
  eprinttype = {arXiv},
  eprintclass = {math}
}

@article{innocenti2023shadow,
  author = {Innocenti, Luca and Lorenzo, S. and Palmisano, I. and Albarelli, F. and Ferraro, A. and Paternostro, M. and Palma, G. M.},
  title = {Shadow {{Tomography}} on {{General Measurement Frames}}},
  journal = {PRX Quantum},
  volume = {4},
  number = {4},
  pages = {040328},
  year = {2023},
  month = nov,
  issn = {2691-3399},
  doi = {10.1103/PRXQuantum.4.040328},
  eprint = {2301.13229},
  archiveprefix = {arXiv},
  primaryclass = {quant-ph},
  urldate = {2025-04-22},
  langid = {english}
}

@book{christensen2003introduction,
  author = {Christensen, Ole},
  title = {An {{Introduction}} to {{Frames}} and {{Riesz Bases}}},
  series = {Applied and {{Numerical Harmonic Analysis}}},
  edition = {2},
  year = {2016},
  publisher = {Birkh\"auser Cham},
  address = {Cham},
  isbn = {978-3-319-25611-5},
  doi = {10.1007/978-3-319-25613-9},
  urldate = {2025-10-15}
}

@incollection{casazza2016brief,
  author = {Casazza, Peter G. and Lynch, Richard G.},
  editor = {Okoudjou, Kasso A.},
  title = {A Brief Introduction to {{Hilbert}} Space Frame Theory and Its Applications},
  booktitle = {Finite {{Frame Theory}}: {{A Complete Introduction}} to {{Overcompleteness}}},
  series = {Proceedings of {{Symposia}} in {{Applied Mathematics}}},
  volume = {73},
  pages = {1--51},
  year = {2016},
  publisher = {American Mathematical Society},
  address = {Providence, Rhode Island},
  doi = {10.1090/psapm/073/00627},
  eprint = {1509.07347},
  archiveprefix = {arXiv},
  primaryclass = {math.FA},
  urldate = {2021-05-27}
}

@incollection{casazza2013introduction,
  author = {Casazza, Peter G. and Kutyniok, Gitta and Philipp, Friedrich},
  editor = {Casazza, Peter G. and Kutyniok, Gitta},
  title = {Introduction to {{Finite Frame Theory}}},
  booktitle = {Finite {{Frames}}: {{Theory}} and {{Applications}}},
  series = {Applied and {{Numerical Harmonic Analysis}}},
  pages = {1--53},
  year = {2013},
  publisher = {Birkh\"auser},
  address = {Boston},
  doi = {10.1007/978-0-8176-8373-3_1},
  urldate = {2025-09-22},
  langid = {english}
}

@book{gupta2018matrix,
  author = {Gupta, Arjun K and Nagar, Daya K},
  title = {Matrix variate distributions},
  series = {Monographs and Surveys in Pure and Applied Mathematics},
  volume = {104},
  edition = {1},
  year = {1999},
  publisher = {Chapman and Hall CRC},
  isbn = {978-1-58488-046-2},
  doi = {10.1201/9780203749289}
}

@book{muirhead2009aspects,
  author = {Muirhead, Robb J},
  title = {Aspects of multivariate statistical theory},
  series = {Wiley Series in Probability and Statistics},
  year = {2009},
  publisher = {John Wiley \& Sons},
  isbn = {978-0-470-31670-2},
  doi = {10.1002/9780470316559}
}

@inbook{eaton2007wishart,
  author = {Eaton, Morris L},
  title = {Chapter 8: The Wishart Distribution},
  booktitle = {Multivariate Statistics: A Vector Space Approach},
  series = {Lecture Notes--Monograph Series},
  volume = {53},
  pages = {302--333},
  year = {2007},
  publisher = {Institute of Mathematical Statistics},
  doi = {10.1214/lnms/1196285114}
}

@book{zhang2005SchurComplementIts,
  title = {The {{Schur Complement}} and {{Its Applications}}},
  editor = {Zhang, Fuzhen},
  year = 2005,
  series = {Numerical {{Methods}} and {{Algorithms}}},
  volume = {4},
  publisher = {Springer-Verlag},
  address = {New York},
  doi = {10.1007/b105056},
  urldate = {2025-05-21},
  copyright = {http://www.springer.com/tdm},
  isbn = {978-0-387-24271-2},
  langid = {english}
}

@article{gonzalez2015laurent,
  author = {Gonzalez-Rodriguez, Pedro and Moscoso, Miguel and Kindelan, Manuel},
  title = {Laurent expansion of the inverse of perturbed, singular matrices},
  journal = {Journal of Computational Physics},
  volume = {299},
  pages = {307--319},
  year = {2015},
  doi = {10.1016/j.jcp.2015.07.006}
}

@book{stewart1990Matrix,
  author = {Stewart, G. W. and Sun, Ji-guang},
  title = {Matrix {{Perturbation Theory}}},
  edition = {1},
  year = {1990},
  month = jul,
  publisher = {Academic Press},
  address = {Boston},
  isbn = {978-0-12-670230-9},
  langid = {english}
}

@article{stewart1977perturbation,
  author = {Stewart, G. W.},
  title = {On the {{Perturbation}} of {{Pseudo-Inverses}}, {{Projections}} and {{Linear Least Squares Problems}}},
  journal = {SIAM Review},
  volume = {19},
  number = {4},
  pages = {634--662},
  year = {1977},
  month = oct,
  issn = {0036-1445},
  doi = {10.1137/1019104},
  langid = {english}
}

@online{nakkiran2019MoreDataCan,
  author = {Nakkiran, Preetum},
  title = {More {{Data Can Hurt}} for {{Linear Regression}}: {{Sample-wise Double Descent}}},
  shorttitle = {More {{Data Can Hurt}} for {{Linear Regression}}},
  date = {2019-12-16},
  doi = {10.48550/arXiv.1912.07242},
  eprint = {1912.07242},
  eprinttype = {arXiv},
  eprintclass = {stat},
  url = {https://arxiv.org/abs/1912.07242},
  urldate = {2025-06-13},
  pubstate = {prepublished}
}

@book{hastie2009Elements,
  author = {Hastie, Trevor and Tibshirani, Robert and Friedman, Jerome},
  title = {The {{Elements}} of {{Statistical Learning}}},
  series = {Springer Series in Statistics},
  edition = {2},
  date = {2009},
  publisher = {Springer},
  location = {New York, NY},
  isbn = {978-0-387-84857-0},
  doi = {10.1007/978-0-387-84858-7}
}

@inproceedings{kubler2021inductive,
  author = {K{\"u}bler, Jonas and Buchholz, Simon and Sch{\"o}lkopf, Bernhard},
  title = {The inductive bias of quantum kernels},
  booktitle = {Advances in Neural Information Processing Systems},
  volume = {34},
  pages = {12661--12673},
  year = {2021},
  eprint = {2106.03747},
  archiveprefix = {arXiv}
}

@article{gentinetta2024complexity,
  author = {Gentinetta, Gian and Thomsen, Arne and Sutter, David and Woerner, Stefan},
  title = {The complexity of quantum support vector machines},
  journal = {Quantum},
  volume = {8},
  pages = {1225},
  year = {2024},
  doi = {10.22331/q-2024-01-11-1225},
  eprint = {2203.00031},
  archiveprefix = {arXiv}
}

@article{xiong2025fundamental,
  author = {Xiong, Weijie and Facelli, Giorgio and Sahebi, Mehrad and Agnel, Owen and Chotibut, Thiparat and Thanasilp, Supanut and Holmes, Zo{\"e}},
  title = {On fundamental aspects of quantum extreme learning machines},
  journal = {Quantum Machine Intelligence},
  volume = {7},
  number = {1},
  pages = {20},
  year = {2025},
  doi = {10.1007/s42484-025-00239-7},
  eprint = {2312.15124},
  archiveprefix = {arXiv}
}

@article{vetrano2025state,
  author = {Vetrano, Marco and Lo Monaco, Gabriele and Innocenti, Luca and Lorenzo, Salvatore and Palma, G. Massimo},
  title = {State Estimation with Quantum Extreme Learning Machines beyond the Scrambling Time},
  journal = {npj Quantum Information},
  volume = {11},
  number = {1},
  pages = {20},
  year = {2025},
  month = feb,
  issn = {2056-6387},
  doi = {10.1038/s41534-024-00927-5},
  eprint = {2409.06782},
  archiveprefix = {arXiv},
  primaryclass = {quant-ph},
  urldate = {2025-04-22},
  langid = {english}
}

@article{innocenti2023potential,
  author = {Innocenti, Luca and Lorenzo, Salvatore and Palmisano, Ivan and Ferraro, Alessandro and Paternostro, Mauro and Palma, G. Massimo},
  title = {Potential and Limitations of Quantum Extreme Learning Machines},
  journal = {Communications Physics},
  volume = {6},
  number = {1},
  pages = {118},
  year = {2023},
  month = may,
  issn = {2399-3650},
  doi = {10.1038/s42005-023-01233-w},
  eprint = {2210.00780},
  archiveprefix = {arXiv},
  primaryclass = {quant-ph},
  urldate = {2023-08-05}
}

@inproceedings{li2022concentration,
  author = {Li, Guangxi and Ye, Ruilin and Zhao, Xuanqiang and Wang, Xin},
  title = {Concentration of data encoding in parameterized quantum circuits},
  booktitle = {Advances in Neural Information Processing Systems},
  volume = {35},
  pages = {19456--19469},
  year = {2022},
  eprint = {2206.08273},
  archiveprefix = {arXiv}
}

@article{stilck2021limitations,
  author = {Stilck Fran{\c{c}}a, Daniel and Garcia-Patron, Raul},
  title = {Limitations of optimization algorithms on noisy quantum devices},
  journal = {Nature Physics},
  volume = {17},
  number = {11},
  pages = {1221--1227},
  year = {2021},
  doi = {10.1038/s41567-021-01356-3},
  eprint = {2009.05532},
  archiveprefix = {arXiv}
}

@article{tangpanitanon2020expressibility,
  author = {Tangpanitanon, Jirawat and Thanasilp, Supanut and Dangniam, Ninnat and Lemonde, Marc-Antoine and Angelakis, Dimitris G},
  title = {Expressibility and trainability of parametrized analog quantum systems for machine learning applications},
  journal = {Physical Review Research},
  volume = {2},
  number = {4},
  pages = {043364},
  year = {2020},
  doi = {10.1103/PhysRevResearch.2.043364},
  eprint = {2005.11222},
  archiveprefix = {arXiv}
}

@article{holmes2022connecting,
  author = {Holmes, Zo{\"e} and Sharma, Kunal and Cerezo, Marco and Coles, Patrick J},
  title = {Connecting ansatz expressibility to gradient magnitudes and barren plateaus},
  journal = {PRX Quantum},
  volume = {3},
  number = {1},
  pages = {010313},
  year = {2022},
  doi = {10.1103/PRXQuantum.3.010313},
  eprint = {2101.02138},
  archiveprefix = {arXiv}
}

@article{thanasilp2024exponential,
  author = {Thanasilp, Supanut and Wang, Samson and Cerezo, M. and Holmes, Zo{\"e}},
  title = {Exponential Concentration in Quantum Kernel Methods},
  journal = {Nature Communications},
  volume = {15},
  number = {1},
  pages = {5200},
  year = {2024},
  month = jun,
  issn = {2041-1723},
  doi = {10.1038/s41467-024-49287-w},
  eprint = {2208.11060},
  archiveprefix = {arXiv},
  urldate = {2026-03-27},
  langid = {english}
}

@article{kairon2025equivalence,
  author = {Kairon, Pranav and J{\"a}ger, Jonas and Krems, Roman V},
  title = {Equivalence between exponential concentration in quantum machine learning kernels and barren plateaus in variational algorithms},
  journal = {Physical Review A},
  volume = {114},
  number = {2},
  pages = {022417},
  year = {2026},
  month = aug,
  doi = {10.1103/6hbr-kj64},
  eprint = {2501.07433},
  archiveprefix = {arXiv}
}

@article{cerezo2025does,
  author = {Cerezo, M. and Larocca, Martin and Garc{\'i}a-Mart{\'i}n, Diego and Diaz, N. L. and Braccia, Paolo and Fontana, Enrico and Rudolph, Manuel S. and Bermejo, Pablo and Ijaz, Aroosa and Thanasilp, Supanut and Anschuetz, Eric R. and Holmes, Zo{\"e}},
  title = {Does provable absence of barren plateaus imply classical simulability?},
  journal = {Nature Communications},
  volume = {16},
  number = {1},
  pages = {7907},
  year = {2025},
  doi = {10.1038/s41467-025-63099-6},
  eprint = {2312.09121},
  archiveprefix = {arXiv}
}

@article{cerezo2021VariationalQuantumAlgorithms,
  title = {Variational Quantum Algorithms},
  author = {Cerezo, M. and Arrasmith, Andrew and Babbush, Ryan and Benjamin, Simon C. and Endo, Suguru and Fujii, Keisuke and McClean, Jarrod R. and Mitarai, Kosuke and Yuan, Xiao and Cincio, Lukasz and Coles, Patrick J.},
  year = 2021,
  month = aug,
  journal = {Nature Reviews Physics},
  volume = {3},
  number = {9},
  eprint = {2012.09265},
  primaryclass = {quant-ph},
  pages = {625--644},
  issn = {2522-5820},
  doi = {10.1038/s42254-021-00348-9},
  archiveprefix = {arXiv},
  langid = {english}
}

@article{larocca2025barren,
  author = {Larocca, Martin and Thanasilp, Supanut and Wang, Samson and Sharma, Kunal and Biamonte, Jacob and Coles, Patrick J and Cincio, Lukasz and McClean, Jarrod R and Holmes, Zo{\"e} and Cerezo, Marco},
  title = {Barren plateaus in variational quantum computing},
  journal = {Nature Reviews Physics},
  volume = {7},
  number = {4},
  pages = {174--189},
  year = {2025},
  doi = {10.1038/s42254-025-00813-9},
  eprint = {2405.00781},
  archiveprefix = {arXiv}
}

@article{arrasmith2022equivalence,
  author = {Arrasmith, Andrew and Holmes, Zo{\"e} and Cerezo, Marco and Coles, Patrick J},
  title = {Equivalence of quantum barren plateaus to cost concentration and narrow gorges},
  journal = {Quantum Science and Technology},
  volume = {7},
  number = {4},
  pages = {045015},
  year = {2022},
  doi = {10.1088/2058-9565/ac7d06},
  eprint = {2104.05868},
  archiveprefix = {arXiv}
}

@article{mcclean2018barren,
  author = {McClean, Jarrod R and Boixo, Sergio and Smelyanskiy, Vadim N and Babbush, Ryan and Neven, Hartmut},
  title = {Barren plateaus in quantum neural network training landscapes},
  journal = {Nature Communications},
  volume = {9},
  number = {1},
  pages = {4812},
  year = {2018},
  doi = {10.1038/s41467-018-07090-4},
  eprint = {1803.11173},
  archiveprefix = {arXiv}
}

@article{cerezo2022ChallengesOpportunitiesQuantum,
  title = {Challenges and {{Opportunities}} in {{Quantum Machine Learning}}},
  author = {Cerezo, M. and Verdon, Guillaume and Huang, Hsin-Yuan and Cincio, Lukasz and Coles, Patrick J.},
  year = 2022,
  month = sep,
  journal = {Nature Computational Science},
  volume = {2},
  number = {9},
  eprint = {2303.09491},
  primaryclass = {quant-ph, stat},
  pages = {567--576},
  issn = {2662-8457},
  doi = {10.1038/s43588-022-00311-3},
  archiveprefix = {arXiv}
}

@article{wang2024ComprehensiveReviewQuantum,
  title = {A Comprehensive Review of Quantum Machine Learning: From {{NISQ}} to Fault Tolerance},
  shorttitle = {A Comprehensive Review of Quantum Machine Learning},
  author = {Wang, Yunfei and Liu, Junyu},
  year = 2024,
  month = nov,
  journal = {Reports on Progress in Physics},
  volume = {87},
  number = {11},
  eprint = {2401.11351},
  primaryclass = {quant-ph},
  pages = {116402},
  issn = {0034-4885, 1361-6633},
  doi = {10.1088/1361-6633/ad7f69},
  archiveprefix = {arXiv}
}

@article{jerbi2023QuantumMachineLearning,
  title = {Quantum Machine Learning beyond Kernel Methods},
  author = {Jerbi, Sofiene and Fiderer, Lukas J. and Poulsen Nautrup, Hendrik and K{\"u}bler, Jonas M. and Briegel, Hans J. and Dunjko, Vedran},
  year = 2023,
  month = jan,
  journal = {Nature Communications},
  volume = {14},
  number = {1},
  eprint = {2110.13162},
  primaryclass = {quant-ph},
  pages = {517},
  issn = {2041-1723},
  doi = {10.1038/s41467-023-36159-y},
  archiveprefix = {arXiv},
  langid = {english}
}

@misc{schuld2021SupervisedQuantumMachine,
  title = {Supervised Quantum Machine Learning Models Are Kernel Methods},
  author = {Schuld, Maria},
  year = 2021,
  month = apr,
  number = {arXiv:2101.11020},
  eprint = {2101.11020},
  primaryclass = {quant-ph, stat},
  publisher = {arXiv},
  archiveprefix = {arXiv}
}

@article{schuld2019QuantumMachineLearning,
  title = {Quantum {{Machine Learning}} in {{Feature Hilbert Spaces}}},
  author = {Schuld, Maria and Killoran, Nathan},
  year = 2019,
  month = feb,
  journal = {Physical Review Letters},
  volume = {122},
  number = {4},
  issn = {0031-9007, 1079-7114},
  doi = {10.1103/PhysRevLett.122.040504},
  urldate = {2020-11-15},
  langid = {english}
}

@article{fujii2017HarnessingDisorderedQuantum,
  title = {Harnessing Disordered Quantum Dynamics for Machine Learning},
  author = {Fujii, Keisuke and Nakajima, Kohei},
  year = 2017,
  month = aug,
  journal = {Physical Review Applied},
  volume = {8},
  number = {2},
  eprint = {1602.08159},
  primaryclass = {quant-ph},
  pages = {024030},
  issn = {2331-7019},
  doi = {10.1103/PhysRevApplied.8.024030},
  archiveprefix = {arXiv}
}

@article{mujal2021OpportunitiesQuantumReservoir,
  title = {Opportunities in {{Quantum Reservoir Computing}} and {{Extreme Learning Machines}}},
  author = {Mujal, Pere and {Mart{\'i}nez-Pe{\~n}a}, Rodrigo and Nokkala, Johannes and {Garc{\'i}a-Beni}, Jorge and Giorgi, Gian Luca and Soriano, Miguel C. and Zambrini, Roberta},
  year = 2021,
  journal = {Advanced Quantum Technologies},
  volume = {4},
  number = {8},
  pages = {2100027},
  issn = {2511-9044},
  doi = {10.1002/qute.202100027},
  copyright = {\copyright{} 2021 The Authors. Advanced Quantum Technologies published by Wiley-VCH GmbH},
  langid = {english}
}

@article{havlicek2019SupervisedLearningQuantum,
  title = {Supervised Learning with Quantum Enhanced Feature Spaces},
  author = {Havlicek, Vojtech and C{\'o}rcoles, Antonio D. and Temme, Kristan and Harrow, Aram W. and Kandala, Abhinav and Chow, Jerry M. and Gambetta, Jay M.},
  year = 2019,
  month = mar,
  journal = {Nature},
  volume = {567},
  number = {7747},
  eprint = {1804.11326},
  primaryclass = {quant-ph, stat},
  pages = {209--212},
  issn = {0028-0836, 1476-4687},
  doi = {10.1038/s41586-019-0980-2},
  archiveprefix = {arXiv}
}

@misc{hu2024GeneralizationErrorQuantum,
  title = {Generalization {{Error}} in {{Quantum Machine Learning}} in the {{Presence}} of {{Sampling Noise}}},
  author = {Hu, Fangjun and Gao, Xun},
  year = 2024,
  publisher = {arXiv},
  doi = {10.48550/ARXIV.2410.14654},
  urldate = {2026-09-27},
  copyright = {Creative Commons Attribution 4.0 International}
}

@article{sweke2020StochasticGradientDescent,
  title = {Stochastic Gradient Descent for Hybrid Quantum-Classical Optimization},
  author = {Sweke, Ryan and Wilde, Frederik and Meyer, Johannes and Schuld, Maria and Faehrmann, Paul K. and {Meynard-Piganeau}, Barth{\'e}l{\'e}my and Eisert, Jens},
  year = 2020,
  month = aug,
  journal = {Quantum},
  volume = {4},
  eprint = {1910.01155},
  primaryclass = {quant-ph},
  pages = {314},
  issn = {2521-327X},
  doi = {10.22331/q-2020-08-31-314},
  urldate = {2026-10-03},
  archiveprefix = {arXiv},
  langid = {english}
}

@article{bravyi2021MitigatingMeasurementErrors,
  title = {Mitigating Measurement Errors in Multiqubit Experiments},
  author = {Bravyi, Sergey and Sheldon, Sarah and Kandala, Abhinav and Mckay, David C. and Gambetta, Jay M.},
  year = 2021,
  month = apr,
  journal = {Physical Review A},
  volume = {103},
  number = {4},
  eprint = {2006.14044},
  primaryclass = {quant-ph},
  pages = {042605},
  issn = {2469-9926, 2469-9934},
  doi = {10.1103/PhysRevA.103.042605},
  urldate = {2026-05-14},
  archiveprefix = {arXiv},
  langid = {english}
}

@article{maciejewski2020MitigationReadoutNoise,
  title = {Mitigation of Readout Noise in Near-Term Quantum Devices by Classical Post-Processing Based on Detector Tomography},
  author = {Maciejewski, Filip B. and Zimbor{\'a}s, Zolt{\'a}n and Oszmaniec, Micha{\l}},
  year = 2020,
  month = apr,
  journal = {Quantum},
  volume = {4},
  eprint = {1907.08518},
  primaryclass = {quant-ph},
  pages = {257},
  issn = {2521-327X},
  doi = {10.22331/q-2020-04-24-257},
  urldate = {2026-10-03},
  archiveprefix = {arXiv},
  copyright = {https://creativecommons.org/licenses/by/4.0/},
  langid = {english}
}

@book{fuller1987MeasurementErrorModels,
  title = {Measurement {{Error Models}}},
  author = {Fuller, Wayne A.},
  year = 1987,
  month = jun,
  series = {Wiley {{Series}} in {{Probability}} and {{Statistics}}},
  edition = {1},
  publisher = {Wiley},
  doi = {10.1002/9780470316665},
  urldate = {2026-09-26},
  copyright = {http://doi.wiley.com/10.1002/tdm\_license\_1.1},
  isbn = {978-0-471-86187-4 978-0-470-31666-5},
  langid = {english}
}

@book{carroll2006MeasurementErrorNonlinear,
  title = {Measurement {{Error}} in {{Nonlinear Models}}},
  author = {Carroll, Raymond J. and Ruppert, David and Stefanski, Leonard A. and Crainiceanu, Ciprian M.},
  year = 2006,
  month = jun,
  edition = {0},
  publisher = {{Chapman and Hall/CRC}},
  doi = {10.1201/9781420010138},
  urldate = {2026-10-03},
  isbn = {978-1-4200-1013-8},
  langid = {english}
}

@article{golub1980AnalysisTotalLeast,
  title = {An {{Analysis}} of the {{Total Least Squares Problem}}},
  author = {Golub, Gene H. and Van Loan, Charles F.},
  year = 1980,
  month = dec,
  journal = {SIAM Journal on Numerical Analysis},
  volume = {17},
  number = {6},
  pages = {883--893},
  issn = {0036-1429, 1095-7170},
  doi = {10.1137/0717073},
  urldate = {2026-10-03},
  langid = {english}
}

@book{vanhuffel1991TotalLeastSquares,
  title = {The {{Total Least Squares Problem}}: {{Computational Aspects}} and {{Analysis}}},
  shorttitle = {The {{Total Least Squares Problem}}},
  author = {Van Huffel, Sabine and Vandewalle, Joos},
  year = 1991,
  month = jan,
  publisher = {{Society for Industrial and Applied Mathematics}},
  doi = {10.1137/1.9781611971002},
  urldate = {2026-10-03},
  isbn = {978-0-89871-275-9 978-1-61197-100-2},
  langid = {english}
}

@article{wang2021UnderstandingPowerQuantum,
  title = {Towards Understanding the Power of Quantum Kernels in the {{NISQ}} Era},
  author = {Wang, Xinbiao and Du, Yuxuan and Luo, Yong and Tao, Dacheng},
  year = 2021,
  month = aug,
  journal = {Quantum},
  volume = {5},
  eprint = {2103.16774},
  primaryclass = {quant-ph},
  pages = {531},
  issn = {2521-327X},
  doi = {10.22331/q-2021-08-30-531},
  urldate = {2026-10-03},
  archiveprefix = {arXiv},
  langid = {english}
}

@article{hubregtsen2022TrainingQuantumEmbedding,
  title = {Training Quantum Embedding Kernels on Near-Term Quantum Computers},
  author = {Hubregtsen, Thomas and Wierichs, David and {Gil-Fuster}, Elies and Derks, Peter-Jan H. S. and Faehrmann, Paul K. and Meyer, Johannes Jakob},
  year = 2022,
  month = oct,
  journal = {Physical Review A},
  volume = {106},
  number = {4},
  eprint = {2105.02276},
  primaryclass = {quant-ph},
  pages = {042431},
  issn = {2469-9926, 2469-9934},
  doi = {10.1103/PhysRevA.106.042431},
  urldate = {2026-10-03},
  archiveprefix = {arXiv},
  langid = {english}
}

@article{peters2023GeneralizationOverfittingQuantum,
  title = {Generalization despite Overfitting in Quantum Machine Learning Models},
  author = {Peters, Evan and Schuld, Maria},
  year = 2023,
  month = dec,
  journal = {Quantum},
  volume = {7},
  eprint = {2209.05523},
  primaryclass = {quant-ph},
  pages = {1210},
  issn = {2521-327X},
  doi = {10.22331/q-2023-12-20-1210},
  urldate = {2026-03-28},
  archiveprefix = {arXiv},
  langid = {english}
}

@article{huang2020PredictingManyProperties,
  title = {Predicting Many Properties of a Quantum System from Very Few Measurements},
  author = {Huang, Hsin-Yuan and Kueng, Richard and Preskill, John},
  year = 2020,
  month = oct,
  journal = {Nature Physics},
  volume = {16},
  number = {10},
  eprint = {2002.08953},
  primaryclass = {quant-ph},
  pages = {1050--1057},
  publisher = {Nature Publishing Group},
  issn = {1745-2481},
  doi = {10.1038/s41567-020-0932-7},
  urldate = {2020-11-30},
  archiveprefix = {arXiv},
  copyright = {2020 The Author(s), under exclusive licence to Springer Nature Limited},
  langid = {english}
}

@article{zhu2014QuantumStateEstimation,
  title = {Quantum State Estimation with Informationally Overcomplete Measurements},
  author = {Zhu, Huangjun},
  year = 2014,
  month = jul,
  journal = {Physical Review A},
  volume = {90},
  number = {1},
  eprint = {1404.3453},
  primaryclass = {quant-ph},
  pages = {012115},
  issn = {1050-2947, 1094-1622},
  doi = {10.1103/PhysRevA.90.012115},
  urldate = {2025-10-12},
  archiveprefix = {arXiv},
  copyright = {http://link.aps.org/licenses/aps-default-license},
  langid = {english}
}

@article{garciaperez2021LearningMeasureAdaptive,
  title = {Learning to {{Measure}}: {{Adaptive Informationally Complete Generalized Measurements}} for {{Quantum Algorithms}}},
  shorttitle = {Learning to {{Measure}}},
  author = {{Garc{\'i}a-P{\'e}rez}, Guillermo and Rossi, Matteo A.C. and Sokolov, Boris and Tacchino, Francesco and Barkoutsos, Panagiotis Kl. and Mazzola, Guglielmo and Tavernelli, Ivano and Maniscalco, Sabrina},
  year = 2021,
  month = nov,
  journal = {PRX Quantum},
  volume = {2},
  number = {4},
  eprint = {2104.00569},
  primaryclass = {quant-ph},
  pages = {040342},
  issn = {2691-3399},
  doi = {10.1103/PRXQuantum.2.040342},
  urldate = {2025-10-22},
  archiveprefix = {arXiv},
  langid = {english}
}

@article{scott2006TightInformationallyComplete,
  title = {Tight Informationally Complete Quantum Measurements},
  author = {Scott, A. J.},
  year = 2006,
  month = oct,
  journal = {Journal of Physics A: Mathematical and General},
  volume = {39},
  number = {43},
  eprint = {quant-ph/0604049},
  pages = {13507--13530},
  issn = {0305-4470, 1361-6447},
  doi = {10.1088/0305-4470/39/43/009},
  urldate = {2021-10-06},
  archiveprefix = {arXiv}
}

@article{moussa2023ResourceFrugalOptimizer,
  title = {Resource Frugal Optimizer for Quantum Machine Learning},
  author = {Moussa, Charles and Hunter Gordon, Max and Baczyk, Michal and Cerezo, M and Cincio, Lukasz and Coles, Patrick J},
  year = 2023,
  month = oct,
  journal = {Quantum Science and Technology},
  volume = {8},
  number = {4},
  eprint = {2211.04965},
  primaryclass = {quant-ph},
  pages = {045019},
  issn = {2058-9565},
  doi = {10.1088/2058-9565/acef55},
  urldate = {2026-10-03},
  archiveprefix = {arXiv}
}

@article{hu2024OvercomingCoherenceTime,
  title = {Overcoming the Coherence Time Barrier in Quantum Machine Learning on Temporal Data},
  author = {Hu, Fangjun and Khan, Saeed A. and Bronn, Nicholas T. and Angelatos, Gerasimos and Rowlands, Graham E. and Ribeill, Guilhem J. and T{\"u}reci, Hakan E.},
  year = 2024,
  month = aug,
  journal = {Nature Communications},
  volume = {15},
  number = {1},
  eprint = {2312.16165},
  primaryclass = {quant-ph},
  pages = {7491},
  publisher = {Nature Publishing Group},
  issn = {2041-1723},
  doi = {10.1038/s41467-024-51162-7},
  urldate = {2026-02-06},
  archiveprefix = {arXiv},
  copyright = {2024 The Author(s)},
  langid = {english}
}

@article{ahmed2025OptimalTrainingFinitely,
  title = {Optimal Training of Finitely Sampled Quantum Reservoir Computers for Forecasting of Chaotic Dynamics},
  author = {Ahmed, Osama and Tennie, Felix and Magri, Luca},
  year = 2025,
  month = jun,
  journal = {Quantum Machine Intelligence},
  volume = {7},
  number = {1},
  eprint = {2409.01394},
  primaryclass = {quant-ph},
  pages = {31},
  issn = {2524-4906, 2524-4914},
  doi = {10.1007/s42484-025-00261-9},
  urldate = {2026-01-21},
  archiveprefix = {arXiv},
  langid = {english}
}

\end{document}